\newtheorem{theorem}{Theorem}
\newtheorem{lemma}[theorem]{Lemma}
\newtheorem{corollary}[theorem]{Corollary}
\title{Constant Approximation Algorithms for Guarding Simple Polygons using Vertex Guards} 
\author[1]{Pritam Bhattacharya\thanks{Email: pritam.bhattacharya@cse.iitkgp.ernet.in, lord.pritomose@gmail.com}} 
\author[2]{Subir Kumar Ghosh \thanks{Email: subir.ghosh@rkmvu.ac.in, profsubirghosh@gmail.com}}
\author[1]{Sudebkumar Prasant Pal \thanks{Email: spp@cse.iitkgp.ernet.in, sudebkumar@gmail.com}}
\affil[1]{Department of Computer Science and Engineering, Indian Institute of Technology, Kharagpur, West Bengal - 721302, India.}
\affil[2]{Department of Computer Science, RKM Vivekananda Educational and Research Institute, Belur, West Bengal - 711202, India.}
\date{}
\begin{document}

\maketitle

\vspace*{-2.2em}
\begin{abstract}
\vspace*{-0.33em}
The art gallery problem enquires about the least number of guards sufficient to ensure that an art gallery, represented by a simple polygon $P$, is fully guarded.
Most standard versions of this problem are known to be NP-hard. 
In 1987, Ghosh provided a deterministic $\mathcal{O}(\log n)$-approximation algorithm for the case of vertex guards and edge guards in simple polygons. 
In the same paper, Ghosh also conjectured the existence of constant ratio approximation algorithms for these problems.
We present here three polynomial-time algorithms with a constant approximation ratio for guarding an $n$-sided simple polygon $P$ using vertex guards.
{(i)} The first algorithm, that has an approximation ratio of 18, guards all vertices of $P$ in $\mathcal{O}(n^4)$ time.
{(ii)} The second algorithm, that has the same approximation ratio of 18, guards the entire boundary of $P$ in $\mathcal{O}(n^5)$ time.
{(iii)} The third algorithm, that has an approximation ratio of 27, guards all interior and boundary points of $P$ in $\mathcal{O}(n^5)$ time.
Further, these algorithms can be modified to obtain similar approximation ratios while using edge guards. 
The significance of our results lies in the fact that these results settle the conjecture by Ghosh 
regarding the existence of constant-factor approximation algorithms for this problem, 
which has been open since 1987 despite several attempts by researchers. 
Our approximation algorithms exploit several deep visibility structures of simple polygons which are interesting in their own right.
\end{abstract}


\vspace*{-0.9em}
\section{Introduction}
\label{intro}

\vspace*{-0.55em}
\subsection{The art gallery problem and its variants}
\label{agp}

\vspace*{-0.44em}
The art gallery problem enquires about the least number of guards sufficient to ensure that an art gallery 
(represented by a polygon $P$) is fully guarded, assuming that a guard’s field of view covers 360\textdegree\:as well 
as unbounded distance. This problem was first posed by Victor Klee in a conference in 1973, and has become a well investigated problem in computational geometry. \\

\vspace*{-0.66em}
A \emph{polygon} $P$ is defined to be a closed region in the plane bounded by 
a finite set of line segments, called edges of $P$, such that between any two points of $P$, 
there exists a path which does not intersect any edge of $P$. 
If the boundary of a polygon $P$ consists of two or more cycles, then $P$ is called a \emph{polygon with holes} 
(see Figure \ref{sp_a}).
Otherwise, $P$ is called a \emph{simple polygon} or a \emph{polygon without holes} (see Figure \ref{sp_b}). \\

\vspace*{-0.66em}
An art gallery can be viewed as an $n$-sided polygon $P$ (with or without holes) and guards as points inside $P$.
Any point $z \in P$ is said to be \emph{visible} from a guard $g$ if the line segment ${z g}$ does not intersect 
the exterior of $P$. 
In general, guards may be placed anywhere inside $P$. 
If the guards are allowed to be placed only on vertices of $P$, they are called \emph{vertex guards}. If there is 
no such restriction, then they are called \emph{point guards}. The point guards that are constrained to lie on 
the boundary of $P$, but not necessarily at the vertices, are referred to as \emph{perimeter guards}. 
Point, vertex and perimeter guards together are also referred to as \emph{stationary guards}. 
If guards are allowed to patrol along a line segment inside $P$, they are called \emph{mobile guards}. 
If they are allowed to patrol only along the edges of $P$, they are called \emph{edge guards} 
\cite{Ghosh_2007,O'Rourke_1987}. 

\begin{figure}[H]
\begin{minipage}{.37\textwidth}
\centerline{\includegraphics[width=\textwidth]{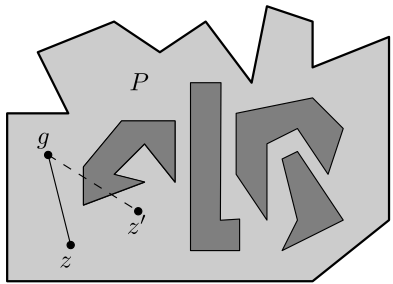}}
\caption{Polygon with holes}
\label{sp_a}
\end{minipage}
\hspace*{.19\textwidth}
\begin{minipage}{.37\textwidth}
\centerline{\includegraphics[width=\textwidth]{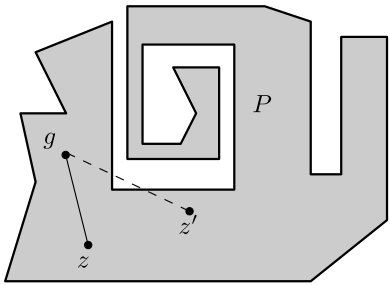}}
\caption{Polygon without holes}
\label{sp_b}
\end{minipage}
\end{figure}

\vspace{-0.66em}
In 1975, Chvátal \cite{Chvatal_1975} showed that $\lfloor\frac{n}{3}\rfloor$ stationary guards 
are sufficient and sometimes necessary (see Figure \ref{sg_a})
for guarding a simple polygon. 
In 1978, Fisk \cite{Fisk_1978} presented a simpler and more elegant proof of this result. 
For a simple orthogonal polygon, whose edges are either horizontal or vertical, Kahn et al. \cite{KKK_1983} 
and also O’Rourke \cite{O'Rourke_1983} showed that $\lfloor\frac{n}{4}\rfloor$ stationary guards 
are sufficient and sometimes necessary (see Figure \ref{sg_b}). 

\begin{figure}[H]
\begin{minipage}{.33\textwidth}
\centerline{\includegraphics[width=\textwidth]{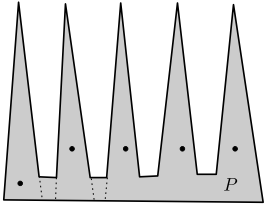}}
\caption{A polygon where $\lfloor\frac{n}{3}\rfloor$ stationary guards are necessary.}
\label{sg_a}
\end{minipage}
\hspace*{.22\textwidth}
\begin{minipage}{.38\textwidth}
\centerline{\includegraphics[width=\textwidth]{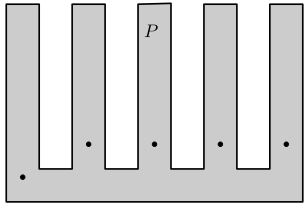}}
\caption{A polygon where $\lfloor\frac{n}{4}\rfloor$ stationary guards  are necessary.}
\label{sg_b}
\end{minipage}
\end{figure}

\vspace*{-0.44em}
\subsection{Related hardness and approximation results}
\label{rhar}
\vspace*{-0.33em}
The decision version of the art gallery problem is to determine, given a polygon $P$ and a number $k$ as input, whether 
the polygon $P$ can be guarded with $k$ or fewer point guards. 
This problem was first shown to be NP-hard for polygons with holes by O’Rourke and Supowit \cite{RS_1983}. 
This problem was also shown to be NP-hard for simple polygons for guarding using only vertex guards by Lee and Lin \cite{LL_1986}. Their proof was generalized to work for point guards by Aggarwal \cite{Aggarwal_1984}.
The problem was shown to be NP-hard even for simple orthogonal polygons by Katz and Roisman \cite{KR_2008} and Schuchardt and Hecker \cite{SH_1995}.
Abrahamsen, Adamaszek and Miltzow \cite{ETR-complete} have recently shown that the art gallery problem for point guards is ETR-complete. \\ 

In 1987, Ghosh \cite{Ghosh_1987,Ghosh_2010} provided a deterministic $\mathcal{O}(\log n)$-approximation algorithm for the case of vertex and edge guards by discretizing the input polygon $P$ and treating it as an instance of the Set Cover problem. 
As pointed out by King and Kirkpatrick \cite{KK_2011}, newer methods for improving the approximation ratio of the Set Cover problem itself have been developed in the time after Ghosh’s algorithm was published.
By applying these methods, the approximation ratio of Ghosh's algorithm becomes $\mathcal{O}(\log OPT)$ for guarding simple polygons
and $\mathcal{O}(\log h \log OPT)$ for guarding a polygon with $h$ holes, where $OPT$ denotes the size of the smallest guard 
set for $P$. Deshpande et al. \cite{DKDS_2007} obtained an approximation factor of $\mathcal{O}(\log OPT)$ for point guards or 
perimeter guards by developing a sophisticated discretization method that runs in pseudo-polynomial time. Efrat and Har-Peled \cite{EH_2006} provided a randomized algorithm with the same approximation ratio that runs in fully polynomial expected time. 
Bonnet and Miltzow \cite{BM_2017} obtained an approximation factor of $\mathcal{O}(\log OPT)$ for the point guard problem assuming integer coordinates and a specific general position.   
For guarding simple polygons using perimeter guards, King and Kirkpatrick \cite{KK_2011} designed a deterministic $\mathcal{O}(\log\log OPT)$-approximation 
algorithm in 2011. The analysis of this result was simplified by Kirkpatrick \cite{K_15}. \\ 

\vspace*{-0.5em}
In 1998, Eidenbenz, Stamm and Widmayer \cite{ESW_1998,ESW_2001} proved that the problem is APX-complete, implying that an approximation 
ratio better than a fixed constant cannot be achieved unless NP~=~P. 
They also proved that if the input polygon is allowed to 
contain holes, then there cannot exist a polynomial time algorithm for the problem with an approximation ratio better than 
$((1−\epsilon)/12)\ln n$ for any $\epsilon>0$, unless NP $\subseteq$ TIME($n^{\mathcal{O}(\log\log n)}$). Extending their method,
Bhattacharya, Ghosh and Roy \cite{VGinWVP} proved that, even for the special subclass of polygons with holes that are 
weakly visible from an edge, there cannot exist a polynomial time algorithm for the problem with an approximation ratio better 
than $((1−\epsilon)/12)\ln n$ for any $\epsilon>0$, unless NP~=~P. These inapproximability results establish that the approximation
ratio of $\mathcal{O}(\log n)$ obtained by Ghosh in 1987 is in fact the best possible for the case of polygons with holes. However, for simple 
polygons, the existence of a constant factor approximation algorithm for vertex and edge guards was conjectured 
by Ghosh \cite{Ghosh_1987,GhoshW_2010} in 1987. \\

\vspace*{-0.5em}
Ghosh's conjecture has been shown to be true for vertex guarding in two special sub-classes of simple polygons, viz. monotone polygons and polygons weakly visible from an edge. In 2012, Krohn and Nilsson \cite{KrohnNilsson_2013} presented an approximation algorithm that computes in polynomial time a guard set for a monotone polygon $P$, such that the size of the guard set is at most $30\cdot OPT$. 
Bhattacharya, Ghosh and Roy \cite{VGinWVP,AGWVP} presented a  
6-approximation algorithm that runs in $\mathcal{O}(n^2)$ time for vertex guarding simple polygons that are weakly visible from an edge.
For vertex guarding this subclass of simple polygons that are weakly visible from an edge, a PTAS has recently been proposed by Katz \cite{Katz_PTAS}.

\subsection{Our contributions}
\label{contributions}

\vspace*{-.4em}
In this paper, we present three polynomial-time algorithms with a constant approximation ratio for guarding an $n$-sided simple polygon $P$ using vertex guards. 
The first algorithm, that has an approximation ratio of 18, guards all vertices of $P$ in $\mathcal{O}(n^4)$ time. 
The second algorithm, that has the same approximation ratio of 18, 
guards the entire boundary of $P$ in $\mathcal{O}(n^5)$ time.
The third algorithm, that has an approximation ratio of 27, guards all interior and boundary points of $P$ 
in $\mathcal{O}(n^5)$ time.
As an extension we show, using similar techniques, 
constant-factor approximation can also be achieved for guarding $P$
we also present identical algorithms, maintaining both the approximation bounds as well as the running times, 
can be obtained using edge guards. 
In particular, we show that the same approximation ratios of 18, 18 and 27 hold
for guarding all vertices, the entire boundary, and the interior of $P$, 
with time complexities $\mathcal{O}(n^4)$, 
$\mathcal{O}(n^5)$ and $\mathcal{O}(n^5)$ respectively.
The significance of our results lies in the fact that these results settle the \emph{long-standing conjecture by Ghosh} \cite{Ghosh_1987} regarding the existence of constant-factor approximation algorithms for these problem, which has been open since 1987 despite several attempts by researchers. \\

\vspace*{-.5em}
In each of our algorithms, $P$ is first partitioned into a hierarchy of \emph{weak visibility polygons} according to the \emph{link distance} from a starting vertex (see Figure \ref{windows}). 
This partitioning is very similar to the \emph{window 
partitioning} given by Suri \cite{Suri_1986,Suri_1987} in the context of computing minimum link paths. Then, starting with the farthest 
level in the hierarchy (i.e. the set of weak visibility polygons that are at the maximum link distance from the starting vertex), the entire hierarchy is traversed backward level by level, and at each level, 
vertex guards (of two types, viz. \emph{inside} and \emph{outside}) 
are placed for guarding every weak visibility polygon at that level of $P$. 
At every level, a novel procedure is used that has been developed for placing guards in (i) a simple polygon that is weakly visible from an internal chord, 
or (ii) a union of overlapping polygons that are weakly visible from multiple disjoint internal chords.
Note that these chords are actually the constructed edges introduced during the hierarchical partitioning of $P$. \\

\vspace*{-.5em}
Due to partitioning according to link distances, guards can only see points within the adjacent weak visibility polygons in the hierarchical partitioning of $P$. 
This property locally restricts the visibility of the chosen guards, and thereby ensures that the approximation bound on the number of vertex guards placed by our algorithms at any level 
leads directly to overall approximation bounds for guarding $P$. 
Thus, a constant factor approximation bound on the overall number of guards placed by our algorithms is a direct consequence of choosing vertex guards in a judicious manner for guarding each collection of overlapping weak visibility polygons obtained from the hierarchical partitioning of $P$. 
Our algorithms exploit several deep visibility structures of simple polygons which are interesting in their own right.

\vspace*{-.5em}
\subsection{Organization of the paper}
\label{org}

\vspace*{-.2em}
In Section \ref{prelims}, we introduce some preliminary definitions and notations that are used throughout the rest of the paper. 
In Section \ref{partitioning_algo}, we present the hierarchical partitioning of a simple polygon $P$ into weak visibility polygons.
Next, in Section \ref{traverse_hierarchy}, we describe how the algorithm traverses the hierarchy of visibility polygons, starting from the farthest level, and uses the procedures from Section \ref{vertex_algo} at each level as a sub-routine for guarding $P$. 
In Section \ref{vertex_algo}, we present a novel procedure for 
placing vertex guards necessary for guarding a simple polygon $Q$ that is weakly visible from a single internal chord $uv$ or from multiple disjoint chords. 
In Section \ref{final}, we establish the 
overall approximation ratios for the three approximation algorithms. 
In Section \ref{edge_peri}, we show how these algorithms can be modified 
to obtain similar approximation bounds while using edge guards. 
Finally, in Section \ref{conclude}, we conclude the paper with a few remarks. 

\vspace*{-.33em}
\section{Preliminary definitions and notations}
\label{prelims}

\vspace*{-.66em}
Let $P$ be a simple polygon. Assume that the vertices of $P$ are labelled $v_1, v_2,\dots, v_n$ in clockwise order. 
Let $\mathcal{V}(P)$ denote the set of all vertices.
Let $bd_c(p,q)$ (or $bd_{cc}(p,q)$) denote the clockwise (respectively, counterclockwise) boundary of $P$ from a vertex $p$ 
to another vertex $q$. Note that by definition, $bd_c(p,q) = bd_{cc}(q,p)$. Also, we denote the entire boundary of $P$ by 
$bd(P)$. So, $bd(P) = bd_c(p,p) = bd_{cc}(p,p)$ for any chosen vertex $p$ belonging to $P$. \\

\vspace*{-.66em}
The \emph{visibility polygon} of $P$ from a point $z$, denoted as $\mathcal{VP}(z)$, is defined to be the set of all points of $P$ 
that are visible from $z$. In other words, $\mathcal{VP}(z) = \{ q \in P : q \mbox{\: is visible from \:} z \}$.  
Observe that the boundary of $\mathcal{VP}(z)$ consists of polygonal edges and non-polygonal edges. 
We refer to the non-polygonal edges as \emph{constructed edges}. Note that one point of a constructed edge is a vertex 
(say, $v_i$) of $P$, while the other point (say, $u_i$) lies on $bd(P)$. Moreover, the points $z$, $v_i$ and $u_i$ are collinear (see Figure \ref{vp}). \\

\vspace*{-.66em}
Let $bc$ be an internal chord or an edge of $P$.
A point $q$ of $P$ is said to be \emph{weakly visible} from $bc$ if there exists a point $z \in bc$ such that $q$ is visible from $z$.
The set of all such points of $P$ is said to be the \emph{weak visibility polygon} of $P$ from $bc$, and denoted as $\mathcal{VP}(bc)$. 
If $\mathcal{VP}(bc) = P$, 
then $P$ is said to be \emph{weakly visible from $bc$}.
Like $\mathcal{VP}(z)$, the boundary of $\mathcal{VP}(bc)$ also consists of polygonal edges and constructed edges $v_iu_i$ (see Figure \ref{vp}). 
If $z$ (or $bc$) does not belong to $bd_{c}(v_iu_i)$, then $v_iu_i$ is called a \emph{left constructed edge} of $\mathcal{VP}(z)$ (respectively, $\mathcal{VP}(bc)$). Otherwise, $v_iu_i$ is called a \emph{right constructed edge}. 
For any constructed edge $v_iu_i$ of $\mathcal{VP}(bc)$ (or $\mathcal{VP}(z)$), 
observe that $v_iu_i$ divides $P$ into two subpolygons. 
One of the subpolygons is bounded by $bd_c(v_i,u_i)$ and $v_iu_i$, 
whereas the other one is bounded by $bd_{cc}(v_i,u_i)$ and $v_iu_i$. 
Out of these two, the subpolygon that does not contain $bc$ (respectively, $z$) is referred to as 
the \emph{pocket} of $v_iu_i$, and is denoted by $P(v_iu_i)$ (see Figure \ref{vp}). 
If $v_iu_i$ is a left (or right) constructed edge, then $P(v_iu_i)$ is called 
a \emph{left pocket} (or \emph{right pocket}). 

\vspace*{-0.22em}
\begin{figure}[H]
\begin{minipage}{.44\textwidth}
\centerline{\includegraphics[width=1.13\textwidth]{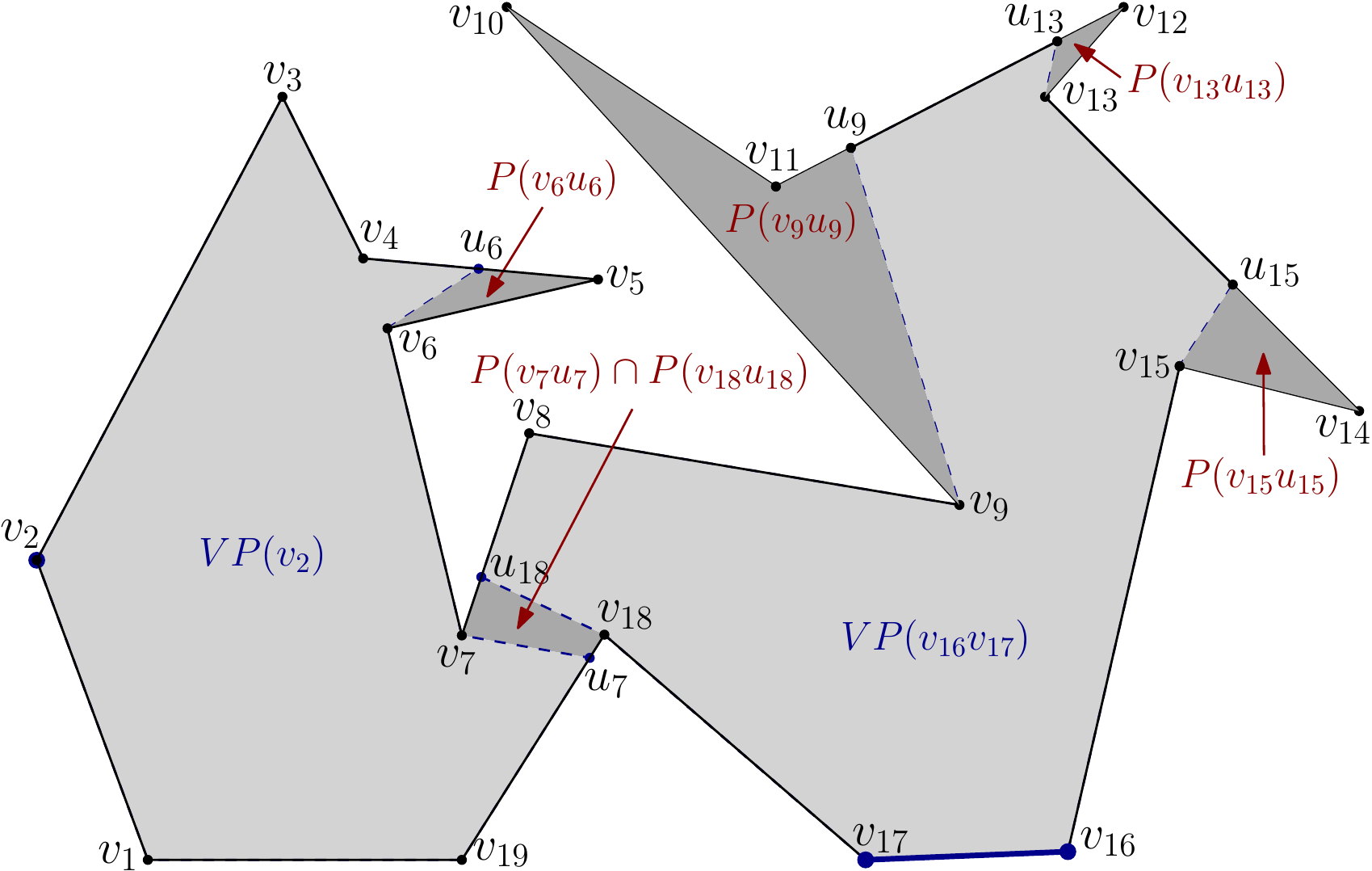}}
\caption{Figure showing visibility polygon $\mathcal{VP}(v_2)$ and weak visibility polygon $\mathcal{VP}(v_{16}v_{17})$, along with several pockets created by constructed edges belonging to both.}
\label{vp}
\end{minipage}
\hspace*{.01\textwidth}
\begin{minipage}{.55\textwidth}
\centerline{\includegraphics[width=1.13\textwidth]{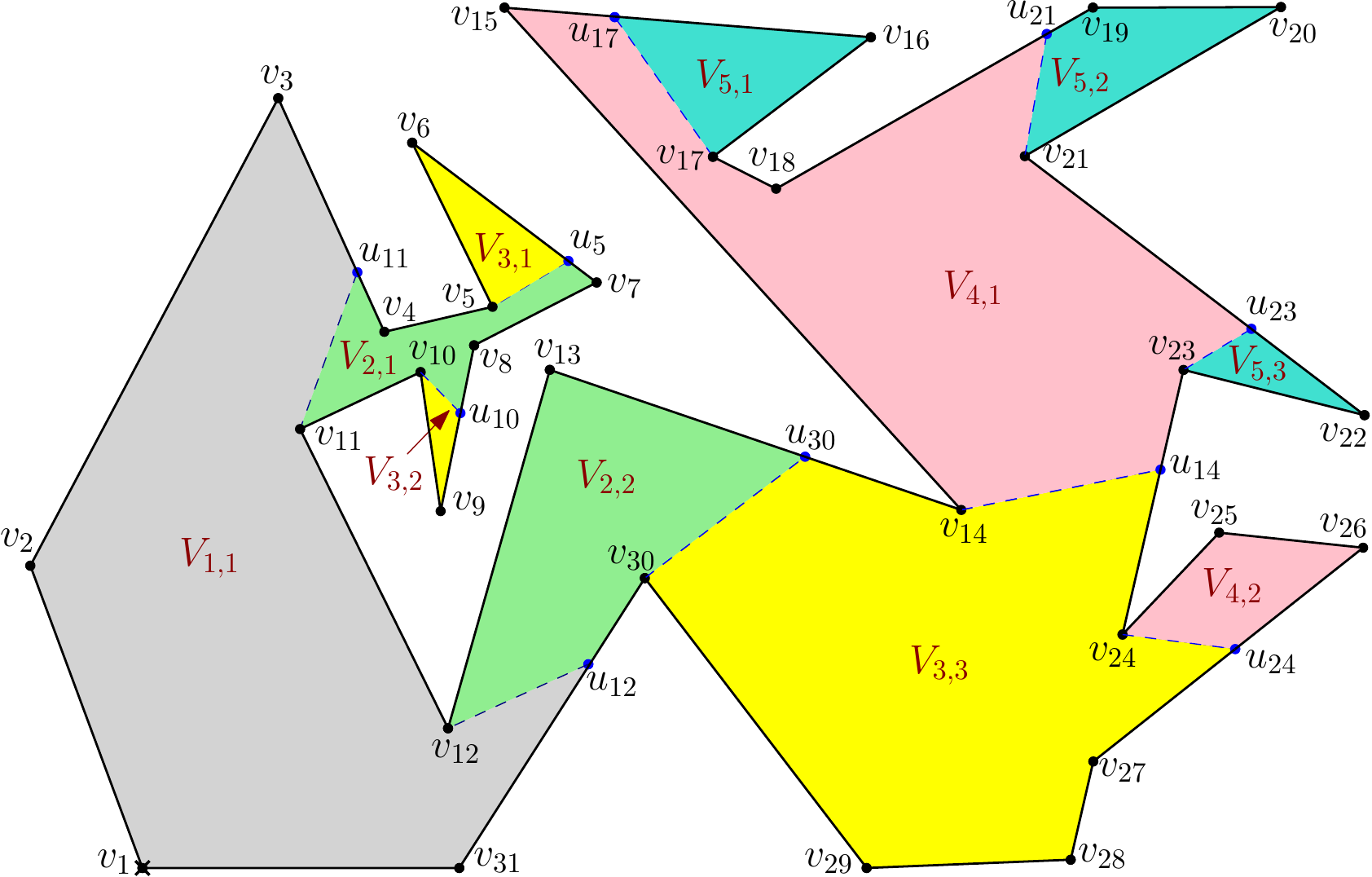}}
\caption{Figure showing the partitioning of a simple polygon into visibility windows.}
\label{windows}
\end{minipage}
\end{figure}

\vspace*{-0.55em}
Let $SP(s,t)$ define the Euclidean shortest path from a point $s$ to another point $t$ within $P$. 
The \emph{shortest path tree} of $P$ rooted at any point $s$ of $P$, denoted by $SPT(s)$, is the union of Euclidean shortest paths 
from $s$ to all vertices of $P$ (see Figure \ref{spt}). 
This union of paths is a planar tree, rooted at $s$, which has $n$ nodes, namely the vertices of $P$. 
For every vertex $x$ of $P$, let $p(s,x)$ denote the parent of $x$ in $SPT(s)$. 

\vspace*{-0.28em}
\begin{figure}[H]
  \centerline{\includegraphics[width=0.48\textwidth]{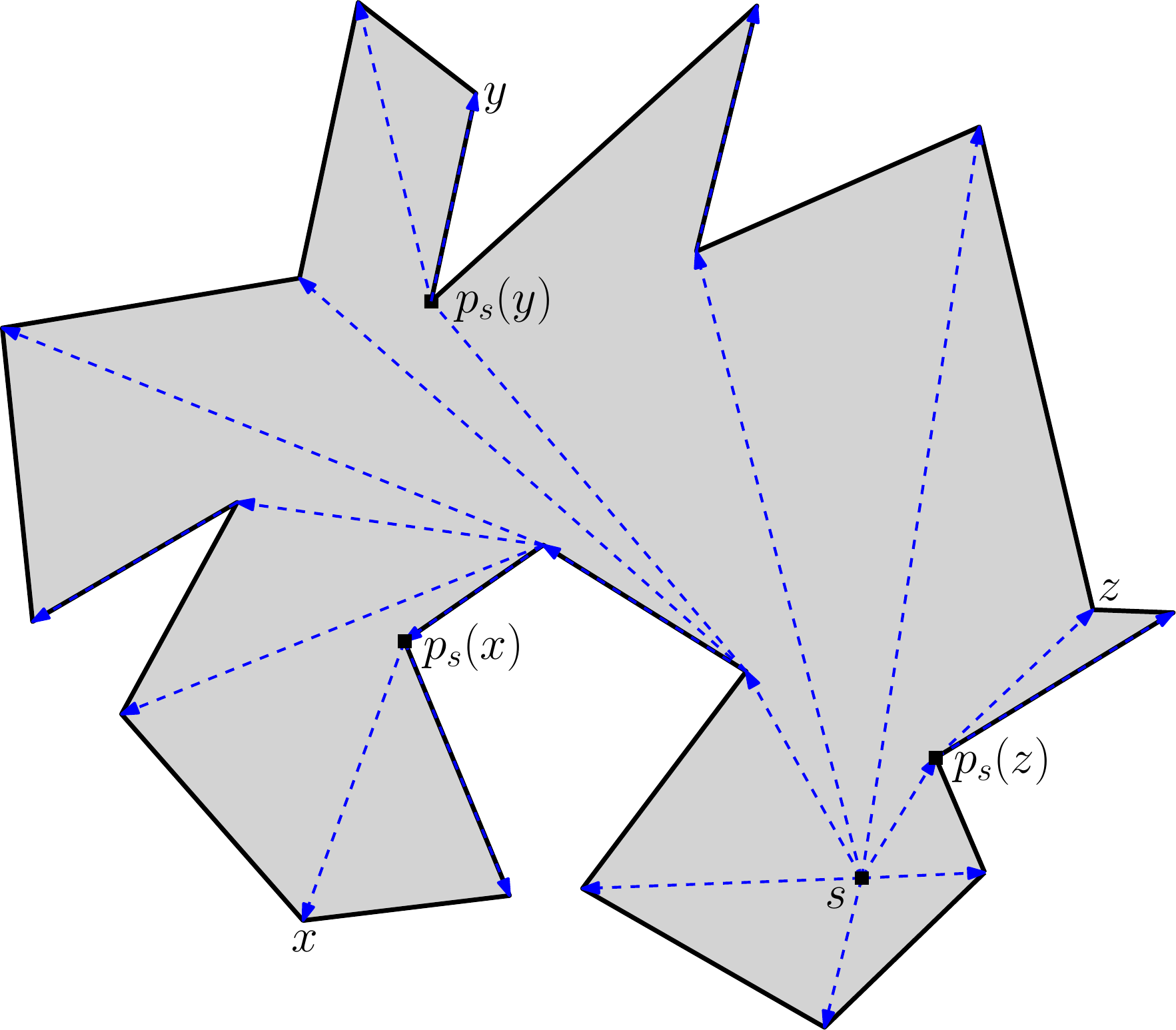}}
  \caption{Euclidean shortest path tree rooted at $s$. 
           The parents of vertices $x$, $y$ and $z$ in $SPT(s)$ are marked as $p_s(x)$, $p_s(y)$ and $p_s(z)$ respectively.}
  \label{spt}
\end{figure}

\vspace*{-.5em}
A \emph{link path} between two points $s$ and $t$ in $P$ is a path inside $P$ that connects $s$ and $t$ by a chain of line segments 
called \emph{links}. A \emph{minimum link path} between $s$ and $t$ is a link path connecting $s$ and $t$ that has the minimum number 
of links. Observe that there may be several different minimum link paths between $s$ and $t$. The \emph{link distance} between any two 
points of $P$ is defined to be the number of links in a minimum link path between them. 

\section{Partitioning a simple polygon into weak visibility polygons}
\label{partitioning_algo}

\vspace*{-0.7em}
The partitioning algorithm partitions $P$ into regions according to their link distance from $v_1$. 
The algorithm starts by computing $\mathcal{VP}(v_1)$, 
which is the set of all points of $P$ whose link distance from $v_1$ is 1. 
Let us denote $\mathcal{VP}(v_1)$ as $V_{1,1}$. 
Then the algorithm computes the weak visibility polygons from every constructed edge of $V_{1,1}$. Let $v_{k(1)}u_{k(1)},
v_{k(2)}u_{k(2)},\dots,v_{k(c)}u_{k(c)}$ denote the constructed edges of $V_{1,1}$ along $bd(P)$ in clockwise order from $v_1$, 
where $c$ is the number of constructed edges in $V_{1,1}$. 
Then the algorithm removes $V_{1,1}$ from $P$. It can be seen that the remaining polygon $P\setminus V_{1,1}$ consists of $c$ disjoint polygons $P(v_{k(1)}u_{k(1)}),P(v_{k(2)}u_{k(2)}),\dots,P(v_{k(c)}u_{k(c)})$. 
For each $j \in \{1,2,\dots,c\}$, the weak visibility polygon $\mathcal{VP}(v_{k(j)}u_{k(j)})$ is computed inside the pocket $P(v_{k(j)}u_{k(j)})$, 
and it is denoted as $V_{2,j}$, i.e. $V_{2,j} = \mathcal{VP}(v_{k(j)}u_{k(j)})\cap P(v_{k(j)}u_{k(j)})$. 
Let $W_1 = \{V_{1,1}\}$ and $W_2 = \bigcup_{j=1}^{c}\{V_{2,j}\}$. Observe that $W_2$ is the set 
of all the disjoint regions of $P$, such that every point of each disjoint region in $W_2$ is at link distance two from $v_1$. \\

\vspace*{-.5em}
Repeating the same process, the algorithm computes $W_3,W_4,\dots,W_d$, where $d$ denotes the maximum link distance of any point of $P$ from $v_1$. 
Note that it is not possible for any visibility polygon belonging to $W_d = \bigcup_{j=1}^{c}V_{d,j}$ to have any constructed edge. 
Therefore, no further visibility polygon is computed. Hence, $P = W_1 \cup W_2 \cup \ldots W_d = V_{1,1} \cup V_{2,1} \cup V_{2,2} 
\cup \ldots \cup V_{d,1} \cup V_{d,2} \cup \ldots$. Thus, the algorithm returns the set $W = \bigcup_{i=1}^{d}W_{i}$, which is a 
partition of $P$. We present the pseudocode for the entire partitioning algorithm below as Algorithm \ref{partition_windows}.

\begin{algorithm}[H]
\caption{An algorithm for partitioning $P$ into visibility polygons}  
\label{partition_windows}
\begin{algorithmic}[1]
\State Compute $\mathcal{VP}(v_1)$ \label{partition_windows:1} 
\State $V_{1,1} \leftarrow \mathcal{VP}(v_1)$, $W_1 \leftarrow \{V_{1,1}\}$ \label{partition_windows:2}
\State $C \leftarrow \bigcup_{s \in W_1}(\mbox{constructed edges of }s)$, $c \leftarrow |C|$ \label{partition_windows:3}
\State $W \leftarrow W_1$, $i \leftarrow 1$ \label{partition_windows:4}
\While{$c > 0$} \label{partition_windows:5}
\State $i \leftarrow i+1$, $W_i \leftarrow \emptyset$ \label{partition_windows:6}
\For{$j=1$ to $c$} \label{partition_windows:7}
\State $V_{i,j} \leftarrow \mathcal{VP}(v_{k(j)}u_{k(j)})\cap P(v_{k(j)}u_{k(j)})$ \label{partition_windows:8}
\State $W_i \leftarrow W_i \cup \{V_{i,j}\}$ \label{partition_windows:9}   
\EndFor \label{partition_windows:10}
\State $W \leftarrow W \cup W_i$ \label{partition_windows:11}
\State $C \leftarrow \bigcup_{s \in W_i}(\mbox{constructed edges of }s)$, $c \leftarrow |C|$ \label{partition_windows:12}
\EndWhile \label{partition_windows:13}
\State \Return $W$ \label{partition_windows:14} 
\end{algorithmic}
\end{algorithm}

\vspace*{-.5em}
Figure \ref{windows} shows the outcome of running Algorithm \ref{partition_windows} on a simple polygon $P$ having 31 vertices, 
where the maximum link distance of any point of $P$ from $v_1$ is 5. 
The algorithm returns the partition 
$W = \{V_{1,1},V_{2,1},V_{2,2},V_{3,1},V_{3,2},V_{3,3,},V_{4,1},V_{4,2},V_{5,1},V_{5,2},V_{5,3}\}$. \\

\vspace*{-.5em}
It can be seen that Algorithm \ref{partition_windows}, as stated above, requires $\mathcal{O}(n^2)$ time, since the visibility polygons are computed
separately. However, the running time can be improved to $\mathcal{O}(n)$ by using the partitioning method given by Suri \cite{Suri_1986,Suri_1987} 
in the context of computing minimum link paths. Using the algorithm of Hershberger \cite{Hershberger_1989} for computing visibility graphs of $P$, 
Suri's algorithm computes weak visibility polygons from selected constructed edges. The same method can be used to compute weak visibility polygons 
from all constructed edges of visibility polygons in $W$ 
in $\mathcal{O}(n)$ time. 
The \emph{visibility graph} of $P$ is a graph which has a node corresponding to every vertex of $P$ and there is an edge between a pair of nodes if 
and only if the corresponding pair of vertices are visible from each other in $P$. We summarize the result in the following theorem.

\begin{theorem} \label{partition_time}
A simple polygon $P$ can be partitioned into visibility polygons according to their link distance from any vertex in $\mathcal{O}(n)$ time. 
\end{theorem}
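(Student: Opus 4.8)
The plan is to reduce the claim to two facts about the partition produced by Algorithm~\ref{partition_windows}: first, that it is combinatorially identical to the \emph{window partition} of Suri~\cite{Suri_1986,Suri_1987} rooted at $v_1$, and second, that this window partition has total complexity $\mathcal{O}(n)$ and can therefore be extracted in $\mathcal{O}(n)$ time. For the first fact I would argue by induction on the link distance $i$: the region $V_{1,1}=\mathcal{VP}(v_1)$ is exactly the set of points at link distance $1$, and if $\bigcup_{j} V_{i,j}$ is precisely the set of points at link distance $i$, then crossing a constructed edge $v_{k(j)}u_{k(j)}$ of $V_{i,j}$ increases the link distance by exactly one, so the weak visibility polygons computed inside the pockets $P(v_{k(j)}u_{k(j)})$ in line~\ref{partition_windows:8} capture exactly the points at link distance $i+1$. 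This is the same recursive subdivision on which Suri's minimum-link-path structure is based, with the constructed edges playing the role of Suri's windows.

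Next I would bound the combinatorial size of the partition. The key structural observation is that all constructed edges produced across every level are pairwise non-crossing chords of $P$: two constructed edges at the same level bound disjoint pockets, and a constructed edge at level $i+1$ lies strictly inside a single pocket of level $i$, so the family of all chords is nested and planar. Each constructed edge $v_iu_i$ is anchored at a reflex vertex $v_i$ of $P$ (the vertex around which visibility is blocked), these anchors are distinct across the partition, and the free endpoint $u_i$ introduces one new point on $bd(P)$. Hence the number of constructed edges is $\mathcal{O}(n)$, and by a direct application of Euler's formula to the resulting planar subdivision, the partition has $\mathcal{O}(n)$ vertices, edges, and faces in total; in particular $\sum_{i,j}|V_{i,j}|=\mathcal{O}(n)$.

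With the linear size bound in hand, the running time follows from processing the whole hierarchy in a single pass rather than recomputing visibility polygons independently as in the stated $\mathcal{O}(n^2)$ version. I would first triangulate $P$ in linear time and then compute, level by level, each $V_{i,j}$ as a weak visibility polygon from its source chord; using the visibility-graph machinery of Hershberger~\cite{Hershberger_1989} inside Suri's construction, each region is obtained in time linear in its own size. Because the pockets at a given level are disjoint and nested inside those of the previous level, each triangulation diagonal and each vertex of $P$ is touched only $\mathcal{O}(1)$ times over the entire traversal, so the charges sum to $\mathcal{O}(n)$.

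The main obstacle is precisely this last charging argument. A naive per-level analysis would re-examine an entire pocket at every deeper level and thus cost $\mathcal{O}(nd)$ time, where $d$ is the maximum link distance; the crux is to show, using the non-crossing and nesting structure of the constructed edges together with the linear total complexity established above, that the work done to compute all weak visibility polygons over all levels telescopes to $\mathcal{O}(n)$. Establishing that each vertex and each diagonal incurs only a constant amount of total work — so that the computation never revisits the same portion of $P$ more than $\mathcal{O}(1)$ times — is the heart of the $\mathcal{O}(n)$ bound, and is exactly what Suri's window-partitioning technique provides.
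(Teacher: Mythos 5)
Your proposal follows essentially the same route as the paper, which simply observes that the naive per-level computation costs $\mathcal{O}(n^2)$ and then appeals to Suri's window-partitioning technique (implemented with Hershberger's visibility-graph machinery) to obtain the $\mathcal{O}(n)$ bound. You supply considerably more detail than the paper does --- the identification with Suri's window partition, the linear bound on the number of constructed edges, and the charging argument --- but all of this is exactly the content the paper delegates to the cited references, so the approach is the same.
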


\section{Traversing the hierarchy of visibility polygons}
\label{traverse_hierarchy}

Our algorithm for placement of vertex guards uses the hierarchy of visibility polygons $W$, as computed in Section \ref{partitioning_algo}. 
Let $S_{d},S_{d-1},\ldots,S_2,S_1$ be the set of vertex guards chosen for guarding vertices of visibility polygons in $W_{d},W_{d-1},
\ldots,W_2,W_1$ respectively. Since $W_1 = \{V_{1,1}\}$ and $V_{1,1} = \mathcal{VP}(v_1)$, we have $S_1 = \{v_1\}$. So the algorithm 
essentially has to decide guards in $S_{d},S_{d-1},\ldots,S_2$. We have the following observation.

\vspace*{-.33em}
\begin{lemma} \label{S_i}
For every $2 \leq i < d$, every vertex guard in $S_i$ belongs to some visibility polygon in $W_{i+1} \cup W_i \cup W_{i-1}$, 
and every vertex guard in $S_d$ belongs to some visibility polygon in $W_d \cup W_{d-1}$. 
\end{lemma}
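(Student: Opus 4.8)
The plan is to prove the lemma through a triangle inequality for link distances, combined with the defining property of the partition that a point lies in $W_i$ precisely when its link distance from $v_1$ equals $i$. First I would record two facts that are either established in or immediate from the earlier sections. By the construction in Section~\ref{partitioning_algo}, every point of a visibility polygon in $W_i$ is at link distance exactly $i$ from $v_1$; this is stated explicitly there for $W_1$ and $W_2$, and it extends to all levels by the inductive nature of the pocket construction. Second, link distance satisfies the triangle inequality: for any three points $a,b,c \in P$, concatenating a minimum link path from $a$ to $b$ with one from $b$ to $c$ yields a link path from $a$ to $c$, so $\operatorname{linkdist}(a,c) \le \operatorname{linkdist}(a,b) + \operatorname{linkdist}(b,c)$.

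Next I would invoke the definition of $S_i$. A guard $g \in S_i$ is placed to guard some vertex $q$ of a visibility polygon in $W_i$, and we may take $q$ to be a vertex at link distance exactly $i$ from $v_1$ (see the remark on boundary vertices below). Since $g$ guards $q$, the two are mutually visible, so the segment $gq$ is a single link and $\operatorname{linkdist}(g,q) \le 1$. Letting $\ell$ denote the level of the visibility polygon containing $g$, i.e.\ $\operatorname{linkdist}(v_1,g)=\ell$, applying the triangle inequality in both directions gives
\begin{align*}
i = \operatorname{linkdist}(v_1,q) &\le \operatorname{linkdist}(v_1,g) + \operatorname{linkdist}(g,q) \le \ell + 1,\\
\ell = \operatorname{linkdist}(v_1,g) &\le \operatorname{linkdist}(v_1,q) + \operatorname{linkdist}(q,g) \le i + 1,
\end{align*}
so $i-1 \le \ell \le i+1$; that is, $g$ belongs to a visibility polygon in $W_{i-1} \cup W_i \cup W_{i+1}$, which proves the first assertion for $2 \le i < d$. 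For the second assertion, when $i=d$ the upper bound gives $\ell \le d+1$, but since $d$ is the maximum link distance of any point of $P$ from $v_1$, no vertex lies at link distance $d+1$ and $W_{d+1}=\emptyset$; hence $\ell \le d$ and $g$ lies in $W_{d-1} \cup W_d$.

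The step I expect to require the most care is justifying that the guarded vertex $q$ can be taken at link distance exactly $i$. The vertices of a polygon $V_{i,j} = \mathcal{VP}(v_ku_k) \cap P(v_ku_k)$ include the chord endpoint $v_k$, which is a vertex of the level-$(i-1)$ polygon and therefore sits at link distance $i-1$; a guard seeing only such a shared boundary vertex would admit $\ell = i-2$ and violate the claimed bound. I would resolve this by observing that every vertex of $V_{i,j}$ lying on the defining constructed edge $v_ku_k$ already belongs to a polygon of $W_{i-1}$ and is thus guarded at level $i-1$; consequently $S_i$ is responsible only for the vertices strictly interior to the pocket, all of which are at link distance exactly $i$. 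With that convention in place, the triangle inequality argument applies verbatim and the lemma follows.
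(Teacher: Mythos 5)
Your proof is correct and follows essentially the same route as the paper: the guard $g$ sees some vertex $y$ at link distance exactly $i$ from $v_1$, the segment $gy$ is a single link, and the triangle inequality for link distance forces $g$'s level to lie in $\{i-1,i,i+1\}$, with the $i=d$ case capped because $W_{d+1}$ does not exist. Your extra care about vertices shared with the constructed edge $v_ku_k$ is a sensible refinement that the paper leaves implicit, but it does not change the argument.
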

\begin{proof} 
 Consider any vertex guard $g \in S_i$, where $2 \leq i < d$. Now $g$ can guard only vertices in $\mathcal{VP}(g)$, and every vertex in $\mathcal{VP}(g)$ 
 must be at a link distance of 1 from $g$. Let $U$ denote the set of vertices in $\mathcal{VP}(g)$ that also belong to any $V_{i,j} \in  W_i$.
 The inclusion of $g$ in $S_i$ guarantees that $U$ is not empty, since there exists at least one vertex $y \in U$ that is guarded by $g$.
 Now, if we consider any such $y \in U$, then the link distance between $g$ and $y$ must be 1, and also the link distance of $y$ from $v_1$ must be $i$. 
 Therefore, the link distance of $g$ from $v_1$ can only be $i-1$, $i$, or $i+1$, and 
 hence $g$ must belong to some visibility polygon in $W_{i+1} \cup W_i \cup W_{i-1}$. 
 Using the same argument, for any vertex guard $g \in S_d$, $g$ must belong to some visibility polygon in $W_d \cup W_{d-1}$ 
 (rather than $W_{d+1} \cup W_d \cup W_{d-1}$), since the level $W_{d+1}$ does not exist in the hierarchy $W$.
\end{proof}

\vspace*{-.33em}
As can be seen from the proof of Lemma \ref{S_i}, the placement of guards is locally restricted to visibility polygons belonging to adjacent 
levels in the partition hierarchy $W$. We formalize this intuition by introducing the notion of the \emph{partition tree} of $P$, which is a 
{\it dual graph} denoted by $T$. 
Each visibility polygon $V_{i,j} \in W$ is represented as a vertex of $T$ (also denoted by $V_{i,j}$), and two vertices of $T$ are connected 
by an edge in $T$ if and only if the corresponding visibility polygons share a constructed edge. 
Treating $V_{1,1}$ as the root of $T$, the standard parent-child-sibling relationships can be imposed between the visibility polygons in $W$. \\

\vspace*{-.5em}
Our algorithm starts off by guarding all vertices belonging to the visibility polygons in $W_d = \{V_{d,1},V_{d,2},\dots\}$, 
which are effectively the nodes of $T$ furthest from the root $V_{1,1}$. 
The algorithm scans $V_{d,1}$,$V_{d,2}$,\dots separately for identifying the respective guards in $S_d$. 
We know from Lemma \ref{S_i} that every vertex guard in $S_d$ 
belongs to some visibility polygon in $W_d \cup W_{d-1}$.
Consider a particular $V_{d,k} \in W_{d}$, and let $V_{d-1,j} \in W_{d-1}$ be the parent of $V_{d,k}$ in $T$.
Consider the constructed edge $v_ku_k$ between  $V_{d,k}$ and $V_{d-1,j}$.
For guarding the vertices of $V_{d,k} = \mathcal{VP}(v_ku_k) \setminus V_{d-1,j}$, it is enough to focus on 
the subpolygon $Q$ consisting of $V_{d,k}$ itself and the portion of $V_{d-1,j}$ that is weakly visible from $v_ku_k$.
So, the subproblem of guarding $V_{d,k}$ (or any other visibility polygon belonging to $W_d$) essentially reduces to placing 
vertex guards in a polygon containing a weak visibility chord $vu$ (corresponding to $v_ku_k$ in the original subproblem) 
in order to guard only the vertices lying on one side of $uv$; however, vertex guards can be chosen freely from either side of the chord $uv$. We discuss the placement of guards in this reduced problem in Section \ref{vertex_algo}. \\

\vspace*{-.5em}
Instead of guarding each weak visibility polygon $Q$ separately, common vertex guards can be placed by traversing the boundary of overlapping weak visibility polygons. 
Let us explain by considering any $V_{d-1,j} \in W_{d-1}$. 
Let us denote the constructed edges that are shared between $V_{d-1,j}$ and the $m$ children of $V_{d-1,j}$ as 
$v_{j(1)}u_{j(1)}, v_{j(2)}u_{j(2)}, \dots, v_{j(m)}u_{j(m)}$ respectively.
Using all these constructed edges, 
let us construct the weak visibility polygons $\mathcal{VP}(v_{j(1)}u_{j(1)})$, $\mathcal{VP}(v_{j(2)}u_{j(2)})$, $\dots$, $\mathcal{VP}(v_{j(m)}u_{j(m)})$.
Observe that each such weak visibility polygon is divided into two portions by the corresponding constructed edge; 
one of the portions forms a child of $V_{d-1,j}$ belonging to $W_{d}$, whereas the other portion is a subregion of $V_{d-1,j}$ itself. 
Moreover, for several of the weak visibility polygons among $\mathcal{VP}(v_{j(1)}u_{j(1)}), \mathcal{VP}(v_{j(2)}u_{j(2)}), \dots, \mathcal{VP}(v_{j(m)}u_{j(m)})$, the second portions may have overlapping subregions in $V_{d-1,j}$. 
Thus, there may exist vertex guards in these overlapping subregions that can see portions of several of the children of $V_{d-1,j}$.
Therefore, for guarding vertices of polygons from $W_{d}$, 
let us extend the definition of $Q$ to be the union of all the overlapping weak visibility polygons defined by the constructed edges corresponding to the children of each $V_{d-1,j}$.
For instance, consider the constructed edges $v_{17}u_{17}$, $v_{21}u_{21}$ and $v_{23}u_{23}$ on the boundary of $V_{4,1}$ in Figure \ref{windows}; for guarding the corresponding children $V_{5,1}$, $V_{5,2}$ and $V_{5,3}$ respectively, we define $Q$ as $\mathcal{VP}(v_{17}u_{17}) \cup \mathcal{VP}(v_{21}u_{21}) \cup \mathcal{VP}(v_{23}u_{23})$ and traverse $Q$. \\ 

\vspace*{-0.5em}
After having successively computed $S_d$ for guarding vertices belonging to visibility polygons in $W_d = \{V_{d,1},V_{d,2},\dots\}$, 
the algorithm next computes $S_{d-1}$ for guarding vertices belonging to visibility polygons in $W_{d-1} = \{V_{d-1,1},V_{d-1,2},\dots\}$. 
Since all vertices belonging to visibility polygons in $W_d$ are already marked by guards chosen belonging to $S_d$, 
all remaining unmarked vertices of $P$ can have link distance at most $d-1$ from $v_1$.
So, any weak visibility polygon $V_{d-1,k} \in W_{d-1}$ can now be treated as a weak visibility polygon that is the farthest link distance from $v_1$.
Therefore, the guards of $S_{d-1}$ are chosen in a similar way as those of $S_d$.   
It can be seen that this same method can be used for computing $S_i$ for every $i<d$. 
Thus, in successive phases, our algorithm computes the guard sets $S_d,S_{d-1},S_{d-2}, \ldots,S_2$ for guarding vertices belonging to visibility polygons in 
$W_d,W_{d-1},W_{d-2},\ldots,W_2$ respectively, until it finally terminates after placing a single guard at $v_1$ for guarding vertices of 
$V_{1,1} \in W_1$. The final guard set $S = S_d \cup S_{d-1} \cup S_{d-2} \cup \dots \cup S_2 \cup S_1$ returned by the algorithm guards all vertices of $P$. 
The pseudocode for the entire algorithmic framework is provided below.

\begin{algorithm}[H]
\caption{Algorithm for computing a guard set $S$ from the partition tree $T$ rooted at $v_1$} 
\label{overall_pcode}        
\begin{algorithmic}[1]
\State Initialize all vertices of $P$ as unmarked \label{overall_pcode:1}
\State $d \leftarrow$ number of levels in the partition tree $T$ \label{overall_pcode:2}  
\For { each $i \in \{d-1,\dots,3,2,1\}$ } \Comment{Traverse starting from the 2nd deepest level of $T$}  \label{overall_pcode:3}
\State $S_{i+1} \leftarrow \emptyset$ \label{overall_pcode:4} 
\State $c_{i} \leftarrow |W_i|$ \Comment{$c_i$ denotes the number of nodes at the $i$th level of $T$}  \label{overall_pcode:5}
\For { each $j \in \{1,2,\dots,c_i\}$ } \label{overall_pcode:6}
\State Place new guards in $S_{i+1}$ for guarding every unmarked vertex of all children of $V_{i,j}$ \label{overall_pcode:7} 
\State Mark all vertices of $P$ that are visible from the new guards added to $S_{i+1}$ \label{overall_pcode:8}
\EndFor \label{overall_pcode:9}
\EndFor \label{overall_pcode:10}
\State $S_1 \leftarrow \{v_1\}$ \label{overall_pcode:11}
\State \Return $S = S_d \cup S_{d-1} \cup S_{d-2} \cup \dots \cup S_2 \cup S_1$ \label{overall_pcode:12} 
\end{algorithmic}
\end{algorithm} 

\vspace*{-.5em}
The procedure for placing new guards in $S_{i+1}$ for guarding all children of a particular $V_{i,j}$, as mentioned in line \ref{overall_pcode:7} of Algorithm \ref{overall_pcode}, 
is presented in detail in Section \ref{vertex_algo}. 

\section{Placement of Vertex Guards in a Weak Visibility Polygon}
\label{vertex_algo}

\vspace*{-.5em}
Let $Q$ be a simple polygon that is weakly visible from an internal chord $uv$, i.e. we have $\mathcal{VP}(uv) = Q$. 
Observe that the chord $uv$ splits $Q$ into two sub-polygons $Q_U$ and $Q_L$ as follows.
The sub-polygon bounded by $bd_c(u,v)$ and $uv$, is denoted as $Q_U$, 
and the sub-polygon bounded by $bd_{cc}(u,v)$ and $uv$, is denoted as $Q_L$.
As a first step, our algorithm (see Algorithm \ref{vg_pcode_gen}) places a set of vertex guards, 
denoted by $S$, for guarding \emph{only} the vertices belonging to $Q_U$,
though $S$ is allowed to contain guards from both $Q_U$ and $Q_L$. \\

\vspace*{-.5em}
Let $G_{opt}$ be a set of optimal vertex guards for guarding all points of $Q_U$, including interior points.
Let $G_{opt}^U$ and $G_{opt}^L$ be the subsets of guards in $G_{opt}$ that belong to $Q_U$ (i.e. lie on $bd_c(u,v)$) and $Q_L$ (i.e. lie on $bd_{cc}(u,v)$) respectively.
Since $G_{opt}^U$ and $G_{opt}^L$ form a partition of $G_{opt}$, 
$|G_{opt}^U| + |G_{opt}^L| = |G_{opt}|$.

\subsection{Concept of Inside and Outside Guards}
\label{inside_and_outside}

\vspace*{-.33em}
Suppose we wish to guard an arbitrary vertex $z$ of $Q_U$. 
Then, a guard must be placed at a vertex of $Q$ belonging to $\mathcal{VP}(z)$. 
Henceforth, let $\mathcal{VVP}(z)$ denote the set of all polygonal vertices of $\mathcal{VP}(z)$. 
Further, let us define the \emph{inward visible vertices} and the \emph{outward visible vertices} of $z$, denoted by $\mathcal{VVP}^{+}(z)$ and $\mathcal{VVP}^{-}(z)$ respectively, as follows. \\

\vspace*{-.7em}
\centerline{ $\mathcal{VVP}^{+}(z) = \{ x \in \mathcal{VVP}(z) : \mbox{the segment $zx$ does not intersect $uv$}\} $ } 
\centerline{ $\mathcal{VVP}^{-}(z) = \{ x \in \mathcal{VVP}(z) : \mbox{the segment $zx$ intersects $uv$}\} $ } 
We shall henceforth refer to the vertex guards belonging to $\mathcal{VVP}^{+}(z)$ and $\mathcal{VVP}^{-}(z)$ as \emph{inside guards}
and \emph{outside guards} for $z$ respectively. \\

\vspace*{-.55em}
Consider the weakly visible polygon in Figure \ref{need_inside}, where the three vertices $z_1$, $z_2$ and $z_3$ of $Q_U$ are such that their respective sets of \emph{outward} visible vertices are pairwise disjoint, i.e. 
$\mathcal{VVP}^{-}(z_1)\cap \mathcal{VVP}^{-}(z_2) = \emptyset$, $\mathcal{VVP}^{-}(z_2)\cap \mathcal{VVP}^{-}(z_3) = \emptyset$, and $\mathcal{VVP}^{-}(z_1)\cap \mathcal{VVP}^{-}(z_3) = \emptyset$. 
If an algorithm chooses only outside guards, then three separate guards are required 
for guarding $z_1$, $z_2$ and $z_3$. 
However, an optimal solution may place a single guard on any one of the five vertices of $Q_U$ 
for guarding $z_1$, $z_2$ and $z_3$. 

\vspace*{-0.55em}
\begin{figure}[H]
\begin{minipage}{.48\textwidth}
\centerline{\includegraphics[width=1.03\textwidth]{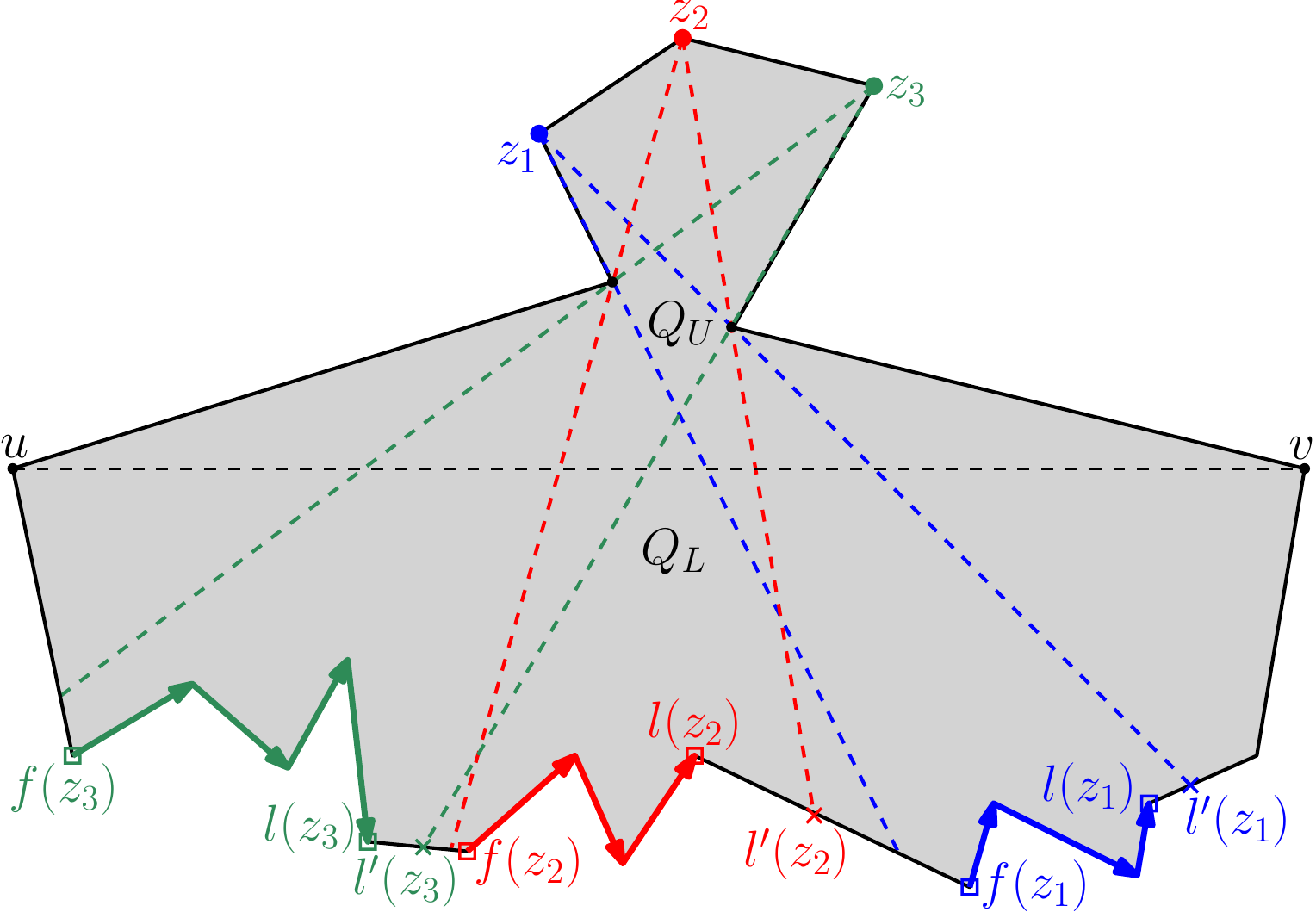}}
\caption{Example showing the need for placement of inside guards. Note that $\mathcal{VVP}^{-}(z_1)\cap \mathcal{VVP}^{-}(z_2) = \emptyset$, $\mathcal{VVP}^{-}(z_2)\cap \mathcal{VVP}^{-}(z_3) = \emptyset$, and $\mathcal{VVP}^{-}(z_1)\cap \mathcal{VVP}^{-}(z_3) = \emptyset$.}
\label{need_inside}
\end{minipage}
\hspace*{.01\textwidth}
\begin{minipage}{.51\textwidth}
\centerline{\includegraphics[width=1.03\textwidth]{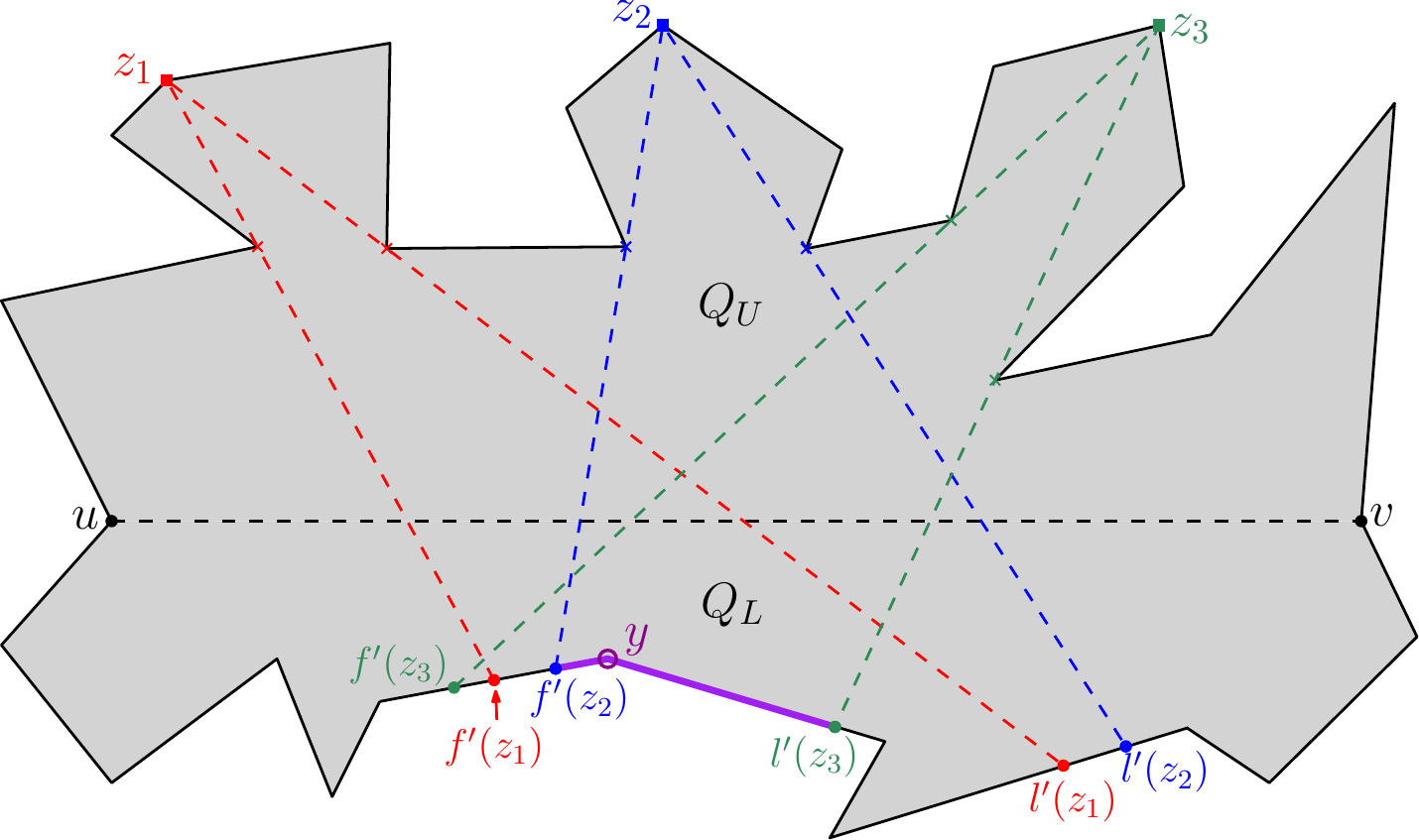}}
\caption{Example showing the need for placement of outside guards. Note that $\mathcal{VVP}^{+}(z_1)\cap \mathcal{VVP}^{+}(z_2) = \emptyset$, $\mathcal{VVP}^{+}(z_2)\cap \mathcal{VVP}^{+}(z_3) = \emptyset$, and $\mathcal{VVP}^{+}(z_1)\cap \mathcal{VVP}^{+}(z_3) = \emptyset$.}
\label{need_outside}
\end{minipage}
\end{figure}

\vspace*{-.55em}
On the other hand, in Figure \ref{need_outside}, the three vertices $z_1$, $z_2$ and $z_3$ of $Q_U$ are such that their respective sets of \emph{inward} visible vertices are pairwise disjoint, i.e. 
$\mathcal{VVP}^{+}(z_1)\cap \mathcal{VVP}^{+}(z_2) = \emptyset$, $\mathcal{VVP}^{+}(z_2)\cap \mathcal{VVP}^{+}(z_3) = \emptyset$, and $\mathcal{VVP}^{+}(z_1)\cap \mathcal{VVP}^{+}(z_3) = \emptyset$. 
If an algorithm chooses only inside guards, then three separate guards are required for guarding $z_1$, $z_2$ and $z_3$. 
However, the outward visible vertices of $z_1$, $z_2$ and $z_3$ overlap; and moreover there exists a vertex $Y$ of $Q_L$ such that $G \in \mathcal{VVP}^{-}(z_1)\cap \mathcal{VVP}^{-}(z_2)\cap \mathcal{VVP}^{-}(z_3)$. Thus, an optimal solution may choose $y$ as an outside guard for guarding $z_1$, $z_2$ and $z_3$ together. \\

\vspace*{-.5em}
The above discussion suggests that it is better to choose guards from both $\mathcal{VVP}^{+}(z)$ and $\mathcal{VVP}^{-}(z)$ for guarding the same vertex $z$ of $Q_U$, 
in order to prevent computing a guard $S$ that is arbitrarily large compared to $G_{opt}$.
Therefore, our algorithm selects a subset $Z$ of vertices of $Q_U$, 
and places a fixed number of both inside and outside guards corresponding to each of them,
so that these guards together see the entire $Q_U$. 
Moreover, the vertices in $Z$ (and also the guards corresponding to them) are selected in such a way that 
they correspond to the guards in $G_{opt}$, though the correspondence is not necessarily one-to-one, 
and this enables us to provide an approximation bound on our chosen set of guards.
We henceforth refer to this subset $Z$ of selected vertices as \emph{primary vertices}. 
Further, let $Z^U \subseteq Z$ be such that each primary vertex in $Z^U$ is visible from at least one guard in $G_{opt}^U$. 
Similarly, let  $Z^L \subseteq Z$ be such that each primary vertex in $Z^L$ is visible from at least one guard in $G_{opt}^L$. 
Since any $z \in Z$ must be visible from at least one guard of $G_{opt}^U$ or $G_{opt}^L$, we have $Z = Z^U \cup Z^L$, 
and so we have:
\vspace*{-.5em}
\begin{equation} \label{Z_split}
|Z| \leq |Z^U| + |Z^L|
\end{equation}

\vspace*{-.5em}
The general strategy for placement of guards by our algorithm for guarding only the vertices of $Q_U$ aims to establish a constant-factor approximation bound on $S$ by separately proving the following three bounds.  
\begin{equation} \label{num_guards}
    |S| \leq c \cdot |Z| 
\end{equation}
\begin{equation} \label{lower_ineq}
    |Z^L| \leq c_1 \cdot |G_{opt}^L| 
\end{equation}
\begin{equation} \label{upper_ineq}
    |Z^U| \leq c_2 \cdot |G_{opt}^U| 
\end{equation}
The above inequalities \eqref{Z_split}, \eqref{num_guards}, \eqref{lower_ineq} and \eqref{upper_ineq}  together imply the following conclusion.
\begin{equation} \label{ineq_final1} 
|S|\leq c.|Z| \leq c.|Z^L| + c.|Z^U| \leq c.c_1 \cdot |G_{opt}^L| + c.c_2 \cdot |G_{opt}^U| \leq c.\max(c_1,c_2)\cdot|G_{opt}| 
\end{equation}

\subsection{Selection of Primary Vertices}

Observe that, for any vertex $z_k \in Z$, both $\mathcal{VVP}^{+}(z_k)$ and $\mathcal{VVP}^{-}(z_k)$ may be considered to be ordered sets by taking into account the natural ordering of the visible vertices of $Q$ in clockwise order along $bd_{c}(u,v)$ and in counter-clockwise order along $bd_{cc}(u,v)$ respectively. 
Let us denote the \emph{first visible vertex} and the \emph{last visible vertex} belonging to the ordered set $\mathcal{VVP}^{-}(z_k)$ 
by $f(z_k)$ and $l(z_k)$ respectively (see Figure \ref{need_inside}).
Also, we denote by $l'(z_k)$ (similarly,  $f'(z_k)$) the \emph{last visible point} (similarly, \emph{first visible point}) 
of $\mathcal{VP}(z_k) \cap bd_{cc}(u,v)$, 
which is obtained by extending the ray $\overrightarrow{z_k p(v,z_k)}$ (respectively, $\overrightarrow{z_k p(u,z_k)}$) till it touches $bd_{cc}(u,v)$, where $p(v,z_k)$ (similarly, $p(u,z_k)$) is the parent of $z_k$ in $SPT(v)$ (respectively, $SPT(u)$). 
Observe that $bd_{cc}(f'(z_k),l'(z_k)) \cap \mathcal{VP}(z_k)$ is the only 
portion of the boundary $bd_{cc}(u,v)$ that has vertices visible from $z_k$.
Note that all vertices of $bd_{cc}(f'(z_k),l'(z_k))$ may not be visible from $z_k$, 
since some of them may lie inside left or right pockets of $\mathcal{VP}(z_k)$. 
Similarly, observe that $bd_{c}(p(u,z_k),p(v,z_k))$ is the only portion of the 
boundary $bd_{c}(u,v)$ that has vertices visible from $z_k$.
Note that all vertices of $bd_{cc}(p(u,z_k),p(v,z_k))$ may not be visible from $z_k$, since some of them may lie inside left or right pockets of $\mathcal{VP}(z_k)$. \\

\vspace*{-.5em}
Let us discuss how primary vertices are selected by our algorithm.
Initially, since all vertices of $Q_U$ are unguarded, they are considered to be \emph{unmarked}. 
As vertex guards are placed over successive iterations, vertices of $Q_U$ are \emph{marked} as soon as they become visible from some guard placed so far.
In the $k$-th iteration, 
the next primary vertex $z_k \in Z$ chosen by our algorithm is an unmarked vertex such that $l'(z_k)$ precedes $l'(x)$ (henceforth denoted notationally as $l'(z_k) \prec l'(x)$) for any other unmarked vertex $x$ of $Q_U$. An immediate consequence of such choice of the primary vertex $z_k$ is that all vertices on $bd_c(z_k,p(v,z_k))$ must already be marked. \\

\vspace*{-.5em}
\begin{lemma} \label{no_unmarked_right}
If $z_k$ is chosen as the next primary vertex by Algorithm \ref{vg_pcode_sp1}, then no unmarked vertices exist on $bd_c(z_k,p(v,z_k))$.
\end{lemma}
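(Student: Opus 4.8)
The plan is to prove Lemma~\ref{no_unmarked_right} by a contradiction argument that directly exploits the selection rule for $z_k$, namely that $z_k$ is the unmarked vertex minimizing $l'(\cdot)$ in the $\prec$ ordering. First I would set up the geometric picture: the primary vertex $z_k$ lies in $Q_U$ (on $bd_c(u,v)$), and $p(v,z_k)$ is its parent in the shortest path tree $SPT(v)$, so that $p(v,z_k)$ also lies on $bd_c(u,v)$ and is the last vertex of $Q_U$ visible from $z_k$ before the ray $\overrightarrow{z_k\,p(v,z_k)}$ leaves through $l'(z_k)$ on $bd_{cc}(u,v)$. The boundary arc $bd_c(z_k,p(v,z_k))$ is thus a portion of $Q_U$ to the ``right'' of $z_k$, and the claim is that every vertex on it is already marked at the start of iteration~$k$.

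The core of the argument would be to suppose, for contradiction, that some vertex $x$ on $bd_c(z_k,p(v,z_k))$ is still unmarked. I would then compare $l'(x)$ with $l'(z_k)$ and argue that $l'(x) \prec l'(z_k)$, which contradicts the minimality that defined the choice of $z_k$ (since $z_k$ was selected precisely because $l'(z_k) \prec l'(y)$ for every unmarked $y$, and in particular $x$ would have been eligible and preferable). To establish $l'(x) \prec l'(z_k)$, I would use the fact that $x$ lies on $bd_c(z_k,p(v,z_k))$, so $x$ is positioned clockwise-before $p(v,z_k)$ relative to $z_k$; since $l'(x)$ is obtained by extending $\overrightarrow{x\,p(v,x)}$ to $bd_{cc}(u,v)$, and $p(v,x)$ is constrained by the shortest-path-tree structure rooted at $v$, the exit point $l'(x)$ on the opposite boundary must precede $l'(z_k)$ in the counterclockwise order. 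The key geometric lemma to invoke here is the monotonicity/non-crossing property of shortest paths in a simple polygon: shortest paths from a common source $v$ to two vertices do not cross, which forces the corresponding extended rays to hit $bd_{cc}(u,v)$ in an order consistent with the ordering of $x$ and $z_k$ along $bd_c(u,v)$.

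The main obstacle I anticipate is making the ordering argument fully rigorous across the various cases for where $x$ sits and whether $x$ is itself visible from $v$ or lies in a pocket. In particular I would need to handle carefully the case where $x$ lies strictly between $z_k$ and $p(v,z_k)$ versus the case $x = p(v,z_k)$, and to confirm that an unmarked $x$ in this arc cannot have $l'(x)$ coincide with or follow $l'(z_k)$. The delicate point is that $l'(\cdot)$ is defined via the $SPT(v)$ parent, so I must show that moving clockwise from $z_k$ toward $p(v,z_k)$ can only move the exit point $l'$ counterclockwise-backward (i.e.\ earlier) on $bd_{cc}(u,v)$; this is where the non-crossing property of shortest path trees does the essential work and where a clean statement of that property, applied to $SPT(v)$, is needed.

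Once the ordering contradiction is in place, the proof concludes immediately: the existence of an unmarked $x$ on $bd_c(z_k,p(v,z_k))$ with $l'(x) \prec l'(z_k)$ violates the defining minimality of $z_k$, so no such $x$ can exist, and hence all vertices on $bd_c(z_k,p(v,z_k))$ are already marked. I would keep the write-up short by isolating the shortest-path non-crossing fact as the single geometric input and deriving the order relation $l'(x) \prec l'(z_k)$ from it, rather than reasoning about visibility pockets in detail.
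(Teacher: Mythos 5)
Your proposal is correct and follows essentially the same route as the paper: assume an unmarked vertex $q$ exists on $bd_c(z_k,p(v,z_k))$, use the non-crossing structure of $SPT(v)$ (the paper phrases this as the ray $\overrightarrow{q\,p(v,q)}$ intersecting the ray $\overrightarrow{z_k\,p(v,z_k)}$ between $z_k$ and $p(v,z_k)$) to conclude $l'(q) \prec l'(z_k)$, and derive a contradiction with the minimality rule that selected $z_k$. The paper's proof is just a terser version of the same argument, asserting the ray-intersection fact directly rather than isolating the shortest-path non-crossing property as a separate ingredient.
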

\begin{proof}
 Let us assume, on the contrary, that there exists a vertex $q$ on $bd_c(z_k,p(v,z_k))$ that is yet to be marked. 
 Observe that the ray $\overrightarrow{q p(v,q)}$ must intersect the ray $\overrightarrow{z_k p(v,z_k)}$ at a point between $z_k$ and $p(v,z_k)$, 
 which implies that $l'(q) \prec l'(z_k)$ on $bd_{cc}(u,v)$ (see Figure \ref{proof_case1a}).
 So, in the current iteration, $q$ rather than $z_k$ is chosen as the next primary vertex, which is a contradiction.
\end{proof}

In Section \ref{only_lower}, we consider guarding vertices of $Q_U$ in a special scenario where $G_{opt}$ uses vertex guards only from $Q_L$.
In Section \ref{upper_n_lower}, we consider guarding vertices of $Q_U$ in the general scenario where the guards of $G_{opt}$ are not restricted to $Q_L$.
In Section \ref{interior}, we enhance the procedure for guarding vertices to ensure that all interior points of $Q_U$ are guarded as well.

\subsection{Placement of guards in a special scenario}
\label{only_lower}

Let us consider a special scenario where $G_{opt}^L = G_{opt}$ and $G_{opt}^U = \emptyset$, i.e. 
$G_{opt}$ uses only vertex guards from $Q_L$. 
In such a scenario, for every vertex $z_k \in Z$, $\mathcal{VVP}^{-}(z_k)$ must contain a guard from $G_{opt}$.
So, a natural idea is to place outside guards in a greedy manner so that they lie in the common intersection of outward visible vertices of as many vertices of $Q_U$ as possible. 
For any primary vertex $z_k$, let us denote by $\mathcal{OVV}^{-}(z_k)$ the set of unmarked vertices of $Q_U$ whose outward visible vertices overlap with those of $z_k$. 
In other words, 
\vspace{-0.5em}
$$ \mathcal{OVV}^{-}(z_k) = \{ x \in \mathcal{V}(Q_U) : \mbox{ $x$ is unmarked, and } \mathcal{VVP}^{-}(z_k) \cap \mathcal{VVP}^{-}(x) \neq \emptyset \} $$
So, each vertex of $Q_U$ belonging to $\mathcal{OVV}^{-}(z_k)$ is visible from at least one vertex of $\mathcal{VVP}^{-}(z_k)$.
Further, $\mathcal{OVV}^{-}(z_k)$ can be considered to be an ordered set, where for any pair of elements $x_1,x_2 \in \mathcal{OVV}^{-}(z_k)$, 
we define $x_1 \prec x_2$ if and only if $l'(x_1)$ precedes $l'(x_2)$ in counter-clockwise order on $bd_{cc}(u,v)$. 
For the current primary vertex $z_k$, let us assume without loss of generality that $\mathcal{OVV}^{-}(z_k) = \{ x^k_1, x^k_2, x^k_3, \dots\, x^k_{m(k)}\}$ such that $l'(x^k_1) \prec l'(x^k_2) \prec \dots \prec l'(x^k_{m(k)})$ in counter-clockwise order on $bd_{cc}(u,v)$. \\

\vspace{-0.5em}
Let us partition the vertices belonging to $\mathcal{OVV}^{-}(z_k)$ into 3 sets, viz. $A^k$, $B^k$ and $C^k$ in the following manner. 
Every vertex of $\mathcal{OVV}^{-}(z_k)$ visible from $l(z_k)$ is included in $B^k$.
Observe that, by definition $z^k \in B^k$.
Obviously, for each vertex $x^k_i \in \mathcal{OVV}^{-}(z_k) \setminus B^k$, 
$x^k_i$ is not visible from $l(z_k)$ due to the presence of some constructed edge.
The vertices of $\mathcal{OVV}^{-}(z_k) \setminus B^k$ are categorized into $A^k$ and $C^k$ based on whether this constructed edge creates a right pocket or a left pocket.
Suppose $x^k_i \in \mathcal{OVV}^{-}(z_k) \setminus B^k$ is a vertex such that $\mathcal{VP}(x^k_{i})$ creates a constructed edge $t(x^k_i) t'(x^k_i)$, 
where $t(x^k_i) \in \mathcal{V}(Q_L)$ is a polygonal vertex and $t'(x^k_i)$ is the point where $\overrightarrow{x^k_i t(x^k_i)}$ first intersects $bd_{cc}(u,v)$. 
If $t(x^k_i)$ lies on $bd_{cc}(f'(z_k),l'(z_k))$ and $t'(x^k_i)$ lies on $bd_{cc}(l(z_k),v)$,
i.e. if $f(z_k) \prec t(x^k_i) \prec l'(z_k)$ and $l(z_k) \prec t'(x^k_i) \prec v$,
then $x^k_i$ is included in $A^k$. 
For instance, in Figure \ref{fig_case1}, $x_1^2 \in A^2$ due to the constructed edge $t(x_1^2)t'(x_1^2)$.
On the other hand, if $t(x^k_i)$ lies on $bd_{cc}(l'(z_k),v)$ and $t'(x^k_i)$ lies on $bd_{cc}(f(z_k),l'(z_k))$,
i.e. if $l'(z_k) \prec t(x^k_i) \prec v$ and $f(z_k) \prec t'(x^k_i) \prec l'(z_k)$,
then $x^k_i$ is included in $C^k$.
For instance, in Figure \ref{fig_case1}, $x_3^2 \in C^2$ due to the constructed edge $t(x_3^2)t'(x_3^2)$.
Observe that, all vertices of $A^k$ must lie on $bd_c(u,z_k)$, 
whereas all vertices of $C^k$ must lie on $bd_c(z_k,v)$.

\begin{lemma} \label{B_property}
The vertex $l(z_k)$ sees all vertices belonging to $B^k$.
\end{lemma}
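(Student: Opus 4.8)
The plan is to prove the contrapositive characterization that underlies the partition: a single outside guard placed at $l(z_k)$ covers exactly those vertices of $\mathcal{OVV}^{-}(z_k)$ that were \emph{not} diverted into $A^k$ or $C^k$. First I would record the base case. Since $l(z_k) \in \mathcal{VVP}^{-}(z_k)$, the vertices $z_k$ and $l(z_k)$ are mutually visible, so $z_k \in B^k$ and the segment $z_k\,l(z_k)$ lies inside $Q$. The substance of the lemma is therefore the claim that \emph{no} vertex of $B^k$ is hidden from $l(z_k)$; equivalently, every $x \in \mathcal{OVV}^{-}(z_k)$ that fails to be visible from $l(z_k)$ must fall into $A^k \cup C^k$. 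I would establish this equivalent statement by contradiction.

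So suppose $x \in B^k$ but $l(z_k)$ does not see $x$. Because $x \in \mathcal{OVV}^{-}(z_k)$, there is a common vertex $w \in \mathcal{VVP}^{-}(x) \cap \mathcal{VVP}^{-}(z_k) \subseteq \mathcal{V}(Q_L)$ that both $x$ and $z_k$ see across $uv$; moreover $w \preceq l(z_k)$, since $l(z_k)$ is the last element of the ordered set $\mathcal{VVP}^{-}(z_k)$. As $x$ sees $w$ but not $l(z_k)$, the segment $x\,l(z_k)$ must leave $Q$, and the obstruction is realized by a constructed edge $t(x)\,t'(x)$ of $\mathcal{VP}(x)$, where $t(x) \in \mathcal{V}(Q_L)$ is a reflex vertex and $t'(x)$ is the first intersection of $\overrightarrow{x\,t(x)}$ with $bd_{cc}(u,v)$; by construction $l(z_k)$ lies in the pocket $P(t(x)t'(x))$. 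This is precisely the data on which the definitions of $A^k$ and $C^k$ are phrased.

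It then remains to show that the endpoints $t(x)$ and $t'(x)$ necessarily land in the boundary intervals defining $A^k$ or $C^k$. Here I would split on whether $x$ lies on $bd_c(u,z_k)$ or on $bd_c(z_k,v)$ — consistent with the observation that all of $A^k$ lies on $bd_c(u,z_k)$ and all of $C^k$ on $bd_c(z_k,v)$ — and exploit the monotone order $f'(z_k) \preceq f(z_k) \preceq l(z_k) \preceq l'(z_k)$ along $bd_{cc}(u,v)$ together with the two unobstructed sightlines $z_k w$ and $z_k\,l(z_k)$. The sightline $z_k\,l(z_k)$ being clear forces the reflex vertex $t(x)$ and its shadow $t'(x)$ to straddle $l(z_k)$ in the way demanded by the definitions: when $x \in bd_c(u,z_k)$ one gets $t(x) \in bd_{cc}(f'(z_k),l'(z_k))$ and $t'(x) \in bd_{cc}(l(z_k),v)$, so $x \in A^k$; when $x \in bd_c(z_k,v)$ one gets $t(x) \in bd_{cc}(l'(z_k),v)$ and $t'(x) \in bd_{cc}(f(z_k),l'(z_k))$, so $x \in C^k$. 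Either way $x \notin B^k$, a contradiction, which proves the lemma.

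The delicate point — and the step I expect to be the main obstacle — is this last one: verifying that the blocking edge's endpoints fall in \emph{exactly} the prescribed intervals rather than some intermediate position, and ruling out that the obstruction between $x$ and $l(z_k)$ is caused by a boundary chain that is not a single constructed edge of $\mathcal{VP}(x)$. This hinges on simultaneously using the clear sightline from $z_k$ to $l(z_k)$ (so the edge hiding $x$ cannot also hide $z_k$) and the extremal roles of $f'(z_k)$ and $l'(z_k)$ as the first and last visible points of $z_k$ on $bd_{cc}(u,v)$; degenerate collinear configurations at $u$, $v$, or when $w = l(z_k)$ would have to be checked separately.
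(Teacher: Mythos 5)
Your proposal is correct and follows essentially the same route as the paper: both reduce the lemma to showing that any $x \in \mathcal{OVV}^{-}(z_k)$ not visible from $l(z_k)$ is blocked by a constructed edge $t(x)t'(x)$ whose endpoints straddle $l(z_k)$ in the manner defining $A^k$ or $C^k$, using the sandwich $f'(x) \prec l(z_k) \prec l'(x)$ that comes from the overlap with $z_k$ and the minimality of $l'(z_k)$. The ``delicate point'' you flag about pinning the endpoints to the exact intervals is treated just as tersely in the paper, which simply asserts the dichotomy $t(x) \in \mathcal{VVP}^{-}(z_k)$ versus $t(x)$ on $bd_{cc}(l'(z_k),v)$.
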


\begin{proof}
We know that, due to the choice of the primary vertex $z_k$, $f'(x) \prec l(z_k) \prec l'(x)$ for every vertex $x \in \mathcal{OVV}^{-}(z_k)$.
Therefore, if $x$ is not visible from $l(z_k)$, then there must
exist a constructed edge $t(x)t'(x)$ of $\mathcal{VVP}^{-}(z_k)$
such that (i) either $t(x) \in \mathcal{VVP}^{-}(z_k)$, 
in which case $x$ must belong to $A^k$, 
or (ii) $t(x)$ lies on $bd_{cc}(l'(z_k),v)$, 
in which case $x$ must belong to $C^k$.
So, if $x \in B^k$, $x$ must be visible from $l(z_k)$.
In other words, $l(z_k)$ sees all vertices belonging to $B^k$.
\end{proof}

\begin{corollary} \label{SP_turns}
The shortest paths from $l(z_k)$ to any vertex of $A^k$ makes only left turns, 
whereas the shortest paths from $l(z_k)$ to any vertex of $C^k$ makes only right turns.
\end{corollary}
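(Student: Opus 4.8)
The plan is to read off the result from Lemma~\ref{B_property}, whose proof already pinpoints \emph{why} a vertex $x \in \mathcal{OVV}^{-}(z_k)\setminus B^k$ fails to be seen by $l(z_k)$: there is a constructed edge $t(x)t'(x)$ of $\mathcal{VP}(x)$ separating $x$ from $l(z_k)$, and membership in $A^k$ versus $C^k$ records exactly whether this edge bounds a left pocket (so that $t(x)\in\mathcal{VVP}^{-}(z_k)$ and $f(z_k)\prec t(x)\prec l'(z_k)$, with $l(z_k)\prec t'(x)\prec v$) or a right pocket (so that $l'(z_k)\prec t(x)\prec v$ and $f(z_k)\prec t'(x)\prec l'(z_k)$). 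Because the two cases are mirror images across the chord $uv$, I would prove the statement for $A^k$ and obtain the $C^k$ case verbatim by the symmetric argument, exchanging the roles of $u$ and $v$ and of ``left'' and ``right''.

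First I would fix $x\in A^k$ and analyze $SP(l(z_k),x)$. Since $l(z_k)$ lies on $bd_{cc}(u,v)$ while $x$ lies on $bd_c(u,z_k)$, the path crosses the chord $uv$, and it bends only at reflex vertices of $Q$. The crux is to show that every turning vertex of $SP(l(z_k),x)$ is the apex of a left pocket of the type above and lies on $bd_c(u,z_k)$, the same side of $z_k$ as $x$; in particular the blocking vertex $t(x)$ is the first bend encountered once the path enters $Q_U$. I would establish this from the boundary ordering: since $t(x)\prec l'(z_k)$ whereas $t'(x)\succ l(z_k)$, the vertex $l(z_k)$ sees beyond $t(x)$ towards $t'(x)$, so the shortest path from $l(z_k)$ to the shadowed vertex $x$ is forced to hug $t(x)$ and then proceed into the pocket. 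Wrapping around the apex of a left pocket, traversed in the direction from $l(z_k)$ to $x$, is a left turn.

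The main obstacle is the nesting step: once the path turns left at $t(x)$, I must show that every later bend is again a left turn, i.e. that no right turn can ever intervene. I would argue that the turning vertices form a sequence that is monotone along $bd_c(u,z_k)$, each being the apex of a left pocket nested inside the previous one, so that the inner chain of the funnel from $l(z_k)$ to $x$ is convex towards $l(z_k)$ and therefore turns consistently to the left. Verifying that these turning vertices genuinely stay on the $u$-side of $z_k$ and never ``cross over'' to the $C^k$ side is where the defining inequalities for $A^k$, together with the choice of $z_k$ as the current primary vertex (Lemma~\ref{no_unmarked_right}, which forbids unmarked vertices on $bd_c(z_k,p(v,z_k))$), are invoked to pin the bends to the correct portion of the boundary.
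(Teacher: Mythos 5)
The paper gives no proof of this corollary at all --- it is asserted as an immediate consequence of the case analysis inside the proof of Lemma \ref{B_property} --- so there is no ``official'' argument to match yours against; judged on its own terms, your proposal contains one concrete geometric error and one circularity. The error: you claim that every turning vertex of $SP(l(z_k),x)$ for $x\in A^k$ lies on $bd_c(u,z_k)$ and that the blocking vertex $t(x)$ is ``the first bend encountered once the path enters $Q_U$.'' But the paper defines $t(x^k_i)\in\mathcal{V}(Q_L)$, i.e.\ $t(x)$ is a reflex vertex on $bd_{cc}(u,v)$; the wrap around $t(x)$ therefore occurs in $Q_L$, \emph{before} the path crosses the chord, and in general $SP(l(z_k),x)$ may bend at reflex vertices on both sides of $uv$. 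Pinning all bends to $bd_c(u,z_k)$ is inconsistent with the very definition of $A^k$ that you quote, and it also undercuts your use of symmetry, since the $A^k$/$C^k$ distinction is made by where $t(x)$ and $t'(x)$ fall on $bd_{cc}(u,v)$, not by which side of $z_k$ the turning vertices of the path lie on.

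The circularity: the entire content of the corollary is that after the first bend no turn in the opposite sense ever occurs, and your justification is that ``the turning vertices form a sequence that is monotone along $bd_c(u,z_k)$, each being the apex of a left pocket nested inside the previous one, so that the inner chain of the funnel \dots turns consistently to the left.'' That nesting and monotonicity is precisely what has to be proved; stating it is restating the claim. A non-circular argument has to start from the fact (implicit in the proof of Lemma \ref{B_property}) that $l(z_k)$ lies in the pocket of the constructed edge $t(x)t'(x)$ of $\mathcal{VP}(x)$, decompose $SP(l(z_k),x)$ at the apex $t(x)$, and then use the ordering conditions $f(z_k)\prec t(x)\prec l'(z_k)$ and $l(z_k)\prec t'(x)\prec v$ (versus the reversed conditions defining $C^k$) to fix the sense in which the path winds around that pocket on each of the two sides of the chord; those inequalities, and not Lemma \ref{no_unmarked_right}, are what distinguish left turns from right turns here. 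Lemma \ref{no_unmarked_right} is a statement about which vertices are already marked when $z_k$ is selected; it carries no information about the geometry of shortest paths and cannot ``pin the bends to the correct portion of the boundary.''
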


\begin{corollary} \label{convex_chain}
If $bd_{cc}(f(z_k),l(z_k))$ is convex, then $A^k = C^k = \emptyset$ and $B^k = \mathcal{OVV}^{-}(z_k)$, which implies that all vertices of $\mathcal{OVV}^{-}(z_k)$ are visible from $l(z_k)$.
\end{corollary}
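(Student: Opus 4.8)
The plan is to use the disjoint partition $\mathcal{OVV}^{-}(z_k) = A^k \cup B^k \cup C^k$ recorded during the construction of these three sets, together with Lemma \ref{B_property}. Since Lemma \ref{B_property} already asserts that $l(z_k)$ sees every vertex of $B^k$, the whole statement reduces to proving that convexity of $bd_{cc}(f(z_k),l(z_k))$ forces $A^k = C^k = \emptyset$; the equality $B^k = \mathcal{OVV}^{-}(z_k)$ and the visibility of all of $\mathcal{OVV}^{-}(z_k)$ from $l(z_k)$ are then immediate. The one structural fact I would invoke repeatedly is that inside a simple polygon a segment can be blocked, and hence a constructed edge can be generated, only at a \emph{reflex} vertex; thus a subchain of $bd(Q)$ that is convex contains no vertex that can obstruct a line of sight.

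I would first settle $A^k$ directly. By the case distinction in the proof of Lemma \ref{B_property}, each $x \in A^k$ is hidden from $l(z_k)$ by a constructed edge whose apex $t(x)$ lies in $\mathcal{VVP}^{-}(z_k)$. Because every outward-visible vertex of $z_k$ lies between the first and the last of them, $\mathcal{VVP}^{-}(z_k) \subseteq bd_{cc}(f(z_k),l(z_k))$, so $t(x)$ is a vertex of the chain we are assuming to be convex. A convex boundary vertex cannot be the apex of a constructed edge, contradicting the obstruction that defines membership in $A^k$; hence $A^k = \emptyset$. For $C^k$ I would argue through Corollary \ref{SP_turns}: for any $x \in C^k$ the path $SP(l(z_k),x)$ turns only to the right, and a shortest path can make a genuine turn only at a reflex vertex of $Q$. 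The crux is to show that, since $l(z_k)$ is the extreme counterclockwise outward-visible vertex of $z_k$, every reflex vertex at which this funnel of shortest paths can bend lies on $bd_{cc}(f(z_k),l(z_k))$. Granting this, convexity of that chain leaves no vertex at which $SP(l(z_k),x)$ could turn, so the path collapses to the single segment $l(z_k)x$, making $x$ visible from $l(z_k)$ and contradicting $x \notin B^k$; therefore $C^k = \emptyset$ as well.

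Once $A^k = C^k = \emptyset$, the partition degenerates to $B^k = \mathcal{OVV}^{-}(z_k)$, and a final appeal to Lemma \ref{B_property} shows that $l(z_k)$ sees every vertex of $\mathcal{OVV}^{-}(z_k)$, which is the assertion of the corollary. The step I expect to need the most care is the localization claim used for $C^k$: the set $C^k$ is defined through a constructed edge whose apex $t(x)$ sits on $bd_{cc}(l'(z_k),v)$, \emph{beyond} the visible chain, so one must verify that the actual right-turning bends of $SP(l(z_k),x)$ nevertheless occur on $bd_{cc}(f(z_k),l(z_k))$ --- equivalently, that a reflex vertex lying past $l(z_k)$ toward $v$ cannot obstruct the line of sight from the extreme vertex $l(z_k)$ to an upper vertex $x$. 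Establishing this cleanly amounts to analyzing the shortest-path funnel rooted at $l(z_k)$ inside the region weakly visible from $uv$, and it is there that the convexity hypothesis does its real work.
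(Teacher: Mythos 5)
Your reduction to showing $A^k=C^k=\emptyset$ and your treatment of $A^k$ match what the paper intends: the corollary is stated with no proof at all, and the only available justification for the $A^k$ half is exactly yours --- by the case analysis in the proof of Lemma \ref{B_property}, the obstructing apex $t(x)$ for $x\in A^k$ lies in $\mathcal{VVP}^{-}(z_k)\subseteq bd_{cc}(f(z_k),l(z_k))$, a constructed edge can emanate only from a reflex vertex, and a convex chain has none. That half is fine, and your final appeal to Lemma \ref{B_property} is the same one the paper makes.

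The $C^k$ half contains a genuine gap, which you have diagnosed but not closed, and the route you sketch cannot close it. By the definition of $C^k$ the obstructing apex satisfies $l'(z_k)\prec t(x)\prec v$, i.e.\ it lies strictly beyond the chain whose convexity is hypothesized; so the localization claim you ask the reader to ``grant'' --- that every bend of $SP(l(z_k),x)$ occurs on $bd_{cc}(f(z_k),l(z_k))$ --- is contradicted by the very membership condition of $C^k$, since for $x\in C^k$ that path bends precisely at $t(x)$, off the convex chain. Worse, the configuration is realizable: let $p(v,z_k)$ be a reflex vertex of the \emph{upper} chain $bd_c(u,v)$, so that the portion of $bd_{cc}(u,v)$ beyond $l'(z_k)$ is hidden from $z_k$ by an upper-chain vertex while $bd_{cc}(f(z_k),l(z_k))$ stays convex; then a reflex vertex $t$ on $bd_{cc}(l'(z_k),v)$ near $v$ whose pocket contains $l(z_k)$ puts a vertex $x$ of $Q_U$ near $v$ into $\mathcal{OVV}^{-}(z_k)$ (it shares a visible vertex with $z_k$ in the middle of the convex chain) yet cuts it off from $l(z_k)$, so $x\in C^k$. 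Hence convexity of $bd_{cc}(f(z_k),l(z_k))$ alone does not force $C^k=\emptyset$; any correct argument needs a hypothesis reaching out to $l'(z_k)$ or to $v$, or the unstated assumption that $p(v,z_k)$ lies on $bd_{cc}(u,v)$ (in which case convexity forces $p(v,z_k)=v$, $l'(z_k)=v$, and $C^k$ is empty vacuously). Since the paper never invokes Corollary \ref{convex_chain} later, nothing downstream depends on this, but as written your proposal does not establish the $C^k$ assertion, and the missing step is not a technicality.
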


For any $X \subseteq \mathcal{OVV}^{-}(z_k)$, we define $\mathcal{CI}(X) = \bigcap_{x \in X} \mathcal{VVP}^{-}(x)$, 
which implies that, for each vertex $y \in \mathcal{CI}(X)$, 
all the vertices belonging to $X$ are visible from $y$,
and hence can be guarded by placing a vertex guard at $y$.
By definition, for any $X \subseteq \mathcal{OVV}^{-}(z_k)$, the vertices belonging to $\mathcal{CI}(X)$ lie on $bd_{cc}(u,v)$. 
\begin{lemma}
For every $k \in \{1,2,\dots,|Z|\}$, 
$\mathcal{CI}(B^k) \neq \emptyset$.
\end{lemma}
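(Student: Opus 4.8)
The plan is to exhibit a single explicit witness point that lies in $\mathcal{CI}(B^k)$, rather than reasoning about the intersection abstractly. The natural candidate is $l(z_k)$ itself, the last visible vertex of the ordered set $\mathcal{VVP}^{-}(z_k)$. If I can show that $l(z_k) \in \mathcal{VVP}^{-}(x)$ for every $x \in B^k$, then $l(z_k) \in \bigcap_{x \in B^k}\mathcal{VVP}^{-}(x) = \mathcal{CI}(B^k)$, and the nonemptiness follows immediately.

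First I would invoke Lemma \ref{B_property}, which already establishes that $l(z_k)$ sees every vertex belonging to $B^k$. Hence, for each $x \in B^k$, the vertex $l(z_k)$ is a polygonal vertex of $\mathcal{VP}(x)$, i.e. $l(z_k) \in \mathcal{VVP}(x)$ by symmetry of visibility. This discharges the visibility requirement in the definition of $\mathcal{VVP}^{-}(x)$; what remains is to verify the \emph{outward} condition, namely that the segment $x\,l(z_k)$ actually intersects the chord $uv$.

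Second, I would settle the outward condition purely from the separation induced by the chord. Since $l(z_k)$ is a vertex of $\mathcal{VVP}^{-}(z_k)$, it lies on $bd_{cc}(u,v)$ and hence belongs to $Q_L$; on the other hand, every $x \in B^k \subseteq \mathcal{OVV}^{-}(z_k) \subseteq \mathcal{V}(Q_U)$ lies on $bd_c(u,v)$ and hence belongs to $Q_U$. Because $uv$ splits $Q$ into the two subpolygons $Q_U$ and $Q_L$ whose only shared points lie on $uv$, any segment joining an interior vertex of $bd_c(u,v)$ to an interior vertex of $bd_{cc}(u,v)$ must cross $uv$. Thus $x\,l(z_k)$ intersects $uv$, giving $l(z_k) \in \mathcal{VVP}^{-}(x)$ for every $x \in B^k$, and therefore $l(z_k) \in \mathcal{CI}(B^k)$.

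The main obstacle I anticipate is not the core argument but the degenerate boundary cases: when $l(z_k)$ coincides with an endpoint of the chord (for instance $l(z_k) = v$) or when some $x \in B^k$ equals $u$ or $v$, the segment $x\,l(z_k)$ may touch $uv$ only at a shared endpoint rather than crossing it transversally. I would address this by appealing to the convention under which such a contact still counts as intersecting $uv$, or by observing that the definition of $\mathcal{VVP}^{-}(z_k)$ forces $l(z_k)$ to be a genuine outward-visible vertex distinct from the chord endpoints, so that the crossing is strict. Once this technicality is dispatched, the witness $l(z_k)$ certifies $\mathcal{CI}(B^k) \neq \emptyset$.
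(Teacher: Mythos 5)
Your proof is correct and takes essentially the same route as the paper: the paper also exhibits $l(z_k)$ as the explicit witness, citing Lemma \ref{B_property} to conclude $l(z_k) \in \mathcal{VVP}^{-}(x)$ for every $x \in B^k$ and hence $l(z_k) \in \mathcal{CI}(B^k)$. Your additional verification of the outward (chord-crossing) condition and the degenerate endpoint cases is a more careful spelling-out of a step the paper leaves implicit.
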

\begin{proof}
It follows directly from Lemma \ref{B_property} that
$l(z_k) \in \mathcal{VVP}^{-}(x)$ for every $x \in B^k$. 
Thus, $l(z_k) \in \bigcap_{x \in B^k} \mathcal{VVP}^{-}(x)$, i.e. $l(z_k) \in \mathcal{CI}(B^k)$.
\end{proof}
Depending on the vertices in $A^k$, $B^k$ and $C^k$, 
we have the following exhaustive cases because 
$\mathcal{CI}(B^k) \neq \emptyset$ by Lemma \ref{B_property}.
\begin{description}

\item[Case 1 -] $\mathcal{CI}(A^k \cup B^k \cup C^k) \neq \emptyset$ (see Figure \ref{fig_case1})

\item[Case 2 -] $\mathcal{CI}(A^k \cup B^k \cup C^k) = \emptyset$ and $\mathcal{CI}(B^k) \neq \emptyset$ 

\item[\hspace{11mm} Case 2a -] $\mathcal{CI}(A^k) \neq \emptyset$ and $\mathcal{CI}(C^k) \neq \emptyset$ (see Figure \ref{fig_case2a})

\item[\hspace{11mm} Case 2b -] $\mathcal{CI}(A^k) = \emptyset$ and $\mathcal{CI}(C^k) \neq \emptyset$ (see Figure \ref{fig_case2b})

\item[\hspace{11mm} Case 2c -] $\mathcal{CI}(A^k) \neq \emptyset$ and $\mathcal{CI}(C^k) = \emptyset$ (see Figure \ref{fig_case2c})

\item[\hspace{11mm} Case 2d -] $\mathcal{CI}(A^k) = \emptyset$ and $\mathcal{CI}(C^k) = \emptyset$ (see Figure \ref{fig_case2d})






\end{description}

\begin{figure}[H]
\begin{minipage}{0.43\textwidth}
  \centerline{\includegraphics[width=\textwidth]{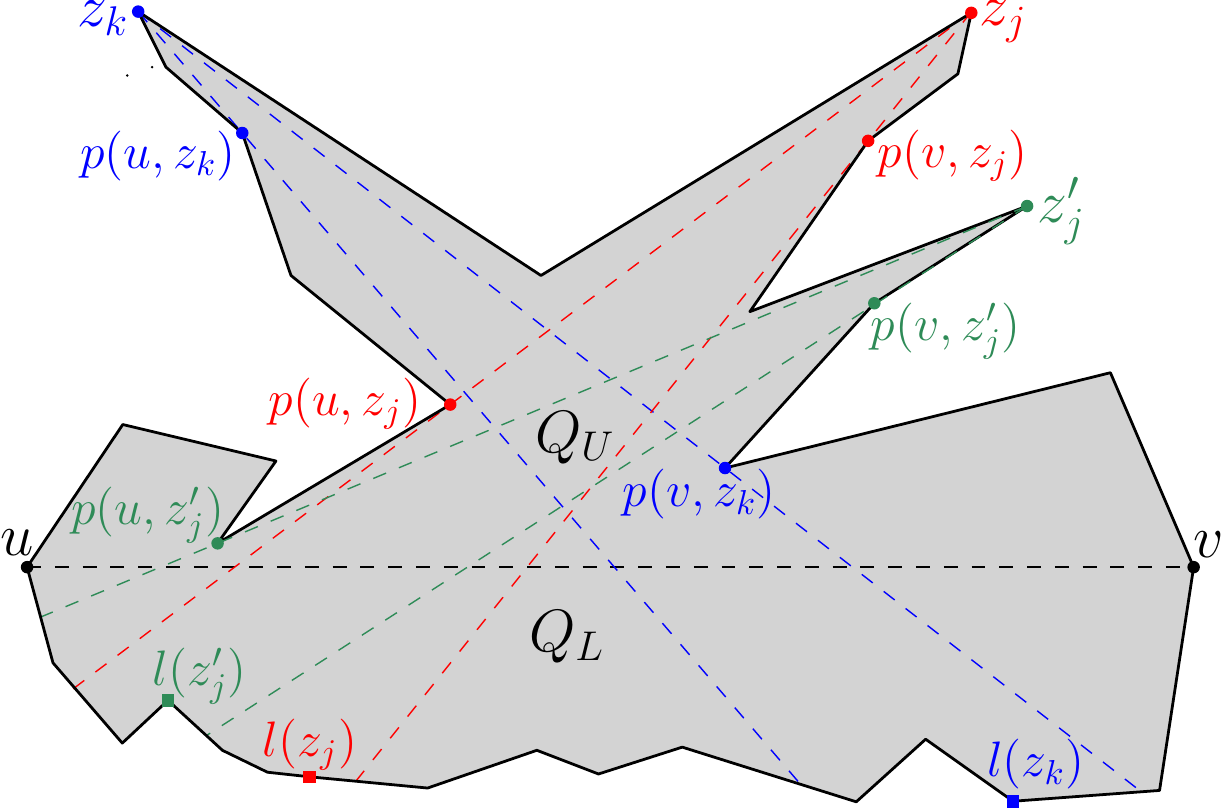}}
  \caption{ The choice of the next primary vertex. }
  \label{proof_case1a}
\end{minipage}
\hspace*{0.01\textwidth}
\begin{minipage}{0.55\textwidth}
  \centerline{\includegraphics[width=\textwidth]{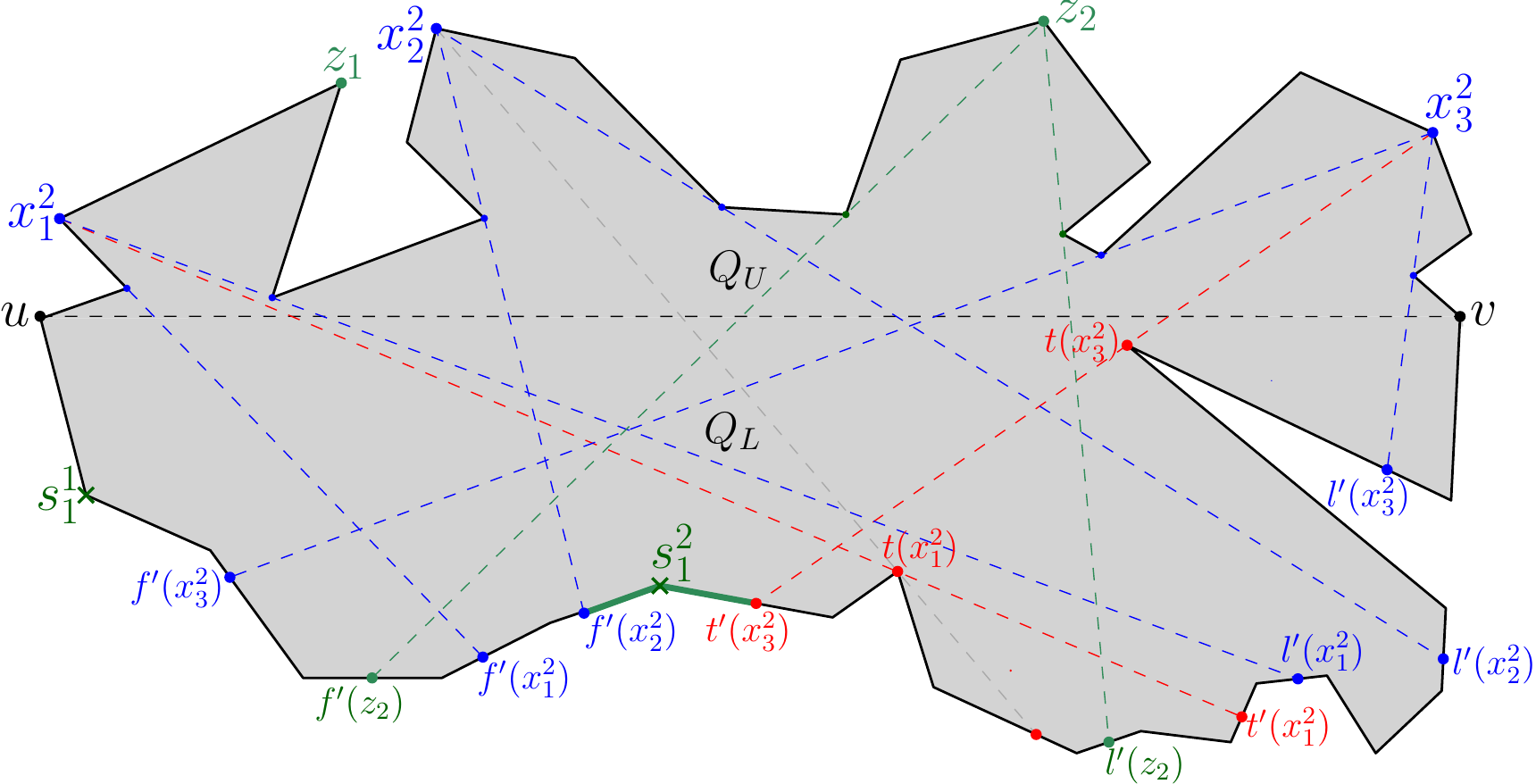}}
  \caption{Figure for Case 1, 
  where $A^2 = \{x_1^2\}$, $B^2 \supseteq \{x_2^2\}$, $C^2 = \{x_3^2\}$, 
  and $s_1^2 \in \mathcal{CI}(A^2 \cup B^2 \cup C^2)$. }
  \label{fig_case1}
\end{minipage}
\end{figure}

\vspace{-0.5em}
Consider Case 1, where $\mathcal{CI}(A^k \cup B^k \cup C^k) \neq \emptyset$ (see Figure \ref{fig_case1}, and line \ref{vg_pcode_sp1:16} of Algorithm \ref{vg_pcode_sp1}).
Here, the algorithm places a vertex guard $s_3^k$ at any vertex belonging to $\mathcal{CI}(A^k \cup B^k \cup C^k)$ (in line \ref{vg_pcode_sp1:19}). 
So, every vertex in $\mathcal{OVV}^{-}(z_k)$ is visible from $s_3^k$ and are hence marked after the placement of the guard at $s_3^k$. 

\begin{lemma} \label{case1}
If $\mathcal{CI}(A^k \cup B^k \cup C^k) \neq \emptyset$, 
then all vertices belonging to $\mathcal{OVV}^{-}(z_k)$ are visible 
from the vertex guard placed at $s_3^k$.
Therefore, no vertex $ x_i^k \in \mathcal{OVV}^{-}(z_k)$ can be a primary vertex $z_j$ 
for any $j > k$.
\end{lemma}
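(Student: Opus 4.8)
The plan is to derive both assertions directly from the definition of $\mathcal{CI}(\cdot)$ and from the rule by which primary vertices are selected. The first step is to record that $A^k$, $B^k$ and $C^k$ were introduced as a \emph{partition} of $\mathcal{OVV}^{-}(z_k)$, so that $A^k \cup B^k \cup C^k = \mathcal{OVV}^{-}(z_k)$. Consequently the hypothesis $\mathcal{CI}(A^k \cup B^k \cup C^k) \neq \emptyset$ is precisely the statement $\mathcal{CI}(\mathcal{OVV}^{-}(z_k)) \neq \emptyset$, and the guard $s_3^k$ chosen by the algorithm satisfies
\[
s_3^k \in \mathcal{CI}(\mathcal{OVV}^{-}(z_k)) = \bigcap_{x \in \mathcal{OVV}^{-}(z_k)} \mathcal{VVP}^{-}(x).
\]

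Next I would unwind this membership one vertex at a time. For an arbitrary $x \in \mathcal{OVV}^{-}(z_k)$, the relation $s_3^k \in \mathcal{VVP}^{-}(x)$ says exactly that $s_3^k$ is an outward visible vertex of $x$, and in particular that $s_3^k$ is visible from $x$. Since visibility between two points of $Q$ is symmetric, $x$ is visible from $s_3^k$ as well. As $x$ ranges over all of $\mathcal{OVV}^{-}(z_k)$, this shows that the vertex guard placed at $s_3^k$ sees every vertex of $\mathcal{OVV}^{-}(z_k)$, which establishes the first assertion.

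The second assertion is then a bookkeeping consequence of the marking convention. Recall that a vertex of $Q_U$ is marked as soon as it becomes visible from some guard placed so far, and that the next primary vertex is always selected from among the currently \emph{unmarked} vertices. By the first assertion, once the guard at $s_3^k$ is placed in iteration $k$, every vertex of $\mathcal{OVV}^{-}(z_k)$ is visible from a placed guard and is hence marked. Therefore, in every later iteration $j > k$ no vertex $x_i^k \in \mathcal{OVV}^{-}(z_k)$ remains unmarked, so none of them is eligible to be chosen as the primary vertex $z_j$, which is the second assertion. I do not foresee a genuine obstacle here, since the lemma merely records the effect of placing a single guard in Case 1; the only points that need care are the explicit use of the partition identity $A^k \cup B^k \cup C^k = \mathcal{OVV}^{-}(z_k)$ (so that $s_3^k$ dominates the whole set and not just the three pieces in isolation) and the symmetry of the visibility relation, which lets one pass freely between ``$s_3^k$ sees $x$'' and ``$x$ sees $s_3^k$''.
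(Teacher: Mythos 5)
Your proof is correct and matches the paper's (implicit) argument: the paper gives no formal proof for this lemma, but the preceding discussion makes exactly the same two observations — that $s_3^k \in \mathcal{CI}(A^k \cup B^k \cup C^k) = \bigcap_{x \in \mathcal{OVV}^{-}(z_k)} \mathcal{VVP}^{-}(x)$ sees every vertex of $\mathcal{OVV}^{-}(z_k)$ by the definition of $\mathcal{CI}$, and that these vertices are therefore marked and ineligible to become later primary vertices. Your explicit appeals to the partition identity and to the symmetry of visibility are exactly the right justifications.
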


\begin{figure}[H]
  \centerline{\includegraphics[width=0.86\textwidth]{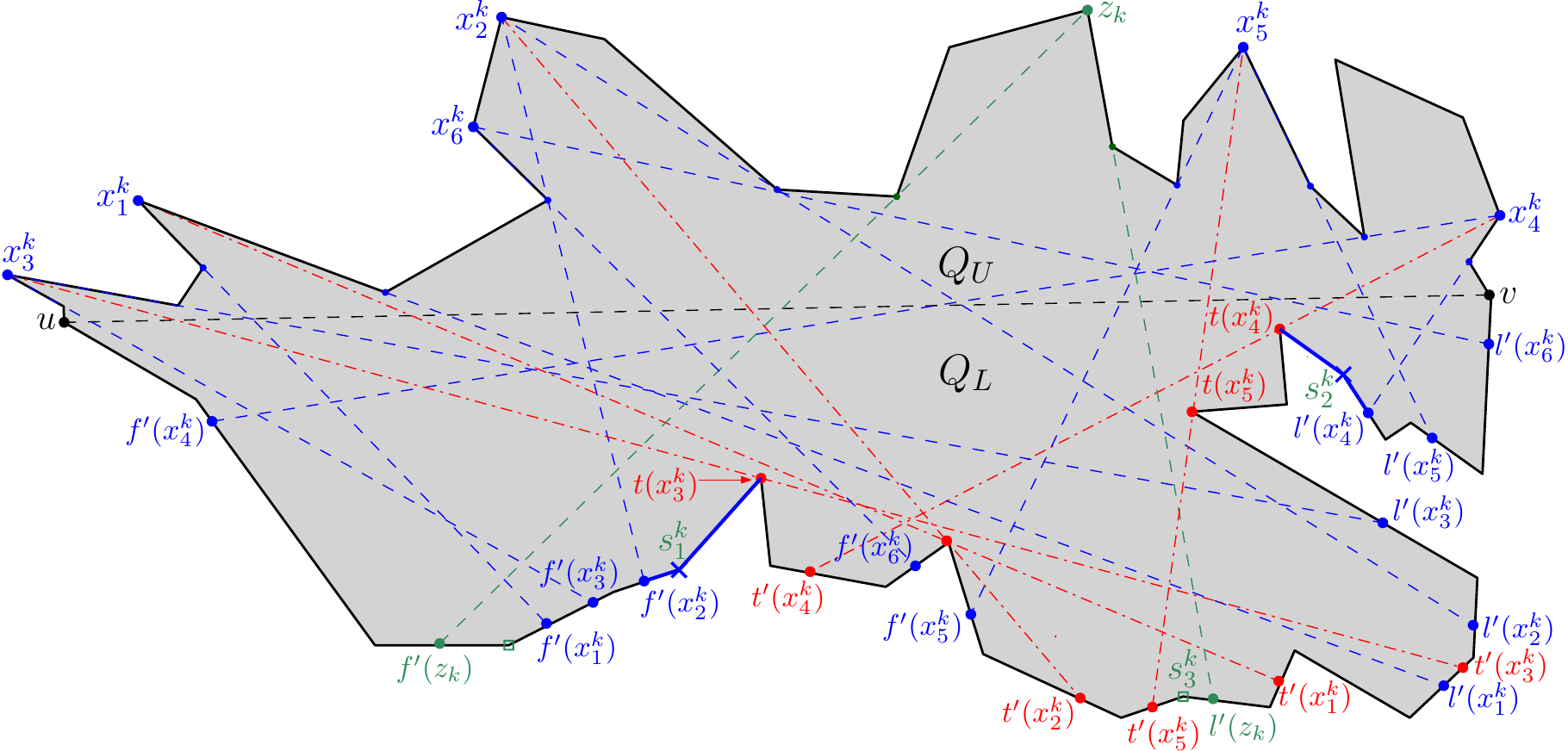}}
  \caption{Figure for Case 2a, 
  where $A^k = \{x_1^k,x_3^k\}$, $B^k = \{x_2^k,x_6^k\}$, $C^k = \{x_4^k,x_5^k\}$, 
  $\mathcal{CI}(A^k \cup B^k \cup C^k) = \emptyset$, 
  $s_1^k \in \mathcal{CI}(A^k)$, $s_2^k \in \mathcal{CI}(C^k)$, 
  and $s_3^k \in \mathcal{CI}(B^k)$. }
  \label{fig_case2a}
\end{figure}

Now consider Case 2a, where $\mathcal{CI}(A^k \cup B^k \cup C^k) = \emptyset$, $\mathcal{CI}(A^k) \neq \emptyset$, and $\mathcal{CI}(C^k) \neq \emptyset$ (see Figure \ref{fig_case2a}).  
Here, Algorithm \ref{vg_pcode_sp1} places three vertex guards, viz. $s_1^k \in \mathcal{CI}(A^k)$,
$s_2^k \in \mathcal{CI}(C^k)$, and
$s_3^k \in \mathcal{CI}(B^k)$ (in lines  \ref{vg_pcode_sp1:21}, \ref{vg_pcode_sp1:33} and \ref{vg_pcode_sp1:19} respectively).
So, the three vertex guards placed by the algorithm together see all the vertices of $\mathcal{OVV}^{-}(z_k)$, and of course $z_k$ itself. 
It is important to note that $s_1^k$ or $s_2^k$ may not belong to $\mathcal{VVP}^{-}(z_k)$, but $s_3^k$ must belong to $\mathcal{VVP}^{-}(z_k)$.

\begin{lemma} \label{case2a}
If $\mathcal{CI}(A^k \cup B^k \cup C^k) = \emptyset$, $\mathcal{CI}(A^k) \neq \emptyset$, and $\mathcal{CI}(C^k) \neq \emptyset$, 
then all vertices belonging to $\mathcal{OVV}^{-}(z_k)$ are visible 
from at least one of the three vertex guards 
placed at $s_1^k$, $s_2^k$, and $s_3^k$.
Therefore, no vertex $ x_i^k \in \mathcal{OVV}^{-}(z_k)$ can be a primary vertex $z_j$ 
for any $j > k$.
\end{lemma}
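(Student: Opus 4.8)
The plan is to reduce the statement to the defining property of the $\mathcal{CI}$ operator together with the fact that $A^k$, $B^k$ and $C^k$ form a partition of $\mathcal{OVV}^{-}(z_k)$. First I would recall that, by definition, $\mathcal{CI}(X) = \bigcap_{x \in X}\mathcal{VVP}^{-}(x)$, so that any vertex $y \in \mathcal{CI}(X)$ is an outward visible vertex of \emph{every} $x \in X$; placing a guard at $y$ therefore sees all vertices of $X$. Applying this to the three guards chosen by the algorithm: since $s_1^k \in \mathcal{CI}(A^k)$, the guard at $s_1^k$ sees every vertex of $A^k$; since $s_2^k \in \mathcal{CI}(C^k)$, the guard at $s_2^k$ sees every vertex of $C^k$; and since $s_3^k \in \mathcal{CI}(B^k)$ (which is nonempty, with $l(z_k) \in \mathcal{CI}(B^k)$ by Lemma \ref{B_property}), the guard at $s_3^k$ sees every vertex of $B^k$.

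The next step is to invoke that $A^k$, $B^k$ and $C^k$ were constructed precisely as a partition of $\mathcal{OVV}^{-}(z_k)$, so that $\mathcal{OVV}^{-}(z_k) = A^k \cup B^k \cup C^k$. Hence every vertex of $\mathcal{OVV}^{-}(z_k)$ lies in exactly one of the three sets and is consequently visible from the corresponding guard among $s_1^k$, $s_2^k$, $s_3^k$. This establishes the first assertion of the lemma.

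For the second assertion, I would argue that visibility forces marking, which in turn blocks future selection as a primary vertex. After the three guards are placed in the $k$-th iteration, each vertex of $\mathcal{OVV}^{-}(z_k)$ is visible from a placed guard and is therefore marked. Since, by the primary-vertex selection rule, the primary vertex $z_j$ in any subsequent iteration $j > k$ is always chosen from among the \emph{unmarked} vertices of $Q_U$, no vertex of $\mathcal{OVV}^{-}(z_k)$ can ever again be selected as a primary vertex, giving the claim.

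Since the entire argument amounts to unwinding the definitions of $\mathcal{CI}$, $\mathcal{OVV}^{-}$, and the marking rule, I do not anticipate a genuine technical obstacle. The one place that requires care is verifying that $A^k$, $B^k$ and $C^k$ truly exhaust $\mathcal{OVV}^{-}(z_k)$, i.e. that the case analysis defining them leaves no vertex uncategorized. This follows from the construction itself together with Lemma \ref{B_property}: any vertex of $\mathcal{OVV}^{-}(z_k)$ that is not hidden from $l(z_k)$ behind a right pocket (placing it in $A^k$) or a left pocket (placing it in $C^k$) must be visible from $l(z_k)$ and hence lands in $B^k$.
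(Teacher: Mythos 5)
Your proposal is correct and follows essentially the same reasoning the paper gives in the discussion preceding the lemma (the paper states no separate proof): since $s_1^k\in\mathcal{CI}(A^k)$, $s_2^k\in\mathcal{CI}(C^k)$ and $s_3^k\in\mathcal{CI}(B^k)$, each guard sees its entire set by definition of $\mathcal{CI}$, the three sets partition $\mathcal{OVV}^{-}(z_k)$ by construction, and marked vertices are never again eligible as primary vertices. Your added care in checking that $A^k$, $B^k$, $C^k$ exhaust $\mathcal{OVV}^{-}(z_k)$ is a sensible (if implicit in the paper) verification.
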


\begin{figure}[H]
  \centerline{\includegraphics[width=0.91\textwidth]{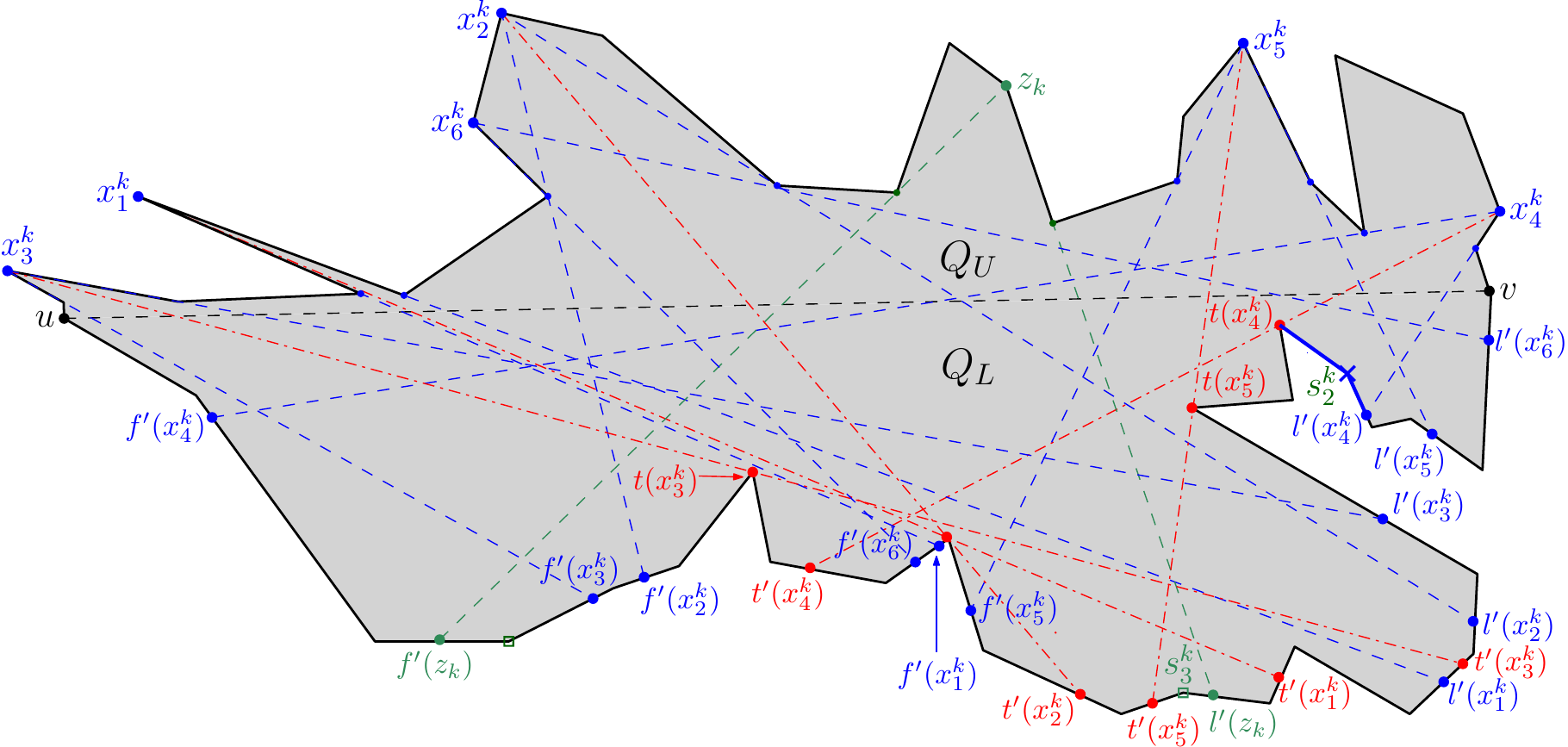}}
  \caption{Figure for Case 2b, 
  where $A^k = \{x_1^k,x_3^k\}$, $B^k = \{x_2^k,x_6^k\}$, $C^k = \{x_4^k,x_5^k\}$, 
  $\mathcal{CI}(A^k \cup B^k \cup C^k) = \emptyset$, 
  $\mathcal{CI}(A^k) = \emptyset$, $s_2^k \in \mathcal{CI}(C^k)$, 
  and $s_3^k \in \mathcal{CI}(B^k)$. }
  \label{fig_case2b}
\end{figure}

Consider Case 2b, where $\mathcal{CI}(A^k \cup B^k \cup C^k) = \emptyset$, $\mathcal{CI}(A^k) = \emptyset$, and $\mathcal{CI}(C^k) \neq \emptyset$ (see Figure \ref{fig_case2b}). 
Observe that a vertex of $\mathcal{CI}(C^k)$ may not lie within $\mathcal{VVP}^{-}(z_k)$, but rather lie on $bd_{cc}(l'(z_k,v))$.
The algorithm places two vertex guards, one at a vertex $s_2^k \in \mathcal{CI}(C^k)$, 
and another one at a vertex $s_3^k \in \mathcal{CI}(B^k)$.
Note that $s_2^k$ or $s_3^k$ may see some of the vertices of $A^k$, which may get marked as a consequence. However, assuming that none of the vertices of $A^k$ are marked due to the placement of $s_2^k$ and $s_3^k$ (in lines \ref{vg_pcode_sp1:33} and \ref{vg_pcode_sp1:19} respectively of Algorithm \ref{vg_pcode_sp1}), 
a third vertex guard $s_1^k$ is also chosen to guard at least a subset of $A^k$, since $\mathcal{CI}(A^k) = \emptyset$. 
In order to choose the vertex guard $s_1^k$ for guarding a subset of $A^k$, $\mathcal{VVP}^{-}(z_k)$ is traversed counter-clockwise starting from $f(z_k)$, till a vertex $y$ is encountered such that there exists a vertex $x_i^k \in A^k$ which is visible from $y$ but not from any subsequent vertex of $\mathcal{VVP}^{-}(z_k)$. 
So, this vertex $y = t(x_i^k)$ is chosen (in line \ref{vg_pcode_sp1:27} for Algorithm \ref{vg_pcode_sp1}) as the vertex guard $s_1^k$.
It can be seen that once such a guard is placed at $s_1^k = t(x_i^k)$, some of the vertices in $A^k$ are visible from $x_i^k$, and are therefore marked.
Let us denote the remaining vertices of $A^k$ that are still unmarked as $A^k_1$.
We have the following lemma.

\begin{lemma} \label{case2b}
If $\mathcal{CI}(A^k \cup B^k \cup C^k) = \emptyset$, $\mathcal{CI}(A^k) = \emptyset$, and $\mathcal{CI}(C^k) \neq \emptyset$, 
then all vertices belonging to $\mathcal{OVV}^{-}(z_k) \setminus A^k_1$ are visible from one of the vertex guards 
placed at $s_1^k$, $s_2^k$, and $s_3^k$.
Therefore, no vertex $x_i^k \in \mathcal{OVV}^{-}(z_k)$ can be a primary vertex $z_j$ 
for any $j > k$ unless $x_i^k \in A^k_1$.
\end{lemma}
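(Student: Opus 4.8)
The plan is to prove the visibility claim class by class, using the fact that $\mathcal{OVV}^{-}(z_k)$ is partitioned into the three sets $A^k$, $B^k$ and $C^k$, and that Algorithm \ref{vg_pcode_sp1} places one guard aimed at each of these classes. Since $\mathcal{OVV}^{-}(z_k)\setminus A^k_1 = (A^k\setminus A^k_1)\cup B^k\cup C^k$, it suffices to show three inclusions: that $B^k$ is covered by $s_3^k$, that $C^k$ is covered by $s_2^k$, and that $A^k\setminus A^k_1$ is covered by $s_1^k$. The first two will be immediate, while the third is exactly where the hypothesis $\mathcal{CI}(A^k)=\emptyset$ forces us to settle for a partial cover, leaving the residual set $A^k_1$ behind.

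The two complete classes follow directly from the definition of $\mathcal{CI}$. By Lemma \ref{B_property}, $l(z_k)$ sees all of $B^k$, so $l(z_k)\in\mathcal{CI}(B^k)$ and thus $\mathcal{CI}(B^k)\neq\emptyset$; hence the guard $s_3^k\in\mathcal{CI}(B^k)$ sees every vertex of $B^k$. Likewise, the case hypothesis $\mathcal{CI}(C^k)\neq\emptyset$ permits placing $s_2^k\in\mathcal{CI}(C^k)$, which by definition of the common intersection sees every vertex of $C^k$. Only $A^k$ cannot be covered in its entirety, precisely because $\mathcal{CI}(A^k)=\emptyset$ rules out any single guard seeing all of it.

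For $A^k$, I would first argue that the counter-clockwise traversal of $\mathcal{VVP}^{-}(z_k)$ from $f(z_k)$ used in line \ref{vg_pcode_sp1:27} is well defined: whenever $A^k\neq\emptyset$ there is a left-pocket constructed edge, so the traversal does encounter a first vertex $y=t(x_i^k)$ at which some $x_i^k\in A^k$ becomes visible while being invisible from all later vertices of $\mathcal{VVP}^{-}(z_k)$, and this $y$ is taken as $s_1^k$. Then, by the very definition of $A^k_1$ as the vertices of $A^k$ that remain unmarked after $s_1^k$ is placed (under the stated assumption that neither $s_2^k$ nor $s_3^k$ marks any vertex of $A^k$), the complement $A^k\setminus A^k_1$ consists exactly of the vertices of $A^k$ seen by $s_1^k$. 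Combining the three covers shows that every vertex of $\mathcal{OVV}^{-}(z_k)\setminus A^k_1$ is visible from $s_1^k$, $s_2^k$ or $s_3^k$, and is therefore marked.

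The final assertion is then a consequence of the primary-vertex selection rule: a vertex can be selected as a future primary vertex $z_j$ only while it is still unmarked, so no vertex of $\mathcal{OVV}^{-}(z_k)$ other than those in $A^k_1$ can qualify for any $j>k$. The part I expect to require the most care is not this bookkeeping but the geometry underlying $s_1^k$: one must confirm that the traversal terminates at a vertex $t(x_i^k)\in\mathcal{VVP}^{-}(z_k)$ and that, using Corollary \ref{SP_turns} (shortest paths from $l(z_k)$ to $A^k$ turn only left), the vertices of $A^k$ visible from $s_1^k$ form a contiguous initial block whose removal leaves precisely $A^k_1$. Establishing that $A^k_1$ is a cleanly defined residual set in this way is what will let it be charged consistently in the later counting arguments.
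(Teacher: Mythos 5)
Your proposal is correct and follows essentially the same route as the paper: the paper's own justification (given in the discussion immediately preceding the lemma) likewise covers $B^k$ by $s_3^k=l(z_k)$ via Lemma \ref{B_property}, covers $C^k$ by $s_2^k\in\mathcal{CI}(C^k)$, and treats $A^k\setminus A^k_1$ as covered by the greedily chosen $s_1^k$ essentially by the definition of $A^k_1$ as the residual unmarked subset, with the final assertion following from the rule that only unmarked vertices can become primary. Your added remark that the contiguity of the block seen by $s_1^k$ (via Corollary \ref{SP_turns}) deserves care is a fair observation, but the paper does not elaborate on it either.
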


\begin{figure}[H]
  \centerline{\includegraphics[width=0.88\textwidth]{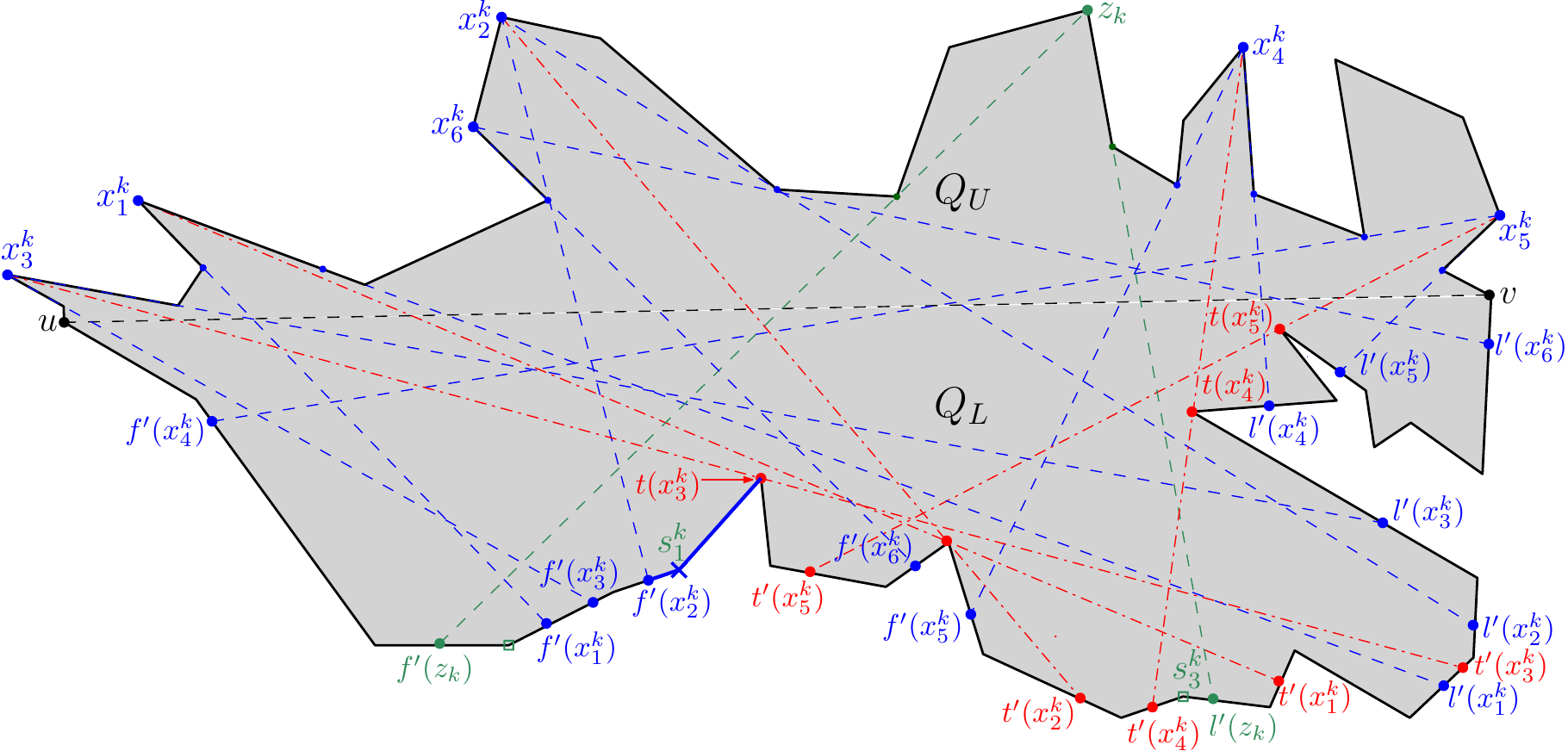}}
  \caption{Figure for Case 2c, 
  where $A^k = \{x_1^k,x_3^k\}$, $B^k = \{x_2^k,x_6^k\}$, $C^k = \{x_4^k,x_5^k\}$, 
  $\mathcal{CI}(A^k \cup B^k \cup C^k) = \emptyset$, 
  $s_1^k \in \mathcal{CI}(A^k)$, $\mathcal{CI}(C^k) = \emptyset$, 
  and $s_3^k \in \mathcal{CI}(B^k)$. }
  \label{fig_case2c}
\end{figure}

\vspace{-0.6em}
Consider Case 2c, where $\mathcal{CI}(A^k \cup B^k \cup C_k) = \emptyset$, $\mathcal{CI}(A^k) \neq \emptyset$, and $\mathcal{CI}(C^k) = \emptyset$ (see Figure \ref{fig_case2c}). 
Observe that a vertex of $\mathcal{CI}(A^k)$ may not lie within $\mathcal{VVP}^{-}(z_k)$, but rather lie on $bd_{cc}(l'(z_k,v)$.
The algorithm places two vertex guards, one at a vertex $s_1^k \in \mathcal{CI}(A^k)$, 
and another one at a vertex $s_3^k \in \mathcal{CI}(B^k)$
(in lines \ref{vg_pcode_sp1:33} and \ref{vg_pcode_sp1:21} respectively of Algorithm \ref{vg_pcode_sp1}).
Note that $s_1^k$ or $s_3^k$ may see some of the vertices of $C^k$, which may get marked as a consequence. However, as a worst case, we assume that none of the vertices of $C^k$ are marked due to the placement of $s_1^k$ and $s_3^k$. In such a scenario, a third vertex guard $s_2^k$ is chosen to guard a subset of $C^k$, since $\mathcal{CI}(C^k) = \emptyset$. 
In order to choose $s_2^k$, $\mathcal{VVP}^{-}(z_k)$ is traversed counter-clockwise, starting from $f(z_k)$, till a vertex $y$ is encountered such that there exists a vertex $x_i^k \in C^k$ which is visible from $y$ but not from any subsequent vertex of $\mathcal{VVP}^{-}(z_k)$. 
So, this vertex $y$, which is effectively the vertex of $\mathcal{VVP}^{-}(z_k)$ immediately preceding $t'(x_i^k)$, 
and hence denoted by $prev(t'(x_i^k))$, is chosen (in line \ref{vg_pcode_sp1:39} of Algorithm \ref{vg_pcode_sp1}) as the vertex guard $s_2^k$.
It can be seen that once a guard is placed at $s_2^k$, a subset of the vertices in $C^k$ are visible from $x_i^k$, and hence marked. 
Let us denote the remaining vertices of $C^k$ that are still unmarked as $C^k_1$. The following lemma summarizes Case 2c.

\begin{lemma} \label{case2c}
If $\mathcal{CI}(A^k \cup B^k \cup C^k) = \emptyset$, $\mathcal{CI}(A^k) \neq \emptyset$, and $\mathcal{CI}(C^k) = \emptyset$, 
then all vertices belonging to $\mathcal{OVV}^{-}(z_k) \setminus C^k_1$ are visible from one of the vertex guards 
placed at $s_1^k$, $s_2^k$, and $s_3^k$.
Therefore, a vertex $x_i^k \in \mathcal{OVV}^{-}(z_k)$ cannot be a primary vertex $z_j$ 
for any $j > k$ unless $x_i^k \in C^k_1$.
\end{lemma}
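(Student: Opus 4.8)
The plan is to handle the three guards $s_1^k$, $s_2^k$, $s_3^k$ separately and show that together they see $\mathcal{OVV}^{-}(z_k)\setminus C^k_1$, and then to read off the statement about future primary vertices. Since Case~2c is the mirror image of Case~2b (Lemma~\ref{case2b}) under the interchange of the roles of $A^k$ and $C^k$, I would follow the reasoning of that proof with the appropriate symmetry. The two easy guards are immediate from the definition of $\mathcal{CI}$: because $s_1^k \in \mathcal{CI}(A^k) = \bigcap_{x\in A^k}\mathcal{VVP}^{-}(x)$, every vertex of $A^k$ is visible from $s_1^k$, so placing the guard at $s_1^k$ marks all of $A^k$; likewise $s_3^k \in \mathcal{CI}(B^k)$ marks all of $B^k$. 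This disposes of $A^k \cup B^k$ with no geometric subtlety.

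The real content lies in the greedy guard $s_2^k$, since $\mathcal{CI}(C^k) = \emptyset$ means no single vertex of $\mathcal{VVP}^{-}(z_k)$ can see all of $C^k$. Here I would invoke the right-turn structure of $C^k$ given by Corollary~\ref{SP_turns}: every shortest path from $l(z_k)$ to a vertex of $C^k$ makes only right turns, which forces the blocking constructed edges $t(x_i^k)t'(x_i^k)$ (with $t(x_i^k)$ on $bd_{cc}(l'(z_k),v)$ and $t'(x_i^k)$ on $bd_{cc}(f(z_k),l'(z_k))$) to be linearly ordered along $bd_{cc}(u,v)$. Because of this ordering, the counter-clockwise traversal of $\mathcal{VVP}^{-}(z_k)$ starting at $f(z_k)$ reaches a well-defined first vertex $y = prev(t'(x_i^k))$ past which some $x_i^k\in C^k$ becomes permanently invisible; placing $s_2^k$ at $y$ then sees exactly the prefix of $C^k$ whose blocking edges have not yet been crossed. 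I would define $C^k_1$ to be the remaining unmarked suffix of $C^k$, so that by construction $s_2^k$ marks all of $C^k\setminus C^k_1$, and I would verify that $y$ indeed lies on $\mathcal{VVP}^{-}(z_k)$, making $s_2^k$ a legitimate outside guard.

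Combining the three, $s_1^k$, $s_2^k$, $s_3^k$ together see $A^k \cup B^k \cup (C^k\setminus C^k_1) = \mathcal{OVV}^{-}(z_k)\setminus C^k_1$, which gives the first assertion. For the second assertion I would recall that the algorithm always picks the next primary vertex among the currently \emph{unmarked} vertices of $Q_U$ (the selection rule stated just before Lemma~\ref{no_unmarked_right}); since every vertex of $\mathcal{OVV}^{-}(z_k)\setminus C^k_1$ is marked immediately after these three guards are placed in iteration $k$, none of them can later be chosen as $z_j$ for $j>k$, leaving only the still-unmarked vertices of $C^k_1$ eligible. The main obstacle will be the middle step: rigorously justifying that $s_2^k$ covers the entire prefix $C^k\setminus C^k_1$ and that the traversal terminates exactly at $prev(t'(x_i^k))$. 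This needs the right-turn property of Corollary~\ref{SP_turns} to pin down the linear order of the left pockets spawned by $C^k$, together with the fact that once the traversal passes the edge $t(x_i^k)t'(x_i^k)$ visibility to $x_i^k$ is irrecoverably lost, precisely mirroring the treatment of $A^k_1$ in Case~2b; everything else reduces to the definitional properties of $\mathcal{CI}$ and the unmarked-vertex selection rule.
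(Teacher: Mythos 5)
Your proposal is correct and follows essentially the same route as the paper, which justifies this lemma by the discussion immediately preceding it: $s_1^k\in\mathcal{CI}(A^k)$ covers $A^k$, $s_3^k=l(z_k)\in\mathcal{CI}(B^k)$ covers $B^k$ (Lemma \ref{B_property}), the greedy guard $s_2^k=prev(t'(x_i^k))$ covers part of $C^k$ with $C^k_1$ defined as the unmarked remainder, and the unmarked-vertex selection rule yields the claim about later primary vertices. Your extra appeal to Corollary \ref{SP_turns} to order the pockets of $C^k$ is a harmless elaboration of what the paper formalizes in Lemma \ref{disjoint->nestedC}.
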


\begin{figure}[H]
  \centerline{\includegraphics[width=0.82\textwidth]{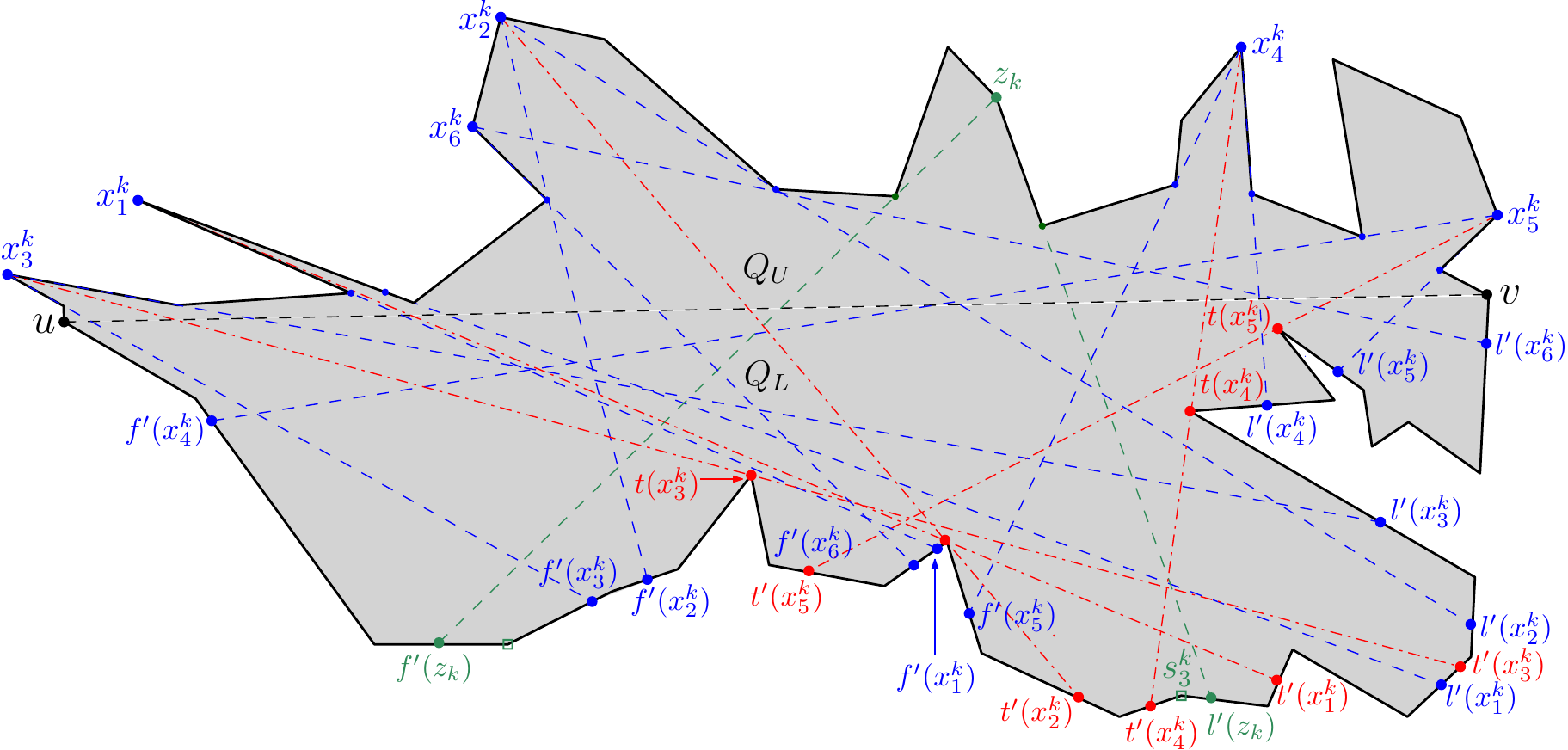}}
  \caption{Figure for Case 2d, 
  where $A^k = \{x_1^k,x_3^k\}$, $B^k = \{x_2^k,x_6^k\}$, $C^k = \{x_4^k,x_5^k\}$, 
  $\mathcal{CI}(A^k \cup B^k \cup C^k) = \emptyset$, 
  $\mathcal{CI}(A^k) = \emptyset$ and $\mathcal{CI}(C^k) = \emptyset$,
  and $s_3^k \in \mathcal{CI}(B^k)$. }
  \label{fig_case2d}
\end{figure}
             
Consider Case 2d, where $\mathcal{CI}(A^k \cup B^k \cup C^k) = \emptyset$, $\mathcal{CI}(A^k) = \emptyset$, and $\mathcal{CI}(C^k) = \emptyset$ (see Figure \ref{fig_case2d}). 
The algorithm first places a guard at a vertex $s_3^k \in \mathcal{CI}(B^k)$.
Note that $s_3^k$ may see some of the vertices of $A^k$ and $C^k$, which may get marked as a consequence. However, assuming that none of the vertices of $A^k$ or $C^k$ are marked due to the placement of $s_3^k$, 
another vertex guard $s_2^k$ is chosen from $\mathcal{CI}(C^k)$,
following a procedure similar to that in Case 2c. 
Similarly, another vertex guard $s_1^k$ is chosen from $\mathcal{CI}(A^k)$ 
following a procedure similar to that in Case 2b.
It can be seen that once guards are placed at $s_1^k$ and $s_2^k$, 
some subsets of $A^k$ and $C^k$ are visible from them, and hence marked. 
So, let us denote the yet unmarked vertices of $A^k$ and $C^k$ 
as $A^k_1$ and $C^k_1$, respectively. 

\begin{lemma} \label{case2d}
If $\mathcal{CI}(A^k \cup B^k \cup C^k) = \emptyset$, $\mathcal{CI}(A^k) = \emptyset$, and $\mathcal{CI}(C^k) = \emptyset$, 
then all vertices belonging to $\mathcal{OVV}^{-}(z_k) \setminus (A^k_1 \cup C^k_1$) are visible from one of the three vertex guards 
placed at $s_1^k$, $s_2^k$, and $s_3^k$.
Therefore, no vertex $x_i^k \in \mathcal{OVV}^{-}(z_k)$ can be a primary vertex $z_j$ 
for any $j > k$ unless $x_i^k \in (A^k_1 \cup C^k_1)$.
\end{lemma}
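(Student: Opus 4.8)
The plan is to exploit the fact that, by construction, $\mathcal{OVV}^{-}(z_k)$ is partitioned into the three sets $A^k$, $B^k$ and $C^k$, so that it suffices to show that every vertex of each part---apart from the explicitly excluded residues $A^k_1$ and $C^k_1$---is seen by one of the three guards $s_1^k$, $s_2^k$, $s_3^k$. The crux is to establish that $B^k$ is covered in its entirety by $s_3^k$; once this is done, the only vertices of $\mathcal{OVV}^{-}(z_k)$ that can remain unmarked lie in $A^k \cup C^k$, and these are, by definition, exactly $A^k_1 \cup C^k_1$.

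First I would dispose of $B^k$. Before the case analysis it was shown that $\mathcal{CI}(B^k) \neq \emptyset$, with $l(z_k) \in \mathcal{CI}(B^k)$ by Lemma~\ref{B_property}. Since the algorithm places $s_3^k \in \mathcal{CI}(B^k)$, and $\mathcal{CI}(B^k) = \bigcap_{x \in B^k} \mathcal{VVP}^{-}(x)$ by definition, the guard $s_3^k$ sees every vertex of $B^k$. Hence all of $B^k$ is marked and contributes no unmarked residue.

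Next I would treat $A^k$ and $C^k$. The key structural input is the observation preceding Lemma~\ref{B_property}: every vertex of $A^k$ lies on $bd_c(u,z_k)$ while every vertex of $C^k$ lies on $bd_c(z_k,v)$. Because these boundary arcs are disjoint, the greedy choices of $s_1^k$ for $A^k$ and of $s_2^k$ for $C^k$ operate on separate portions of the boundary and do not interfere. The selection of $s_1^k$ then follows verbatim the procedure of Case~2b (the subcase $\mathcal{CI}(A^k)=\emptyset$), marking a maximal initial segment of the still-visible vertices of $A^k$ and leaving the unmarked residue $A^k_1$; symmetrically, the selection of $s_2^k$ follows the procedure of Case~2c (the subcase $\mathcal{CI}(C^k)=\emptyset$), leaving residue $C^k_1$. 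By the guarantees of Lemmas~\ref{case2b} and~\ref{case2c}, every vertex of $A^k \setminus A^k_1$ is seen by $s_1^k$ and every vertex of $C^k \setminus C^k_1$ is seen by $s_2^k$.

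Combining the three parts gives $\mathcal{OVV}^{-}(z_k) \setminus (A^k_1 \cup C^k_1) = B^k \cup (A^k \setminus A^k_1) \cup (C^k \setminus C^k_1)$, each element of which is visible from one of $s_1^k$, $s_2^k$, $s_3^k$; this is the first assertion. For the second assertion, every such vertex is marked during iteration $k$, and since primary vertices are chosen only among unmarked vertices, no marked vertex can later be selected as $z_j$ for $j>k$; the only surviving candidates from $\mathcal{OVV}^{-}(z_k)$ are thus precisely those in $A^k_1 \cup C^k_1$. The main obstacle I anticipate is the rigorous justification of the non-interference claim---namely that running the Case~2b argument on $A^k$ and the Case~2c argument on $C^k$ simultaneously is legitimate---which rests entirely on the disjointness of $bd_c(u,z_k)$ and $bd_c(z_k,v)$ together with the fact that $s_3^k$ covers all of $B^k$ irrespective of the other two placements.
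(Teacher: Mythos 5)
Your proposal is correct and follows essentially the same route as the paper, which proves this lemma implicitly through the discussion preceding it: $s_3^k\in\mathcal{CI}(B^k)$ covers all of $B^k$ (via Lemma~\ref{B_property}), while $s_1^k$ and $s_2^k$ are chosen by the greedy procedures of Cases~2b and~2c respectively, and $A^k_1$, $C^k_1$ are \emph{defined} as the residues left unmarked, which makes the first assertion nearly definitional and the second an immediate consequence of primary vertices being chosen only among unmarked vertices. The non-interference worry you raise is harmless for exactly this reason: any extra marking caused by one guard on the other's territory only shrinks the residues, never invalidates the claim.
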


\begin{lemma} \label{not_in_OVV}
Let $z_k$ and $z_j$ be any two primary vertices, where $j > k$.
If $z_j \notin \mathcal{OVV}^{-}(z_k)$, then $G_{opt}$ ($ = G^L_{opt}$) must require two distinct vertex guards for guarding both $z_k$ and $z_j$. 
\end{lemma}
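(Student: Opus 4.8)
The plan is to reduce this geometric claim to a disjointness statement about outward-visible vertex sets. Concretely, I would show that the hypothesis $z_j \notin \mathcal{OVV}^{-}(z_k)$ forces $\mathcal{VVP}^{-}(z_k) \cap \mathcal{VVP}^{-}(z_j) = \emptyset$, and then argue that, because $G_{opt} = G^L_{opt}$ lies entirely on $bd_{cc}(u,v)$, any guard of $G_{opt}$ that sees $z_k$ must belong to $\mathcal{VVP}^{-}(z_k)$ (and symmetrically for $z_j$). Disjointness of these two sets then prevents a single guard from covering both vertices.

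First I would establish that $z_j$ is unmarked at the moment $z_k$ is processed. Since $j > k$ and $z_j$ is itself selected as a primary vertex in iteration $j$, it must be unmarked at that iteration; as marking is monotone (a vertex, once marked, stays marked), $z_j$ is therefore unmarked throughout all earlier iterations, in particular at iteration $k$. Thus $z_j$ satisfies the unmarkedness requirement in the definition of $\mathcal{OVV}^{-}(z_k)$, and since $z_j \in \mathcal{V}(Q_U)$, the only way $z_j \notin \mathcal{OVV}^{-}(z_k)$ can hold is that the remaining defining condition fails, giving $\mathcal{VVP}^{-}(z_k) \cap \mathcal{VVP}^{-}(z_j) = \emptyset$.

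Next I would invoke the special-scenario assumption. Take any $g \in G_{opt}$ that guards $z_k$. Since $G_{opt} = G^L_{opt}$, the vertex $g$ lies on $bd_{cc}(u,v)$, i.e. in $Q_L$, while $z_k \in Q_U$; hence the segment $g z_k$ crosses the chord $uv$, and as $g$ is a polygonal vertex visible from $z_k$ we conclude $g \in \mathcal{VVP}^{-}(z_k)$. The same reasoning applied to $z_j$ shows that any guard of $G_{opt}$ covering $z_j$ lies in $\mathcal{VVP}^{-}(z_j)$. If a single guard of $G_{opt}$ covered both $z_k$ and $z_j$, it would lie in $\mathcal{VVP}^{-}(z_k) \cap \mathcal{VVP}^{-}(z_j)$, contradicting the emptiness established above. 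Hence $G_{opt}$ must use two distinct guards for $z_k$ and $z_j$.

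I expect the only delicate point to be the book-keeping that $z_j$ is unmarked at iteration $k$, which hinges on the monotonicity of the marking process and on the fact that a primary vertex is selected only from currently unmarked vertices; a secondary technicality is the treatment of the chord endpoints $u,v$, where the segment $g z_k$ could meet $uv$ only at an endpoint, but this degenerate case does not arise for interior guards on $bd_{cc}(u,v)$ and can be excluded by the standing general-position assumption.
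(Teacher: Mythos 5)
Your proof is correct and follows essentially the same route as the paper: the paper also argues by contradiction that a common guard $g \in G_{opt} = G^L_{opt}$ would lie in $\mathcal{VVP}^{-}(z_k) \cap \mathcal{VVP}^{-}(z_j)$, forcing $z_j \in \mathcal{OVV}^{-}(z_k)$. Your explicit verification that $z_j$ is unmarked at iteration $k$ (needed for the definition of $\mathcal{OVV}^{-}(z_k)$ to apply) is a detail the paper leaves implicit, and is a welcome addition.
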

\begin{proof}
 For the sake of contradiction, assume there exists a single guard $g \in G_{opt} ( = G^L_{opt})$ that can see both $z_k$ and $z_j$. This is only possible if $g \in (\mathcal{VVP}^{-}(z_k) \cap \mathcal{VVP}^{-}(z_j))$, which in turn means that $(\mathcal{VVP}^{-}(z_k) \cap \mathcal{VVP}^{-}(z_j)) \neq \emptyset$. Therefore, $z_j \in \mathcal{OVV}^{-}(z_k)$ by the definition of $\mathcal{OVV}^{-}(z_k)$, a contradiction.
\end{proof}

\begin{lemma} \label{not_in_AorC}
Let $z_k$ and $z_j$ be any two primary vertices, where $j > k$.
Assume that $A^k_1 \neq \emptyset$ and $C^k_1 \neq \emptyset$.
If $z_j \notin (A^k_1 \cup C^k_1)$, 
then $G_{opt}$ (=$G^L_{opt}$) must require two distinct vertex guards for guarding both $z_k$ and $z_j$. 
\end{lemma}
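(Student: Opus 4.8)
The plan is to reduce this lemma to the already-proved Lemma \ref{not_in_OVV}. Concretely, I would show that the hypothesis $z_j \notin (A^k_1 \cup C^k_1)$ forces $z_j \notin \mathcal{OVV}^{-}(z_k)$; once this disjointness is in hand, the desired conclusion follows verbatim from the short argument of Lemma \ref{not_in_OVV} (indeed, one may simply invoke it). So the entire content of the proof is the implication ``$z_j \notin (A^k_1 \cup C^k_1) \Rightarrow z_j \notin \mathcal{OVV}^{-}(z_k)$'', obtained by combining the marking guarantee of Lemma \ref{case2d} with the fact that $z_j$ is chosen while unmarked.

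First I would pin down which case we are in. The leftover set $A^k_1$ can be nonempty only when $\mathcal{CI}(A^k) = \emptyset$ (Cases 2b and 2d), since otherwise the single guard $s_1^k \in \mathcal{CI}(A^k)$ marks all of $A^k$; symmetrically $C^k_1$ can be nonempty only when $\mathcal{CI}(C^k) = \emptyset$ (Cases 2c and 2d). The two hypotheses $A^k_1 \neq \emptyset$ and $C^k_1 \neq \emptyset$ therefore single out Case 2d, so Lemma \ref{case2d} applies to $z_k$. Next I would use that $z_j$, being chosen as a primary vertex at the later iteration $j > k$, is still \emph{unmarked} immediately after the guards $s_1^k, s_2^k, s_3^k$ for $z_k$ have been placed, because primary vertices are always selected from the unmarked vertices of $Q_U$. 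Lemma \ref{case2d} guarantees that every vertex of $\mathcal{OVV}^{-}(z_k) \setminus (A^k_1 \cup C^k_1)$ is visible from one of $s_1^k, s_2^k, s_3^k$ and is hence marked; so if $z_j$ belonged to $\mathcal{OVV}^{-}(z_k)$, its being unmarked would force $z_j \in A^k_1 \cup C^k_1$, contradicting the hypothesis. Thus $z_j \notin \mathcal{OVV}^{-}(z_k)$, which by the definition of $\mathcal{OVV}^{-}(z_k)$ means exactly $\mathcal{VVP}^{-}(z_k) \cap \mathcal{VVP}^{-}(z_j) = \emptyset$.

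Finally I would close as in Lemma \ref{not_in_OVV}: since $z_k, z_j$ lie on $bd_c(u,v)$ while every guard of $G_{opt} = G^L_{opt}$ lies on $bd_{cc}(u,v)$, any guard $g$ that sees $z_k$ has the segment $g z_k$ crossing $uv$, so $g \in \mathcal{VVP}^{-}(z_k)$, and likewise $g \in \mathcal{VVP}^{-}(z_j)$ if it sees $z_j$; a single guard seeing both would lie in the empty intersection just derived, a contradiction, so two distinct guards are required. The only non-routine step is the middle one: recognizing that the hypotheses force Case 2d and then correctly playing the unmarkedness of the later primary vertex $z_j$ against the marking guarantee of Lemma \ref{case2d} to exclude $z_j \in \mathcal{OVV}^{-}(z_k)$. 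Everything after that is the disjointness-of-guards argument already recorded for Lemma \ref{not_in_OVV}.
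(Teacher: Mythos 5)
Your proposal is correct and follows essentially the same route as the paper: both arguments observe that after the guards for $z_k$ are placed, the only still-unmarked vertices of $\mathcal{OVV}^{-}(z_k)$ lie in $A^k_1 \cup C^k_1$, so the unmarkedness of the later primary vertex $z_j$ together with $z_j \notin (A^k_1 \cup C^k_1)$ forces $z_j \notin \mathcal{OVV}^{-}(z_k)$, after which Lemma \ref{not_in_OVV} finishes the job. Your explicit reduction to Case 2d and the unwinding of Lemma \ref{not_in_OVV} are just more detailed renderings of steps the paper states tersely.
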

\begin{proof}
After the placement of guards for $z_k$, the only vertices of $\mathcal{OVV}^{-}(z_k)$ that are still unmarked belong to $A^k_1$ or $C^k_1$.
Since $z_j$ is unmarked when it is chosen as a primary vertex, 
$z_j \notin (A^k_1 \cup C^k_1)$ implies that $z_j \notin \mathcal{OVV}^{-}(z_k)$. 
So, it follows directly from Lemma \ref{not_in_OVV} that $G_{opt}$ (=$G^L_{opt}$) must require two distinct vertex guards for guarding both $z_k$ and $z_j$.
\end{proof}

\begin{lemma} \label{cond_bound}
For every $k \in \{1,2,\dots,|Z|-1\}$, assume that $z_j \notin (A^k_1 \cup C^k_1)$ for any $j$, $k < j \leq |Z|$. Then, $|S| \leq 3\cdot|G^L_{opt}|=3\cdot|G_{opt}|$.
\end{lemma}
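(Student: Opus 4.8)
The plan is to derive the bound by combining two independent inequalities: a counting bound $|S| \le 3|Z|$ coming from the case analysis of Section~\ref{only_lower}, and an optimality bound $|Z| \le |G^L_{opt}|$ coming from the stated hypothesis together with Lemma~\ref{not_in_OVV}. Since we are in the special scenario where $G_{opt} = G^L_{opt}$, multiplying the two inequalities gives $|S| \le 3|Z| \le 3|G^L_{opt}| = 3|G_{opt}|$, which is exactly the claim. Concretely, this realizes inequality~\eqref{num_guards} with $c = 3$ and inequality~\eqref{lower_ineq} with $c_1 = 1$ (noting $Z = Z^L$ here, since every primary vertex is seen only by guards of $G^L_{opt}$).

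First I would establish $|S| \le 3|Z|$. The algorithm processes the primary vertices $z_1,\dots,z_{|Z|}$ one at a time, and each $z_k$ falls into exactly one of Case~1, 2a, 2b, 2c, 2d. Reading off Lemmas~\ref{case1}--\ref{case2d}, the guards introduced while handling $z_k$ are among $s_1^k$, $s_2^k$, $s_3^k$: a single guard in Case~1 and at most three in each remaining case. Summing over $k$ gives $|S| \le 3|Z|$ (this is an upper bound, so reuse of a vertex as a guard across iterations only helps).

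Next I would show $|Z| \le |G^L_{opt}|$ using the hypothesis. Fix primary vertices $z_k, z_j$ with $k < j$. The key observation is that, by Lemmas~\ref{case1}--\ref{case2d}, once the guards for $z_k$ are placed, every vertex of $\mathcal{OVV}^{-}(z_k)$ is marked \emph{except} possibly those in $A^k_1 \cup C^k_1$ (in Cases~1 and 2a this residue is empty, so the hypothesis holds vacuously there). Now $z_j$ is unmarked at the moment it is selected in iteration $j > k$, and the hypothesis gives $z_j \notin (A^k_1 \cup C^k_1)$; hence $z_j$ cannot lie in $\mathcal{OVV}^{-}(z_k)$, for otherwise it would be a \emph{marked} vertex, contradicting its selection. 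Thus $z_j \notin \mathcal{OVV}^{-}(z_k)$, and Lemma~\ref{not_in_OVV} forces $G_{opt}$ to use two distinct guards to see $z_k$ and $z_j$; equivalently, writing $T_k = G_{opt} \cap \mathcal{VVP}^{-}(z_k)$, the sets $T_k$ and $T_j$ are disjoint. Since this holds for every pair $k < j$, the sets $T_1,\dots,T_{|Z|}$ are pairwise disjoint, and each is nonempty because $G_{opt} = G^L_{opt}$ guards $z_k \in Q_U$ while any guard on $bd_{cc}(u,v)$ seeing $z_k$ necessarily lies in $\mathcal{VVP}^{-}(z_k)$. Selecting one representative from each $T_k$ yields $|Z|$ distinct guards of $G^L_{opt}$, so $|Z| \le |G^L_{opt}|$, and the two bounds combine to the result.

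The main obstacle I anticipate is certifying uniformly across all five cases that $A^k_1 \cup C^k_1$ is exactly the unmarked residue of $\mathcal{OVV}^{-}(z_k)$; the cleanest route is to quote Lemmas~\ref{case1}--\ref{case2d} wholesale rather than re-deriving the marking status, which is why those lemmas were stated in precisely that form (each asserts that $\mathcal{OVV}^{-}(z_k)$ minus the relevant residue becomes marked). The remaining step --- passing from ``two distinct guards are required for each pair'' to ``the $T_k$ are pairwise disjoint and hence admit $|Z|$ distinct representatives'' --- is immediate from the definition of requiring two distinct guards, and generalizes Lemma~\ref{not_in_AorC} by dropping its extra assumption that both $A^k_1$ and $C^k_1$ are nonempty.
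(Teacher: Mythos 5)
Your proposal is correct and follows essentially the same route as the paper: both combine $|S| \leq 3\cdot|Z|$ (at most three guards per primary vertex) with $|Z| \leq |G_{opt}|$ obtained from the pairwise-distinctness of optimal guards for primary vertices under the stated hypothesis. Your version is in fact slightly more careful --- you make the paper's ``applying Lemma~\ref{not_in_AorC} repeatedly over all pairs'' step precise by exhibiting the pairwise disjoint nonempty sets $T_k = G_{opt} \cap \mathcal{VVP}^{-}(z_k)$, and you correctly note that the nonemptiness side-condition of Lemma~\ref{not_in_AorC} can be dispensed with by going back to Lemma~\ref{not_in_OVV} directly --- but the underlying argument is the same.
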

\begin{proof}
We know from Lemma \ref{not_in_AorC}, that for any arbitrary pair $z_k$ and $z_j$, where $k,j \in \{1,2,\dots,|Z|\}$ and $j > k$, $G_{opt}$ (=$G^L_{opt}$) requires two distinct vertex guards for guarding both $z_k$ and $z_j$.
Thus, applying Lemma \ref{not_in_AorC} repeatedly over all such possible pairs shows that $G_{opt}$ requires as many guards as the number of primary vertices, i.e. $|Z| \leq |G_{opt}|$.
Since at most three vertex guards $s_1^k$, $s_2^k$, and $s_3^k$ are placed corresponding to each primary vertex $z_k \in Z$, $|S| \leq 3\cdot|Z|$.
So, combining the above two inequalities, 
we obtain $|S| \leq 3\cdot|Z| \leq 3\cdot|G_{opt}|$.
\end{proof}

So far we have assumed the special case where, for every $k$, there is no $j>k$ such that a 
primary vertex $z_j$ belongs to $A^k_1\cup C^k_1$. 
Therefore, we now focus on the general case where, for some $j > k$, 
we have the primary vertex $z_j \in A^k_1 \cup C^k_1$. 
Let $A^k = \{a_1,a_2,a_3,\dots,a_m\}$ where $t(a_1) \prec t(a_2) \prec t(a_3) \prec \dots \prec t(a_m)$.
Observe that, if we consider any two arbitrary vertices $a_i, a_j \in A^k$, then these vertices
may be \emph{overlapping}, i.e. $\mathcal{VVP}^{-}(a_i) \cap \mathcal{VVP}^{-}(a_j) \neq \emptyset$,
or \emph{disjoint}, i.e. $\mathcal{VVP}^{-}(a_i) \cap \mathcal{VVP}^{-}(a_j) = \emptyset$.
So, it is possible to create a partition of $A^k$, such that each set in the partition 
consists of a particular vertex $a_i$, called the \emph{leading vertex}, and all the other vertices of $A^k$ that are overlapping with $a_i$.
Obviously, two leading vertices belonging to different sets of such a partition are always disjoint.
Therefore, if a subset of $A^k$ is formed by choosing the leading vertex from each set of the partition, 
then we obtain a \emph{maximal disjoint subset} of $A^k$, i.e. a maximal subset of $A^k$ whose elements are all pairwise disjoint. \\

\vspace*{-0.5em}
In the following, we first formally define \emph{maximum disjoint subsets} of $A^k$ and $C^k$, 
and establish various properties of these subsets in Lemmas \ref{disjoint->nestedA} to \ref{ZR^k_i_bound},
Using these properties, we establish a lower bound on $|G_{opt}|$, which finally leads towards obtaining a constant approximation ratio. 

\begin{lemma} \label{disjoint->nestedA}
Consider any two arbitrary vertices $a_i,a_j \in A^k$, where $t(a_i) \prec t(a_j)$. 
If $a_i$ and $a_j$ are disjoint, i.e. if $\mathcal{VVP}^{-}(a_i) \cap \mathcal{VVP}^{-}(a_j) = \emptyset$,
then $\mathcal{VVP}^{-}(a_j)$ is \emph{geometrically nested} inside $\mathcal{VVP}^{-}(a_i)$,
i.e. $t(a_i) \prec f(a_j) \prec l(a_j) \prec t'(a_i)$.
\end{lemma}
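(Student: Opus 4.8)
The plan is to exploit the constructed edge $t(a_i)t'(a_i)$ of $\mathcal{VP}(a_i)$ as a separating chord and to show that the entire outward-visible range of $a_j$ is trapped inside the pocket $P(t(a_i)t'(a_i))$ that this chord cuts off. First I would record the elementary facts that $a_j$ sees the polygonal vertex $t(a_j)$ (so $t(a_j) \in \mathcal{VVP}^{-}(a_j)$ and hence $f(a_j) \preceq t(a_j) \preceq l(a_j)$), and that, by the definition of $A^k$ together with the target statement itself, the chord satisfies $t(a_i) \prec t'(a_i)$ along $bd_{cc}(u,v)$, with the open sub-chain $bd_{cc}(t(a_i),t'(a_i))$ being exactly the part of the lower boundary that is invisible from $a_i$. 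Thus no vertex of $\mathcal{VVP}^{-}(a_i)$ lies strictly between $t(a_i)$ and $t'(a_i)$.

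The first substantive step is to locate $t(a_j)$ inside this pocket, i.e. to upgrade the hypothesis $t(a_i) \prec t(a_j)$ to $t(a_i) \prec t(a_j) \prec t'(a_i)$. Since $a_i$ and $a_j$ are disjoint, $a_i$ cannot see $t(a_j)$; but $a_i$ sees the lower boundary from $f(a_i)$ up to $t(a_i)$ and again from $t'(a_i)$ onward up to $l(a_i)$, the only gap being the pocket. As $t(a_j) \succ t(a_i)$, the vertex $t(a_j)$ cannot lie in the visible stretch ending at $t(a_i)$, so it must lie either in the pocket or beyond $l(a_i)$. The latter is ruled out using $a_j \in A^k$, which forces $t(a_j) \prec l'(z_k)$, together with the fact that $a_i \in A^k$ places $t'(a_i)$ beyond $l(z_k)$; comparing these positions against $z_k$'s visibility interval on $bd_{cc}(u,v)$ confines $t(a_j)$ to $(t(a_i),t'(a_i))$.

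With $t(a_j)$ strictly inside the pocket, the second step is to trap the whole interval $[f(a_j),l(a_j)]$ there as well. The chord $t(a_i)t'(a_i)$ separates $Q$ into the pocket (which carries $t(a_j)$ on its boundary) and the complementary region that contains $a_j$ itself, so the sight segment $\overline{a_j\,t(a_j)}$ must cross this chord. I would then argue that if $a_j$ also saw some outward-visible vertex $w$ lying outside the closed interval $[t(a_i),t'(a_i)]$, then sweeping the ray from $a_j$ from direction $w$ towards direction $t(a_j)$ passes over the endpoint $t(a_i)$ (or $t'(a_i)$), and a Jordan-curve argument shows that this endpoint is itself visible from $a_j$. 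Since both endpoints are visible from $a_i$ (the chord is a constructed edge of $\mathcal{VP}(a_i)$), such a common vertex would contradict $\mathcal{VVP}^{-}(a_i)\cap\mathcal{VVP}^{-}(a_j)=\emptyset$. Consequently every vertex of $\mathcal{VVP}^{-}(a_j)$ lies in $(t(a_i),t'(a_i))$; in particular $t(a_i)\prec f(a_j)$ and $l(a_j)\prec t'(a_i)$, which is precisely the asserted nesting.

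The main obstacle is the bookkeeping of secondary pockets in the two visibility polygons. Both $\mathcal{VP}(a_i)$ and $\mathcal{VP}(a_j)$ may have constructed edges other than the distinguished ones $t(a_i)t'(a_i)$ and $t(a_j)t'(a_j)$, so the clean statements ``the only invisible gap of $a_i$ is its pocket'' and ``a swept-over chord endpoint is visible from $a_j$'' must be justified carefully rather than taken for granted. I expect the resolution to hinge on the $A^k$-membership constraints, which pin the relevant constructed edges of $a_i$ and $a_j$ to prescribed sub-arcs of $bd_{cc}(u,v)$ relative to $z_k$, so that any occluding reflex vertex would either force its own constructed edge outside those prescribed ranges (contradicting $a_i,a_j\in A^k$) or again produce a vertex commonly visible to $a_i$ and $a_j$.
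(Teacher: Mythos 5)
Your proposal is correct and follows essentially the same route as the paper's proof: both use the $A^k$-membership conditions to force the constructed edges $t(a_i)t'(a_i)$ and $t(a_j)t'(a_j)$ to straddle $z_k$'s last-visible landmark on $bd_{cc}(u,v)$, and both then use disjointness to rule out $f(a_j)\prec t(a_i)$ and $t'(a_i)\prec l(a_j)$ by exhibiting a vertex that would be commonly visible to $a_i$ and $a_j$. Your sweep/Jordan-curve justification of that last step (and your explicit flagging of the secondary-pocket issue) is if anything more detailed than the paper, which simply asserts the implication.
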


\begin{proof}
Since $a_i,a_j \in A^k$, we must have $t(a_i) \prec l(z_k) \prec t'(a_i)$ and also $t(a_j) \prec l(z_k) \prec t'(a_j)$.
Therefore, the only possibility is to have $t(a_i) \prec t(a_j) \prec l(z_k) \prec t'(a_j) \prec t'(a_i)$.
Moreover, if $f(a_j) \prec t(a_i)$ or $t'(a_i) \prec l(a_j)$, 
then $\mathcal{VVP}^{-}(a_i) \cap \mathcal{VVP}^{-}(a_j) \neq \emptyset$,
which contradicts that $a_i$ and $a_j$ are disjoint.
Hence, we must have $t(a_i) \prec f(a_j) \prec t(a_j) \prec l(z_k) \prec t'(a_j) \prec l(a_j) \prec t'(a_i)$.
So, $\mathcal{VVP}^{-}(a_j)$ is \emph{geometrically nested} inside $\mathcal{VVP}^{-}(a_i)$.
\end{proof}

\begin{lemma} \label{disjoint->nestedC}
Consider any two arbitrary vertices $c_i,c_j \in C^k$, where $t'(c_i) \prec t'(c_j)$. 
If $c_i$ and $c_j$ are disjoint, i.e. if $\mathcal{VVP}^{-}(c_i) \cap \mathcal{VVP}^{-}(c_j) = \emptyset$,
then $\mathcal{VVP}^{-}(c_j)$ is \emph{geometrically nested} inside $\mathcal{VVP}^{-}(c_i)$,
i.e. $t'(c_i) \prec f(c_j) \prec l(c_j) \prec t'(c_i)$.
\end{lemma}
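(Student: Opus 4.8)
The plan is to prove Lemma~\ref{disjoint->nestedC} as the exact mirror image of Lemma~\ref{disjoint->nestedA}, interchanging the two endpoints $t(\cdot)$ and $t'(\cdot)$ of every constructed edge and using $l'(z_k)$ as the pivot point in place of $l(z_k)$. This mirroring is legitimate because $C^k$ is defined by precisely the reflected condition that defines $A^k$: for $c \in C^k$ we have $l'(z_k) \prec t(c) \prec v$ together with $f(z_k) \prec t'(c) \prec l'(z_k)$, so each of $c_i,c_j \in C^k$ has its constructed edge straddling $l'(z_k)$, i.e. $t'(c_i) \prec l'(z_k) \prec t(c_i)$ and $t'(c_j) \prec l'(z_k) \prec t(c_j)$ along $bd_{cc}(u,v)$. (The printed inequality $t'(c_i) \prec f(c_j) \prec l(c_j) \prec t'(c_i)$ in the statement appears to carry a typo in its last term, which should read $t(c_i)$; I prove the geometrically meaningful nesting $t'(c_i) \prec f(c_j) \prec l(c_j) \prec t(c_i)$.)

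First I would fix the cyclic order of the four window endpoints. The hypothesis gives $t'(c_i) \prec t'(c_j)$, and combining this with the straddling property yields $t'(c_i) \prec t'(c_j) \prec l'(z_k)$ while both $t(c_i)$ and $t(c_j)$ lie on $bd_{cc}(l'(z_k),v)$. To locate $t(c_i)$ relative to $t(c_j)$, I would invoke the fact that the constructed edges $t(c_i)t'(c_i)$ and $t(c_j)t'(c_j)$ are windows of the weak visibility polygon and therefore cannot properly cross inside the simple polygon $Q$; since their $t'$-endpoints occur in the order $t'(c_i) \prec t'(c_j)$ on $bd_{cc}(f(z_k),l'(z_k))$, their $t$-endpoints must occur in the reverse order on $bd_{cc}(l'(z_k),v)$. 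This establishes the chain $t'(c_i) \prec t'(c_j) \prec l'(z_k) \prec t(c_j) \prec t(c_i)$, exactly mirroring the chain $t(a_i) \prec t(a_j) \prec l(z_k) \prec t'(a_j) \prec t'(a_i)$ in Lemma~\ref{disjoint->nestedA}.

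Next I would pin down the outward-visible range $[f(c_j),l(c_j)]$ of $c_j$ on $bd_{cc}(u,v)$ using the disjointness hypothesis, exactly as in the proof of Lemma~\ref{disjoint->nestedA}. If $f(c_j) \prec t'(c_i)$ or $t(c_i) \prec l(c_j)$, then $c_j$ would see a polygonal vertex on $bd_{cc}(u,v)$ lying outside the strip bounded by the window of $c_i$ yet inside the outward-visible range of $c_i$, forcing $\mathcal{VVP}^{-}(c_i) \cap \mathcal{VVP}^{-}(c_j) \neq \emptyset$ and contradicting the assumption that $c_i$ and $c_j$ are disjoint. Hence $t'(c_i) \prec f(c_j)$ and $l(c_j) \prec t(c_i)$, giving $t'(c_i) \prec f(c_j) \prec l(c_j) \prec t(c_i)$; that is, $\mathcal{VVP}^{-}(c_j)$ is geometrically nested inside $\mathcal{VVP}^{-}(c_i)$.

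I expect the main obstacle to be the step that fixes the order $t(c_j) \prec t(c_i)$, i.e. the claim that the two windows nest rather than cross. Lemma~\ref{disjoint->nestedA} asserts the analogous ordering as ``the only possibility'' without elaboration, so to make the mirror argument airtight I would justify non-crossing explicitly: because $t(c_i),t(c_j) \in \mathcal{V}(Q_L)$ and each $t'(\cdot)$ is the first boundary intersection of the corresponding ray, the two segments are chords of the simple polygon $Q$ with all four endpoints on $bd_{cc}(u,v)$, so a proper crossing would violate simplicity, and the straddling of $l'(z_k)$ then forces the nested rather than the disjoint endpoint pattern. Everything else is a routine reflection of the arguments already used for $A^k$.
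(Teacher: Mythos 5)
Your proof follows the paper's own argument essentially verbatim: both establish the interleaving $t'(c_i) \prec t'(c_j) \prec t(c_j) \prec t(c_i)$ from the fact that each constructed edge straddles the last-visible point of $z_k$ and the two windows cannot cross, and then use disjointness to force $t'(c_i) \prec f(c_j)$ and $l(c_j) \prec t(c_i)$. You are also right that the final term in the lemma statement is a typo for $t(c_i)$ (the paper's own concluding chain ends in $t(c_i)$), and your explicit justification of the non-crossing step, which the paper asserts as ``the only possibility,'' is a harmless strengthening rather than a departure.
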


\begin{proof}
Since $c_i,c_j \in C^k$, we must have $t'(c_i) \prec l(z_k) \prec t(c_i)$ and also $t'(c_j) \prec l(z_k) \prec t(c_j)$.
Therefore, the only possibility is to have $t'(c_i) \prec t'(c_j) \prec l(z_k) \prec t(c_j) \prec t(c_i)$.
Moreover, if $f(c_j) \prec t'(c_i)$ or $t(c_i) \prec l(c_j)$, 
then $\mathcal{VVP}^{-}(c_i) \cap \mathcal{VVP}^{-}(c_j) \neq \emptyset$,
which contradicts that $c_i$ and $c_j$ are disjoint.
Hence, we must have $t'(c_i) \prec f(c_j) \prec t'(c_j) \prec l(z_k) \prec t(c_j) \prec l(c_j) \prec t(c_i)$.
So, $\mathcal{VVP}^{-}(c_j)$ is \emph{geometrically nested} inside $\mathcal{VVP}^{-}(c_i)$.
\end{proof}

Observe that the size of any maximal disjoint subset of $A^k$ depends on the choice of the leading element for each set of the partition. 
We are interested in choosing the leading elements in such a way so as to construct a
\emph{canonical partitioning} of $A^k$ corresponding to a particular {\it maximum disjoint subset} $L^k \subseteq A^k$, 
defined as follows.
First include $a_1$ in $L^k$ and construct the set $P_1(L^k)$ consisting of all other vertices $a_j \in A^k$
such that $\mathcal{VVP}^{-}(a_1) \cap \mathcal{VVP}^{-}(a_j) \neq \emptyset$.
Note that $a_1 \in P_1(L^k)$.
Also note that, if $A^k \setminus P_1(L^k) = \emptyset$, then $P(L^k) = \{A^k\}$
(i.e. $CI(A^k) \neq \emptyset$, so this corresponds to cases 2a and 2c in Algorithm \ref{vg_pcode_sp1}).
Otherwise, for each $i \in \{2,3,\dots\}$, 
pick the vertex $a_{\sigma(i)} \in A^k$ where $\sigma(i)$ is the least index such that $a_{\sigma(i)} \notin \bigcup_{1\leq j < i}P_j(L^k)$, 
and include $a_{\sigma(i)}$ in $L^k$. 
Construct the set $P_i(L^k) =\{ a_j \in A^k : \mathcal{VVP}^{-}(a_{\sigma(i)}) \cap \mathcal{VVP}^{-}(a_j) \neq \emptyset \}$.
The process is repeated till $ \bigcup_{j:a_j \in L^k} P_j(L^k) = A^k $. 
Thus, a canonical partition $P(L^k) = \{ P_i(L^k) \subseteq A_k : 1 \leq i \leq |P(L^k)| \leq m\}$ is constructed. \\

\vspace*{-0.5em}
Analogously, a similar {\it canonical partitioning of $C^k$} corresponding to a {\it maximum disjoint subset} $R^k \subseteq C^k$ 
is defined as follows.
Let $C^k = \{c_1,c_2,c_3,\dots,c_m\}$ 
where $prev(t'(c_1)) \prec prev(t'(a_2)) \prec prev(t'(a_3)) \prec \dots \prec prev(t'(a_m))$.
First include $c_1$ in $R^k$ and construct the set $P_1(R^k)$ consisting of all other vertices $c_j \in C^k$
such that $\mathcal{VVP}^{-}(c_1) \cap \mathcal{VVP}^{-}(c_j) \neq \emptyset$.
Note that $c_1 \in P_1(R^k)$.
Also note that, if $C^k \setminus P_1(R^k) = \emptyset$, then $PR^k = \{C^k\}$
(i.e. $CI(C^k) \neq \emptyset$, so this corresponds to cases 2a and 2b in Algorithm \ref{vg_pcode_sp1}).
Otherwise, for each $i \in \{2,3,\dots\}$, 
pick the vertex $c_{\sigma(i)} \in C^k$ where $\sigma(i)$ is the least index such that $c_{\sigma(i)} \notin \bigcup_{1\leq j < i}P_j(R^k)$, 
and include $c_{\sigma(i)}$ in $R^k$. 
Construct the set $P_i(R^k) =\{ c_j \in C^k : \mathcal{VVP}^{-}(c_{\sigma(i)}) \cap \mathcal{VVP}^{-}(c_j) \neq \emptyset \}$.
The process is repeated till $ \bigcup_{j:c_j \in R^k} P_j(R^k) = C^k $. 
Thus, a canonical partition $P(R^k) = \{ P_i(R^k) \subseteq C_k : 1 \leq i \leq |P(R^k)| \leq m\}$ is constructed. \\

We now study the properties of $L^k$ and $R^k$ as constructed above.
Firstly, it is easy to see that, by their very construction,
$L^k$ is a \emph{maximal} disjoint subset of $A^k$, and $R^k$ is a \emph{maximal} disjoint subset of $C^k$.
We show that $L^k$ is also a \emph{maximum} disjoint subset of $A^k$, 
using an interesting pairwise intersection property established in the following lemma. 

\begin{lemma} \label{L^k_i_pairwise}
For every $P_i(L^k) \in P(L^k)$ and for any two vertices $a,a' \in P_i(L^k)$, 
we have $\mathcal{VVP}^{-}(a) \cap \mathcal{VVP}^{-}(a') \neq \emptyset$.
\end{lemma}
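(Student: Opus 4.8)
The plan is to reduce the full pairwise claim to the one relation available for free, namely that every vertex of $P_i(L^k)$ overlaps the leading vertex $a_{\sigma(i)}$, and then to upgrade this to pairwise overlap using the laminar structure of the constructed edges of $A^k$. Write $a_* = a_{\sigma(i)}$ and take two arbitrary $a,a' \in P_i(L^k)$. By construction $a_*$ has the smallest index among the still-unmarked vertices at this stage, so $t(a_*) \preceq t(a)$ and $t(a_*) \preceq t(a')$; if $a_*$ coincides with $a$ or $a'$ the claim is immediate from membership in $P_i(L^k)$, so I may assume $t(a_*) \prec t(a)$ and $t(a_*) \prec t(a')$ and, after relabelling, $t(a) \prec t(a')$.

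First I would record the structural fact that the constructed edges of the vertices of $A^k$ are \emph{nested}. Exactly as in the first part of the proof of Lemma \ref{disjoint->nestedA}, every $a_i \in A^k$ satisfies $t(a_i) \prec l(z_k) \prec t'(a_i)$, so the three pockets $(t(a_*),t'(a_*))$, $(t(a),t'(a))$ and $(t(a'),t'(a'))$ all contain the common vertex $l(z_k)$; since constructed edges cannot cross, ordering the left endpoints as $t(a_*) \prec t(a) \prec t(a')$ forces the right endpoints into the reverse order, yielding the chain $t(a_*) \prec t(a) \prec t(a') \prec t'(a') \prec t'(a) \prec t'(a_*)$. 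This reversal uses only the non-crossing property and the $l(z_k)$-containment, not disjointness, so it holds for the three vertices regardless of how their visible sets meet. I would also record the overlap criterion that falls out of Lemma \ref{disjoint->nestedA}: for two vertices of $A^k$ whose constructed edges are nested in this way, their outward-visible sets are disjoint precisely when the extent $[f,l]$ of the inner vertex lies strictly inside the pocket of the outer vertex; equivalently, they overlap iff the inner extent pokes out of the outer pocket, i.e. $f(a_{\mathrm{in}}) \preceq t(a_{\mathrm{out}})$ or $l(a_{\mathrm{in}}) \succeq t'(a_{\mathrm{out}})$, where $a_{\mathrm{out}}$ is the vertex with the smaller $t$-value.

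Finally I would argue by contradiction. Suppose $\mathcal{VVP}^{-}(a) \cap \mathcal{VVP}^{-}(a') = \emptyset$. Since $t(a) \prec t(a')$, Lemma \ref{disjoint->nestedA} confines the extent of $a'$ strictly inside the pocket of $a$, so $t(a) \prec f(a') \prec l(a') \prec t'(a)$. In particular $f(a') \succ t(a) \succ t(a_*)$, so the left-poking alternative of the overlap criterion fails for the pair $(a_*,a')$; as $a' \in P_i(L^k)$ overlaps $a_*$, the right-poking alternative must hold, i.e. $l(a') \succeq t'(a_*)$. But the nesting chain gives $l(a') \prec t'(a) \prec t'(a_*)$, a contradiction. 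Hence $a$ and $a'$ overlap, which proves the lemma.

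I expect the main obstacle to be justifying the two geometric inputs cleanly, especially the non-crossing reversal of the $t'$-endpoints and the exact overlap criterion, since both rest on the two-piece description of $\mathcal{VVP}^{-}(\cdot)$ (visible up to $t(\cdot)$, a hidden pocket around $l(z_k)$, then visible again past $t'(\cdot)$) that must be read off from the geometry underlying Lemma \ref{disjoint->nestedA}; once these are in place the combinatorial contradiction is short. The analogous statement for the classes $P_i(R^k)$ of $C^k$ would follow by the same argument, with Lemma \ref{disjoint->nestedC} in place of Lemma \ref{disjoint->nestedA} and the endpoints ordered by $t'(\cdot)$.
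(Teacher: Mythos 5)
Your proof is correct and follows essentially the same route as the paper's: assume $t(a)\prec t(a')$ and disjointness, invoke Lemma \ref{disjoint->nestedA} to nest the extent of $a'$ inside the pocket of $a$, combine with the containment $t(a_{\sigma(i)})\prec t(a)\prec t'(a)\prec t'(a_{\sigma(i)})$ to conclude $a'$ would also be disjoint from the leading vertex, contradicting $a'\in P_i(L^k)$. Your version is somewhat more careful than the paper's in making explicit the two geometric inputs the paper uses silently — the non-crossing reversal of the $t'$-endpoints and the converse (``nested implies disjoint'') direction of Lemma \ref{disjoint->nestedA} — but the underlying argument is the same.
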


\begin{proof}
Without loss of generality let us assume that $t(a) \prec t(a')$.
For the sake of contradiction, suppose $\mathcal{VVP}^{-}(a) \cap \mathcal{VVP}^{-}(a') = \emptyset$.
This means $t(a) \prec f(a')$ and $l(a') \prec t'(a)$, i.e. $\mathcal{VVP}^{-}(a')$ is geometrically nested within $\mathcal{VVP}^{-}(a)$.
On the other hand, we know that both $a$ and $a'$ are overlapping with the leading vertex $a_{\sigma(i)}$ of $P_i(L^k)$.
By the construction of $P_i(L^k)$, we have $t(a_{\sigma(i)}) \prec t(a) \prec t'(a) \prec t'(a_{\sigma(i)})$.
Therefore, $t(a) \prec f(a')$ implies $t(a_{\sigma(i)}) \prec f(a')$, and $l(a') \prec t'(a)$ implies $l(a') \prec t'(a_{\sigma(i)})$.
However, if both $t(a_{\sigma(i)}) \prec f(a')$ and $l(a') \prec t'(a_{\sigma(i)})$ are true, 
then $\mathcal{VVP}^{-}(a_{\sigma(i)}) \cap \mathcal{VVP}^{-}(a') = \emptyset$,
contradicting the initial assumption that $a' \in P_i(L^k)$.
\end{proof}

\begin{lemma} \label{L^k_maximum}
$L^k$ is a {\it maximum disjoint subset} of $A^k$.
\end{lemma}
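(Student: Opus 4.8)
The plan is to exploit the canonical partition $P(L^k)$ as a cover of $A^k$ by pairwise-overlapping cliques, and then invoke the elementary bound that an independent set meets each covering clique at most once. First I would recall two facts already in hand. By the selection rule, $L^k = \{a_{\sigma(1)}, a_{\sigma(2)}, \dots\}$ contributes exactly one leading vertex per block $P_i(L^k)$, so $|L^k| = |P(L^k)|$; moreover $L^k$ is a disjoint subset of $A^k$, since for $i < i'$ the vertex $a_{\sigma(i')}$ was chosen outside $\bigcup_{1\le j<i'} P_j(L^k) \supseteq P_i(L^k)$, and $P_i(L^k)$ already contains every vertex overlapping $a_{\sigma(i)}$, forcing $\mathcal{VVP}^{-}(a_{\sigma(i)}) \cap \mathcal{VVP}^{-}(a_{\sigma(i')}) = \emptyset$. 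Hence it suffices to prove that \emph{no} disjoint subset of $A^k$ can have more than $|P(L^k)|$ elements.

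Next I would phrase the overlap relation as a graph $H$ on vertex set $A^k$, joining $a$ and $a'$ by an edge whenever $\mathcal{VVP}^{-}(a) \cap \mathcal{VVP}^{-}(a') \neq \emptyset$; a disjoint subset of $A^k$ is then precisely an independent set of $H$. Lemma \ref{L^k_i_pairwise} certifies that each block $P_i(L^k)$ is a clique of $H$, and the termination condition $\bigcup_{j:a_j \in L^k} P_j(L^k) = A^k$ certifies that these cliques cover every vertex of $H$. Thus $\{P_i(L^k)\}$ is a clique cover of $H$ consisting of $|P(L^k)|$ cliques.

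From here the conclusion is immediate. Let $D$ be any disjoint subset of $A^k$, i.e. any independent set of $H$. Since each $P_i(L^k)$ is a clique, $D$ can contain at most one vertex of $P_i(L^k)$; and since the blocks cover $A^k$, every vertex of $D$ lies in some block. Counting yields $|D| \le |P(L^k)| = |L^k|$. As $L^k$ is itself a disjoint subset of size $|L^k|$, it attains this maximum, so $L^k$ is a maximum disjoint subset of $A^k$.

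The one point I would flag carefully is that the blocks $P_i(L^k)$ need not be mutually disjoint as sets, because a single vertex may overlap several leading vertices; consequently $P(L^k)$ is really a \emph{cover} rather than a strict set partition. I would therefore stress that the bound ``independence number $\le$ number of covering cliques'' requires only that the cliques cover the vertex set, which the termination condition supplies, and not that they be pairwise disjoint. The substantive ingredient is the appeal to Lemma \ref{L^k_i_pairwise} guaranteeing that each block is genuinely a clique of $H$; once that is granted, the remaining counting argument is routine.
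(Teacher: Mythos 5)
Your proposal is correct and is essentially the paper's own argument: the paper also invokes Lemma \ref{L^k_i_pairwise} to certify that each block $P_i(L^k)$ is pairwise overlapping and then applies the pigeonhole principle to conclude that any disjoint subset larger than $|L^k|$ would contain two vertices in a common block, a contradiction. Your clique-cover phrasing, the explicit verification that $L^k$ is itself disjoint, and the remark that the blocks form a cover rather than a partition are just careful restatements of the same counting, so no further comparison is needed.
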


\begin{proof}
On the contrary, assume that there exists a larger maximal disjoint subset of $A^k$, say $L'^k$.
Then, by the pigeonhole principle there exists at least two vertices $a,a' \in L'^k$ 
such that both belong to the same set $P_i(L^k) \in P(L^k)$.
Thus, by Lemma \ref{L^k_i_pairwise}, we have $\mathcal{VVP}^{-}(a) \cap \mathcal{VVP}^{-}(a') \neq \emptyset$,
which contradicts the fact that $L'^k$ is a disjoint subset of $A^k$.
Hence, $L^k$ is a {\it maximum disjoint subset} of $A^k$.
\end{proof}

In the following lemmas, we show in an analogous manner that $R^k$ is also a \emph{maximum} disjoint subset of $C^k$.

\begin{lemma} \label{R^k_i_pairwise}
For every $P_i(R^k) \in P(R^k)$ and for any two vertices $c,c' \in P_i(R^k)$, 
we have $\mathcal{VVP}^{-}(c) \cap \mathcal{VVP}^{-}(c') \neq \emptyset$.
\end{lemma}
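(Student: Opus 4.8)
The plan is to mirror the proof of Lemma~\ref{L^k_i_pairwise} almost verbatim, interchanging the roles of the two constructed-edge endpoints $t(\cdot)$ and $t'(\cdot)$, and invoking the $C^k$-analogue of the nesting result, namely Lemma~\ref{disjoint->nestedC}, in place of Lemma~\ref{disjoint->nestedA}. First I would fix an ordering on the pair: since $C^k$ (and hence each $P_i(R^k)$) is ordered by the $prev(t'(\cdot))$ values, without loss of generality I assume $t'(c) \prec t'(c')$. Then I would argue by contradiction, supposing $\mathcal{VVP}^{-}(c) \cap \mathcal{VVP}^{-}(c') = \emptyset$.

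From this disjointness assumption, Lemma~\ref{disjoint->nestedC} would immediately give that $\mathcal{VVP}^{-}(c')$ is geometrically nested inside $\mathcal{VVP}^{-}(c)$, i.e. $t'(c) \prec f(c') \prec l(c') \prec t(c)$. The next step is to bring in the leading vertex $c_{\sigma(i)}$ of $P_i(R^k)$: by the construction of the canonical partition, both $c$ and $c'$ overlap with $c_{\sigma(i)}$, i.e. $\mathcal{VVP}^{-}(c_{\sigma(i)}) \cap \mathcal{VVP}^{-}(c) \neq \emptyset$ and $\mathcal{VVP}^{-}(c_{\sigma(i)}) \cap \mathcal{VVP}^{-}(c') \neq \emptyset$. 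I would then establish the chain $t'(c_{\sigma(i)}) \prec t'(c) \prec t(c) \prec t(c_{\sigma(i)})$ (expressing that the constructed-edge endpoints of $c$ are nested between those of $c_{\sigma(i)}$), which follows from the fact that $c_{\sigma(i)}$ carries the least $prev(t'(\cdot))$ index in $P_i(R^k)$ together with its overlap with $c$. Combining, $t'(c_{\sigma(i)}) \prec t'(c) \prec f(c')$ gives $t'(c_{\sigma(i)}) \prec f(c')$, and $l(c') \prec t(c) \prec t(c_{\sigma(i)})$ gives $l(c') \prec t(c_{\sigma(i)})$. These two inequalities force $\mathcal{VVP}^{-}(c')$ to be nested within the constructed-edge endpoints of $c_{\sigma(i)}$, hence $\mathcal{VVP}^{-}(c_{\sigma(i)}) \cap \mathcal{VVP}^{-}(c') = \emptyset$, contradicting $c' \in P_i(R^k)$ and closing the argument.

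The main obstacle I anticipate is bookkeeping rather than conceptual: because $C^k$ is ordered by $t'$ while $A^k$ was ordered by $t$, and because for $c \in C^k$ the geometry yields $t'(c) \prec l(z_k) \prec t(c)$ (the mirror of the $A^k$ situation), I must verify that every inequality inherited from the $L^k$ argument is reflected with the correct orientation. In particular, the single relation $t'(c_{\sigma(i)}) \prec t'(c) \prec t(c) \prec t(c_{\sigma(i)})$ (the analogue of $t(a_{\sigma(i)}) \prec t(a) \prec t'(a) \prec t'(a_{\sigma(i)})$ used in Lemma~\ref{L^k_i_pairwise}) is the one place where the symmetry with the $A^k$ case must be checked against the construction of $P_i(R^k)$ rather than merely asserted; once it is in hand, the remainder is a routine substitution that reproduces the contradiction of the $L^k$ proof.
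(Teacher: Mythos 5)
Your proposal is correct and follows essentially the same route as the paper's own proof: assume disjointness, use the nesting property (Lemma \ref{disjoint->nestedC}) to get $t'(c) \prec f(c') \prec l(c') \prec t(c)$, sandwich the constructed-edge endpoints of $c$ between those of the leading vertex $c_{\sigma(i)}$, and conclude that $\mathcal{VVP}^{-}(c')$ would then be disjoint from $\mathcal{VVP}^{-}(c_{\sigma(i)})$, contradicting $c' \in P_i(R^k)$. The one step you flag as needing verification, the chain $t'(c_{\sigma(i)}) \prec t'(c) \prec t(c) \prec t(c_{\sigma(i)})$, is exactly the relation the paper asserts from the construction of $P_i(R^k)$, so there is no divergence.
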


\begin{proof}
Without loss of generality let us assume that $prev(t'(c)) \prec prev(t'(c'))$.
For the sake of contradiction, suppose $\mathcal{VVP}^{-}(c) \cap \mathcal{VVP}^{-}(c') = \emptyset$.
This means $t'(c) \prec f(c')$ and $l(c') \prec t(c)$, i.e. $\mathcal{VVP}^{-}(c')$ is geometrically nested within $\mathcal{VVP}^{-}(c)$.
On the other hand, we know that both $c$ and $c'$ are overlapping with the leading vertex $c_{\sigma(i)}$ of $P_i(R^k)$.
By the construction of $P_i(R^k)$, we have $prev(t'(c_{\sigma(i)})) \prec prev(t'(c)) \prec t(c) \prec t(c_{\sigma(i)})$.
Therefore, $t'(c) \prec f(c')$ implies $t'(c_{\sigma(i)}) \prec f(c')$, 
and $l(c') \prec t(c)$ implies $l(c') \prec t(c_{\sigma(i)})$.
However, if both $t'(c_{\sigma(i)}) \prec f(c')$ and $l(c') \prec t(c_{\sigma(i)})$ are true, 
then $\mathcal{VVP}^{-}(c_{\sigma(i)}) \cap \mathcal{VVP}^{-}(c') = \emptyset$,
contradicting the initial assumption that $c' \in P_i(R^k)$.
\end{proof}

\begin{lemma} \label{R^k_maximum}
$R^k$ is a {\it maximum disjoint subset} of $C^k$.
\end{lemma}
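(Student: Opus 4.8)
The plan is to mirror verbatim the argument already used for Lemma \ref{L^k_maximum}, now leveraging the pairwise intersection property just established for $C^k$ in Lemma \ref{R^k_i_pairwise}. The starting observation is structural: the canonical partition $P(R^k) = \{P_1(R^k), P_2(R^k), \dots\}$ of $C^k$ was built so that each block $P_i(R^k)$ contributes exactly one leading vertex $c_{\sigma(i)}$ to $R^k$, and these leading vertices are distinct by construction. Hence $|R^k|$ equals the number of blocks $|P(R^k)|$, and the blocks are exhaustive (the construction repeats until $\bigcup_j P_j(R^k) = C^k$). This identity is what will let a counting argument control the size of \emph{any} disjoint subset.

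First I would assume for contradiction that some disjoint subset $R'^k \subseteq C^k$ is strictly larger than $R^k$, so that $|R'^k| > |R^k| = |P(R^k)|$. Since the blocks $P_i(R^k)$ partition $C^k$ into exactly $|P(R^k)|$ classes, the pigeonhole principle forces two distinct vertices $c, c' \in R'^k$ to fall into the same block, say $P_i(R^k)$. Next I would invoke Lemma \ref{R^k_i_pairwise}, which asserts that any two vertices within a single block satisfy $\mathcal{VVP}^{-}(c) \cap \mathcal{VVP}^{-}(c') \neq \emptyset$. This directly contradicts the assumption that $R'^k$ is a disjoint subset of $C^k$. Therefore no disjoint subset can exceed $|R^k|$ in cardinality, and since $R^k$ is already a disjoint subset of this size, it is a maximum disjoint subset.

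The genuine content of this lemma lies entirely in Lemma \ref{R^k_i_pairwise}, which is already proven; once the pairwise overlap within each block is in hand, the remaining pigeonhole-and-counting step is routine and wholly symmetric to the $A^k$ case. The only point I would take care to verify explicitly is that the canonical construction really yields a bona fide partition --- that the blocks are mutually exhaustive and that each leading vertex is counted once --- so that the equality $|R^k| = |P(R^k)|$ holds and the pigeonhole step applies without slack. Given that, the proof is a short contradiction argument identical in form to that of Lemma \ref{L^k_maximum}.
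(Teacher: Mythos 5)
Your proposal is correct and follows essentially the same route as the paper: assume a larger disjoint subset exists, apply the pigeonhole principle over the blocks of the canonical partition $P(R^k)$ to find two of its vertices in a common block, and derive a contradiction from Lemma \ref{R^k_i_pairwise}. The extra care you take in checking that $|R^k| = |P(R^k)|$ and that the blocks exhaust $C^k$ is implicit in the paper's construction but is a reasonable point to make explicit.
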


\begin{proof}
On the contrary, assume that there exists a larger maximal disjoint subset of $C^k$, say $R'^k$.
Then, by the pigeonhole principle there exists at least two vertices $c,c' \in R'^k$ 
such that both belong to the same set $P_i(R^k) \in P(R^k)$.
Thus, by Lemma \ref{R^k_i_pairwise}, we have $\mathcal{VVP}^{-}(c) \cap \mathcal{VVP}^{-}(c') \neq \emptyset$,
which contradicts the fact that $R'^k$ is a disjoint subset of $C^k$.
Hence, $R^k$ is a {\it maximum disjoint subset} of $C^k$.
\end{proof}

\begin{figure}[H]
\begin{minipage}{0.49\textwidth}
  \centerline{\includegraphics[width=\textwidth]{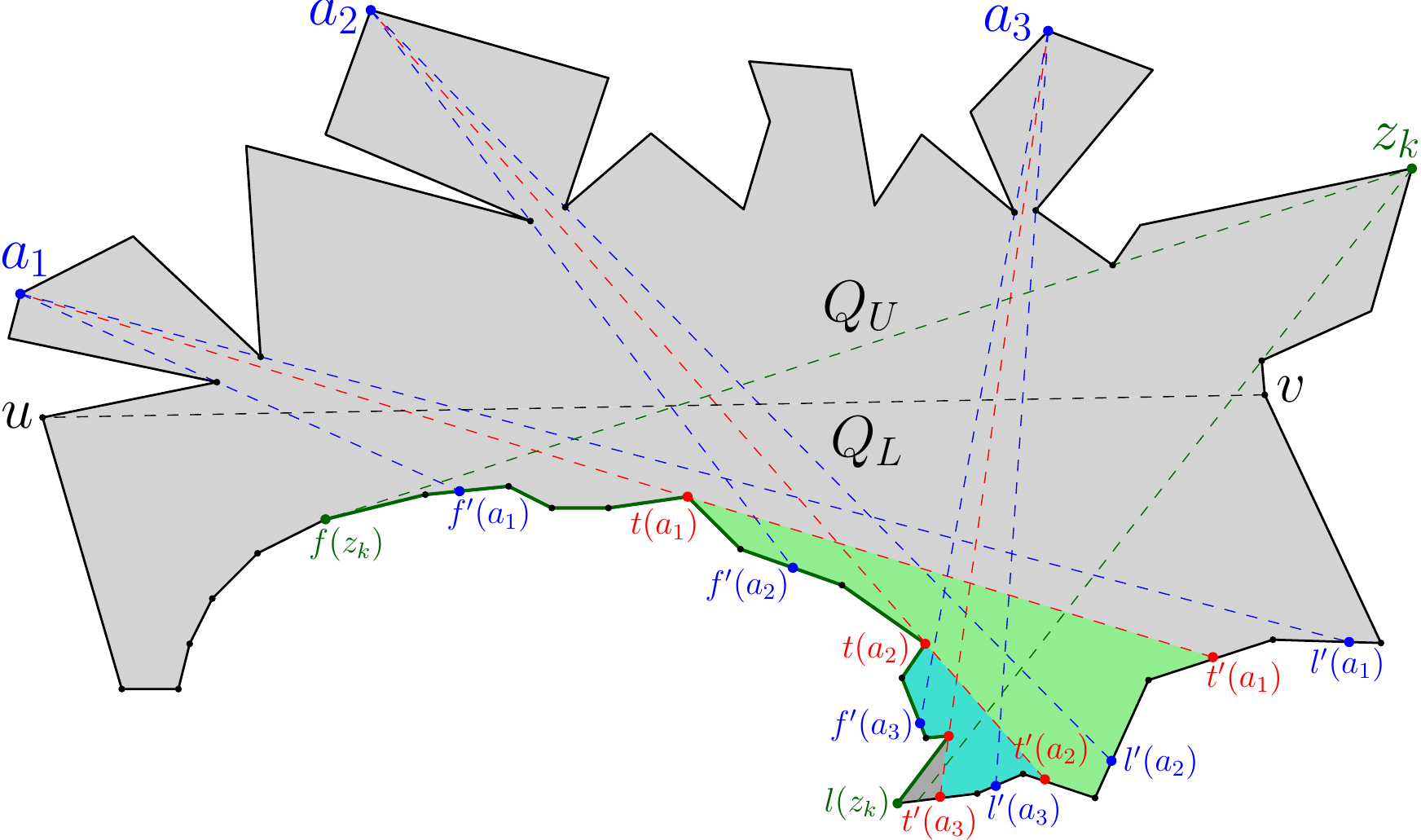}}
  \caption{Figure showing a maximum disjoint subset $ \{a_1, a_2, a_3 \} = L^k \subseteq A^k$ 
         such that $f(a_1) \prec t(a_1) \prec f(a_2) \prec t(a_2) \prec f(a_3) \prec t(a_3) 
         \prec t'(a_3) \prec l(a_3) \prec t'(a_2) \prec l(a_2) \prec t'(a_1) \prec l(a_1) $.}
  \label{L^k_example}
\end{minipage}
\hspace*{0.01\textwidth}
\begin{minipage}{0.49\textwidth}
  \centerline{\includegraphics[width=\textwidth]{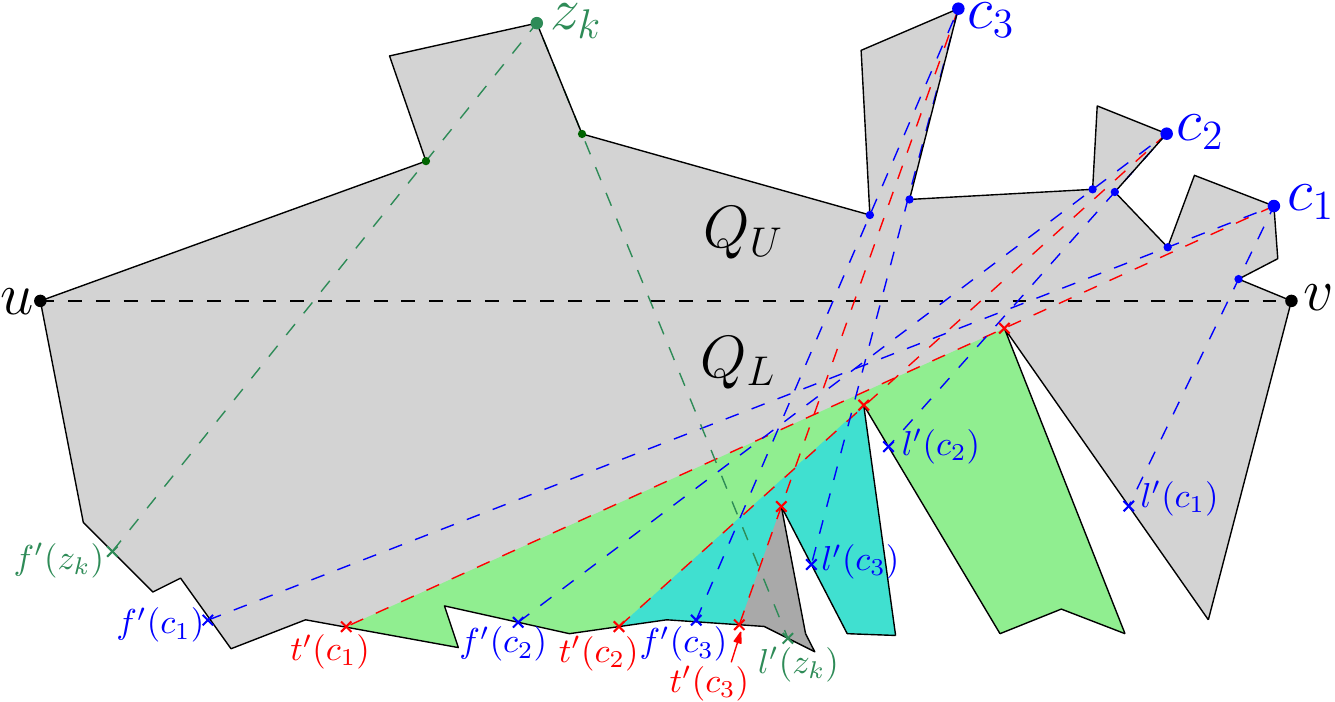}}
  \caption{Figure showing a maximum disjoint subset $\{c_1, c_2, c_3 \} = R^k \subseteq C^k$ 
         such that $f(c_1) \prec t'(c_1) \prec f(c_2) \prec t'(c_2) \prec f(c_3) \prec t'(c_3)
         \prec t(c_3) \prec l(c_3) \prec t(c_2) \prec l(c_2) \prec t(c_1) \prec l(c_1)$. }
  \label{R^k_example}
\end{minipage}
\end{figure}

\vspace*{-0.55em}
The following lemmas are consequences of Lemmas \ref{disjoint->nestedA} and \ref{disjoint->nestedC}.

\begin{lemma} \label{L^k_upper}
For any $k \in \{1,2,\dots,|Z|\}$, any outward vertex guard placed on $bd_{cc}(u,v)$ can see at most one vertex of $L^k$.
\end{lemma}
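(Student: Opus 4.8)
The plan is to prove the statement pairwise: I will show that for any two distinct vertices $a_i, a_j \in L^k$, no single vertex $g$ on $bd_{cc}(u,v)$ can be an outside guard for both. Because $L^k$ is a disjoint subset of $A^k$, I would first fix the labelling so that $t(a_i) \prec t(a_j)$ and then invoke Lemma~\ref{disjoint->nestedA} to obtain the geometric nesting $t(a_i) \prec f(a_j) \prec l(a_j) \prec t'(a_i)$ along $bd_{cc}(u,v)$.

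The heart of the argument will be a single geometric observation about the constructed edge $t(a_i)t'(a_i)$ of $\mathcal{VP}(a_i)$: since $a_i \in A^k$ we have $t(a_i) \prec t'(a_i)$, and this edge cuts off the pocket $P(t(a_i)t'(a_i))$, whose vertices on $bd_{cc}(u,v)$ are exactly those lying strictly between $t(a_i)$ and $t'(a_i)$; by the definition of a pocket, none of them is visible from $a_i$. I would then suppose, for contradiction, that $g$ sees $a_j$ as an outside guard. Then $g$ is an outward-visible vertex of $a_j$, so $f(a_j) \preceq g \preceq l(a_j)$; combined with the nesting inequalities this forces $t(a_i) \prec g \prec t'(a_i)$, placing $g$ inside the pocket of $a_i$ and hence making it invisible from $a_i$. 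Thus no $g$ can see both $a_i$ and $a_j$.

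Since this holds for every pair of members of $L^k$, I would conclude that a single outside guard on $bd_{cc}(u,v)$ lies in the outward-visible range of at most one vertex of $L^k$, which is precisely the assertion of the lemma. The same conclusion can be phrased without the nesting: ``$g$ sees $a_i$'' is equivalent to $g \in \mathcal{VVP}^{-}(a_i)$ because $a_i \in Q_U$ and $g \in Q_L$ (so the segment $g a_i$ crosses $uv$), and the pairwise disjointness $\mathcal{VVP}^{-}(a_i)\cap\mathcal{VVP}^{-}(a_j)=\emptyset$ that defines $L^k$ then forbids $g$ from belonging to two such sets at once.

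The step I expect to be the main obstacle is the middle geometric observation---arguing rigorously that the outward-visible vertices of the ``inner'' vertex $a_j$ are trapped strictly inside the pocket cut off by $a_i$'s constructed edge, and that this entire pocket region is invisible from $a_i$. This relies on correctly reading off the counter-clockwise orientation of the points $t, t', f, l$ from the membership $a_i,a_j \in A^k$ together with Lemma~\ref{disjoint->nestedA}; once the nesting inequalities are secured, the contradiction follows at once.
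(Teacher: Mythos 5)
Your proof is correct and matches the paper's intent: the paper states this lemma without proof, as a direct consequence of Lemma \ref{disjoint->nestedA}, which is exactly the nesting argument you spell out (and your reading of the pocket geometry for members of $A^k$ is accurate). Your closing observation is also the cleanest form of the argument — since the members of $L^k$ are pairwise disjoint by construction, i.e.\ $\mathcal{VVP}^{-}(a_i)\cap\mathcal{VVP}^{-}(a_j)=\emptyset$, any vertex guard on $bd_{cc}(u,v)$ seeing both $a_i$ and $a_j$ would have to lie in both sets, so the nesting detour is not strictly needed.
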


\begin{lemma} \label{L^k_opt_lb}
For any $k \in \{1,2,\dots,|Z|\}$, any outward guard set 
requires at least $|L^k|$ distinct vertex guards to guard all vertices of $L^k$. 
\end{lemma}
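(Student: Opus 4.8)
The plan is to obtain this bound as an immediate pigeonhole consequence of the preceding Lemma~\ref{L^k_upper}. First I would recall that, by the very definition of the outward visible vertices, for any vertex $x \in L^k \subseteq A^k \subseteq \mathcal{V}(Q_U)$, every outward guard capable of seeing $x$ must belong to $\mathcal{VVP}^{-}(x)$, and must therefore lie on $bd_{cc}(u,v)$ (since the segment joining the guard to $x$ crosses the chord $uv$). Consequently, any \emph{outward} guard set $G$ that guards all vertices of $L^k$ is composed entirely of vertices lying on $bd_{cc}(u,v)$, which is precisely the hypothesis under which Lemma~\ref{L^k_upper} is stated.

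Next I would invoke Lemma~\ref{L^k_upper}, which asserts that any single vertex guard placed on $bd_{cc}(u,v)$ can see at most one vertex of the maximum disjoint subset $L^k$. For each guard $g \in G$, consider the set of vertices of $L^k$ visible from $g$; by Lemma~\ref{L^k_upper} this set has at most one element. Since $G$ guards all of $L^k$, the union of these per-guard sets over $g \in G$ must cover all $|L^k|$ vertices of $L^k$.

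Finally, a counting argument closes the proof: each guard contributes at most one vertex to this cover, so the number of guards needed is at least the number of vertices covered, i.e.\ $|G| \geq |L^k|$. I do not anticipate a genuine obstacle here, since the geometric heavy lifting---the pairwise disjointness of the sets $\mathcal{VVP}^{-}(a_i)$ for the leading vertices $a_i \in L^k$, established via the nesting property of Lemma~\ref{disjoint->nestedA} and summarized in Lemma~\ref{L^k_upper}---is already in place. The only point requiring care is to verify that the phrase ``outward guard set'' genuinely confines all guards to $bd_{cc}(u,v)$, so that Lemma~\ref{L^k_upper} applies uniformly to every guard in the set; once this is noted, no guard can be charged to two distinct vertices of $L^k$, and the lower bound of $|L^k|$ distinct guards follows.
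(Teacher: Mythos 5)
Your proposal is correct and matches the paper's intent: the paper states this lemma without an explicit proof, presenting it (together with Lemma~\ref{L^k_upper}) as a direct consequence of the nesting/disjointness property in Lemma~\ref{disjoint->nestedA}, and your pigeonhole argument is exactly the implicit chain of reasoning — each outward guard lies on $bd_{cc}(u,v)$, sees at most one vertex of $L^k$ by Lemma~\ref{L^k_upper}, hence at least $|L^k|$ guards are needed. No gaps.
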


\begin{lemma} \label{R^k_upper}
For any $k \in \{1,2,\dots,|Z|\}$, any outward vertex guard placed on $bd_{cc}(u,v)$ can see at most one vertex of $R^k$.
\end{lemma}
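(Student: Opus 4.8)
The plan is to mirror the (unstated) proof of Lemma \ref{L^k_upper}, replacing $A^k$, $L^k$ and the nesting of Lemma \ref{disjoint->nestedA} by $C^k$, $R^k$ and the nesting of Lemma \ref{disjoint->nestedC}. The one structural fact I will lean on is that $R^k$ is a \emph{disjoint} subset of $C^k$ (Lemma \ref{R^k_maximum}): for any two distinct $c_i,c_j \in R^k$ we have $\mathcal{VVP}^{-}(c_i) \cap \mathcal{VVP}^{-}(c_j) = \emptyset$.

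First I would record the elementary translation between \emph{being seen} and membership in an outward-visible set. Let $g$ be an outward vertex guard lying on $bd_{cc}(u,v)$, and let $c \in C^k \subseteq \mathcal{V}(Q_U)$ lie on $bd_c(u,v)$. Because the chord $uv$ separates $Q_U$ from $Q_L$, any segment joining the upper-chain vertex $c$ to the lower-chain guard $g$ must cross $uv$ (apart from the degenerate situations $g \in \{u,v\}$ or the segment passing through an endpoint of the chord, which contribute nothing to the count and can be treated separately). Hence $g$ sees $c$ if and only if $c$ is visible from $g$ with $gc$ crossing $uv$, i.e.\ if and only if $g \in \mathcal{VVP}^{-}(c)$.

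With this equivalence the lemma is immediate: if a single guard $g$ on $bd_{cc}(u,v)$ could see two distinct vertices $c_i,c_j \in R^k$, then $g$ would lie in $\mathcal{VVP}^{-}(c_i) \cap \mathcal{VVP}^{-}(c_j)$, contradicting the pairwise disjointness of $R^k$ furnished by Lemma \ref{R^k_maximum}. Therefore $g$ can see at most one vertex of $R^k$. To make the disjointness geometrically transparent (cf. Figure \ref{R^k_example}), I would also recall why the nested structure forces it: by Lemma \ref{disjoint->nestedC}, ordering $R^k=\{c_1,c_2,\dots\}$ by $t'(\cdot)$ gives $f(c_1)\prec t'(c_1)\prec f(c_2)\prec \dots \prec l(c_2)\prec t(c_1)\prec l(c_1)$, so that the outward-visible vertices of each inner $c_{i+1}$ lie entirely inside the pocket created by the constructed edge $t(c_i)t'(c_i)$ of $c_i$; since that pocket is invisible to $c_i$, no vertex of $bd_{cc}(u,v)$ can be outward-visible from both $c_i$ and $c_j$, which is exactly the disjointness used above.

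The only place genuinely requiring care is the ``sees $\Leftrightarrow \mathcal{VVP}^{-}$'' equivalence of the second paragraph: one must verify that the crossing-of-$uv$ condition is automatic for a lower-chain guard viewing an upper-chain vertex, and dispose of the chord endpoints $u,v$ separately. Everything else is a direct transcription of the $L^k$ argument, so I expect no further obstacle. (Note also that Lemma \ref{R^k_upper} is precisely the input needed to obtain the $R^k$-analogue of the optimal lower bound Lemma \ref{L^k_opt_lb}, since an at-most-one-per-guard bound immediately yields that any outward guard set needs at least $|R^k|$ guards to cover $R^k$.)
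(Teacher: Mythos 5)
Your proof is correct and matches the paper's intent: the paper states Lemma \ref{R^k_upper} without an explicit proof, presenting it as an immediate consequence of the nesting/disjointness structure of $R^k$ (Lemma \ref{disjoint->nestedC}), and your argument---that a guard $g$ on $bd_{cc}(u,v)$ seeing $c_i,c_j\in R^k$ would lie in $\mathcal{VVP}^{-}(c_i)\cap\mathcal{VVP}^{-}(c_j)$, contradicting the pairwise disjointness of $R^k$---is exactly the intended one-line derivation. Your care with the ``sees $\Leftrightarrow$ membership in $\mathcal{VVP}^{-}$'' equivalence and the chord endpoints is a reasonable addition, not a deviation.
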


\begin{lemma} \label{R^k_opt_lb}
For any $k \in \{1,2,\dots,|Z|\}$, any outward guard set requires at least $|R^k|$ distinct vertex guards to guard all vertices of $R^k$. 
\end{lemma}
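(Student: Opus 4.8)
The plan is to obtain this lower bound as an immediate counting consequence of Lemma~\ref{R^k_upper}, in exact analogy with the way Lemma~\ref{L^k_opt_lb} follows from Lemma~\ref{L^k_upper}. First I would record the relevant geometry: every vertex of $R^k \subseteq C^k \subseteq \mathcal{OVV}^{-}(z_k)$ lies in $Q_U$, while an outward guard set consists of guards placed on $bd_{cc}(u,v)$, and such a guard $g$ guards a vertex $c \in R^k$ exactly when $g \in \mathcal{VVP}^{-}(c)$, i.e. when it sees $c$ across the chord $uv$. The single input I need is Lemma~\ref{R^k_upper}, which guarantees that any one such outward guard sees at most one vertex of $R^k$.

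Given this, I would argue by an injectivity (pigeonhole) argument. Let $G$ be any outward guard set that guards all of $R^k$. For each $c \in R^k$, choose a guard $g_c \in G$ that sees $c$, so that $g_c \in \mathcal{VVP}^{-}(c)$. If two distinct vertices $c, c' \in R^k$ received the same guard $g_c = g_{c'} = g$, then $g$ would see both $c$ and $c'$, contradicting Lemma~\ref{R^k_upper}; equivalently, $g \in \mathcal{VVP}^{-}(c) \cap \mathcal{VVP}^{-}(c')$, which is empty since $R^k$ is a disjoint subset of $C^k$ (Lemma~\ref{R^k_maximum}). Hence the map $c \mapsto g_c$ is injective, giving $|R^k| \le |G|$, which is precisely the asserted bound.

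I do not expect any real obstacle here, as the entire content is carried by Lemma~\ref{R^k_upper}, whose own proof rests on the geometric nesting property of Lemma~\ref{disjoint->nestedC}. The only point demanding care is to confirm that the guards being counted are genuinely outward guards lying on $bd_{cc}(u,v)$, so that Lemma~\ref{R^k_upper} applies; but this is precisely the hypothesis of an \emph{outward guard set} in the statement, so no further work is required.
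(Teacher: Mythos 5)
Your proof is correct and follows exactly the route the paper intends: the paper states Lemmas \ref{L^k_upper}--\ref{R^k_opt_lb} without explicit proofs as "consequences" of the nesting Lemmas \ref{disjoint->nestedA} and \ref{disjoint->nestedC}, and the implicit argument is precisely your chain (pairwise disjointness of $\mathcal{VVP}^{-}(\cdot)$ over $R^k$ $\Rightarrow$ each outward guard sees at most one vertex of $R^k$ $\Rightarrow$ injectivity of the vertex-to-guard assignment $\Rightarrow$ $|R^k|\le|G|$). No gaps.
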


\begin{lemma} \label{opt_both}
For any $k \in \{1,2,\dots,|Z|\}$, any outward optimal guards $g \in G_{opt}^L$ placed on $bd_{cc}(u,v)$ can see at most one vertex of $L^k$ and at most one vertex of $R^k$.
\end{lemma}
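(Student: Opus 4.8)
The plan is to obtain this statement as an immediate consequence of the two preceding bounds, Lemmas \ref{L^k_upper} and \ref{R^k_upper}, after first confirming that any guard in $G_{opt}^L$ qualifies as an \emph{outward} vertex guard with respect to the vertices of $L^k$ and $R^k$.

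First I would recall the relevant side conditions. By the definition of $G_{opt}^L$, each guard $g \in G_{opt}^L$ lies on $bd_{cc}(u,v)$, i.e. in $Q_L$. On the other hand, the vertices of $L^k \subseteq A^k$ and of $R^k \subseteq C^k$ all lie on $bd_{c}(u,v)$, i.e. in $Q_U$ (indeed, $A^k$ lies on $bd_{c}(u,z_k)$ and $C^k$ on $bd_{c}(z_k,v)$). Hence, whenever $g$ sees a vertex of $L^k$ or of $R^k$, the connecting segment must cross the chord $uv$, so $g$ acts precisely as an outward guard for that vertex. This situates $g$ squarely within the hypotheses of Lemmas \ref{L^k_upper} and \ref{R^k_upper}.

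With this observation in hand, the argument is just a direct invocation of the two lemmas. Lemma \ref{L^k_upper} guarantees that any outward vertex guard on $bd_{cc}(u,v)$ — and in particular $g$ — sees at most one vertex of $L^k$; likewise Lemma \ref{R^k_upper} guarantees that $g$ sees at most one vertex of $R^k$. Conjoining the two bounds yields the claim for every $g \in G_{opt}^L$, for each $k \in \{1,2,\dots,|Z|\}$.

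There is no genuine obstacle here, since all the geometric content — namely the nesting structure of pairwise-disjoint members of $A^k$ and $C^k$ established in Lemmas \ref{disjoint->nestedA} and \ref{disjoint->nestedC}, and its use in bounding how many elements of a maximum disjoint subset a single outward guard can see — has already been absorbed into Lemmas \ref{L^k_upper} and \ref{R^k_upper}. The only point requiring a word of justification is that $g$ indeed behaves as an outward guard for the vertices of $L^k$ and $R^k$, which follows because these vertices lie on $bd_{c}(u,v)$, strictly on the opposite side of the chord $uv$ from $g$.
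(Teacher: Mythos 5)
Your proposal is correct and follows the same route as the paper, which simply observes that the claim follows directly from Lemmas \ref{L^k_upper} and \ref{R^k_upper}. The extra sentence confirming that a guard in $G_{opt}^L$ on $bd_{cc}(u,v)$ acts as an outward guard for vertices of $L^k$ and $R^k$ (which lie on $bd_c(u,v)$) is a harmless and reasonable addition.
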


\begin{proof}
 Follows directly from Lemmas \ref{L^k_upper} and \ref{R^k_upper}.
\end{proof}

For every set $P_i(L^k)$ belonging to the partition $P(L^k)$ of $L^k$, 
let us define the corresponding two sets $G_i(L^k)$ and $Z_i(L^k)$ as follows.
We define $G_i(L^k) \subseteq G^L_{opt}(=G_{opt})$ to be a minimal subset of the optimal set $G_{opt}$ 
of guards required to guard all the vertices belonging to $P_i(L^k)$.
Similarly, we define $Z_i(L^k) \subseteq Z^L(=Z)$ to be the subset of primary 
vertices chosen by our algorithm from amongst the vertices of $P_i(L^k)$.
Now we are in a position to show that the cardinality of $G_i(L^k)$ 
is lower bounded by $|Z_i(L^k)|$, which never exceeds two. 

\begin{lemma} \label{ZL^k_i_bound}
For every set $P_i(L^k)$ belonging to the partition $P(L^k)$ of $A^k$,  
$|Z_i(L^k)|=|G_i(L^k)|=1$ or $|Z_i(L^k)| = 2 \leq |G_i(L^k)|$, i.e. $|Z_i(L^k)| \leq 2$.
\end{lemma}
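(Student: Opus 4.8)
The plan is to fix a single block $P_i(L^k)$ of the canonical partition and to read off both bounds from one structural dichotomy. By Lemma~\ref{L^k_i_pairwise} every pair $a,a'\in P_i(L^k)$ overlaps, i.e. $\mathcal{VVP}^{-}(a)\cap\mathcal{VVP}^{-}(a')\neq\emptyset$, so a single outward guard can always see any two vertices of the block. Everything then turns on the full common intersection $\mathcal{CI}(P_i(L^k))=\bigcap_{a\in P_i(L^k)}\mathcal{VVP}^{-}(a)$: it is nonempty precisely when one outward guard sees the entire block. I would split the argument according to whether $\mathcal{CI}(P_i(L^k))$ is empty, and in each case pin down $|Z_i(L^k)|$ from the algorithm's marking behaviour and $|G_i(L^k)|$ from a covering argument against $G^L_{opt}$.

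In the case $\mathcal{CI}(P_i(L^k))\neq\emptyset$ I would show that both quantities equal $1$. On the algorithmic side, when the first primary vertex of the block is processed, the guard our procedure places in the common intersection of the block's outward visibility sets sees every vertex of $P_i(L^k)$; all of them are thereby marked, so no further vertex of the block can be chosen later and $|Z_i(L^k)|=1$. On the optimal side I would argue $|G_i(L^k)|=1$ by locating a single guard of $G^L_{opt}$ that already covers the whole block: starting from the guard that must see a suitably chosen extreme member of $P_i(L^k)$ and propagating via the nested-interval behaviour of outward visibility along $bd_{cc}(u,v)$ from Lemma~\ref{disjoint->nestedA}, this guard can be shown to lie in $\mathcal{CI}(P_i(L^k))$ and hence to see every vertex of the block. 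This gives the first disjunct $|Z_i(L^k)|=|G_i(L^k)|=1$.

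In the case $\mathcal{CI}(P_i(L^k))=\emptyset$, no single guard of $G^L_{opt}$ can see all of $P_i(L^k)$, so $|G_i(L^k)|\ge 2$ is immediate. It remains to prove $|Z_i(L^k)|=2$. Here I would track the marking process round by round. When the first primary vertex $z_{j_1}\in P_i(L^k)$ is processed, every other (still unmarked) vertex of the block lies in $\mathcal{OVV}^{-}(z_{j_1})$ because it overlaps $z_{j_1}$; by the marking Lemmas~\ref{case2b}--\ref{case2d} all of them become marked except those falling into the residual sets $A^{j_1}_1\cup C^{j_1}_1$. Using the nesting relation of Lemma~\ref{disjoint->nestedA} together with the $l'$-minimal rule for selecting primary vertices (Lemma~\ref{no_unmarked_right}), I would show that at most one residual vertex of the block can survive to be chosen as a second primary vertex, and that processing it marks the remainder of the block, so a third primary vertex of $P_i(L^k)$ never arises. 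This yields $|Z_i(L^k)|=2\le|G_i(L^k)|$, the second disjunct, and in particular $|Z_i(L^k)|\le 2$ in all cases.

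The main obstacle is the uniform bound $|Z_i(L^k)|\le 2$ in the second case. Pairwise overlap alone would force $\mathcal{CI}(P_i(L^k))\neq\emptyset$ if the outward visibility sets were genuine intervals, by the one-dimensional Helly property; the difficulty is exactly that pockets along $bd_{cc}(u,v)$ create gaps in each $\mathcal{VVP}^{-}(a)$, which is what permits $\mathcal{CI}(P_i(L^k))$ to be empty while all pairs still meet. Bounding the number of survivors of each marking round therefore cannot rely on a clean interval argument; it must combine the geometric nesting of disjoint outward visibility polygons (Lemma~\ref{disjoint->nestedA}) with the greedy selection order and the residual-set analysis of Lemmas~\ref{case2b}--\ref{case2d}. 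Establishing that this interplay leaves at most one survivor after the first primary vertex, and none after the second, is the delicate heart of the proof; the accounting against $G_i(L^k)$ is comparatively routine once the dichotomy on $\mathcal{CI}(P_i(L^k))$ is in place.
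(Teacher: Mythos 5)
Your plan diverges from the paper's proof in a way that leaves the essential step unproved, and it also rests on two claims that do not hold. First, the dichotomy on $\mathcal{CI}(P_i(L^k))$ does not match what the algorithm does: $P_i(L^k)$ is an analysis artifact defined relative to the \emph{earlier} primary vertex $z_k$, and when a later primary vertex $z_j=a'\in P_i(L^k)$ is processed, Algorithm \ref{vg_pcode_sp1} works with the sets $A^j,B^j,C^j$ defined relative to $z_j$ — it never computes $\mathcal{CI}(P_i(L^k))$ nor places a guard in it. So your Case~A claim that ``the guard our procedure places in the common intersection of the block's outward visibility sets sees every vertex of $P_i(L^k)$'' has no basis in the algorithm's behaviour. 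Second, $\mathcal{CI}(P_i(L^k))\neq\emptyset$ does not give $|G_i(L^k)|=1$: $G_i(L^k)$ is a minimal subset of the \emph{fixed global} optimum $G_{opt}$, which need not contain any guard lying in that common intersection even when it is nonempty. (The converse direction is the one the paper uses: $|G_i(L^k)|=1$ forces a guard of $G_{opt}$ into every $\mathcal{VVP}^{-}(a)$, which yields the ordering condition $f(a)\prec t(a_{\sigma(i)})$ for all $a$, or $t'(a_{\sigma(i)})\prec l(a)$ for all $a$.)

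The real content of the lemma — that at most two primary vertices are ever chosen from one block — is exactly the part you defer to ``the delicate heart,'' and the tools you name for it will not do the job: Lemma \ref{disjoint->nestedA} concerns \emph{disjoint} pairs, whereas by Lemma \ref{L^k_i_pairwise} every pair inside $P_i(L^k)$ overlaps, so the nesting relation is inapplicable within a block; and a Helly/interval argument fails for precisely the reason you note. The paper's mechanism is concrete and different: when $z_j=a'\in P_i(L^k)$ is processed, the guards placed are $s^j_1=t(a_{\sigma(i)})$ and $s^j_3=l(a')$; any $a\in P_i(L^k)$ still unmarked afterwards must satisfy $t(a_{\sigma(i)})\prec f(a)$, which together with $\mathcal{VVP}^{-}(a_{\sigma(i)})\cap\mathcal{VVP}^{-}(a)\neq\emptyset$ forces $t'(a_{\sigma(i)})\prec l(a)$; consequently, for the second in-block primary vertex $a''$ one gets $t'(a)\prec t'(a_{\sigma(i)})\prec l(a'')\prec l(a)$ for every surviving $a$, so the single guard $s^{j'}_3=l(a'')$ marks the entire remainder of the block and no third primary vertex can arise. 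Without identifying these specific guard locations and the forced ordering, the bound $|Z_i(L^k)|\le 2$ is not established.
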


\begin{proof}
First, let us consider the case where $|G_i(L^k)|=1$.
This implies that at least one of the conditions below holds:
(i) $f(a) \prec t(a_{\sigma(i)})$ for every $a \in P_i(L^k)$, or
(ii) $t'(a_{\sigma(i)}) \prec l(a)$ for every $a \in P_i(L^k)$.
Observe that, for any vertex $a' \in P_i(L^k)$, if $a'$ is chosen by our algorithm as a later primary vertex $z_j$, 
where $j > k$, then we have $s^j_1 = t(a_{\sigma(i)})$ and $s^j_3 = l(a')$.
If condition (i) is true, then the vertex guard $s^j_1 = t(a_{\sigma(i)})$ sees all vertices belonging to $P_i(L^k)$.
If condition (ii) is true, then the vertex guard $s^j_3 = l(a')$ sees all vertices belonging to $P_i(L^k)$.
Thus, in either case, no further primary vertices will be chosen by our algorithm from $P_i(L^k)$.
So, when $|G_i(L^k)| = 1$, we also have $|Z_i(L^k)| = 1$. \\

\vspace*{-0.33em}
Let us now consider the other case where $|G_i(L^k)| > 1$. 
It is possible that all the vertices belonging to $P_i(L^k)$ are marked by vertex guards 
corresponding to primary vertices chosen from $Z \setminus P_i(L^k)$, in which case we have $|Z_i(L^k)| = 0$.
Otherwise, let $a' \in P_i(L^k)$ be the first primary vertex $z_j = a'$ chosen from $P_i(L^k)$ (see Figure \ref{2inL^k_i}).
Then, as before, we have $s^j_1 = t(a_{\sigma(i)})$ and $s^j_3 = l(a')$.
Observe that, for any vertex $a \in P_i(L^k)$, if $f(a) \prec t(a_{\sigma(i)})$, 
then the vertex guard $s^j_1 = t(a_{\sigma(i)})$ sees $a$, 
and if $t'(a) \prec l(a')$, then the vertex guard $s^j_3 = l(a')$ sees $a$.
Therefore, if any vertex $a \in P_i(L^k)$ is left unmarked even after the placement of vertex guards 
corresponding to $z_j = a'$, then $t(a_{\sigma(i)}) \prec f(a)$.
Since $a \in P_i(L^k)$, by the definition of $P_i(L^k)$ we have 
$\mathcal{VVP}^{-}(a_{\sigma(i)}) \cap \mathcal{VVP}^{-}(a) \neq \emptyset$.
So, the condition $t(a_{\sigma(i)}) \prec f(a)$ would force 
the condition $t'(a_{\sigma(i)}) \prec l(a)$ for any vertex $a \in P_i(L^k)$ 
that is left unmarked even after the placement of vertex guards 
corresponding to $z_j = a'$.
Now, if $z_{j'} = a''$ be another primary vertex chosen from among the yet unmarked vertices of $P_i(L^k)$, then 
$t'(a) \prec t'(a_{\sigma(i)}) \prec l(a'') \prec l(a)$ for any other unmarked vertex $a \in P_i(L^k)$, 
and hence $s^{j'}_3 = l(a'')$ sees all the remaining unmarked vertices of $P_i(L^k)$.  
So, when $|G_i(L^k)| > 1$, we have $|Z_i(L^k)| \leq 2$. 
Therefore, in general, we have $|Z_i(L^k)| \leq |G_i(L^k)|$ and $|Z_i(L^k)| \leq 2$. 
\end{proof}

Analogously, for every set $P_i(R^k)$ belonging to the partition $P(R^k)$ of $R^k$, 
let us define the corresponding two sets $G_i(R^k)$ and $Z_i(R^k)$ as follows.
We define $G_i(R^k) \subseteq G^L_{opt}(=G_{opt})$ to be a minimal subset of the optimal set $G_{opt}$ 
of guards required to guard all the vertices belonging to $P_i(R^k)$.
Similarly, we define $Z_i(R^k) \subseteq Z^L(=Z)$ to be the subset of primary 
vertices chosen by our algorithm from amongst the vertices of $P_i(R^k)$.
Now we are in a position to show that the cardinality of $G_i(R^k)$ 
is lower bounded by $|Z_i(R^k)|$, which never exceeds two. 

\begin{lemma} \label{ZR^k_i_bound}
For every set $P_i(R^k)$ belonging to the partition $P(R^k)$ of $C^k$,  
we have $|Z_i(R^k)| =|G_i(R^k)|=1$ or $|Z_i(R^k)|=2\leq |G_i(R^k)|$, i.e. $|Z_i(R^k)| \leq 2$.
\end{lemma}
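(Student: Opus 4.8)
The plan is to mirror the proof of Lemma~\ref{ZL^k_i_bound} exactly, exploiting the symmetry between the canonical partition $P(L^k)$ of $A^k$ and the canonical partition $P(R^k)$ of $C^k$. In Lemma~\ref{ZL^k_i_bound} the leading vertex $a_{\sigma(i)}$ of $P_i(L^k)$ played a pivotal role because the guards $s^j_1 = t(a_{\sigma(i)})$ and $s^j_3 = l(a')$ (placed when a later primary vertex $z_j = a' \in P_i(L^k)$ is chosen) together control all vertices of the set. For $C^k$ the symmetric roles are played by $s^j_2$ and $s^j_3$: recall from Case~2c and Case~2d of Algorithm~\ref{vg_pcode_sp1} that the guard for $C^k$ is chosen as $s^k_2 = prev(t'(x^k_i))$, so when a later primary vertex $z_j = c'$ is chosen from $P_i(R^k)$, the analogous guards are $s^j_2 = prev(t'(c_{\sigma(i)}))$ and $s^j_3 = l(c')$. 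The key geometric fact that replaces Lemma~\ref{disjoint->nestedA} is Lemma~\ref{disjoint->nestedC}, together with the pairwise-intersection property established in Lemma~\ref{R^k_i_pairwise}.

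First I would dispose of the case $|G_i(R^k)| = 1$. As in the companion lemma, $|G_i(R^k)| = 1$ forces at least one of two ordering conditions to hold across all $c \in P_i(R^k)$: either (i) $f(c) \prec t'(c_{\sigma(i)})$ for every $c \in P_i(R^k)$, or (ii) $t(c_{\sigma(i)}) \prec l(c)$ for every such $c$. In the first case the single guard $s^j_2 = prev(t'(c_{\sigma(i)}))$ sees every vertex of $P_i(R^k)$; in the second case the single guard $s^j_3 = l(c')$ does. Either way no further primary vertex is chosen from $P_i(R^k)$, so $|Z_i(R^k)| = 1$, establishing the first disjunct.

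Next I would treat $|G_i(R^k)| > 1$ and show $|Z_i(R^k)| \leq 2$. If every vertex of $P_i(R^k)$ happens to be marked by guards associated to primary vertices outside $P_i(R^k)$, then $|Z_i(R^k)| = 0$ and we are done; otherwise let $z_j = c'$ be the first primary vertex drawn from $P_i(R^k)$. The guards $s^j_2 = prev(t'(c_{\sigma(i)}))$ and $s^j_3 = l(c')$ are placed, and the crucial observation (the mirror of the $A^k$ argument) is that any $c \in P_i(R^k)$ left unmarked must satisfy $t'(c_{\sigma(i)}) \prec f(c)$; since $\mathcal{VVP}^{-}(c_{\sigma(i)}) \cap \mathcal{VVP}^{-}(c) \neq \emptyset$ by the definition of $P_i(R^k)$, this forces $t(c_{\sigma(i)}) \prec l(c)$ for every still-unmarked $c$. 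Consequently, if a second primary vertex $z_{j'} = c''$ is chosen from the remaining unmarked vertices, the nesting ordering $t'(c) \prec t'(c_{\sigma(i)}) \prec l(c'') \prec l(c)$ guarantees that $s^{j'}_3 = l(c'')$ sees all remaining unmarked vertices of $P_i(R^k)$, so no third primary vertex can arise. Hence $|Z_i(R^k)| \leq 2 \leq |G_i(R^k)|$, completing the proof.

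The main obstacle, and the only place where genuine care is needed rather than rote translation, is verifying that the \emph{directional} ordering on $C^k$ behaves correctly: the vertices of $C^k$ lie on $bd_c(z_k,v)$ and are ordered by $prev(t'(\cdot))$ rather than by $t(\cdot)$, and their constructed edges create \emph{left} pockets rather than right ones, so the roles of the two endpoints $t$ and $t'$ and the choice between guards $s^j_2$ and $s^j_3$ are swapped relative to the $A^k$ case. I would therefore take extra care that the ordering claims $t'(c) \prec t'(c_{\sigma(i)}) \prec l(c'') \prec l(c)$ are correctly read off from Lemma~\ref{disjoint->nestedC} and from the construction of $P_i(R^k)$, since a sign error here is the easiest way the symmetric argument could silently break.
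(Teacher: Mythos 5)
Your proposal follows exactly the paper's route: the paper proves Lemma \ref{ZR^k_i_bound} by mirroring the proof of Lemma \ref{ZL^k_i_bound} with the guards $s^j_2 = prev(t'(c_{\sigma(i)}))$ and $s^j_3 = l(c')$ in place of $s^j_1$ and $s^j_3$, and your treatment of the case $|G_i(R^k)| = 1$ and of the subcase $|Z_i(R^k)| = 0$ coincides with the paper's word for word in substance.

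There is, however, one concrete error at precisely the spot you flagged as the danger point. In the final step you assert the ordering $t'(c) \prec t'(c_{\sigma(i)}) \prec l(c'') \prec l(c)$. This is the $A^k$-chain with $a$ replaced by $c$ but \emph{without} swapping the roles of $t$ and $t'$, and it is false as written: the canonical partition $P(R^k)$ is built by ordering $C^k$ according to $prev(t'(\cdot))$, so the leading vertex satisfies $prev(t'(c_{\sigma(i)})) \prec prev(t'(c)) \prec t(c) \prec t(c_{\sigma(i)})$ (this is exactly the ordering used in Lemma \ref{R^k_i_pairwise}), which gives $t'(c_{\sigma(i)}) \prec t'(c)$ — the opposite of your first relation. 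More importantly, your chain does not deliver the fact you need. For a $C^k$-type vertex $c$ the constructed edge creates a left pocket with $t'(c) \prec l(z_k) \prec t(c)$, so a late guard on $bd_{cc}(u,v)$ sees $c$ only if it lies after $t(c)$; hence the condition required for $s^{j'}_3 = l(c'')$ to see every remaining unmarked $c$ is $t(c) \prec l(c'')$, which your chain never establishes. The correct chain, as in the paper, is $t(c) \prec t(c_{\sigma(i)}) \prec l(c'') \prec l(c)$: the first inequality comes from the construction of $P_i(R^k)$, and the second from the fact that $c''$ is still unmarked after the first round (so $t(c_{\sigma(i)}) \prec l(c'')$, by the same argument that forced $t(c_{\sigma(i)}) \prec l(c)$ for all surviving $c$). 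With that one swap of $t$ and $t'$ your argument closes and agrees with the paper's.
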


\begin{proof}
First, let us consider the case where $|G_i(R^k)| = 1$.
This implies that at least one of the conditions below holds:
(i) $f(c) \prec t'(c_{\sigma(i)})$ for every $c \in P_i(R^k)$, or
(ii) $t(c_{\sigma(i)}) \prec l(c)$ for every $c \in P_i(R^k)$.
Observe that, for any vertex $c' \in P_i(R^k)$, if $c'$ is chosen by our algorithm as a later primary vertex $z_j$, 
where $j > k$, then we have $s^j_2 = prev(t'(c_{\sigma(i)}))$ and $s^j_3 = l(c')$.
If condition (i) is true, then the vertex guard $s^j_2 = prev(t'(c_{\sigma(i)}))$ sees all vertices belonging to $P_i(R^k)$.
If condition (ii) is true, then the vertex guard $s^j_3 = l(c')$ sees all vertices belonging to $P_i(R^k)$.
Thus, in either case, no further primary vertices will be chosen by our algorithm from $P_i(R^k)$.
So, when $|G_i(R^k)| = 1$, we also have $|Z_(R^k)| = 1$. \\

\vspace*{-0.33em}
Let us now consider the other case where $|G_i(R^k)| > 1$. 
It is possible that all the vertices belonging to $P_i(R^k)$ are marked by vertex guards 
corresponding to primary vertices chosen from $Z \setminus P_i(R^k)$, in which case we have $|Z_(R^k)| = 0$.
Otherwise, let $c' \in P_i(R^k)$ be the first primary vertex $z_j = c'$ chosen from $P_i(R^k)$. 
Then, as before, we have $s^j_2 = prev(t'(c_{\sigma(i)}))$ and $s^j_3 = l(c')$.
Observe that, for any vertex $c \in P_i(R^k)$, if $f(c) \prec prev(t'(c_{\sigma(i)}))$, 
then the vertex guard $s^j_2 = prev(t'(c_{\sigma(i)}))$ sees $c$, 
and if $t(c) \prec l(c')$, then the vertex guard $s^j_3 = l(c')$ sees $c$.
Therefore, if any vertex $c \in P_i(R^k)$ is left unmarked even after the placement of vertex guards 
corresponding to $z_j = c'$, then $t'(c_{\sigma(i)}) \prec f(c)$.
Since $c \in P_i(R^k)$, by the definition of $P_i(R^k)$ we have 
$\mathcal{VVP}^{-}(c_{\sigma(i)}) \cap \mathcal{VVP}^{-}(c) \neq \emptyset$.
So, the condition $t'(c_{\sigma(i)}) \prec f(c)$ would force 
the condition $t(c_{\sigma(i)}) \prec l(c)$ for any vertex $c \in P_i(R^k)$ 
that is left unmarked even after the placement of vertex guards corresponding to $z_j = c'$.
Now, if $z_{j'} = c''$ be another primary vertex chosen from among the yet 
unmarked vertices of $P_i(R^k)$, then 
$t(c) \prec t(c_{\sigma(i)}) \prec l(c'') \prec l(c)$ for any other unmarked vertex $c \in P_i(R^k)$, 
and hence $s^{j'}_3 = l(c'')$ sees all the remaining unmarked vertices of $P_i(R^k)$.  
So, when $|G_i(R^k)| > 1$, we have $|Z(R^k)| \leq 2$. 
Therefore, in general, we have $|Z_i(R^k)| \leq |G_i(R^k)|$ and $|Z_i(R^k)| \leq 2$. 
\end{proof}

Two special subsets of vertices of $Q_U$ are constructed, where each subset consists of primary vertices having certain properties 
as well as some special vertices belonging to $A = \bigcup_{k \in \{1,2,\dots,|Z|\}} A^k$ and $C = \bigcup_{k \in \{1,2,\dots,|Z|\}} C^k$.
Then, lower bounds are established on the number of optimal guards required to guard just the vertices belonging to these subsets 
(see Lemmas \ref{GA_bound} and \ref{GC_bound}), which ultimately leads us to a lower bound on $|G_{opt}^L|$,
which is equal to $|G_{opt}|$ in the current scenario. \\

\vspace*{-0.5em}
Let $Z_A$ denote the subset of primary vertices such that $A^k$ is non-empty for each $z_k \in Z_A$.
Also, let $Z_A^{prev}$ denote the subset of primary vertices where each $z_k \in Z_A^{prev}$ belongs to $A^j$ 
for some previously chosen primary vertex $z_j \in Z_A$, where $j<k$.
In other words, we have $Z_A = \{ z_k \in Z : A^k\neq \emptyset \}$, and
$Z_A^{prev} = \{ z_k \in Z : \exists z_j \in Z_A \mbox{ such that } j < k \mbox{ and } z_k \in A^j \}$.
Finally, for every $z_k \in Z_A$, if the corresponding guard $s^k_1$ is placed at $t(a)$,
then $a \in A^k$ is denoted by $t1(k)$ and included in the set $Y_A$,
i.e $Y_A = \bigcup_{z_k \in Z_A} \{t1(k) = a \in A^k : (\forall a'\in A^k) \hspace*{1mm} t(a) \prec t(a') \}$.
Observe that, 
for every vertex $a \in Y_A$, $a$ is marked by $s^k_1$ placed due to the corresponding $z_k$, where $a \in A^k$. 
Also, $Y_A$ may be partitioned into the two sets $Y_A^{prev} = \bigcup_{z_k \in (Z_A \cap Z_A^{prev}) } t1(k)$ and $Y_A \setminus Y_A^{prev}$. \\ 

\vspace*{-0.5em}
Similarly, let $Z_C$ denote the subset of primary vertices such that $C^k$ is non-empty for each $z_k \in Z_C$.
Also, let $Z_C^{prev}$ denote the subset of primary vertices where each $z_k \in Z_C^{prev}$ belongs to $C^j$ 
for some previously chosen primary vertex $z_j \in Z_C$, where $j<k$.
In other words, we have $Z_C = \{ z_k \in Z : C^k\neq \emptyset \}$, and
$Z_C^{prev} = \{ z_k \in Z : \exists z_j \in Z_C \mbox{ such that } j < k \mbox{ and } z_k \in C^j \}$.
Finally, 
for every $z_k \in Z_C$, if the corresponding guard $s^k_2$ is placed at $prev(t'(c))$,
then $c \in C^k$ is denoted by $t'1(k)$ and included in the set $Y_C$,
i.e. $Y_C = \bigcup_{z_k \in Z_C} \{t'1(k) = c \in C^k : (\forall c'\in C^k) \hspace*{1mm} t'(c) \prec t'(c') \}$. 
Observe that, 
for every vertex $c \in Y_C$, $c$ is marked by the vertex guard $s^k_2$ placed due to the corresponding $z_k$, where $c \in C^k$.
Also, $Y_C$ may be partitioned into the two sets $Y_C^{prev} = \bigcup_{z_k \in (Z_C \cap Z_C^{prev}) } t'1(k)$ and $Y_C \setminus Y_C^{prev}$. \\ 

\vspace*{-0.5em}
Let $Z_B = Z \setminus (Z_A \cup Z_A^{prev} \cup Z_C \cup Z_C^{prev})$,
and let $G_B$ denote the minimal subset of $G_{opt}^L$ required to see all vertices of $Z_B$. 

\begin{lemma} \label{GB_bound}
$|G_B| = |Z_B|$.
\end{lemma}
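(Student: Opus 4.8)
The plan is to establish the two inequalities $|G_B| \geq |Z_B|$ and $|G_B| \leq |Z_B|$ separately, the heart of the matter being a pairwise-disjointness property of the outward visible vertex sets of the primary vertices lying in $Z_B$. First I would prove the key claim: for any two distinct primary vertices $z_k, z_j \in Z_B$ with $k < j$, we have $z_j \notin \mathcal{OVV}^{-}(z_k)$, equivalently $\mathcal{VVP}^{-}(z_k) \cap \mathcal{VVP}^{-}(z_j) = \emptyset$. Suppose, to the contrary, that $z_j \in \mathcal{OVV}^{-}(z_k)$. By the exhaustive case analysis of Lemmas \ref{case1}, \ref{case2a}, \ref{case2b}, \ref{case2c} and \ref{case2d}, after the guards $s_1^k, s_2^k, s_3^k$ are placed for $z_k$, every vertex of $\mathcal{OVV}^{-}(z_k)$ is marked except possibly those remaining in $A^k_1 \cup C^k_1 \subseteq A^k \cup C^k$. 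Since $z_j$ is selected as a primary vertex only at the later iteration $j > k$, and since the marking process is monotone (a marked vertex stays marked), $z_j$ must still be unmarked at the close of iteration $k$; hence $z_j \in A^k_1 \cup C^k_1$, so $z_j \in A^k$ or $z_j \in C^k$.

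Next I would derive a contradiction from the definitions of $Z_A^{prev}$ and $Z_C^{prev}$. If $z_j \in A^k$, then $A^k \neq \emptyset$, so $z_k \in Z_A$, and by definition $z_j \in Z_A^{prev}$; symmetrically, if $z_j \in C^k$, then $z_k \in Z_C$ and $z_j \in Z_C^{prev}$. Either conclusion contradicts $z_j \in Z_B = Z \setminus (Z_A \cup Z_A^{prev} \cup Z_C \cup Z_C^{prev})$. This establishes the claim, and therefore the sets $\{\mathcal{VVP}^{-}(z_k)\}_{z_k \in Z_B}$ are pairwise disjoint.

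Finally I would convert this disjointness into the counting equality. Because we are in the special scenario $G_{opt} = G_{opt}^L$, every guard of $G_{opt}$ is an outside guard on $bd_{cc}(u,v)$, and such a guard $g$ sees a vertex $z$ of $Q_U$ only when $g \in \mathcal{VVP}^{-}(z)$. The pairwise disjointness then forces each guard of $G_{opt}^L$ to see at most one vertex of $Z_B$ (which is exactly Lemma \ref{not_in_OVV} applied to every pair), so any subset of $G_{opt}^L$ covering all of $Z_B$ has size at least $|Z_B|$, giving $|G_B| \geq |Z_B|$. For the reverse inequality, each $z_k \in Z_B$ is seen by at least one guard of $G_{opt}^L$, since $G_{opt}^L$ guards all of $Q_U$; selecting one such guard per vertex yields a covering subset of size at most $|Z_B|$, so by the minimality of $G_B$ we obtain $|G_B| \leq |Z_B|$. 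Combining the two bounds yields $|G_B| = |Z_B|$.

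The main obstacle I anticipate is the first claim, and in particular justifying carefully that a later primary vertex $z_j$ is genuinely unmarked at the end of iteration $k$: this is what pins $z_j$ into $A^k \cup C^k$ rather than merely into the already-marked bulk $B^k$ of $\mathcal{OVV}^{-}(z_k)$, and it is precisely this localization that triggers the $Z_A^{prev}$/$Z_C^{prev}$ membership contradiction. Everything after the claim is routine set-cover-style counting.
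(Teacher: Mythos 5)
Your proof is correct and follows essentially the same route as the paper's: both arguments reduce the claim to the pairwise disjointness of the sets $\mathcal{VVP}^{-}(z_k)$ for $z_k \in Z_B$ (derived from the definition of $Z_B$ as the complement of $Z_A \cup Z_A^{prev} \cup Z_C \cup Z_C^{prev}$, via the case analysis showing any surviving unmarked vertex of $\mathcal{OVV}^{-}(z_k)$ must land in $A^k_1 \cup C^k_1$) and then invoke Lemma \ref{not_in_OVV} to force one distinct optimal guard per vertex. If anything, you are more careful than the paper, which asserts the equality after establishing only the lower bound $|G_B| \geq |Z_B|$ and leaves the (easy) reverse inequality from minimality of $G_B$ implicit.
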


\begin{proof}
Observe that, for each vertex $ z_k \in Z_B$,
$A^k = \emptyset$ and $C^k = \emptyset$, i.e. $\mathcal{OVV}^{-}(z_k) = B^k$.
Moreover, $z_k$ does not belong to $\mathcal{OVV}^{-}(z_j)$ for any other $z_j \in Z$ where $j<k$.
Therefore, $\mathcal{OVV}^{-}(z_k) \cap \mathcal{OVV}^{-}(z_j) = \emptyset$ for any other $z_j \in Z$,
and by Lemma \ref{not_in_OVV}, we know that a single distinct optimal guard is required 
for guarding each $z_k \in Z_B$. Hence, $|G_B| = |Z_B|$.
\end{proof}

\vspace*{-0.5em}
Let us define the minimal subsets $G_A^{prev} \subseteq G_{opt}^L$ and $G_C^{prev} \subseteq G_{opt}^L$ of 
optimal guards required for guarding all vertices belonging to 
$Z_A^{prev} \cup Y_A^{prev}$ and $Z_C^{prev} \cup Y_C^{prev}$ respectively.
In the lemmas below, lower bounds are established on the sizes of the sets $G_A^{prev}$ and $G_B^{prev}$ respectively. 

\begin{lemma} \label{ZA_prev}
$|G_A^{prev}| \geq |Z_A^{prev}|$.
\end{lemma}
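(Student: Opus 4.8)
The plan is to translate the statement into a one-dimensional piercing problem along $bd_{cc}(u,v)$. Since we are in the special scenario $G_{opt}=G_{opt}^L$, every guard of $G_A^{prev}$ lies on $bd_{cc}(u,v)$, and any such guard $g$ that sees a vertex $w$ of $Q_U$ must satisfy $f(w)\prec g \prec l(w)$ in counter-clockwise order, because $\mathcal{VVP}^{-}(w)$ lies entirely between its first and last visible vertices $f(w)$ and $l(w)$. Consequently, if two vertices $w,w'$ satisfy $l(w)\prec f(w')$, then no single guard can see both. Since $G_A^{prev}$ must see every vertex of $Z_A^{prev}\cup Y_A^{prev}$, it therefore suffices to exhibit a subfamily of $|Z_A^{prev}|$ vertices of $Z_A^{prev}\cup Y_A^{prev}$ whose $[f,l]$-intervals are pairwise ordered-disjoint; the guards piercing them must then be pairwise distinct, giving $|G_A^{prev}|\ge |Z_A^{prev}|$.

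First I would attach to each $z_k\in Z_A^{prev}$ a witnessing earlier primary vertex $z_j\in Z_A$ with $j<k$ and $z_k\in A^j$, together with the leading vertex $t1(j)\in A^j$ at which the guard $s^j_1=t(t1(j))$ was placed. The decisive geometric fact is that $z_k$ was left unmarked after the guards of $z_j$ had been placed (otherwise it could not later be selected as a primary vertex); following the reasoning already used in the proof of Lemma \ref{ZL^k_i_bound}, this forces $t(t1(j))\prec f(z_k)$ when $z_k$ lies in the first cluster of $A^j$, and more generally a clean ordered separation between the leading vertex of $z_k$'s cluster and $z_k$ itself. Since $z_k$ still overlaps the leading vertex of its cluster, the nesting relation of Lemma \ref{disjoint->nestedA} then pins down the position of $[f(z_k),l(z_k)]$ relative to the interval pierced by $s^j_1$. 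This is what lets me pick, for each prev-vertex, a representative interval --- either that of $z_k$ itself or that of its companion $t1(j)\in Y_A^{prev}$ --- that is cleanly separated from the representatives chosen for other prev-vertices.

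Next I would verify pairwise ordered-disjointness of the chosen representatives and hence conclude. For prev-vertices arising from distinct clusters, or from distinct source vertices $z_j$, ordered-disjointness follows from the maximum disjoint-subset structure of $A^k$ recorded in Lemmas \ref{disjoint->nestedA} and \ref{L^k_upper}, since elements of different clusters are $\mathcal{VVP}^{-}$-disjoint. For the ordering between a prev-vertex and its own companion, I would use the separation $t(t1(j))\prec f(z_k)$ derived above. Combined with Lemmas \ref{no_unmarked_right} and \ref{not_in_OVV}, which control the order in which primary vertices are marked and selected, this yields the required injective assignment of distinct guards.

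I expect the principal obstacle to be the case in which several vertices of $Z_A^{prev}$ originate from the same cluster and therefore pairwise overlap: here the prev-vertices themselves do not supply ordered-disjoint intervals, so the lower bound cannot be read off from a disjoint family of $Z_A^{prev}$-intervals alone. This is precisely where the extra obligation that $G_A^{prev}$ also pierce the companion intervals in $Y_A^{prev}$ becomes essential, and the delicate bookkeeping is to pair each such overlapping prev-vertex with a distinct $Y_A^{prev}$-interval and to check that these companion intervals are themselves ordered-disjoint from the prev-vertices they are charged against, consistently across all source clusters rather than within a single $A^k$ as in Lemma \ref{ZL^k_i_bound}.
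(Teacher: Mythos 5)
Your high-level charging scheme is the same as the paper's: prev-vertices from distinct clusters force distinct guards, and when two prev-vertices $z_j,z_{j'}$ of the same cluster $P_i(L^k)$ are covered by one guard, the leading vertex $t1(j)\in Y_A^{prev}$ is charged a separate guard to compensate, with Lemma \ref{ZL^k_i_bound} guaranteeing at most two prev-vertices per cluster so the count balances. However, the mechanism you propose for certifying that distinct guards are needed --- exhibiting $|Z_A^{prev}|$ vertices whose $[f(\cdot),l(\cdot)]$ intervals are pairwise \emph{ordered-disjoint}, so that $l(w)\prec f(w')$ separates them --- is geometrically unavailable in this setting. Lemma \ref{disjoint->nestedA} says that two vertices of $A^k$ with disjoint $\mathcal{VVP}^{-}$ sets have \emph{nested} intervals ($t(a_i)\prec f(a_j)\prec l(a_j)\prec t'(a_i)$, with $f(a_i)\prec t(a_i)$ and $t'(a_i)\prec l(a_i)$), not ordered-disjoint ones; their visibility sets are separated by pockets, not by interval order. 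So the family of ordered-disjoint representatives you want to extract does not exist, and the piercing reduction as stated collapses. The correct separation criterion throughout is disjointness of the $\mathcal{VVP}^{-}$ sets themselves (as in Lemma \ref{not_in_OVV}).

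The same issue undermines the crucial compensation step. You propose to check that the companion interval of $t1(j)$ is ``ordered-disjoint from the prev-vertices it is charged against,'' but in fact the opposite holds: the shared guard $g$ is forced onto $bd_{cc}(f(z_{j'}),t(z_{j'}))$, which lies \emph{strictly inside} the interval $[f(t1(j)),l(t1(j))]$ --- specifically inside the pocket $(t(t1(j)),t'(t1(j)))$ of $t1(j)$'s constructed edge. The paper's proof concludes that $g$ cannot see $t1(j)$ precisely because a guard visible from $t1(j)$ must satisfy $g\prec t(t1(j))$ or $t'(t1(j))\prec g$, and the derived ordering $t(t1(j))\prec f(z_{j'})\prec g\prec l(z_j)\prec t'(z_{j'})\prec t'(t1(j))$ rules both out. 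That pocket argument is the missing idea: without it, nothing in your outline prevents the single guard covering $z_j$ and $z_{j'}$ from also covering $t1(j)$, and the compensation fails.
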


\begin{proof}
Assume that $\mathcal{VVP}^{-}(z_j) \cap \mathcal{VVP}^{-}(z_{j'}) = \emptyset$ 
for every pair $z_j, z_{j'} \in Z_A^{prev}$.
Then clearly a distinct optimal guard is required in $G_A^{prev}$ for guarding each primary vertex in $Z_A^{prev}$,
and therefore $|G_A^{prev}| \geq |Z_A^{prev}|$. 
However, as per Lemma \ref{L^k_i_pairwise}, there may exist some pair $z_j, z_{j'} \in Z_A^{prev}$ 
for which $\mathcal{VVP}^{-}(z_j) \cap \mathcal{VVP}^{-}(z_{j'}) = \emptyset$ does not hold,
and a single optimal guard in $g \in G_A^{prev}$ is sufficient to see both $z_j$ and $z_{j'}$.
Then, both $z_j$ and $z_{j'}$ belong to the same set $P_i(L^k)$ in the partition $P(L^k)$ of $A^k$ 
corresponding to some $z_k \in Z_A$ (see Figure \ref{2inL^k_i}).
We prove below that, in such a situation, an additional guard for $t1(j) \in Y_A^{prev}$  is required 
in $G_A^{prev}$ to compensate for the single optimal that sees two primary vertices in $Z_A^{prev}$.
Recall that two primary vertices in $Z_A^{prev}$ can be overlapping 
only if they both belong to the same set in $P(L^k)$.
Moreover, since we know from Lemma \ref{ZL^k_i_bound} that at most only two primary vertices can be chosen from any $P_i(L^k)$,
such compensation can be applied to every $P_i(L^k)$ for which $|P_i(L^k) \cap Z_A^{prev}| = 2$, 
and hence, $|G_A^{prev}| \geq |Z_A^{prev}|$ still holds. \\

\vspace*{-0.33em}
Let $a_{\sigma(i)} \in A^k$ be the leading vertex belonging to $P_i(L^k)$. 
Let us assume without loss of generality that $j < j'$, i.e. $z_j$ was chosen as a primary vertex earlier than $z_{j'}$ by Algorithm \ref{vg_pcode_sp1}.
Then, observe that $t1(j) = a_{\sigma(i)}$, and so $a_{\sigma(i)} \in Y_A^{prev}$.
As $z_{j'}$ is unmarked even after placement of $s^j_1 = t(a_{\sigma(i)})$ and $s^j_3 = l(z_j)$,
we have $t(a_{\sigma(i)}) \prec f(z_{j'})$ and $l(z_j) \prec t'(z_{j'})$, 
and they can both be possible only if $t(a_{\sigma(i)}) \prec f(z_{j'}) \prec t(z_{j'}) \prec t(z_j) \prec t'(z_j) \prec l(z_j) \prec t'(z_{j'})$.
Thus, the common optimal guard $g \in G_A^{prev}$ that sees both $z_j$ and $z_{j'}$ must lie on $bd_{cc}(f(z_{j'}),t(z_{j'}))$.
But in that case, $t(a_{\sigma(i)}) \prec g \prec t'(a_{\sigma(i)})$, 
which means that $g$ does not see $a_{\sigma(i)} = t1(j)$,
and so a separate optimal guard is required in $G_A^{prev}$ for guarding $a_{\sigma(i)} = t1(j)$.
Thus, two distinct optimal guards are required to guard the three vertices $z_j, z_{j'} \in Z_A^{prev}$ 
and $a_{\sigma(i)} = t1(j)$, whenever both $z_j$ and $z_{j'}$ belong to the same set $P_i(L^k)$.
\end{proof}

\vspace*{-1.01em}
\begin{figure}[H]
  \centerline{\includegraphics[width=0.84\textwidth]{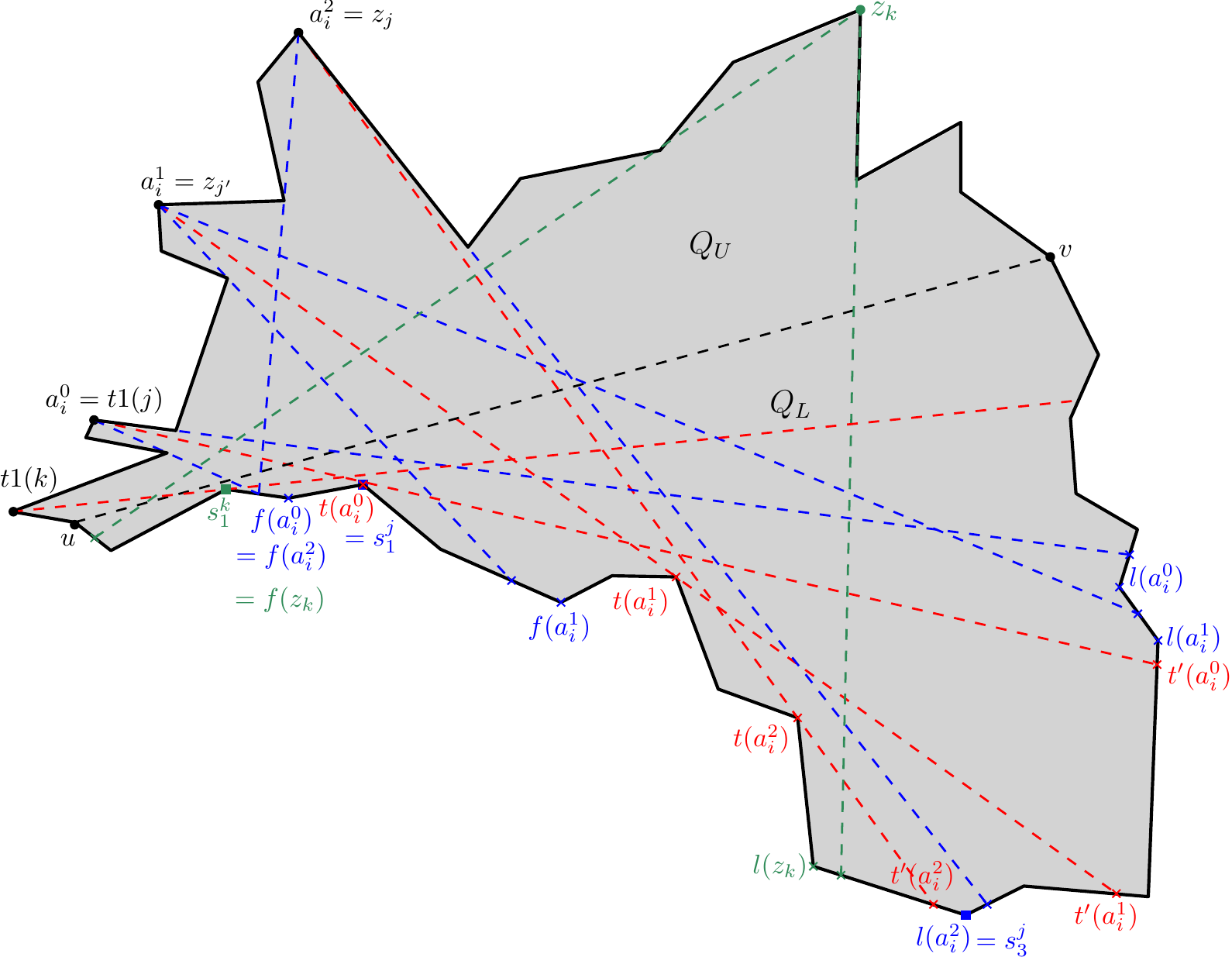}}
  \caption{ Since both $z_j, z_{j'} \in A^k$ belong to $P_i(L^k)$ in the partition $P(L^k)$ of $A^k$, 
            and $k < j < j'$, $z_{j} \in Z_A^{prev}(k)$ and $z_{j'} \in Z_A^{prev}(j)$. } 
  \label{2inL^k_i}
\end{figure}

\vspace*{-0.88em}
\begin{lemma} \label{ZC_prev}
$|G_C^{prev}| \geq |Z_C^{prev}|$.
\end{lemma}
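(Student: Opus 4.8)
The plan is to prove Lemma \ref{ZC_prev} as the exact left-pocket mirror of Lemma \ref{ZA_prev}, interchanging the roles of the right-pocket set $A^k$ with the left-pocket set $C^k$, the maximum disjoint subset $L^k$ with $R^k$, the canonical partition $P(L^k)$ with $P(R^k)$, and the saving guard $s^k_1 = t(a_{\sigma(i)})$ with $s^k_2 = prev(t'(c_{\sigma(i)}))$. First I would dispose of the easy case: if every pair $z_j, z_{j'} \in Z_C^{prev}$ satisfies $\mathcal{VVP}^{-}(z_j) \cap \mathcal{VVP}^{-}(z_{j'}) = \emptyset$, then by Lemma \ref{not_in_OVV} a distinct optimal guard of $G_C^{prev}$ is forced for each primary vertex, giving $|G_C^{prev}| \geq |Z_C^{prev}|$ at once.

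The only way the count can be threatened is when two primary vertices $z_j, z_{j'} \in Z_C^{prev}$ are overlapping and a single optimal guard $g$ sees both. By Lemma \ref{R^k_i_pairwise}, such overlap can occur only when $z_j$ and $z_{j'}$ lie in the same block $P_i(R^k)$ of the canonical partition $P(R^k)$ corresponding to some $z_k \in Z_C$ with $k < j < j'$. Since Lemma \ref{ZR^k_i_bound} caps the number of primary vertices drawn from any single block at two, each such block can lose at most one guard to sharing, so it suffices to exhibit, for every block that does induce a shared guard, one additional optimal guard that must be spent on the block's leading vertex. The natural candidate is $c_{\sigma(i)} = t'1(j) \in Y_C^{prev}$, which is marked by $s^j_2 = prev(t'(c_{\sigma(i)}))$ when $z_j$ is processed; I would show that the shared guard $g$ cannot also see $c_{\sigma(i)}$, thereby restoring the tally to two distinct optimal guards charged against the three vertices $z_j, z_{j'} \in Z_C^{prev}$ and $c_{\sigma(i)} \in Y_C^{prev}$.

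For the geometric core I would assume without loss of generality $j < j'$, so that $s^j_2 = prev(t'(c_{\sigma(i)}))$ and $s^j_3 = l(z_j)$ are the guards placed during the processing of $z_j$. Since $z_{j'}$ survives unmarked, the visibility conditions of Lemma \ref{ZR^k_i_bound} give $t'(c_{\sigma(i)}) \prec f(z_{j'})$ and $l(z_j) \prec t(z_{j'})$; combined with the fact that $z_j$, $z_{j'}$ and $c_{\sigma(i)}$ are pairwise overlapping inside $P_i(R^k)$, and with the nesting established in Lemma \ref{disjoint->nestedC}, this pins down the cyclic order of the points $f(\cdot)$, $t'(\cdot)$, $t(\cdot)$, $l(\cdot)$ on $bd_{cc}(u,v)$ and forces any common guard $g \in \mathcal{VVP}^{-}(z_j) \cap \mathcal{VVP}^{-}(z_{j'})$ to satisfy $t'(c_{\sigma(i)}) \prec g \prec t(c_{\sigma(i)})$, i.e. $g$ lies strictly inside the left pocket cut off by the constructed edge $t(c_{\sigma(i)})t'(c_{\sigma(i)})$ of $c_{\sigma(i)}$. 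Hence $g$ fails to see $c_{\sigma(i)} = t'1(j)$, so a separate optimal guard of $G_C^{prev}$ is required for it, and summing this accounting over all participating blocks yields $|G_C^{prev}| \geq |Z_C^{prev}|$.

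I expect the main obstacle to be the bookkeeping around the $prev(\cdot)$ operator. On the $A$-side the saving guard $s^j_1$ sits exactly at the constructed-edge endpoint $t(a_{\sigma(i)})$, whereas on the $C$-side the guard $s^j_2$ sits at $prev(t'(c_{\sigma(i)}))$, one vertex before the endpoint $t'(c_{\sigma(i)})$; I must verify that this one-vertex shift still confines the shared-guard interval strictly inside the pocket of $c_{\sigma(i)}$ and does not coincide with a vertex from which $c_{\sigma(i)}$ happens to be visible. I would handle this by mirroring the inequality chain of Lemma \ref{ZA_prev} step for step, and by invoking Lemma \ref{ZR^k_i_bound} to guarantee that at most two primary vertices per block participate, so that the required compensation is always exactly one extra guard.
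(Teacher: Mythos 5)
Your proposal is correct and follows essentially the same route as the paper's own proof: it mirrors the argument of Lemma \ref{ZA_prev} block by block, disposing of the fully-disjoint case via distinct guards, localizing any shared optimal guard to a single set $P_i(R^k)$ via Lemma \ref{R^k_i_pairwise}, capping the primary vertices per block at two via Lemma \ref{ZR^k_i_bound}, and then using the unmarked-survival conditions $t'(c_{\sigma(i)}) \prec f(z_{j'})$ and $l(z_j) \prec t(z_{j'})$ to trap the shared guard $g$ in the interval $t'(c_{\sigma(i)}) \prec g \prec t(c_{\sigma(i)})$, forcing a separate optimal guard for the leading vertex $c_{\sigma(i)}$. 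The only cosmetic divergence is that you invoke Lemma \ref{disjoint->nestedC} to pin down the cyclic order where the paper derives it directly, and your notation $t'1(j)$ for the compensating vertex is actually more consistent with the definition of $Y_C$ than the paper's own $t1(j)$.
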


\begin{proof}
Assume that $\mathcal{VVP}^{-}(z_j) \cap \mathcal{VVP}^{-}(z_{j'}) = \emptyset$ 
for every pair $z_j, z_{j'} \in Z_C^{prev}$.
Then clearly a distinct optimal guard is required in $G_C^{prev}$ for guarding each primary vertex in $Z_C^{prev}$,
and therefore $|G_C^{prev}| \geq |Z_C^{prev}|$. 
However, as per Lemma \ref{R^k_i_pairwise}, there may exist some pair $z_j, z_{j'} \in Z_C^{prev}$ 
for which $\mathcal{VVP}^{-}(z_j) \cap \mathcal{VVP}^{-}(z_{j'}) = \emptyset$ does not hold,
and a single optimal guard in $g \in G_C^{prev}$ is sufficient to see both $z_j$ and $z_{j'}$.
Then, both $z_j$ and $z_{j'}$ belong to the same set $P_i(R^k)$ in the partition $P(R^k)$ of $C^k$ 
corresponding to some $z_k \in Z_C$. 
We prove below that, in such a situation, an additional guard for $t1(j) \in Y_C^{prev}$ is required 
in $G_C^{prev}$ to compensate for the single optimal that sees two primary vertices in $Z_C^{prev}$.
Recall that two primary vertices in $Z_C^{prev}$ can be overlapping 
only if they both belong to the same set in $P(R^k)$.
Moreover, since we know from Lemma \ref{ZR^k_i_bound} that at most only two primary vertices can be chosen from any $P_i(R^k)$,
such compensation can be applied to every $P_i(R^k)$ for which $|P_i(R^k) \cap Z_C^{prev}| = 2$, 
and hence, $|G_C^{prev}| \geq |Z_C^{prev}|$ still holds. \\

\vspace*{-0.33em}
Let $c_{\sigma(i)} \in C^k$ be the leading vertex belonging to $P_i(R^k)$. 
Let us assume without loss of generality that $j < j'$, i.e. $z_j$ was chosen as a primary vertex earlier than $z_{j'}$ by Algorithm \ref{vg_pcode_sp1}.
Then, observe that $t1(j) = c_{\sigma(i)}$, and so $c_{\sigma(i)} \in Y_C^{prev}$.
As $z_{j'}$ is unmarked even after placement of $s^j_2 = prev(t'(c_{\sigma(i)}))$ and $s^j_3 = l(z_j)$,
we have $t'(c_{\sigma(i)}) \prec f(z_{j'})$ and $l(z_j) \prec t(z_{j'})$, 
and they can both be possible only if $t'(c_{\sigma(i)}) \prec f(z_{j'}) \prec t'(z_{j'}) \prec t'(z_j) \prec t(z_j) \prec l(z_j) \prec t(z_{j'})$.
Thus, the common optimal guard $g \in G_C^{prev}$ that sees both $z_j$ and $z_{j'}$ must lie on $bd_{cc}(f(z_{j'}),t'(z_{j'}))$.
But in that case, $t'(c_{\sigma(i)}) \prec g \prec t(c_{\sigma(i)})$, 
which means that $g$ does not see $c_{\sigma(i)} = t1(j)$,
and so a separate optimal guard is required in $G_C^{prev}$ for guarding $c_{\sigma(i)} = t1(j)$.
Thus, two distinct optimal guards are required to guard the three vertices $z_j, z_{j'} \in Z_C^{prev}$ 
and $c_{\sigma(i)} = t1(j)$, whenever both $z_j$ and $z_{j'}$ belong to the same set $P_i(R^k)$.
\end{proof}

Consider the two sets $Z_A \cup Y_A \cup Z_A^{prev}$ and $Z_C \cup Y_C \cup Z_C^{prev}$ respectively.
Let us denote by $G_A$ the minimal subset of $G_{opt}^L$ that see all vertices of $Z_A \cup Y_A \cup Z_A^{prev}$,
and similarly, let us denote by $G_C$ the minimal subset of $G_{opt}^L$ that see all vertices of $Z_A \cup Y_C \cup Z_C^{prev}$.
In order to obtain a lower bound on the number of optimal guards, we establish the following two lemmas.

\begin{lemma} \label{GA_bound}
$|G_A| \geq |Z_A \cup Z_A^{prev}|$.
\end{lemma}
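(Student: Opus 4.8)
The plan is to prove the bound by a charging argument that injects $Z_A \cup Z_A^{prev}$ into $G_A$, using the \emph{surplus} coverage requirement imposed by $Y_A$ to pay for every optimal guard that happens to see more than one of these primary vertices. Since every guard of $G_A$ lies on $bd_{cc}(u,v)$, I would first record the structural constraint supplied by Lemmas \ref{disjoint->nestedA}, \ref{L^k_upper} and \ref{opt_both}: along $bd_{cc}(u,v)$ the outward-visibility windows of disjoint vertices are geometrically nested, so a single outward guard can cover only boundedly many of the primary vertices in $Z_A \cup Z_A^{prev}$.

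The first substantive step is to characterise exactly when two distinct primary vertices $z_k, z_{k'}$ (say $k<k'$) of $Z_A \cup Z_A^{prev}$ can be seen by one common guard $g \in G_{opt}^L$. Co-visibility forces $\mathcal{VVP}^{-}(z_k)\cap\mathcal{VVP}^{-}(z_{k'})\neq\emptyset$, i.e. $z_{k'}\in\mathcal{OVV}^{-}(z_k)$; but by the marking invariant established across the case analysis (Lemmas \ref{case1}--\ref{case2d}) the only vertices of $\mathcal{OVV}^{-}(z_k)$ that can still be unmarked when $z_{k'}$ is chosen lie in $A^k_1\cup C^k_1$. Restricting to the vertices relevant for the $A$-side count, this places $z_{k'}$ inside $A^k$, so $z_k\in Z_A$ and $z_{k'}\in Z_A^{prev}$, and by Lemma \ref{L^k_i_pairwise} the two must lie in a common canonical block $P_i(L^k)$. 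Lemma \ref{ZL^k_i_bound} then caps at two the number of primary vertices drawn from any such block, so each shared guard accounts for at most two elements of $Z_A \cup Z_A^{prev}$.

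The second step is the compensation, exactly paralleling Lemma \ref{ZA_prev}. Whenever a single guard $g$ sees two primary vertices $z_j,z_{j'}\in P_i(L^k)$, the unmarkedness of the later vertex forces the ordering $t(a_{\sigma(i)})\prec f(z_{j'})\prec t(z_{j'})\prec t(z_j)\prec l(z_j)\prec t'(z_{j'})$, which pins $g$ strictly between $t(a_{\sigma(i)})$ and $t'(a_{\sigma(i)})$ on $bd_{cc}(u,v)$. Consequently $g$ cannot see the leading vertex $a_{\sigma(i)}=t1(j)\in Y_A$, so $G_A$ is obliged to spend a separate guard on $a_{\sigma(i)}$; because $a_{\sigma(i)}\in Y_A$ it is part of what $G_A$ must cover but is \emph{not} counted in the target $|Z_A\cup Z_A^{prev}|$. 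I would then assemble the charging: map each primary vertex of $Z_A\cup Z_A^{prev}$ to a guard of $G_A$ seeing it, and whenever a guard is shared by a pair in some $P_i(L^k)$, reassign one member of the pair to the guard forced onto $a_{\sigma(i)}$. Verifying that distinct blocks yield distinct leading vertices $a_{\sigma(i)}$, and that a compensating guard is never reused, upgrades the map to an injection and yields $|G_A|\geq|Z_A\cup Z_A^{prev}|$.

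The hardest part will be the first step: ruling out that one optimal guard simultaneously sees primary vertices coming from different blocks, or from both $Z_A$ as a leading vertex and some unrelated portion of $\mathcal{OVV}^{-}$, and confirming that the two-per-guard bound holds \emph{globally} rather than only block-by-block. This requires combining the nesting property of Lemma \ref{disjoint->nestedA} with the fact that all of $A^k$ lies on $bd_c(u,z_k)$ to show that any third co-visible primary vertex would violate either the nesting order along $bd_{cc}(u,v)$ or the at-most-two guarantee of Lemma \ref{ZL^k_i_bound}. Once that containment is pinned down, the compensation and counting are routine adaptations of the argument in Lemma \ref{ZA_prev}.
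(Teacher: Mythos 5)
Your overall strategy---charge each element of $Z_A \cup Z_A^{prev}$ to a distinct guard of $G_A$, and whenever one optimal guard covers two primary vertices, recover the deficit from a forced extra guard on a leading vertex in $Y_A$---is the same compensation mechanism the paper uses, and your second step is essentially a restatement of the proof of Lemma \ref{ZA_prev}. The difference is organizational: the paper does not attempt a single global injection. It first takes $|G_A^{prev}| \geq |Z_A^{prev}|$ from Lemma \ref{ZA_prev} as a black box, and then argues separately that each $z_k \in Z_A \setminus Z_A^{prev}$ contributes a guard (for $z_k$ itself or for $t1(k)$) that is \emph{fresh relative to} $G_A^{prev}$, giving $|G_A| \geq |G_A^{prev}| + |Z_A \setminus Z_A^{prev}|$.

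The gap in your version is exactly the step you flag as ``the hardest part,'' and it does not reduce to the block-by-block analysis you describe. Your first step asserts that if one guard $g$ sees two primary vertices $z_k, z_{k'}$ of $Z_A \cup Z_A^{prev}$ with $k < k'$, then both lie in a common canonical block $P_i(L^k)$, so that Lemma \ref{L^k_i_pairwise} and Lemma \ref{ZL^k_i_bound} apply. But the blocks $P_i(L^k)$ partition $A^k$, and $z_k$ itself belongs to $B^k$, not to $A^k$; so the pair $(z_k, z_{k'})$ with $z_{k'} \in A^k_1$ is co-visible yet lies in no common block, and your compensation rule (keyed to pairs inside one block) assigns nothing to it. This is precisely the case the paper's claim for $Z_A \setminus Z_A^{prev}$ is designed to handle: there the compensator is $t1(k) \in Y_A \setminus Y_A^{prev}$, the leading vertex of $P_1(L^k)$ associated with the \emph{root} vertex $z_k$, and the argument splits on whether $z_{k'} \in P_1(L^k)$ (in which case an ordering argument shows $z_{k'}$ would already have been marked by $s^k_1 = t(t1(k))$, a contradiction) or not (in which case $g$ cannot see $t1(k)$ and a separate guard is forced). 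A second unaddressed configuration: $z_{k'}$ may lie in $C^k_1$ rather than $A^k_1$ and still belong to $Z_A$ because $A^{k'} \neq \emptyset$; then $z_k, z_{k'}$ are both in $Z_A$ and co-visible but again share no $A$-side block. Your proposed fix (nesting from Lemma \ref{disjoint->nestedA} plus the at-most-two bound) does not touch either configuration, so the injection is not established as stated.
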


\begin{proof}
Observe that $G_A \supseteq G_A^{prev}$, since $G_A$ is also required to guard the vertices belonging 
to $(Z_A \setminus Z_A^{prev}) \cup (Y_A \setminus Y_A^{prev})$ in addition to those guarded by $G_A^{prev}$.
We claim that, for every $z_k \in Z_A \setminus Z_A^{prev}$, 
there exists an optimal guard in $G_A$ for guarding $z_k$ or $t1(k)$
which is distinct from all the optimal guards already counted in $G_A^{prev}$. 
This is enough to prove the lemma, as it implies in conjunction with Lemma \ref{ZA_prev} that 
$|G_A| \geq |G_A^{prev}| + |Z_A \setminus Z_A^{prev}| 
       \geq |Z_A^{prev}| + |Z_A \setminus Z_A^{prev}| = |Z_A \cup Z_A^{prev}|$. 
The above claim is proven below. \\ 

\vspace*{-0.33em}
Let $g_k \in G_A$ be an optimal guard that sees $z_k$.
If $g_k$ does not coincide with any of the optimal guards already counted in $G_A^{prev}$, 
then clearly our claim holds. 
Otherwise, let us consider the situation where $g_k$ coincides with $g_j \in G_A^{prev}$ that guards 
a primary vertex $z_j \in (Z_A^{prev} \cap A^k)$ or some vertex $y \in (Y_A^{prev} \cap A^k)$, or may be both.
Let us first consider the subcase where $g_j$ sees some primary vertex $z_j \in (Z_A^{prev} \cap A^k)$.
If $z_j \notin P_1(L^k)$, then clearly $g_j$ does not see $t1(k) \in P_1(L^k)$. 
Thus, a separate optimal guard is required in $G_A$ for guarding $t1(k) \in Y_A \setminus Y_A^{prev}$, and our claim still holds.
So, let us consider the other case where $z_j \in P_1(L^k)$ and $g_j \in G_A^{prev}$ also sees $t1(k)$.
Observe that, as $g_j$ sees $t1(k)$, we have either $g_j \prec t(t1(k))$ or $t'(t1(k)) \prec g_j$.
However, since $g_k$ and $g_j$ coincide, we must have $g_j \prec l(z_k) \prec t'(t1(k))$, 
which rules out the latter possibility. So, we must have $g_j \prec t(t1(k))$, 
which in turn implies that $f(z_j) \prec g_j \prec t(t1(k)) \prec t(z_j)$.
Note that, if $s^k_1 = t(t1(k))$ is not visible from $z_j$ due to a right pocket, 
then either it means that $t1(k)$ belongs to $B^k$ rather than $A^k$,
or it contradicts the fact that $t1(k)$ is the leading vertex of $P_1(L^k)$.
On the other hand, if $s^k_1 = t(t1(k))$ is not visible from $z_j$ due to a left pocket,
then again it means that $z_j$ belongs to $B^k$ or $C^k$ rather than $A^k$.
So, $z_j$ must be visible from $t(t1(k))$, and is thus marked by the placement of $s^k_1 = t(t1(k))$, 
which contradicts the assumption that $z_j \in Z_A^{prev}$. \\

\vspace*{-0.33em}
Let us now consider the other subcase where $g_j$ sees only a vertex $y \in (Y_A^{prev} \cap A^k)$, 
but $g_j$ sees no primary vertex $z_j \in (Z_A^{prev} \cap A^k)$. 
Once again, if $y \notin P_1(L^k)$, then clearly $g_j$ does not see $t1(k) \in P_1(L^k)$.
So, let us consider the other case where $y \in P_1(L^k)$ and $g_j \in G_A^{prev}$ also sees $t1(k)$.
Observe that, as $g_j$ sees $t1(k)$, we have either $g_j \prec t(t1(k))$ or $t'(t1(k)) \prec g_j$.
However, since $g_k$ and $g_j$ coincide, we must have $g_j \prec l(z_k) \prec t'(t1(k))$, 
which rules out the latter possibility. So, we must have $g_j \prec t(t1(k))$, 
which in turn implies that $f(y) \prec g_j \prec t(t1(k)) \prec t(y)$.
Note that, if $s^k_1 = t(t1(k))$ is not visible from $y$ due to a right pocket, 
then either it means that $t1(k)$ belongs to $B^k$ rather than $A^k$,
or it contradicts the fact that $t1(k)$ is the leading vertex of $P_1(L^k)$.
On the other hand, if $s^k_1 = t(t1(k))$ is not visible from $y$ due to a left pocket,
then again it means that $y$ belongs to $B^k$ or $C^k$ rather than $A^k$.
So, $y$ must be visible from $t(t1(k))$, and is thus marked by the placement of $s^k_1 = t(t1(k))$, 
which contradicts the assumption that $y \in Y_A^{prev}$. 
Hence, it is established that for every $z_k \in Z_A$, there exists an optimal guard in $G_A$ 
for guarding $z_k$ or $t1(k)$, which is distinct from all the optimal guards already counted in $G_A^{prev}$, 
and this completes our proof. 
\end{proof}

\begin{lemma} \label{GC_bound}
$|G_C| \geq |Z_C \cup Z_C^{prev}|$.
\end{lemma}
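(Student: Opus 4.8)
The plan is to mirror the proof of Lemma~\ref{GA_bound} line by line, systematically replacing every $A$-object by its $C$-counterpart: $A^k$ by $C^k$, the maximum disjoint subset $L^k$ and its canonical partition $P(L^k)$ by $R^k$ and $P(R^k)$, the guard $s^k_1 = t(a_{\sigma(i)})$ by $s^k_2 = prev(t'(c_{\sigma(i)}))$, the designated vertex $t1(k)$ by $t'1(k)$, and the sets $Z_A, Z_A^{prev}, Y_A, Y_A^{prev}, G_A, G_A^{prev}$ by $Z_C, Z_C^{prev}, Y_C, Y_C^{prev}, G_C, G_C^{prev}$. In this translation, the pairwise-intersection property of Lemma~\ref{L^k_i_pairwise} is replaced by Lemma~\ref{R^k_i_pairwise}, and the lower bound of Lemma~\ref{ZA_prev} by Lemma~\ref{ZC_prev}.

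First I would observe that $G_C \supseteq G_C^{prev}$, since $G_C$ must additionally see every vertex of $(Z_C \setminus Z_C^{prev}) \cup (Y_C \setminus Y_C^{prev})$ on top of those already guarded by $G_C^{prev}$. The crux is then the claim that for every $z_k \in Z_C \setminus Z_C^{prev}$ there is a guard in $G_C$ seeing $z_k$ or $t'1(k)$ that is distinct from every guard counted in $G_C^{prev}$. Granting this claim and invoking Lemma~\ref{ZC_prev}, we obtain
\[
|G_C| \;\geq\; |G_C^{prev}| + |Z_C \setminus Z_C^{prev}| \;\geq\; |Z_C^{prev}| + |Z_C \setminus Z_C^{prev}| \;=\; |Z_C \cup Z_C^{prev}|,
\]
which is the desired bound.

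To establish the claim I would take an optimal guard $g_k \in G_C$ seeing $z_k$; if $g_k$ coincides with no guard of $G_C^{prev}$ we are done, so suppose $g_k = g_j$, where $g_j$ sees a primary vertex $z_j \in (Z_C^{prev} \cap C^k)$ or a vertex $y \in (Y_C^{prev} \cap C^k)$. In either subcase, the situation where the witness lies outside $P_1(R^k)$ is immediate, because then $g_j$ cannot see $t'1(k) \in P_1(R^k)$ and a fresh optimal guard is forced for $t'1(k) \in Y_C \setminus Y_C^{prev}$. In the remaining situation, where the witness lies in $P_1(R^k)$ and $g_j$ also sees $t'1(k)$, the coincidence $g_k = g_j$ forces the $C$-analogue ordering $g_j \prec l(z_k) \prec t(t'1(k))$, which rules out the alternative $t(t'1(k)) \prec g_j$ and leaves only $g_j \prec t'(t'1(k)) = t'(c_{\sigma(i)})$. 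Together with $f(z_j) \prec g_j$ (since $g_j$ sees $z_j$) and the leading-vertex property of $t'1(k) = c_{\sigma(i)}$ in $P_1(R^k)$ (smallest $t'$-value), this yields $f(z_j) \prec g_j \prec t'(c_{\sigma(i)}) \prec t'(z_j)$, setting up the final pocket argument on the guard $s^k_2 = prev(t'(c_{\sigma(i)}))$; the vertex $y$ subcase is handled identically, replacing $z_j$ by $y$ throughout.

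The step I expect to be the main obstacle is the left/right-pocket analysis that closes this last subcase. I would argue that if $s^k_2 = prev(t'(c_{\sigma(i)}))$ fails to see the witness because of a blocking right pocket, then either the witness belongs to $B^k$ rather than $C^k$ or the leading-vertex property of $t'1(k)$ in $P_1(R^k)$ is contradicted, whereas a blocking left pocket would force the witness into $B^k$ or $A^k$. Either way the witness is visible from $s^k_2$ and therefore already marked, contradicting its membership in $Z_C^{prev}$ or $Y_C^{prev}$. The care required here stems from the asymmetry between $s^k_1 = t(a_{\sigma(i)})$ and $s^k_2 = prev(t'(c_{\sigma(i)}))$: in the $C$-case the marking guard sits at the vertex \emph{preceding} the constructed-edge endpoint $t'(c_{\sigma(i)})$ rather than at a polygonal vertex of $Q_L$, so I would need to check that the $A^k/B^k/C^k$ classification and the pocket orientations interact correctly with this shifted guard, exactly reversing the orientations used in Lemma~\ref{GA_bound}.
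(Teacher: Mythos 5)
Your proposal matches the paper's proof of Lemma~\ref{GC_bound} essentially verbatim: the paper itself proves this lemma by the same systematic $A\!\to\!C$, $L^k\!\to\!R^k$, $t1(k)\!\to\!t'1(k)$, $s^k_1\!\to\!s^k_2$ translation of the proof of Lemma~\ref{GA_bound}, with the same two subcases, the same ordering chain $f(z_j) \prec g_j \prec t'(t'1(k)) \prec t'(z_j)$, and the same closing left/right-pocket contradiction. The only negligible divergence is in the right-pocket branch, where the paper attributes the forced misclassification to $t'1(k)$ rather than to the witness vertex, but this does not change the argument.
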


\begin{proof}
Observe that $G_C \supseteq G_C^{prev}$, since $G_C$ is also required to guard the vertices belonging 
to $(Z_C \setminus Z_C^{prev}) \cup (Y_C \setminus Y_C^{prev})$ in addition to those guarded by $G_C^{prev}$.
We claim that, for every $z_k \in Z_C \setminus Z_C^{prev}$, 
there exists an optimal guard in $G_C$ for guarding $z_k$ or $t'1(k)$
which is distinct from all the optimal guards already counted in $G_C^{prev}$. 
This is enough to prove the lemma, as it implies in conjunction with Lemma \ref{ZC_prev} that 
$|G_C| \geq |G_C^{prev}| + |Z_C \setminus Z_C^{prev}| 
       \geq |Z_C^{prev}| + |Z_C \setminus Z_C^{prev}| = |Z_C \cup Z_C^{prev}|$. 
The above claim is proven below. \\ 

\vspace*{-0.33em}
Let $g_k \in G_C$ be an optimal guard that sees $z_k$.
If $g_k$ does not coincide with any of the optimal guards already counted in $G_C^{prev}$, 
then clearly our claim holds. 
Otherwise, let us consider the situation where $g_k$ coincides with $g_j \in G_C^{prev}$ that guards 
a primary vertex $z_j \in (Z_C^{prev} \cap C^k)$ or some vertex $y \in (Y_C^{prev} \cap C^k)$, or may be both.
Let us first consider the subcase where $g_j$ sees some primary vertex $z_j \in (Z_C^{prev} \cap C^k)$.
If $z_j \notin P_1(R^k)$, then clearly $g_j$ does not see $t'1(k) \in P_1(R^k)$. 
Thus, a separate optimal guard is required in $G_C$ for guarding $t'1(k) \in Y_C \setminus Y_C^{prev}$, 
and our claim still holds.
So, let us consider the other case where $z_j \in P_1(R^k)$ and $g_j \in G_C^{prev}$ also sees $t'1(k)$.
Observe that, as $g_j$ sees $t'1(k)$, we have either $g_j \prec t'(t'1(k))$ or $t(t'1(k)) \prec g_j$.
However, since $g_k$ and $g_j$ coincide, we must have $g_j \prec l(z_k) \prec t(t'1(k))$, 
which rules out the latter possibility. So, we must have $g_j \prec t'(t'1(k))$, 
which in turn implies that $f(z_j) \prec g_j \prec t'(t'1(k)) \prec t'(z_j)$.
Note that, if $s^k_2 = prev(t'(t'1(k)))$ is not visible from $z_j$ due to a right pocket, 
then either it means that $t'1(k)$ belongs to $B^k$ rather than $C^k$,
or it contradicts the fact that $t'1(k)$ is the leading vertex of $P_1(R^k)$.
On the other hand, if $s^k_2 = prev(t'(t'1(k)))$ is not visible from $z_j$ due to a left pocket,
then again it means that $z_j$ belongs to $B^k$ or $A^k$ rather than $C^k$.
So, $z_j$ must be visible from $prev(t'(t'1(k)))$, and is thus marked by the placement 
of $s^k_2 = prev(t'(t'1(k)))$, which contradicts the assumption that $z_j \in Z_C^{prev}$. \\

\vspace*{-0.33em}
Let us now consider the other subcase where $g_j$ sees only a vertex $y \in (Y_C^{prev} \cap C^k)$, 
but $g_j$ sees no primary vertex $z_j \in (Z_C^{prev} \cap C^k)$. 
Once again, if $y \notin P_1(R^k)$, then clearly $g_j$ does not see $t'1(k) \in P_1(R^k)$.
So, let us consider the other case where $y \in P_1(R^k)$ and $g_j \in G_C^{prev}$ also sees $t'1(k)$.
Observe that, as $g_j$ sees $t'1(k)$, we have either $g_j \prec t'(t'1(k))$ or $t(t'1(k)) \prec g_j$.
However, since $g_k$ and $g_j$ coincide, we must have $g_j \prec l(z_k) \prec t(t'1(k))$, 
which rules out the latter possibility. So, we must have $g_j \prec t'(t'1(k))$, 
which in turn implies that $f(y) \prec g_j \prec t'(t'1(k)) \prec t'(y)$.
Note that, if $s^k_2 = prev(t'(t'1(k)))$ is not visible from $y$ due to a right pocket, 
then either it means that $t'1(k)$ belongs to $B^k$ rather than $C^k$,
or it contradicts the fact that $t'1(k)$ is the leading vertex of $P_1(R^k)$.
On the other hand, if $s^k_2 = prev(t'(t'1(k)))$ is not visible from $z_j$ due to a left pocket,
then again it means that $z_j$ belongs to $B^k$ or $A^k$ rather than $C^k$.
So, $z_j$ must be visible from $prev(t'(t'1(k)))$, and is thus marked by the placement 
of $s^k_2 = prev(t'(t'1(k)))$, which contradicts the assumption that $y \in Y_A^{prev}$. 
Hence, it is established that for every $z_k \in Z_A$, there exists an optimal guard in $G_A$ 
for guarding $z_k$ or $t1(k)$, which is distinct from all the optimal guards already counted in $G_A^{prev}$, 
and this completes our proof.
\end{proof}

\begin{lemma} \label{Gopt_lb}
$ |G_{opt}^L| \geq |Z| / 2 $.
\end{lemma}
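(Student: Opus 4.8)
The plan is to combine the three lower bounds already in hand --- Lemmas \ref{GA_bound}, \ref{GC_bound} and \ref{GB_bound} --- with a short double-counting argument over $G_{opt}^L$. First I would record the covering relation for $Z$. By the definition $Z_B = Z \setminus (Z_A \cup Z_A^{prev} \cup Z_C \cup Z_C^{prev})$, the set $Z_B$ is disjoint from the union of the other four, and the five sets together exhaust $Z$. Using subadditivity of cardinality on the non-disjoint part, this gives
\begin{equation*}
|Z| \;\le\; |Z_A \cup Z_A^{prev}| + |Z_C \cup Z_C^{prev}| + |Z_B|.
\end{equation*}
Applying Lemma \ref{GA_bound} to the first term, Lemma \ref{GC_bound} to the second, and Lemma \ref{GB_bound} to the third, I obtain
\begin{equation*}
|Z| \;\le\; |G_A| + |G_C| + |G_B|,
\end{equation*}
where $G_A$, $G_C$ and $G_B$ are all subsets of $G_{opt}^L$.

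The crux of the argument is to show that the \emph{isolated} guard set $G_B$ is disjoint from both $G_A$ and $G_C$. I would establish this by returning to the isolation property of $Z_B$ used in the proof of Lemma \ref{GB_bound}: for every $z_k \in Z_B$ we have $A^k = C^k = \emptyset$, and $\mathcal{OVV}^{-}(z_k)$ is disjoint from $\mathcal{OVV}^{-}(z_j)$ for every other primary vertex $z_j$. Via Lemma \ref{not_in_OVV}, this forces any optimal guard that sees $z_k$ to see no other primary vertex; extending the same reasoning to the associated vertices collected in $Y_A$ and $Y_C$ shows that such a guard cannot see any vertex counted on the $A$-side or the $C$-side either. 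Hence an optimal guard that is forced into $G_B$ (because it alone covers some $z_k \in Z_B$) cannot simultaneously belong to $G_A$ or $G_C$, so $G_A \cap G_B = \emptyset$ and $G_C \cap G_B = \emptyset$.

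With disjointness secured, the counting is immediate. Since $G_A \cup G_B \subseteq G_{opt}^L$ is a disjoint union, $|G_A| + |G_B| \le |G_{opt}^L|$, and likewise $|G_C| + |G_B| \le |G_{opt}^L|$. Adding these two inequalities yields $|G_A| + |G_C| + 2|G_B| \le 2\,|G_{opt}^L|$, and therefore $|G_A| + |G_C| + |G_B| \le 2\,|G_{opt}^L| - |G_B| \le 2\,|G_{opt}^L|$. Combining this with the displayed bound on $|Z|$ gives $|Z| \le 2\,|G_{opt}^L|$, i.e.\ $|G_{opt}^L| \ge |Z|/2$, which is the desired conclusion.

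I expect the main obstacle to be the disjointness claim $G_A \cap G_B = G_C \cap G_B = \emptyset$, since Lemmas \ref{GA_bound}--\ref{GB_bound} only control cardinalities and do not by themselves say which optimal guards lie in each set. To justify it I must go back to the underlying geometry of $Z_B$ and verify carefully that a single optimal guard cannot see a $Z_B$ vertex together with any vertex charged on the $A$- or $C$-side; the non-primary vertices of $Y_A$ and $Y_C$ are the delicate case here, and handling them correctly requires attention to the order in which vertices are marked as guards are placed. Once that geometric isolation is pinned down, the remainder of the proof is pure bookkeeping.
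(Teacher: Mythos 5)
Your proposal is correct and follows essentially the same route as the paper's proof: the same decomposition of $Z$ into $(Z_A\cup Z_A^{prev})\cup(Z_C\cup Z_C^{prev})\cup Z_B$, the same invocation of Lemmas \ref{GA_bound}, \ref{GC_bound} and \ref{GB_bound}, and the same factor-of-two loss coming from the fact that $G_A$ and $G_C$ may overlap (your summing of the two inequalities $|G_A|+|G_B|\le|G_{opt}^L|$ and $|G_C|+|G_B|\le|G_{opt}^L|$ is just a rearrangement of the paper's $|G_{opt}^L|=|G_{opt}^L\setminus G_B|+|G_B|\ge(|G_A|+|G_C|)/2+|G_B|$). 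If anything you are more scrupulous than the paper, which simply asserts $G_A,G_C\subseteq G_{opt}^L\setminus G_B$ without comment, whereas you correctly identify that this disjointness (in particular for the non-primary vertices of $Y_A$ and $Y_C$) is the one point requiring geometric justification.
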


\begin{proof}
Since $G_A$ and $G_C$ are both subsets of $G_{opt}^L \setminus G_B$, we know that 
$|G_{opt}^L \setminus G_B| \geq |G_A|$ and $|G_{opt}^L \setminus G_B| \geq |G_C|$.
By Lemma \ref{opt_both}, we know that $G_A$ and $G_C$ need not necessarily be disjoint. Thus, 
$ |G_{opt}^L \setminus G_B| \geq \max(|G_A|,|G_C|) \geq (|G_A| + |G_C|) / 2 $.
By using Lemmas \ref{GA_bound} and \ref{GC_bound}, we have
$ (|G_A| + |G_C|) / 2 
\geq (|Z_A \cup Z_A^{prev}| + |Z_C \cup Z_C^{prev}|) / 2 
\geq (|Z_A \cup Z_A^{prev} \cup Z_C \cup Z_C^{prev}|) / 2
= |Z \setminus Z_B|/2 $.
Therefore, by using Lemma \ref{GB_bound},
$ |G_{opt}^L| = |G_{opt}^L \setminus G_B| + |G_B| 
              \geq (|G_A| + |G_C|) / 2 + |Z_B|               \geq |Z \setminus Z_B|/2 + |Z_B| 
              \geq (|Z \setminus Z_B| + |Z_B|) / 2 
              = |Z| / 2 $.
\end{proof}

\begin{theorem} \label{Z_ub}
Let $Z$ be the set of primary vertices chosen by Algorithm \ref{vg_pcode_sp1}. 
Then, the set $S$ of vertex guards computed by Algorithm \ref{vg_pcode_sp1} satisfies
$ |S| \leq 6 \cdot |G_{opt}^L| $.
\end{theorem}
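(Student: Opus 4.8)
The plan is to combine two facts that, by this point in the development, are essentially already in hand: an upper bound on $|S|$ in terms of $|Z|$, and the lower bound on $|G_{opt}^L|$ in terms of $|Z|$ furnished by Lemma \ref{Gopt_lb}. The theorem is then just the product of these two estimates, so I expect no genuinely new argument to be required here.

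First I would establish that $|S| \leq 3\,|Z|$. This follows directly from the exhaustive case analysis carried out in Lemmas \ref{case1}--\ref{case2d}: for each primary vertex $z_k \in Z$, Algorithm \ref{vg_pcode_sp1} places at most the three guards $s_1^k$, $s_2^k$, and $s_3^k$. In Case 1 only the single guard $s_3^k \in \mathcal{CI}(A^k \cup B^k \cup C^k)$ is placed, while in Cases 2a--2d at most all three of $s_1^k$, $s_2^k$, $s_3^k$ are placed. Since every guard in $S$ is introduced in association with exactly one primary vertex, summing the per-vertex contribution over all $|Z|$ primary vertices yields $|S| \leq 3\,|Z|$.

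Next I would invoke Lemma \ref{Gopt_lb}, which gives $|G_{opt}^L| \geq |Z|/2$, or equivalently $|Z| \leq 2\,|G_{opt}^L|$. Chaining the two inequalities produces
\[
|S| \;\leq\; 3\,|Z| \;\leq\; 3 \cdot 2\,|G_{opt}^L| \;=\; 6\,|G_{opt}^L|,
\]
which is precisely the claimed bound.

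The only point requiring care is the bound $|S| \leq 3\,|Z|$: one must confirm that no case ever forces a fourth guard, and that each of $s_1^k, s_2^k, s_3^k$ is charged to a single primary vertex, so that no guard is double-counted and no primary vertex contributes more than three. Given the explicit per-case placements already recorded in Lemmas \ref{case1}--\ref{case2d}, this is a bookkeeping verification rather than a conceptual obstacle. Indeed, all the substantive combinatorial difficulty has already been absorbed into Lemma \ref{Gopt_lb} (and the chain of lemmas on $L^k$, $R^k$, $Z_A$, $Z_C$ leading up to it); the present theorem follows immediately once the two bounds are placed side by side.
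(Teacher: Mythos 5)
Your proposal is correct and is essentially identical to the paper's own proof: both bound $|S| \leq 3\cdot|Z|$ from the at-most-three guards $s_1^k, s_2^k, s_3^k$ placed per primary vertex, then apply Lemma \ref{Gopt_lb} in the form $|Z| \leq 2\cdot|G_{opt}^L|$ and multiply. No substantive difference to report.
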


\begin{proof}
Since Algorithm \ref{vg_pcode_sp1} places at most three vertex guards $s^k_1$, $s^k_2$ and $s^k_3$ corresponding to each primary vertex $z_k \in Z$, 
we know that $|S| \leq 3\cdot |Z|$.
Also, from Lemma \ref{Gopt_lb}, we know that $|Z| \leq 2\cdot |G_{opt}^L|$.
Therefore, $|S| \leq 3\cdot |Z| \leq 6\cdot |G_{opt}^L|$. 
\end{proof}

\begin{algorithm}[H]
\caption{An $\mathcal{O}(n^4)$-algorithm for computing a guard set $S$ for all vertices of $Q_U$} 
\label{vg_pcode_sp1}
\begin{algorithmic}[1]

\State Initialize $k \leftarrow 0$ and $S \leftarrow \emptyset$  \label{vg_pcode_sp1:1}
\State Initialize all vertices of $Q_U$ as unmarked  \label{vg_pcode_sp1:2}

\While { there exists an unmarked vertex in $Q_U$} 
\label{vg_pcode_sp1:3}
\State Set $k \leftarrow k + 1$  \Comment{ Variable $k$ keeps count of the current iteration}  \label{vg_pcode_sp1:4}

\State $z_k \leftarrow$ the first unmarked vertex of $bd_c(u,v)$ in clockwise order  \label{vg_pcode_sp1:5}
\State $q \leftarrow z_k$ \label{vg_pcode_sp1:6}
\While{ $q \neq v$ }  \label{vg_pcode_sp1:7}
\State $q \leftarrow$ vertex next to $q$ in clockwise order on $bd_c(u,v)$   \label{vg_pcode_sp1:8}
\If{ $q$ is unmarked and $l'(q) \prec l'(z_k)$ }  \label{vg_pcode_sp1:9}
\State $z_k \leftarrow q$
\Comment{Update $z_k$ to $q$ whenever $q$ is unmarked and $l'(q) \prec l'(z_k)$}
\label{vg_pcode_sp1:10}
\EndIf  \label{vg_pcode_sp1:11}
\EndWhile  \Comment{ Variable $z_k$ is now the primary vertex for the current iteration} \label{vg_pcode_sp1:12}

\State Compute the ordered set $\mathcal{OVV}^{-}(z_k) = \{x^k_1,x^k_2,\ldots,x^k_{m(k)}\}$ 
\label{vg_pcode_sp1:13} 
\State Partition $\mathcal{OVV}^{-}(z_k)$ into the sets $A^k$, $B^k$ and $C^k$
\label{vg_pcode_sp1:14} 

\If{ $\mathcal{CI}(A^k \cup B^k \cup C^k) \neq \emptyset$ }
\label{vg_pcode_sp1:15} 
\State $s_3^k \leftarrow$ any vertex belonging to $\mathcal{CI}(A^k \cup B^k \cup C^k)$ \label{vg_pcode_sp1:16}
\State $S^k \leftarrow \{ s_3^k \}$ 
\Comment{ See Figure \ref{case1} }
\label{vg_pcode_sp1:17}

\Else   \label{vg_pcode_sp1:18} 
\State $s_3^k \leftarrow l(z_k)$
\Comment{ See Figures \ref{case2a}, \ref{case2b}, \ref{case2c} and \ref{case2d} }
\label{vg_pcode_sp1:19}

\If{ $\mathcal{CI}(A^k) \neq \emptyset$ }
\label{vg_pcode_sp1:20} 
\State $s_1^k \leftarrow$ any vertex belonging to $\mathcal{CI}(A^k)$
\Comment{ See Figures \ref{case2a} and \ref{case2c} }
\label{vg_pcode_sp1:21}
\Else \label{vg_pcode_sp1:22}
\State $q' \leftarrow f(z_k)$ \label{vg_pcode_sp1:23}
\While{ $q' \neq l(z_k)$ }  \label{vg_pcode_sp1:24}
\State $q' \leftarrow$ vertex next to $q'$ in counter-clockwise order on $\mathcal{VVP}^{-}(z_k)$ 
\label{vg_pcode_sp1:25}
\If{ $q' = t(x_i^k)$ for some $x_i^k \in A^k$  }
\label{vg_pcode_sp1:26}
\State $s_1^k \leftarrow q'$  
\Comment{ See Figures \ref{case2b} and \ref{case2d} }
\label{vg_pcode_sp1:27}
\State \textbf{break}  \label{vg_pcode_sp1:28}
\EndIf  \label{vg_pcode_sp1:29}
\EndWhile  \label{vg_pcode_sp1:30}
\EndIf \label{vg_pcode_sp1:31}

\If{ $\mathcal{CI}(C^k) \neq \emptyset$ }
\label{vg_pcode_sp1:32} 
\State $s_2^k \leftarrow$ any vertex belonging to $\mathcal{CI}(C^k)$
\Comment{ See Figures \ref{case2a} and \ref{case2b} }
\label{vg_pcode_sp1:33}
\Else \label{vg_pcode_sp1:34}
\State $q' \leftarrow f(z_k)$ \label{vg_pcode_sp1:35}
\While{ $q' \neq l(z_k)$ }  \label{vg_pcode_sp1:36}
\State $q' \leftarrow$ vertex next to $q'$ in counter-clockwise order 
on $\mathcal{VVP}^{-}(z_k)$ 
\If{ $q'$ immediately precedes $t'(x_j^k)$ for some $x_j^k \in C^k$  }
\label{vg_pcode_sp1:38}
\State $s_2^k \leftarrow q'$
\Comment{ See Figures \ref{case2c} and \ref{case2d} }
\label{vg_pcode_sp1:39}
\State \textbf{break}  \label{vg_pcode_sp1:40}
\EndIf  \label{vg_pcode_sp1:41}
\EndWhile  \label{vg_pcode_sp1:42}

\EndIf  \label{vg_pcode_sp1:43}

\State $S^k \leftarrow \{ s_1^k, s_2^k, s_3^k \}$ \label{vg_pcode_sp1:44} 
\EndIf \label{vg_pcode_sp1:45} 

\State $S \leftarrow S \cup S^k$  \label{vg_pcode_sp1:46} 
\State Mark all vertices of $Q_U$ visible from new guards \label{vg_pcode_sp1:47} 
\EndWhile  \label{vg_pcode_sp1:48}

\State \Return the guard set $S$ \label{vg_pcode_sp1:49}
\end{algorithmic}
\end{algorithm}

\begin{theorem}
Algorithm \ref{vg_pcode_sp1} has a worst-case time complexity of $\mathcal{O}(n^4)$.
\end{theorem}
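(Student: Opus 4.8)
The plan is to bound the running time as (number of iterations of the outer while loop at line~\ref{vg_pcode_sp1:3}) $\times$ (worst-case cost of one iteration), and to show that the former is $O(n)$ while the latter is $O(n^3)$. Throughout I would take a single visibility-polygon computation $\mathcal{VP}(\cdot)$, and a single point-to-point visibility query (realised by testing a segment against all edges of $Q$), as the basic $O(n)$-time primitive.

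First I would bound the number of iterations. Each iteration selects a fresh \emph{unmarked} primary vertex $z_k$ on $bd_c(u,v)$ (lines~\ref{vg_pcode_sp1:5}--\ref{vg_pcode_sp1:12}) and, before the iteration ends, places the guard $s_3^k$. In every branch $s_3^k$ lies in $\mathcal{CI}(B^k)$ or in $\mathcal{CI}(A^k\cup B^k\cup C^k)$, and since $z_k\in B^k$ by definition, Lemma~\ref{B_property} guarantees that $z_k$ is visible from $s_3^k$ and is hence marked at line~\ref{vg_pcode_sp1:47}. As a marked vertex can never again be chosen at line~\ref{vg_pcode_sp1:5} or line~\ref{vg_pcode_sp1:10}, each of the at most $n$ vertices of $Q_U$ serves as a primary vertex at most once, so the outer loop executes $O(n)$ times (which incidentally re-establishes $|Z|=O(n)$).

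Next I would bound the cost of a single iteration branch by branch. Locating the primary vertex (lines~\ref{vg_pcode_sp1:7}--\ref{vg_pcode_sp1:12}) scans $O(n)$ candidates, each comparison $l'(q)\prec l'(z_k)$ needing the parent in $SPT(v)$ and a ray extension, for $O(n^2)$ in total; computing $\mathcal{OVV}^{-}(z_k)$ and its partition into $A^k,B^k,C^k$ (lines~\ref{vg_pcode_sp1:13}--\ref{vg_pcode_sp1:14}) requires the visibility data $f(x),l(x),t(x),t'(x)$ of each of the $O(n)$ vertices of $Q_U$, again $O(n^2)$; the inner scans that locate $s_1^k$ and $s_2^k$ (lines~\ref{vg_pcode_sp1:23}--\ref{vg_pcode_sp1:30} and \ref{vg_pcode_sp1:35}--\ref{vg_pcode_sp1:42}) are $O(n^2)$ once the edges $t(x_i^k)t'(x_i^k)$ are precomputed; and the marking step (line~\ref{vg_pcode_sp1:47}) is $O(n^2)$. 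The dominant work is deciding emptiness of the common-intersection sets $\mathcal{CI}(\cdot)$ and exhibiting witness guards (lines~\ref{vg_pcode_sp1:15}--\ref{vg_pcode_sp1:44}): in the worst case this tests each of the $O(n)$ candidate guard positions on $bd_{cc}(u,v)$ for visibility to each of the $O(n)$ vertices of $\mathcal{OVV}^{-}(z_k)$, every such test costing $O(n)$, for $O(n^3)$; since only a constant number of $\mathcal{CI}$ evaluations occur per iteration, the iteration cost is $O(n^3)$. Multiplying by the $O(n)$ iterations gives the claimed $O(n^4)$ bound.

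The main obstacle I anticipate is the exhaustive per-branch accounting of the guard-placement step: I must verify that each of Cases~1 and 2a--2d, including the two inner traversals of $\mathcal{VVP}^{-}(z_k)$, is genuinely subsumed by the $O(n^3)$ estimate and triggers only $O(1)$ common-intersection computations, so that no branch secretly inflates the cost. A secondary subtlety is the iteration-count argument, which hinges entirely on the guarantee above that the primary vertex of an iteration is always marked within that same iteration; I would state this explicitly to rule out a vertex being reselected.
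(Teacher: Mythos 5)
Your proof is correct and follows essentially the same route as the paper, which leaves this particular theorem unproved but establishes the analogous Theorem~\ref{runtime} for Algorithm~\ref{vg_pcode_gen} by the identical decomposition: $\mathcal{O}(n)$ primary vertices times an $\mathcal{O}(n^3)$ per-iteration cost dominated by the constant number of common-intersection computations $\mathcal{CI}(\cdot)$. Your explicit argument that $z_k$ is always marked within its own iteration (via $s_3^k$ and Lemma~\ref{B_property}) is a detail the paper takes for granted, and is a worthwhile addition.
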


\subsection{Placement of guards in the general scenario}
\label{upper_n_lower}

\vspace{-0.33em}
Let us consider the general scenario where $G^L_{opt} \subset G_{opt}$ and $G^U_{opt} \neq \emptyset$. 
If Algorithm \ref{vg_pcode_sp1} is executed in this scenario, 
there may exist a subset $Z' \subseteq Z$ of primary vertices that are visible from an optimal guard belonging to $G^U_{opt}$ (see Figure \ref{inside_guard_role}). 
Therefore, it is necessary to choose both inside and outside vertex guards corresponding to each primary vertex $z_k \in Z$, 
so that it is ensured that distinct optimal guards are required for guarding every two primary vertices. 
So, keeping this in mind, let us modify Algorithm \ref{vg_pcode_sp1} so that, 
in addition to the three outside guards $s^k_1$, $s^k_2$ and $s^k_3$ 
it also places at most three inside guards $s^k_4$, $s^k_5$ and $s^k_6$ for every $z_k \in Z$. \\

\vspace*{-0.88em}
\begin{figure}[H]
  \centerline{\includegraphics[width=0.66\textwidth]{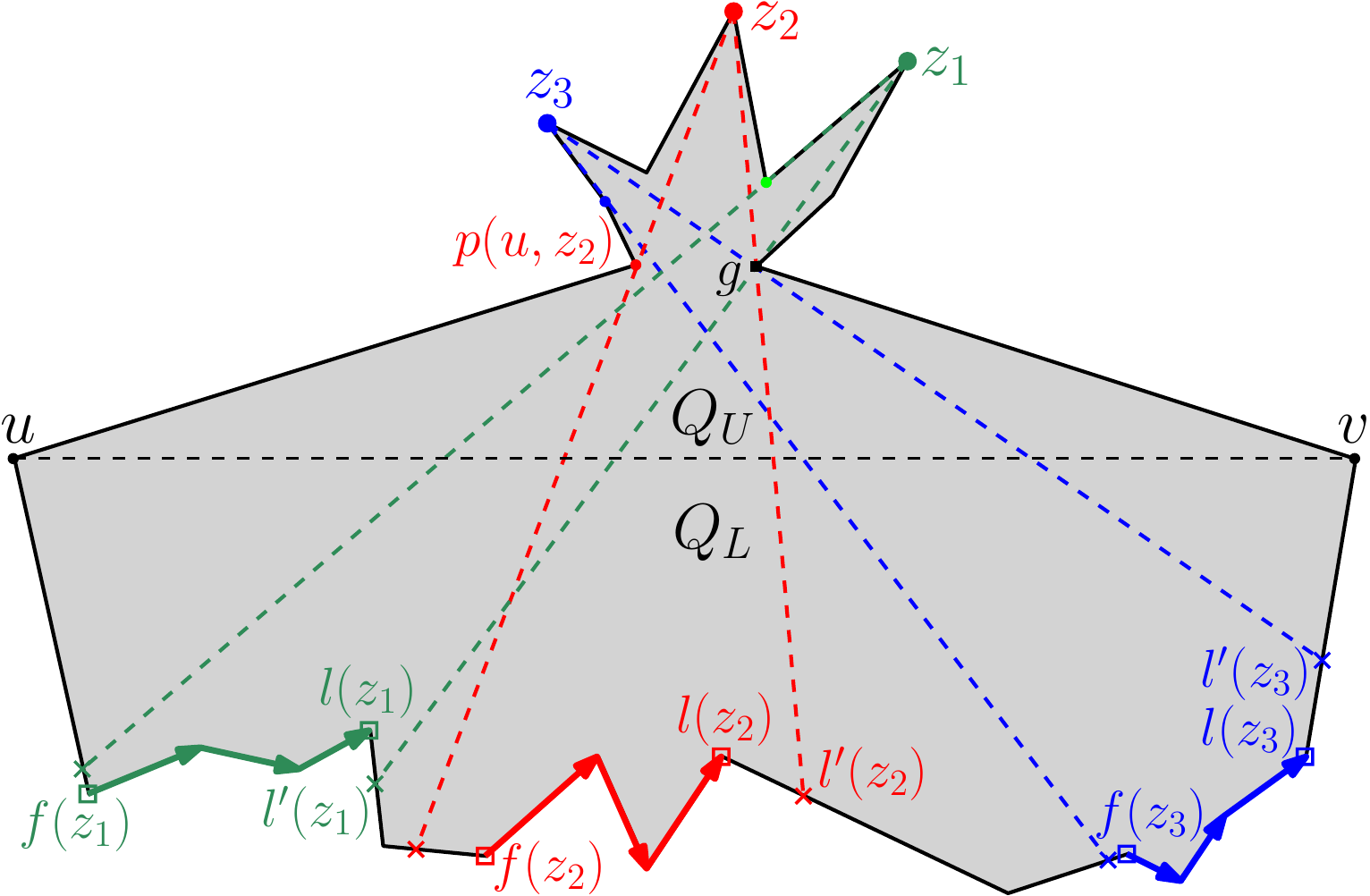}}
  \caption{ A guard $g$ on $bd_c(u,v)$ sees $z_1$, $z_2$ and $z_3$, but three guards are necessary on 
  $bd_{cc}(f(z_1),l(z_1))$, $bd_{cc}(f(z_2),l(z_2))$ and $bd_{cc}(f(z_3),l(z_3))$ respectively to see them. }
  \label{inside_guard_role}
\end{figure}

\vspace{-0.77em}
For any primary vertex $z_k$, let us denote by $\mathcal{OVV}^{+}(z_k)$ the set of unmarked vertices of $Q_U$ whose inward visible vertices overlap with those of $z_k$. 
In other words, 
\vspace{-0.3em}
$$ \mathcal{OVV}^{+}(z_k) = \{ x \in \mathcal{V}(Q_U) : \mbox{ $x$ is unmarked, and } \mathcal{VVP}^{+}(z_k) \cap \mathcal{VVP}^{+}(x) \neq \emptyset \} $$
Note that all vertices of $bd_c(p(u,z_k),z_k)$ may not belong to $\mathcal{OVV}^{+}(z_k)$. 
Further, every vertex of $\mathcal{OVV}^{+}(z_k)$ is at a link distance of 1 from some vertex of $\mathcal{VVP}^{+}(z_k)$ and at a link distance of 2 from $z_k$. 
In the modified algorithm, two inside guards $s_4^k$ and $s_5^k$ are placed at $p(u,z_k)$ and $p(v,z_k)$ respectively, for every primary vertex $z_k \in Z$.
For the placement of an additional inside guard $s^k_6$, 
consider the following cases.
\begin{description}
\item [Case 1:] 
All vertices of $\mathcal{OVV}^{+}(z_k)$ 
are visible from $s^k_4$ or $s^k_5$. 
So, placement of $s_6^k$ is not required.

\vspace{-0.3em}
\item [Case 2:] 
If all vertices of $\mathcal{OVV}^{+}(z_k)$ 
are not visible from $s^k_4$ or $s^k_5$, 
and there exists one common vertex that sees all vertices of $\mathcal{OVV}^{+}(z_k)$, 
then $s_6^k$ is placed on that common vertex.

\vspace{-0.3em}
\item [Case 3:] 
If all vertices of $\mathcal{OVV}^{+}(z_k)$ are 
not visible from $s^k_4$ or $s^k_5$, 
and there does not exist any common vertex that sees all vertices of $\mathcal{OVV}^{+}(z_k)$ (see Figure \ref{3.4_case3}), 
then $s_6^k$ is placed as follows.
The vertex guard $s_6^k$ is placed  at the farthest vertex $w_k$ of $\mathcal{VVP}^{+}(z_k)$ in clockwise order, starting from $p(u,z_k)$, 
so that it can see all vertices of $\mathcal{OVV}^{+}(z_k)$ that are visible from any vertex of $\mathcal{VVP}^{+}(z_k)$ lying on $bd_c(p(u,z_k),w_k)$ (see Figure \ref{3.4_case3_shift}).
\end{description}

Let us discuss these cases in the presence of guards of $G^U_{opt}$.
Assume that the current primary vertex $z_k$ is visible from an optimal guard $g \in G^U_{opt}$. 
So, $z_k \in Z^U$ and $g \in \mathcal{VVP}^{+}(z_k)$.
Thus, if Case 1 true for $z_k$, then 
all visible vertices of $\mathcal{OVV}^{+}(z_k)$ from $g$ are also visible from $s_4^k$ and $s_5^k$.
Similarly, if Case 2 holds for $z_k$, then 
all visible vertices of $\mathcal{OVV}^{+}(z_k)$ from $g$ are also visible from $s_6^k$, $s_4^k$ or $s_5^k$.
However, if Case 3 holds for $z_k$, then there exists a non-empty subset of vertices  $U^k \subset \mathcal{OVV}^{+}(z_k)$ that are visible from $g$ but not visible from $s_6^k$, $s_4^k$ or $s_5^k$. 
If $y_i^k \in U^k$ does not become a primary vertex later, 
then $y_i^k$ is guarded by guards placed due to some other primary vertex.
Moreover, if this happens for every $y_i^k \in U^k$, then no additional inside guard on $\mathcal{VVP}^{+}(z_k)$ is required. 

\vspace*{-0.66em}
\begin{figure}[H]
\begin{minipage}{0.49\textwidth}
  \centerline{\includegraphics[width=1.11\textwidth]{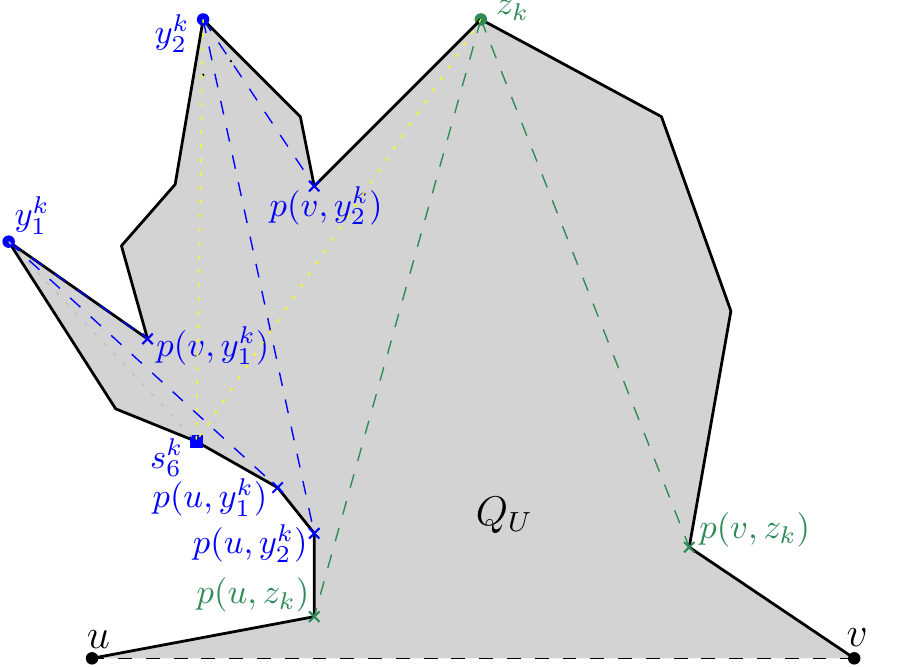}}
  \caption{Vertices $y_1^k$ and $y_2^k$ can be guarded by an inside guard $s_6^k$, but they are not visible from guards at $s_4^k = p(u,z_k)$ and $s_5^k = p(v,z_k)$.}
  \label{3.4_case2}
\end{minipage}
\hspace*{0.01\textwidth}
\begin{minipage}{0.49\textwidth}
  \centerline{\includegraphics[width=1.11\textwidth]{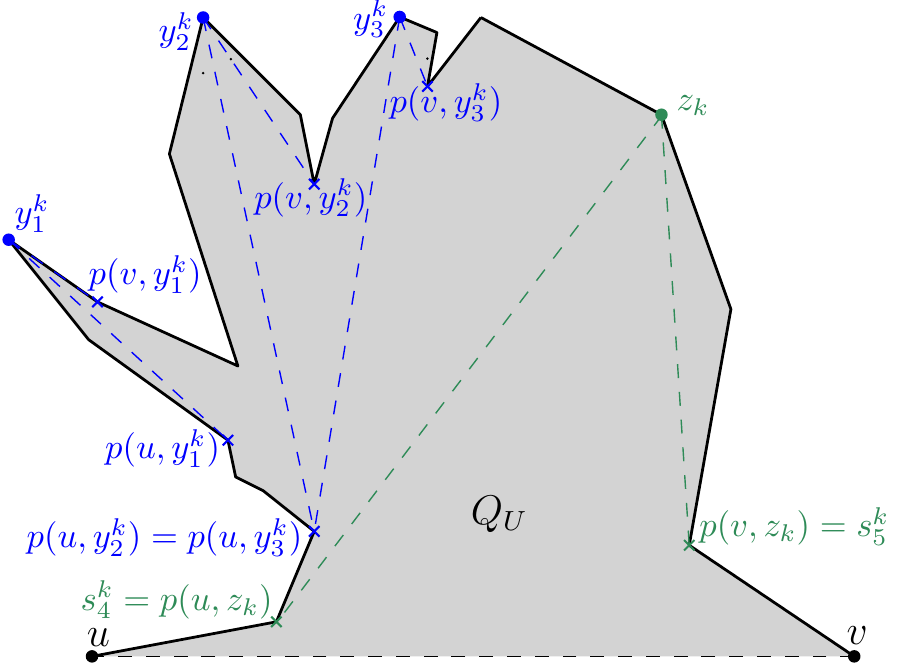}}
  \caption{Vertices $y_1^k$ and $y_2^k$ cannot be guarded by inside guards at $s_4^k = p(u,z_k)$ and $s_5^k = p(v,z_k)$. Moreover, no single additional guard $s_k^6$ can see both of them.}
  \label{3.4_case3}
\end{minipage}
\end{figure}

\vspace*{-0.66em}
Consider the other case where there exists two primary vertices $z_j$ and $z_m$, where $m > j > k$, such that $z_j, z_m \in U^k$.
Consider the other primary vertex $z_m$, where $m > j$, such that $z_m \in \mathcal{OVV}^{+}(z_k)$. 
Since $z_j$ and $z_m$ are later primary vertices, we know that neither $z_j$ nor $z_m$ is visible from $s_6^k$, $s_4^k$ or $s_5^k$.
If $z_j$ is visible from $g\in G^U_{opt}$, then $g \in bd_c(p(u,z_j),p(v,z_j))$ and $g \in \mathcal{VVP}^{+}(z_j)$.
Similarly, if $z_m$ is visible from $g$, then $g \in bd_c(p(u,z_m),p(v,z_m))$ and $g \in \mathcal{VVP}^{+}(z_m)$. 
Now, if $bd_c(p(u,z_j),p(v,z_j))$ and $bd_c(p(u,z_m),p(v,z_m))$ are disjoint parts of $bd_c(p(u,z_k),z_k)$ (see Figure \ref{inner_disjoint}), then $g$ cannot simultaneously belong to $bd_c(p(u,z_j),p(v,z_j))$ and $bd_c(p(u,z_m),p(v,z_m))$, and therefore needs another optimal 
$g'$ lying on $bd_c(p(u,z_k),z_k)$ in order to guard both.
Consider the special case where $p(u,z_j) = p(v,z_m)$, and so $bd_c(p(u,z_j),p(v,z_j))$ and $bd_c(p(u,z_m),p(v,z_m))$ are not totally disjoint (see Figure \ref{inner_shared}). In this case, if $g$ has to simultaneously belong to $bd_c(p(u,z_j),p(v,z_j))$ and $bd_c(p(u,z_j),p(v,z_j))$, then the only possibility for $g$ is to lie on $p(u,z_j) = p(v,z_m)$. But in this case, $z_m$ cannot be a primary vertex later, since it becomes visible from $s_4^j = p(u,z_j)$, and hence marked. Finally, consider the case where $bd_c(p(u,z_m),p(v,z_m))$ is a part of $bd_c(p(u,z_j),z_j)$ (see Figure \ref{inner_nested}). 
Even in this case, there exists no vertex on $bd_c(p(u,z_m),p(v,z_m))$ which can see both $z_m$ and $z_j$. Therefore, $g$ cannot simultaneously see both $z_j$ and $z_m$. We summarize these observations in the following lemma.

\begin{lemma} \label{disjoint}
If three primary vertices $z_k$, $z_j$ and $z_m$, where $k < j < m$, are such that $bd_c(p(u,z_j),p(v,z_j))$ and $bd_c(p(u,z_m),p(v,z_m))$ are both part of $bd_c(p(u,z_k),z_k)$, then an optimal guard $g \in G^U_{opt}$ that sees $z_k$ cannot also see both $z_j$ and $z_m$.
\end{lemma}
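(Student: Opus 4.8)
The plan is to argue by contradiction: suppose some single optimal guard $g \in G^U_{opt}$ sees all three of $z_k$, $z_j$ and $z_m$. The first step is to pin down where $g$ can lie. Since $g$ is a vertex guard belonging to $Q_U$, it lies on $bd_c(u,v)$, and as visibility is symmetric, $g$ is a vertex of $bd_c(u,v)$ visible from each of $z_k, z_j, z_m$. Invoking the structural fact established earlier in Section \ref{vertex_algo} --- that the only portion of $bd_c(u,v)$ carrying vertices visible from a vertex $x$ is $bd_c(p(u,x),p(v,x))$ --- I would conclude that $g$ must lie simultaneously on the three arcs $bd_c(p(u,z_k),p(v,z_k))$, $bd_c(p(u,z_j),p(v,z_j))$ and $bd_c(p(u,z_m),p(v,z_m))$. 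By hypothesis the latter two arcs lie inside $bd_c(p(u,z_k),z_k)$, which is itself part of the first arc; hence the binding requirement reduces to $g$ lying on both $I_j := bd_c(p(u,z_j),p(v,z_j))$ and $I_m := bd_c(p(u,z_m),p(v,z_m))$ while \emph{genuinely} seeing $z_j$ and $z_m$.

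Next I would perform a case analysis on the relative placement of the sub-arcs $I_j$ and $I_m$ inside $bd_c(p(u,z_k),z_k)$. If $I_j$ and $I_m$ are disjoint, no common vertex exists, so $g$ cannot lie on both and we are done immediately. If they meet at a single shared endpoint with $p(u,z_j) = p(v,z_m)$, then the only possible location for $g$ is that shared vertex; here I would use the algorithm's behaviour: while processing the earlier primary vertex $z_j$, the modified procedure of Section \ref{upper_n_lower} (built on Algorithm \ref{vg_pcode_sp1}) places the inside guard $s^j_4 = p(u,z_j)$, and since $z_m$ sees its own shortest-path parent $p(v,z_m) = p(u,z_j) = s^j_4$, the vertex $z_m$ would be marked before it could ever be selected as a primary vertex --- contradicting $z_m \in Z$ with $m > j$.

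The hard part will be the remaining nested configuration $I_m \subseteq bd_c(p(u,z_j),z_j)$, in which $z_m$ and its entire visible arc lie strictly to the left of $z_j$; here I must show that no vertex of $I_m$ can see $z_j$ and $z_m$ at once. The intended mechanism is the reflex structure encoded by the shortest path trees: $p(v,z_m)$, the right endpoint of $z_m$'s upper visible arc, is precisely the vertex at which the path from $v$ turns just before reaching $z_m$, so it acts as a reflex blocker that prevents $z_m$ from seeing anything on $bd_c(u,v)$ beyond it, while any candidate common viewpoint $w \in I_m$ sits to the left of $z_j$ with $p(v,z_m)$ lying between $w$ and $z_j$ along the boundary. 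I would formalise that the segment $w z_j$ is forced to cross the pocket created at $p(v,z_m)$ --- equivalently, that a $w$ seeing both $z_m$ and $z_j$ would make $z_m$ visible from one of the guards $p(u,z_j)$, $p(v,z_j)$ placed for $z_j$ --- thereby contradicting either the visibility of $w$ or the fact that $z_m$ survives as a later unmarked primary vertex. Proper-overlap configurations and the mirror orientations of the touching and nesting cases reduce to these same three arguments by symmetry. The delicate point, and the one the paper states as geometrically evident, is exactly this blocking claim in the nested case, so that is where I would concentrate the rigorous geometric work, most plausibly through a turning/shortest-path argument anchored at the reflex vertex $p(v,z_m)$.
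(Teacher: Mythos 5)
Your proposal follows essentially the same route as the paper: the same three-way case analysis on the relative position of the arcs $bd_c(p(u,z_j),p(v,z_j))$ and $bd_c(p(u,z_m),p(v,z_m))$ inside $bd_c(p(u,z_k),z_k)$ (disjoint, sharing the single vertex $p(u,z_j)=p(v,z_m)$, or nested), with the disjoint case immediate, the shared-endpoint case disposed of by observing that the inside guard $s^j_4=p(u,z_j)$ would have marked $z_m$ before it could become a primary vertex, and the nested case resting on the claim that no vertex of the inner arc can see both $z_j$ and $z_m$. The one step you explicitly leave as a sketch --- the reflex-blocking argument at $p(v,z_m)$ in the nested configuration --- is exactly the step the paper itself asserts without further justification, so your treatment matches the paper's argument and its level of rigour.
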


\begin{corollary} \label{atmost2}
Any optimal guard $g \in G^U_{opt}$ can see at most two primary vertices. 
\end{corollary}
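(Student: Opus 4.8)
The plan is to derive the corollary from Lemma~\ref{disjoint} by contradiction. Suppose some guard $g \in G^U_{opt}$ sees three primary vertices; among all primary vertices seen by $g$, let $z_k, z_j, z_m$ be three of them with $k < j < m$, where $z_k$ is chosen in the earliest iteration. Since $g$ is an inside guard on $bd_c(u,v)$ that sees each of these, we have $g \in \mathcal{VVP}^{+}(z_k) \cap \mathcal{VVP}^{+}(z_j) \cap \mathcal{VVP}^{+}(z_m)$; in particular each pairwise intersection is non-empty. The goal is then to show that the hypothesis of Lemma~\ref{disjoint} is satisfied, namely that $bd_c(p(u,z_j),p(v,z_j))$ and $bd_c(p(u,z_m),p(v,z_m))$ are both parts of $bd_c(p(u,z_k),z_k)$; once this is established, Lemma~\ref{disjoint} immediately yields that $g$, which sees $z_k$, cannot also see both $z_j$ and $z_m$, contradicting the choice of $g$.

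First I would argue that $z_j, z_m \in \mathcal{OVV}^{+}(z_k)$. Since $j, m > k$, both $z_j$ and $z_m$ are still unmarked at iteration $k$ (marking is monotone, and they remain unmarked until they are themselves chosen), and the non-emptiness of $\mathcal{VVP}^{+}(z_k) \cap \mathcal{VVP}^{+}(z_j)$ and of $\mathcal{VVP}^{+}(z_k) \cap \mathcal{VVP}^{+}(z_m)$ places both of them in $\mathcal{OVV}^{+}(z_k)$ by definition. Moreover, because $z_j$ and $z_m$ survive as later primary vertices, they are not marked by the inside guards $s^k_4 = p(u,z_k)$, $s^k_5 = p(v,z_k)$, or $s^k_6$ placed when processing $z_k$; hence they belong to the leftover set $U^k$ analysed in Case~3 of Section~\ref{upper_n_lower}.

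Next I would establish the containment required by Lemma~\ref{disjoint}. Applying Lemma~\ref{no_unmarked_right} at iteration $k$, no unmarked vertex lies on $bd_c(z_k,p(v,z_k))$, so neither $z_j$ nor $z_m$ can lie on that arc; combined with the fact that each is seen from inside by $g$ (which itself lies on $bd_c(p(u,z_k),p(v,z_k))$ because $g$ sees $z_k$) and with the rule that primary vertices are chosen by increasing $l'$, this should force $z_j$ and $z_m$, and consequently their inside-visibility windows $bd_c(p(u,z_j),p(v,z_j))$ and $bd_c(p(u,z_m),p(v,z_m))$, to lie within $bd_c(p(u,z_k),z_k)$. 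With the hypothesis of Lemma~\ref{disjoint} verified, the lemma applies to the triple $z_k, z_j, z_m$ and contradicts the assumption that $g$ sees all three, so no guard in $G^U_{opt}$ can see more than two primary vertices.

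The step I expect to be the main obstacle is this containment argument: rigorously ruling out the possibility that $z_j$ or $z_m$, or the far endpoints $p(v,z_j)$ and $p(v,z_m)$ of their windows, lie on the $v$-side beyond $p(v,z_k)$ rather than on $bd_c(p(u,z_k),z_k)$. This is precisely where the interaction between the $l'$-minimality rule for selecting primary vertices and the marking guaranteed by Lemma~\ref{no_unmarked_right} must be exploited carefully, since it is this containment alone that licenses the clean disjoint/shared-endpoint/nested trichotomy underlying Lemma~\ref{disjoint}.
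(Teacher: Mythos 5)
Your proposal follows exactly the paper's route: the corollary is stated there as an immediate consequence of Lemma~\ref{disjoint}, with the preceding discussion (the disjoint/shared-endpoint/nested trichotomy for the windows $bd_c(p(u,z_j),p(v,z_j))$ and $bd_c(p(u,z_m),p(v,z_m))$ of two later primary vertices lying inside $bd_c(p(u,z_k),z_k)$) serving as the only justification. The containment step you rightly flag as the main obstacle --- verifying that the hypothesis of Lemma~\ref{disjoint} actually holds for any three primary vertices seen by a common guard $g\in G^U_{opt}$ --- is likewise left implicit in the paper, so your argument is as complete as the paper's own.
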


The above corollary leads to the following theorem.

\begin{theorem} \label{Z^U_bound}  
For Algorithm \ref{vg_pcode_gen}, $|S| \leq 6\cdot|Z^U| \leq 12\cdot|G^U_{opt}|$.
\end{theorem}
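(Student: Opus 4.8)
The plan is to split the claimed chain $|S| \le 6\cdot|Z^U| \le 12\cdot|G^U_{opt}|$ into its two constituent inequalities and prove each by an elementary counting argument, exactly mirroring the treatment of the all-lower scenario that culminated in Theorem \ref{Z_ub}. The left inequality is a per-primary-vertex guard count, and the right inequality is a covering bound coming from Corollary \ref{atmost2}. Throughout I would work in the upper counterpart of the special scenario of Section \ref{only_lower}, namely the case $G_{opt}=G^U_{opt}$ (so that $G^L_{opt}=\emptyset$ and hence every primary vertex is seen only by an inside optimal guard, giving $Z=Z^U$); the truly general bound is then assembled from this pure-upper case together with the pure-lower case already handled by Theorem \ref{Z_ub}.

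First I would establish $|S|\le 6\cdot|Z^U|$. Algorithm \ref{vg_pcode_gen} is obtained from Algorithm \ref{vg_pcode_sp1} by augmenting, for each primary vertex $z_k$, the three outside guards $s^k_1,s^k_2,s^k_3$ with at most three inside guards $s^k_4=p(u,z_k)$, $s^k_5=p(v,z_k)$ and $s^k_6$ (the last placed only in Cases 2 and 3). Hence each iteration of the while-loop, i.e. each primary vertex, contributes at most six guards to $S$, so $|S|\le 6\cdot|Z|$. Since in the scenario under consideration $Z=Z^U$, this already yields $|S|\le 6\cdot|Z^U|$. The only point needing care here is to confirm that no iteration ever adds more than six guards; in particular, that the subtle Case~3 placement of the single guard $s^k_6$ at the farthest vertex $w_k$ of $\mathcal{VVP}^{+}(z_k)$ does not introduce additional guards.

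Next I would establish $|Z^U|\le 2\cdot|G^U_{opt}|$ by double counting the incidences of the bipartite visibility relation between $Z^U$ and $G^U_{opt}$. By the definition of $Z^U$, each primary vertex $z\in Z^U$ is visible from at least one guard of $G^U_{opt}$; assign to each such $z$ one witnessing guard. By Corollary \ref{atmost2}, every guard $g\in G^U_{opt}$ is the witness of at most two primary vertices. Counting the assigned pairs two ways gives $|Z^U|\le 2\cdot|G^U_{opt}|$, and multiplying by six yields $6\cdot|Z^U|\le 12\cdot|G^U_{opt}|$. Chaining the two inequalities then proves the theorem.

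The main obstacle is not in this final assembly, which is routine, but in the geometric fact underpinning Corollary \ref{atmost2}, namely Lemma \ref{disjoint}: if two later primary vertices $z_j,z_m$ both lie in $\mathcal{OVV}^{+}(z_k)$ and their inside-visibility windows $bd_c(p(u,z_j),p(v,z_j))$ and $bd_c(p(u,z_m),p(v,z_m))$ are both contained in $bd_c(p(u,z_k),z_k)$, then no single inside guard can see all three. The delicate part there is the exhaustive case analysis on how these two windows are situated, viz. fully disjoint, sharing a single endpoint $p(u,z_j)=p(v,z_m)$, or nested, and using in each case that a later primary vertex is unmarked, hence invisible to the guards $s^j_4=p(u,z_j)$, $s^j_5$ and $s^j_6$ already placed, to rule out a common inside guard. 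Since Lemma \ref{disjoint} and Corollary \ref{atmost2} are available, I would merely invoke them, leaving the present proof to the two short counting steps above.
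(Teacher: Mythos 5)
Your proposal is correct and follows essentially the same route as the paper, which derives this theorem directly from Corollary \ref{atmost2} (each guard of $G^U_{opt}$ sees at most two primary vertices, hence $|Z^U|\leq 2\cdot|G^U_{opt}|$) together with the count of at most six guards per primary vertex; your explicit double-counting of witness assignments and your observation that the left inequality should be read in the scenario $Z=Z^U$ merely make precise what the paper leaves implicit.
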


\begin{algorithm}[H]
\caption{An $\mathcal{O}(n^4)$-algorithm for computing a guard set $S$ for all vertices of $Q_U$} 
\label{vg_pcode_gen}
\begin{algorithmic}[1]

\State Initialize $k \leftarrow 0$ and $S \leftarrow \emptyset$  \label{vg_pcode_gen:1}
\State Initialize all vertices of $Q_U$ as unmarked  \label{vg_pcode_gen:2}
\State Compute $SPT(u)$ and $SPT(v)$ \label{vg_pcode_gen:3}

\While { there exists an unmarked vertex in $Q_U$} 
\label{vg_pcode_gen:4}
\State Set $k \leftarrow k + 1$  \Comment{ Variable $k$ keeps count of the current iteration}  \label{vg_pcode_gen:5}
\State $z_k \leftarrow$ the first unmarked vertex of $bd_c(u,v)$ in clockwise order  \label{vg_pcode_gen:6}

\State $q \leftarrow z_k$ \label{vg_pcode_gen:7}
\While{ $q \neq v$ }  \label{vg_pcode_gen:8}
\State $q \leftarrow$ vertex next to $q$ in clockwise order on $bd_c(u,v)$   \label{vg_pcode_gen:9}
\If{ $q$ is unmarked and $l'(q) \prec l'(z_k)$ }  \label{vg_pcode_gen:10}
\State $z_k \leftarrow q$
\Comment{Update $z_k$ to $q$ whenever $q$ is unmarked and $l'(q) \prec l'(z_k)$}
\label{vg_pcode_gen:11}
\EndIf  \label{vg_pcode_gen:12}
\EndWhile  \Comment{ Variable $z_k$ is now the primary vertex for the current iteration} \label{vg_pcode_gen:13}

\State $s_4^k \leftarrow p(u,z_k)$, $s_5^k \leftarrow p(v,z_k)$, 
$S^k \leftarrow \{ s_4^k, s_5^k \} $  
\Comment{ See Figures \ref{3.4_case2} and \ref{3.4_case3} }
\label{vg_pcode_gen:14}
\State Mark all vertices of $Q_U$ visible from guards currently in $S^k$ \label{vg_pcode_gen:15}  

\State Compute the ordered set $\mathcal{OVV}^{+}(z_k) = \{y^k_1,y^k_2,\ldots,y^k_{n(k)}\}$ 
\label{vg_pcode_gen:16}
\If { $\mathcal{OVV}^{+}(z_k) \neq \emptyset$ } \label{vg_pcode_gen:17}

\If { $\mathcal{CI}(\mathcal{OVV}^{+}(z_k)) \neq \emptyset$ } \label{vg_pcode_gen:18}
\State $s_6^k \leftarrow$ any vertex of $\mathcal{CI}(\mathcal{OVV}^{+}(z_k))$ 
\Comment{See Figure \ref{3.4_case2} } 
\label{vg_pcode_gen:19}

\Else \Comment{where $\mathcal{CI}(\mathcal{OVV}^{+}(z_k)) = \emptyset$} 
\Comment{See Figure \ref{3.4_case3} } \label{vg_pcode_gen:20}
\State $q \leftarrow p(u,z_k)$, $VSF = \emptyset$ \label{vg_pcode_gen:21}
\While{ ($q \neq z_k$) and every vertex of $VSF$ is visible from $q$ }  \label{vg_pcode_gen:22}
\State $s_6^k \leftarrow q$ \label{vg_pcode_gen:23}
\State $q \leftarrow$ vertex of $\mathcal{VVP}^{+}(z_k)$ next to $q$ in clockwise order on $bd_c(u,v)$   \label{vg_pcode_gen:24}
\State $VSF \leftarrow VSF \cup (\mathcal{VP}(s_6^k) \cap \mathcal{OVV}^{+}(z_k))$  
\Comment{See Figures \ref{3.4_case3_shift}, \ref{inner_disjoint} and \ref{inner_nested} } \label{vg_pcode_sp1:25}
\EndWhile  \label{vg_pcode_gen:26}

\EndIf \label{vg_pcode_gen:27}

\State $S^k \leftarrow S^k \cup \{ s_6^k \} $  \label{vg_pcode_gen:28}
\EndIf \label{vg_pcode_gen:29}
\State Mark all vertices of $Q_U$ visible from guards currently in $S^k$ \label{vg_pcode_gen:30}  

\State Compute the ordered set $\mathcal{OVV}^{-}(z_k) = \{x^k_1,x^k_2,\ldots,x^k_{m(k)}\}$ 
\label{vg_pcode_gen:31} 
\State Partition $\mathcal{OVV}^{-}(z_k)$ into the sets $A^k$, $B^k$ and $C^k$ 
\label{vg_pcode_gen:32} 
\State Compute $s_1^k$, $s_2^k$, and $s_3^k$ as per Algorithm \ref{vg_pcode_sp1}  \label{vg_pcode_gen:33} 
\State $S^k \leftarrow S^k \cup \{ s_1^k, s_2^k, s_3^k \}$ \label{vg_pcode_gen:34}
\State Mark all vertices of $Q_U$ visible from guards currently in $S^k$ \label{vg_pcode_gen:35}  

\State $S \leftarrow S \cup S^k$  \label{vg_pcode_gen:36} 
\EndWhile  \label{vg_pcode_gen:37}

\State \Return the guard set $S$ \label{vg_pcode_gen:38}
\end{algorithmic}
\end{algorithm}

\begin{theorem} \label{gen_bound}
For Algorithm \ref{vg_pcode_gen}, $|S| \leq 6\cdot|Z| \leq 12\cdot|G_{opt}|$.
\end{theorem}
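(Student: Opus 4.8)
The plan is to assemble the final bound from three ingredients already in place, following the master inequality \eqref{ineq_final1} with $c=6$ and $c_1=c_2=2$. First I would establish $|S|\le 6\cdot|Z|$, which is \eqref{num_guards}. By inspection of Algorithm \ref{vg_pcode_gen}, for each primary vertex $z_k\in Z$ the algorithm places at most the three outside guards $s_1^k,s_2^k,s_3^k$ (computed exactly as in Algorithm \ref{vg_pcode_sp1}) together with at most the three inside guards $s_4^k=p(u,z_k)$, $s_5^k=p(v,z_k)$ and $s_6^k$. Hence $|S^k|\le 6$ for every iteration $k$, and summing over all primary vertices gives $|S|\le 6\cdot|Z|$.

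Next I would bound $|Z|$ in terms of the optimum by splitting via $Z=Z^U\cup Z^L$ from \eqref{Z_split}, so that $|Z|\le|Z^U|+|Z^L|$, and bounding each part separately. For the upper part, Corollary \ref{atmost2} asserts that any optimal guard $g\in G^U_{opt}$ sees at most two primary vertices; since every $z_k\in Z^U$ is by definition seen by some guard of $G^U_{opt}$, charging each such primary vertex to a guard seeing it yields $|Z^U|\le 2\cdot|G^U_{opt}|$, which is \eqref{upper_ineq} with $c_2=2$. For the lower part, I would reuse the counting machinery of Section \ref{only_lower}: every $z_k\in Z^L$ is seen by some guard of $G^L_{opt}$, and the decomposition into $A^k,B^k,C^k$ and the maximum disjoint subsets $L^k,R^k$ depend only on the outside-visibility structure across the chord $uv$, which is unchanged by the presence of $G^U_{opt}$. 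Consequently the chain of Lemmas \ref{GB_bound}--\ref{Gopt_lb}, applied to the primary vertices of $Z^L$ against the optimal set $G^L_{opt}$, gives $|Z^L|\le 2\cdot|G^L_{opt}|$, which is \eqref{lower_ineq} with $c_1=2$.

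Finally I would combine these exactly as in \eqref{ineq_final1}, using the fact that $G^U_{opt}$ and $G^L_{opt}$ partition $G_{opt}$:
\[
|S|\;\le\;6\cdot|Z|\;\le\;6\bigl(|Z^U|+|Z^L|\bigr)\;\le\;6\bigl(2\cdot|G^U_{opt}|+2\cdot|G^L_{opt}|\bigr)\;=\;12\bigl(|G^U_{opt}|+|G^L_{opt}|\bigr)\;=\;12\cdot|G_{opt}|,
\]
which is the claimed inequality. I expect the main obstacle to be the lower-part step. Lemma \ref{Gopt_lb} was originally proved under the assumption $G_{opt}=G^L_{opt}$ (equivalently $Z=Z^L$), so I must verify that its proof survives restriction to $Z^L$ when upper guards are present. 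The delicate points are that a primary vertex may be seen simultaneously by an upper and a lower optimal guard, so I must confirm that the compensation arguments of Lemmas \ref{GA_bound} and \ref{GC_bound}—which charge an additional guard for the leading vertices $t1(k),t'1(k)\in Y_A,Y_C$—still charge only to members of $G^L_{opt}$, and that counting an overlapping primary vertex once in each of $Z^U$ and $Z^L$ (tolerated by the inequality in \eqref{Z_split}) does not disturb the disjointness and nesting relied upon in Lemmas \ref{opt_both} and \ref{L^k_opt_lb}--\ref{R^k_opt_lb}.
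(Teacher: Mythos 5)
Your proposal follows essentially the same route as the paper's proof: the paper likewise combines $|S|\le 6\cdot|Z|$ (six guards per primary vertex) with $|Z^L|\le 2\cdot|G^L_{opt}|$ from Theorem \ref{Z_ub} and $|Z^U|\le 2\cdot|G^U_{opt}|$ from Theorem \ref{Z^U_bound}, then sums via $Z=Z^U\cup Z^L$. Your closing caveat about whether Lemma \ref{Gopt_lb} survives the presence of upper guards is a legitimate concern that the paper's own proof does not explicitly address, but it does not change the argument's structure.
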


\begin{proof}
From Theorem \ref{Z_ub}, we know that $|Z^L| \leq 2\cdot|G^L_{opt}|$. Similarly, from Theorem \ref{Z^U_bound}, we know that $|Z^U| \leq 2\cdot|G^U_{opt}|$. 
Further, we know that a maximum of 6 vertex guards, viz. $s_1^k$, $s_2^k$, $s_3^k$, $s_4^k$, $s_5^k$ and $s_6^k$, are chosen for each primary vertex $z_k \in Z$, and so $|S| \leq 6\cdot|Z|$. 
Combining all the above inequalities, we obtain:
$ |S| \leq 6\cdot|Z| \leq 6\cdot(|Z^U|+|Z^L|) \leq 12\cdot(|G^U_{opt}|+|G^L_{opt}|) \leq 12\cdot|G_{opt}| $
\end{proof}

\begin{figure}[H]
\begin{minipage}{0.56\textwidth}
  \centerline{\includegraphics[width=1.11\textwidth]{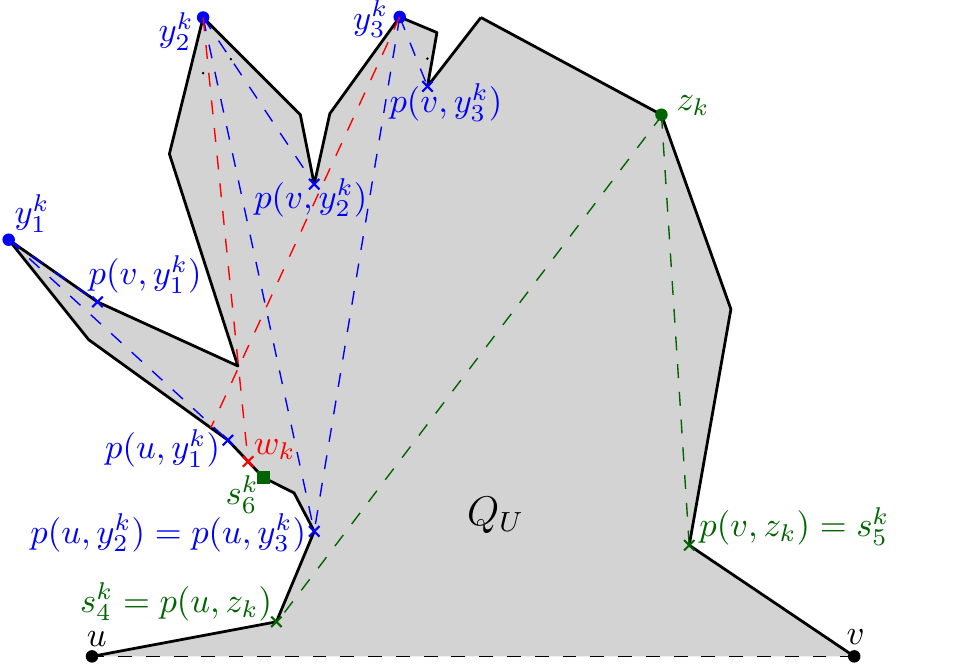}}
  \caption{ The guard $s_6^k$ is placed at the farthest vertex from $p(u,z_k)$ beyond which it cannot move without losing the visibility of $y_2^k \in \mathcal{OVV}^{+}(z_k)$. }
  \label{3.4_case3_shift}
\end{minipage}
\hspace*{0.01\textwidth}
\begin{minipage}{0.43\textwidth}
  \centerline{\includegraphics[width=\textwidth]{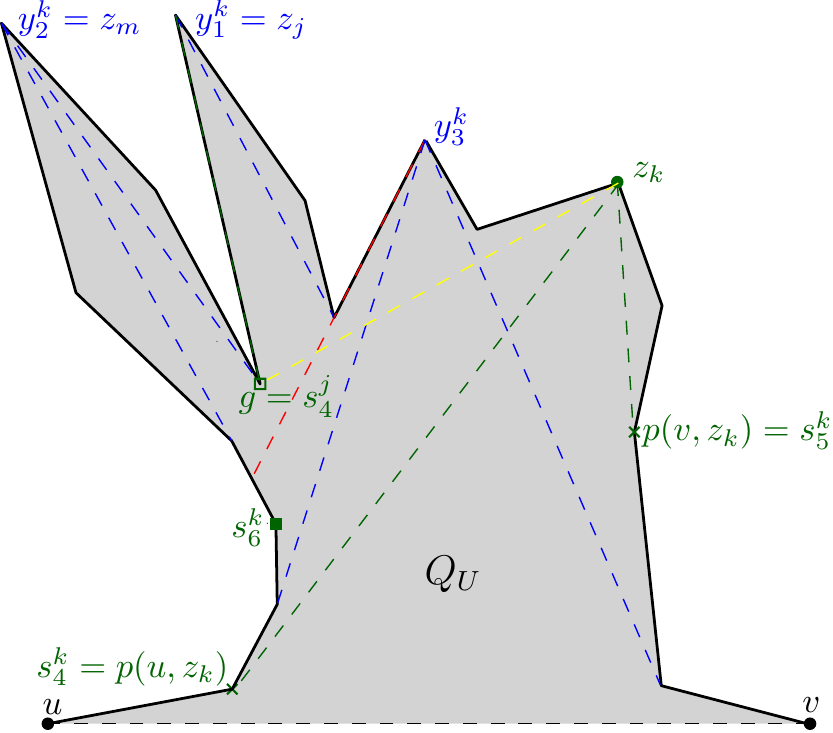}}
  \caption{ The two boundaries $bd_c(p(u,z_m),p(v,z_m))$ and $bd_c(p(u,z_j),p(v,z_j))$, that are both a part of $bd_c(p(u,z_k),z_k)$, share only one vertex $p(u,z_j) = p(v,z_m) = g = s_4^j$. }
  \label{inner_shared}
\end{minipage}
\end{figure}

\begin{figure}[H]
\begin{minipage}{0.49\textwidth}
  \centerline{\includegraphics[width=0.99\textwidth]{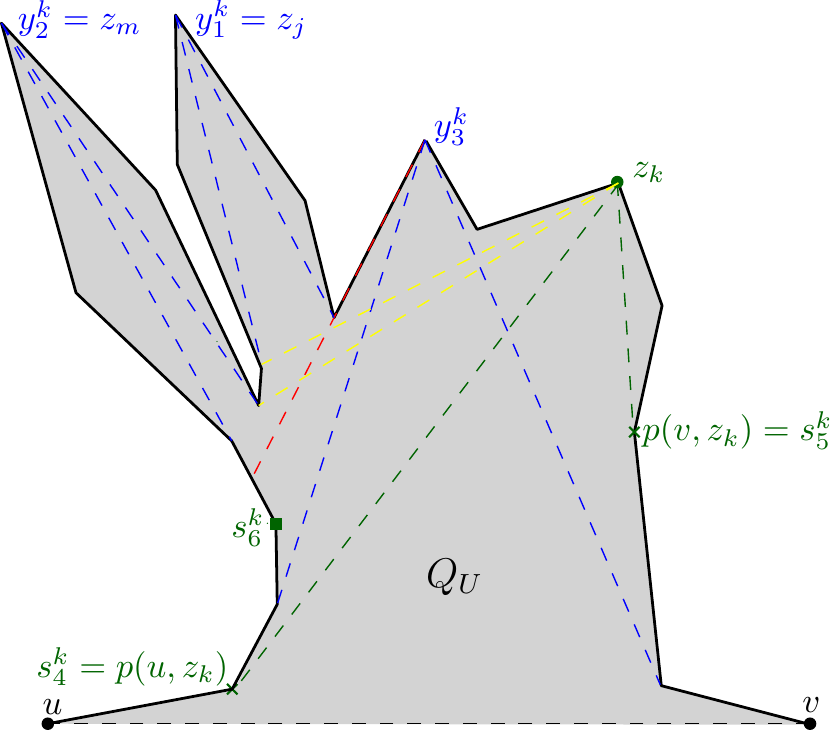}}
  \caption{ Boundaries $bd_c(p(u,z_m),p(v,z_m))$ \& $bd_c(p(u,z_j),p(v,z_j))$ are disjoint parts of $bd_c(p(u,z_k),z_k)$. }
  \label{inner_disjoint}
\end{minipage}
\hspace*{0.01\textwidth}
\begin{minipage}{0.49\textwidth}
  \centerline{\includegraphics[width=0.99\textwidth]{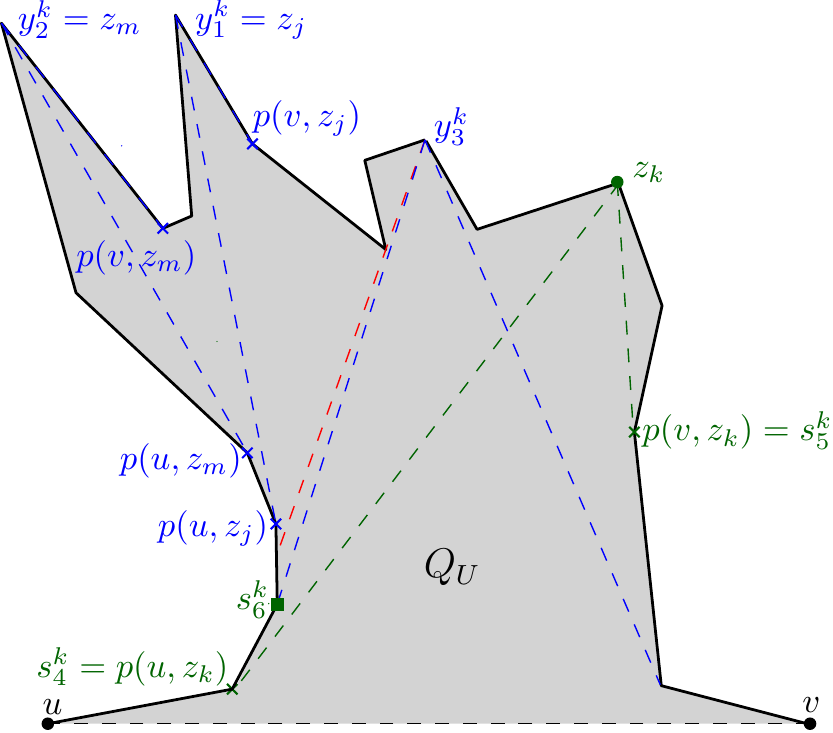}}
  \caption{ Boundary $bd_c(p(u,z_m),p(v,z_m))$ is contained in $bd_c(p(u,z_j),p(v,z_j))$, which in turn is contained in $bd_c(p(u,z_k),z_k)$. }
  \label{inner_nested}
\end{minipage}
\end{figure}

\vspace*{-0.77em}
\begin{theorem} \label{runtime}
The running time for Algorithm \ref{vg_pcode_gen} is $\mathcal{O}(n^4)$.
\end{theorem}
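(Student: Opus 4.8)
The plan is to bound separately the number of iterations of the outer \textbf{while} loop (line \ref{vg_pcode_gen:4}) and the work performed in a single iteration, and then multiply the two. First I would note that the preprocessing in line \ref{vg_pcode_gen:3}, namely computing $SPT(u)$ and $SPT(v)$, takes $\mathcal{O}(n)$ time and is executed only once; it supplies the parents $p(u,z_k)$ and $p(v,z_k)$, and hence the points $f'(z_k)$ and $l'(z_k)$, in constant time per query. Next I would argue that the outer loop runs at most $\mathcal{O}(n)$ times: in each iteration a primary vertex $z_k$ is selected, and the inside guard $s_4^k = p(u,z_k)$ placed in line \ref{vg_pcode_gen:14} already sees $z_k$ (being its parent in $SPT(u)$), so $z_k$ is marked by line \ref{vg_pcode_gen:15} and the number of unmarked vertices strictly decreases. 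Since $Q_U$ has at most $n$ vertices, this gives $|Z| = \mathcal{O}(n)$ iterations.

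The bulk of the argument lies in bounding the cost of one iteration by $\mathcal{O}(n^3)$. I would account for the steps as follows. Selecting the primary vertex (lines \ref{vg_pcode_gen:6}--\ref{vg_pcode_gen:13}) scans $bd_c(u,v)$ once and compares precomputed $l'$ values, costing $\mathcal{O}(n)$. Placing $s_4^k$ and $s_5^k$ is $\mathcal{O}(1)$, and each marking step (lines \ref{vg_pcode_gen:15}, \ref{vg_pcode_gen:30}, \ref{vg_pcode_gen:35}) costs $\mathcal{O}(n)$ per guard. The dominant work is in assembling the ordered sets $\mathcal{OVV}^{+}(z_k)$ and $\mathcal{OVV}^{-}(z_k)$ (lines \ref{vg_pcode_gen:16}, \ref{vg_pcode_gen:31}) and the common-intersection sets $\mathcal{CI}(\cdot)$ used to place $s_6^k$, $s_1^k$, $s_2^k$ and $s_3^k$: for each of the $\mathcal{O}(n)$ candidate vertices one computes a visibility polygon in $\mathcal{O}(n)$ time and then tests the overlap of its outward (respectively inward) visible vertices against those of $z_k$, an $\mathcal{O}(n)$ test, so summed over all candidates this is $\mathcal{O}(n^3)$. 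The traversal loops that locate $s_6^k$ (lines \ref{vg_pcode_gen:22}--\ref{vg_pcode_gen:26}) and $s_1^k, s_2^k$ (the corresponding loops of Algorithm \ref{vg_pcode_sp1} invoked in line \ref{vg_pcode_gen:33}) each advance through $\mathcal{O}(n)$ boundary vertices while maintaining a running visibility set, contributing at most $\mathcal{O}(n^3)$ and thus being subsumed by the same bound.

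Since each of the at most $\mathcal{O}(n)$ iterations costs $\mathcal{O}(n^3)$, the total running time is $\mathcal{O}(n^4)$, as claimed. A cleaner way to package this is to observe that Algorithm \ref{vg_pcode_gen} performs exactly the outside-guard computation of Algorithm \ref{vg_pcode_sp1} (lines \ref{vg_pcode_gen:31}--\ref{vg_pcode_gen:33}), whose total running time we have already bounded by $\mathcal{O}(n^4)$, augmented only by the inside-guard computation (lines \ref{vg_pcode_gen:14}--\ref{vg_pcode_gen:30}); since the added per-iteration work is of the same $\mathcal{O}(n^3)$ order, the overall bound $\mathcal{O}(n^4)$ is preserved. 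The main obstacle I anticipate is the careful accounting of the visibility-based subroutines --- deciding membership in $\mathcal{OVV}^{\pm}(z_k)$ and evaluating the intersections $\mathcal{CI}(A^k)$, $\mathcal{CI}(C^k)$ and $\mathcal{CI}(\mathcal{OVV}^{+}(z_k))$ --- since these are precisely the operations whose implementation yields the $\mathcal{O}(n^3)$ per-iteration cost that drives the final bound, and one must verify that no single iteration exceeds it.
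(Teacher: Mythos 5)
Your proposal follows essentially the same route as the paper: bound the number of outer-loop iterations by $\mathcal{O}(n)$ (one primary vertex marked per iteration), bound the work per iteration by $\mathcal{O}(n^3)$ with the $\mathcal{OVV}^{\pm}(z_k)$ and $\mathcal{CI}(\cdot)$ computations as the dominant cost, and multiply. The only cosmetic difference is that the paper precomputes $\mathcal{VVP}^{\pm}(z)$ for all vertices via the Ghosh--Mount visibility graph rather than computing visibility polygons on the fly, but the per-iteration and overall bounds come out the same.
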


\begin{proof}
While executing Algorithm \ref{vg_pcode_gen} as stated, we need to precompute $SPT(u)$ and $SPT(v)$ respectively. 
Also, for every vertex $z$ belonging to $Q_U$, we need to precompute $\mathcal{VVP}^{+}(z)$, $\mathcal{VVP}^{-}(z)$,
which we can do by constructing the visibility graph of $z$ in $\mathcal{O}(n^2)$ using the algorithm by Ghosh and Mount \cite{GhoshMount_1991}.
Note that $\mathcal{O}(n)$ vertices are chosen in total.
Therefore, in order to get the overall running time for Algorithm \ref{vg_pcode_gen}, let us consider the running times for all the operations performed by Algorithm \ref{vg_pcode_gen} in the outer while-loop 
corresponding to each primary vertex $z_k \in Z$. \\

\vspace{-0.5em}
Since $SPT(u)$ and $SPT(v)$ are precomputed, it takes only $\mathcal{O}(1)$ time to choose the guards $s_4^k = p(u,z_k)$ and $s_5^k = p(v,z_k)$. 
For choosing the guard $s_6^k$, it is required to compute  $\mathcal{OVV}^{+}(z_k)$ and then $\mathcal{CI}(\mathcal{OVV}^{+}(z_k))$. 
The former operation takes $\mathcal{O}(n^2)$ time, since $\mathcal{VVP}^{+}(z)$ is precomputed.
For the latter operation, it is required to compute $|\mathcal{OVV}^{+}(z_k)| = \mathcal{O}(n)$ intersections, and since each intersection takes $\mathcal{O}(n^2)$ time, the total time required for the operation is $\mathcal{O}(n^3)$.
If $\mathcal{CI}(\mathcal{OVV}^{+}(z_k))$ is non-empty, then the choice of $s_6^k$ requires only $\mathcal{O}(1)$ additional time.
Otherwise, the choice of $s_6^k$ requires a linear scan along $bd_c(u,v)$, which takes $\mathcal{O}(n)$ time. 
Since $\mathcal{VVP}^{-}(z)$ is precomputed, it takes only $\mathcal{O}(1)$ time to choose the guard $s_3^k = l(z_k)$. 
However, for choosing the guards $s_1^k$ and $s_2^k$, it is required to compute $\mathcal{OVV}^{-}(z_k)$, which takes $\mathcal{O}(n)$ time, and then partition $\mathcal{OVV}^{-}(z_k)$ into the sets $A^k$, $B^k$ and $C^k$, which takes a further $\mathcal{O}(n^2)$ amount of time. 
Finally, the the choice of $s_1^k$ (and similarly $s_2^k$) requires a linear scan along $bd_{cc}(u,v)$, which takes $\mathcal{O}(n)$ time. \\

\vspace{-0.5em}
From the discussion above, it is clear that all the operations corresponding to a single primary vertex $z_k \in Z$ are completed by Algorithm \ref{vg_pcode_gen} in $\mathcal{O}(n^3)$ time in the worst case. 
Since at most $\mathcal{O}(n)$ primary vertices are chosen, the overall worst case running time of Algorithm \ref{vg_pcode_gen} is $\mathcal{O}(n^4)$.
\end{proof}

\subsection{Guarding all interior points of a polygon}
\label{interior}
In the previous subsection, we have presented an algorithm (see Algorithm \ref{vg_pcode_gen}) that returns a guard set $S$ such that all vertices of $Q_U$ are visible from guards in $S$. 
Recall that the art gallery problem demands that $S$ must see all interior points of $Q_U$ as well. 
However, it may not always be true that the guards in $S$ see all interior points of $Q_U$. 
Consider the polygon shown in Figure \ref{pocket_edge}. 
Assume that Algorithm \ref{vg_pcode_gen} places guards at $p(u,z_k)$ and $p(v,z_k)$, and all vertices of $bd_c(p(u,z_k),p(v,z_k))$ become visible from $p(u,z_k)$ or $p(v,z_k)$. 
However, the triangular region $Q_U \setminus (VP(p(u,z_k))) \cup VP(p(v,z_k))$, bounded by the segments $x_1 x_2$, $x_2 x_3$ and $x_3 x_1$, is not visible from $p(u,z_k)$ or $p(v,z_k)$. Also, one of the sides $x_1 x_2$ of the triangle $x_1 x_2 x_3$ is a part of the polygonal edge $a_1 a_2$. \\

\vspace{-0.5em}
Suppose there exists another guard $g$ lying on $bd_c(p(u,z_k),p(v,z_k))$ (see Figure \ref{pocket_edge}) that sees the part of the triangle $x_1x_2x_3$ containing the side $x_1x_2$,
but does not see the other part containing $x_3$.
In that case, such a vertex $g$ cannot be weakly visible from $uv$, which is a contradiction. 
Hence, for any such region invisible from guards $s_4^k, s_5^k \in S^k$ corresponding to some $z_k \in Z$, henceforth referred to as an \emph{invisible cell}, one of the sides must always be a part of a polygonal edge.
The polygonal edge which contributes as a side to the invisible cell is referred to as its corresponding \emph{partially invisible edge}. \\

\vspace{-0.5em}
Observe that $s_4^k$ and $s_5^k$ can in fact create several invisible cells,
as shown in Figure \ref{pockets}.
Each invisible cell must be wholly contained within the intersection region (which is a triangle) of a left pocket and a right pocket. For example, in Figure \ref{pocket_edge}, the invisible cell $x_1 x_2 x_3$ is actually the entire intersection region of the left pocket of $VP(s_4^k)$ 
and the right pocket of $VP(s_5^k)$. 
In general, where $VP(s_4^k)$ has several left pockets and $VP(s_5^k)$ has several right pockets which intersect pairwise to create multiple invisible cells (as shown in Figure \ref{pockets}), 
every such cell can be seen by placing guards on the common vertices between adjacent pairs of cells. 
Further, if $G_{opt}$ is also constrained to guard these invisible cells using only inward guards from $Q_U$, then the number of such additional guards required can be at most twice of $G_{opt}$,
as shown by Bhattacharya et al. \cite{AGWVP}. 
However, in the absence of any constraint on placing guards, $G_{opt}$ may place an outside guard in $Q_L$ that sees several such invisible cells. 
So, it is natural to explore the possibility of being able to guard all such invisible cells by using additional guards from $Q_L$, in combination with guards from $Q_U$. \\

\vspace{-0.5em}
We present a modified algorithm that ensures that all partially invisible edges are guarded completely, and therefore the entire $bd_c(u,v)$ is guarded.
For every pair of visible vertices in $Q$, extend the visible segment connecting them till they intersect the boundary of $Q_U$. 
These intersection points partition the boundary of $Q_U$ into distinct intervals called \emph{minimal visible intervals}. 
We have the following lemmas.

\begin{lemma} \label{mvi}
Every minimal visible interval on the boundary of $Q_U$ is either entirely visible from a vertex or totally not visible from that vertex.
\end{lemma}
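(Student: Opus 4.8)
The plan is to fix an arbitrary vertex $g$ of $Q$ and show that the only places where visibility from $g$ can change as one traverses $bd(Q_U)$ are precisely the partition points that define the minimal visible intervals; it then follows that no such change can occur strictly inside a single interval. I would argue by contradiction: suppose some minimal visible interval $I$ contains a point $p$ that is visible from $g$ together with a point $q$ that is not. Travelling along $I$ from $p$ to $q$, visibility from $g$ must switch at some point $t$ lying strictly between them, and the goal is to force $t$ to be one of the boundary intersection points, contradicting the fact that, by construction, $I$ contains no partition point in its interior.

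The crux is to characterise the transition point $t$. I would first recall the standard structure of the visibility region: $\mathcal{VP}(g) \cap bd(Q_U)$ is a finite union of closed subarcs of the boundary, and as the segment $g\,p$ sweeps while $p$ moves along $bd(Q_U)$, it stays inside $Q$ until it first grazes the boundary at a polygon vertex. Hence at the transition $t$ the line of sight $\overrightarrow{g\,t}$ must pass through some polygon vertex $v$ acting as the blocker, so that $g$, $v$ and $t$ are collinear with $v$ between $g$ and $t$. Since $v$ is exactly the reflex vertex at which the corresponding constructed edge (window) of $\mathcal{VP}(g)$ originates, it lies on the boundary of $\mathcal{VP}(g)$ and is therefore visible from $g$. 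Thus $g$ and $v$ form a pair of mutually visible vertices of $Q$, and $t$ is precisely the point at which the segment $g\,v$, extended beyond $v$, meets $bd(Q_U)$. By the defining construction of the minimal visible intervals, $t$ is a partition point, which is the desired contradiction.

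The main obstacle I anticipate is handling the degenerate transitions cleanly. One case is a transition that occurs at a reflex vertex of $Q_U$ itself, where the boundary turns into a pocket; there $t$ coincides with a polygon vertex, and I would argue that every vertex of $Q_U$ is automatically a partition point (each is visible from an incident boundary neighbour and already lies on $bd(Q_U)$), so $t$ is a partition point once more. A second delicate point is that the blocking vertex $v$ may lie on $bd(Q_L)$ rather than $bd(Q_U)$; this is harmless, because the partition is built from \emph{every} pair of mutually visible vertices of $Q$, not only those of $Q_U$, so the pair $(g,v)$ is always available. I would also address the collinear, edge-on configuration where $g$ is aligned with an entire edge: in that situation the whole edge is uniformly visible or invisible, so no interior transition arises and the statement is unaffected. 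Having excluded an interior transition in $I$ for every choice of $g$, I conclude that each minimal visible interval is either entirely visible from $g$ or entirely invisible from $g$.
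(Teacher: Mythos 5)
Your proposal is correct and follows essentially the same route as the paper: the paper's (much terser) proof likewise argues that partial visibility of an interval from a vertex $g$ would force the extension of a segment $gg'$ between two mutually visible vertices to cross the interval's interior, contradicting minimality. Your write-up simply makes explicit why the blocking vertex $g'$ exists at the visibility transition and handles the degenerate cases the paper leaves implicit.
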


\begin{proof}
 Let $ab$ be a minimal visible interval. 
 If $ab$ is partially visible from a vertex $g$,
 then there must exist another vertex $g'$ such that the extension of the segment $g g'$ intersects $ab$ at some point, which contradicts the fact that $ab$ is a minimal visible interval.
\end{proof}

\begin{corollary} \label{triangle}
If a minimal visible interval $ab$ is entirely visible from a vertex $g$ of $Q$, then the entire triangle $gab$ lies totally inside $Q$.
\end{corollary}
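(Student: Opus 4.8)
The plan is to prove the corollary directly from the definition of visibility, by regarding the filled triangle $gab$ as the \emph{fan} of line segments joining the apex $g$ to the points of the base $ab$. First I would record the geometric fact that, since $ab$ is a \emph{minimal visible interval} on the boundary of $Q_U$, it is a straight segment: every vertex of $Q_U$ (and each of $u,v$) is an endpoint of an edge joining two mutually visible adjacent vertices, and these edge-endpoints are among the points that partition $bd(Q_U)$; hence no polygon vertex can lie in the interior of a minimal visible interval, so $ab$ is contained in a single edge of $Q_U$ and $gab$ is a genuine (possibly degenerate) triangle with straight base $\overline{ab}$.

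The core step is then a sweeping argument. Every point $x$ of the closed triangle $gab$ lies on the ray from $g$ through $x$, and since the base $\overline{ab}$ is straight this ray meets $\overline{ab}$ in a unique point $p$, giving $x \in \overline{gp}$; conversely each such segment is contained in the triangle. Thus $gab = \bigcup_{p \in \overline{ab}} \overline{gp}$. By the hypothesis that $ab$ is \emph{entirely} visible from $g$, each point $p \in ab$ is visible from $g$, which by definition means the segment $\overline{gp}$ does not intersect the exterior of $Q$ and therefore lies inside the closed region $Q$. Taking the union over all $p \in \overline{ab}$ shows that the entire triangle $gab$ lies inside $Q$, as required. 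I would note that allowing $\overline{gp}$ to graze $bd(Q)$ causes no difficulty, since such contact points still belong to $Q$.

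The step I expect to be the only real obstacle is the first one, namely justifying rigorously that $ab$ is a straight segment with no vertex in its interior; the sweeping identity and the visibility conclusion are then immediate. This rests entirely on the construction of minimal visible intervals from the previous paragraph of the paper: because every polygon edge is the visible segment between two adjacent vertices, and the endpoints of every such visible segment lie on $bd(Q_U)$ and hence act as partition points, the minimal visible intervals cannot straddle a vertex. Once this is in place, the corollary follows as an immediate consequence of Lemma~\ref{mvi} together with the fan decomposition of the triangle.
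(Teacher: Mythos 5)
Your proof is correct and is essentially the argument the paper intends: the paper states this corollary without proof as an immediate consequence of Lemma~\ref{mvi}, and your fan decomposition of the triangle $gab$ into the segments $\overline{gp}$, $p \in \overline{ab}$, each of which avoids the exterior of $Q$ by the definition of visibility, is the standard way to make that implication explicit. Your preliminary observation that a minimal visible interval contains no polygon vertex in its interior (so $\overline{ab}$ is a straight sub-segment of a single edge) is a reasonable and correctly justified addition, since every vertex is itself a partition point in the construction of the minimal visible intervals.
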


The modified Algorithm \ref{vg_pcode_pseudo} first computes all minimal visible intervals and chooses one internal representative point from each minimal visible interval on the boundary of $Q_U$. 
These representative points are referred to as \emph{pseudo-vertices}. 
Alongside the original polygonal vertices of $Q$, all pseudo-vertices are introduced on the boundary of $Q_U$, and the modified polygon is denoted by $Q'$. Note that the endpoints of minimal visible intervals are not introduced in $Q'$.
In Algorithm \ref{vg_pcode_pseudo}, the pseudo-vertices of $Q'$ are treated in almost the same manner as the original vertices. 
We compute $\mathcal{VVP}^{+}(z)$ and $\mathcal{VVP}^{-}(z)$ for all vertices of $Q'$, irrespective of whether it is a pseudo-vertex, 
and some of the psuedo-vertices may even be chosen as primary vertices. 
However, while computing $\mathcal{VVP}^{+}(z)$ for any vertex $z$ of $Q'$,
no pseudo-vertices are included in $\mathcal{VVP}^{+}(z)$, since they cannot be vertex guards in any case.

\begin{algorithm}[H]
\caption{An $\mathcal{O}(n^5)$-algorithm for computing a guard set $S$ for all vertices of $Q_U$} 
\label{vg_pcode_pseudo}
\begin{algorithmic}[1]

\State Initialize $k \leftarrow 0$ and $S \leftarrow \emptyset$  \label{vg_pcode_pseudo:1}
\State Compute $SPT(u)$ and $SPT(v)$ \label{vg_pcode_pseudo:2}
\State Compute all minimal visible intervals on boundary of $Q_U$ \label{vg_pcode_pseudo:3}
\State Introduce representative points from each minimal visible interval as pseudo-vertices \label{vg_pcode_pseudo:4} 
\State Initialize all vertices and pseudo-vertices of $Q_U$ as unmarked  \label{vg_pcode_pseudo:5}

\While { there exists an unmarked vertex or pseudo-vertex on $Q_U$} 
\label{vg_pcode_pseudo:6}
\State Set $k \leftarrow k + 1$  
\label{vg_pcode_pseudo:7}
\State Choose the current primary vertex $z_k$ as per Algorithm \ref{vg_pcode_gen} \label{vg_pcode_pseudo:8}

\State $s_4^k \leftarrow p(u,z_k)$, $s_5^k \leftarrow p(v,z_k)$, 
$S^k \leftarrow \{ s_4^k, s_5^k \} $  \label{vg_pcode_pseudo:9}
\State Choose the inside guard $s_6^k$ (if required) as per Algorithm \ref{vg_pcode_gen} \label{vg_pcode_pseudo:10}
\State $S^k \leftarrow S^k \cup \{ s_6^k \} $  \label{vg_pcode_pseudo:11}
\State Compute $s_1^k$, $s_2^k$, and $s_3^k$ as per Algorithm \ref{vg_pcode_sp1}  \label{vg_pcode_pseudo:12} 
\State $S^k \leftarrow S^k \cup \{ s_1^k, s_2^k, s_3^k \}$ \label{vg_pcode_pseudo:13} 
\State Mark all vertices of $Q_U$ visible from guards currently in $S^k$ \label{vg_pcode_pseudo:14}  
\State $S \leftarrow S \cup S^k$  \label{vg_pcode_pseudo:15} 

\EndWhile  \label{vg_pcode_pseudo:16}
\State \Return the guard set $S$ \label{vg_pcode_pseudo:17}
\end{algorithmic}
\end{algorithm}

\vspace*{-0.3em}
Suppose an invisible cell is created by the guards placed on the parents of some primary vertex. This implies that there exists a pseudo-vertex on the partially invisible edge contained in the invisible cell which has been left unmarked.
So, either this pseudo-vertex is marked due to the guards chosen for some later primary vertex, or else it is eventually chosen as a primary vertex itself.
If a pseudo-vertex on a partially invisible edge is chosen as a primary vertex $z_k$ by Algorithm \ref{vg_pcode_pseudo}, then the entire invisible cell containing the partially invisible edge must be visible from $s_4^k$ and $s_5^k$. Moreover, if the adjacent invisible cell to the left or right shares a parent, then $s_4^k$ or $s_5^k$ also sees both invisible cells. Therefore, if all invisible cells are guarded by guards in $G^U_{opt}$, then the number of pseudo-vertices that are chosen as primary vertices is at most twice of the number of guards in $G^U_{opt}$ \cite{AGWVP}. \\

\vspace*{-0.5em}
Observe that several invisible cells may be seen by a few outside guards belonging to $G^L_{opt}$, unlike $G^U_{opt}$ which can see only two such cells at the most. 
Assume that a pseudo-vertex is chosen as a primary vertex $z_k$.
Every vertex and pseudo-vertex belonging to $B^k$ is marked due to the placement of the guard at $s_3^k$, and therefore no additional guards are introduced for guarding the vertices or pseudo-vertices in $B^k$.
Consider the outside guards introduced due to $A^k$. 
Assume that $A^k$ has pseudo-vertices. 
If $L^k$ does not have pseudo-vertices, then the pseudo-vertices do not create new disjoint right pockets, and therefore guards placed to guard vertices of $L^k$ are enough to guard pseudo-vertices in $A^k$. 
If $L^k$ contains a pseudo-vertex, then a new disjoint right pocket is created, 
and therefore guards placed for other disjoint right pockets cannot see this pseudo-vertex. 
So, an additional outside guard is required, as well as an additional optimal guard in $G^L_{opt}$. 
The same argument holds for $C^k$ and $R^k$. Thus Lemmas \ref{disjoint->nestedA} to \ref{Gopt_lb}, Lemma \ref{disjoint} and 
Corollary \ref{atmost2}, and Theorems \ref{Z_ub} and \ref{Z^U_bound}  hold even after the introduction of pseudo-vertices, and so the overall bound remains $|S| \leq 12\cdot|G_{opt}|$, as in Theorem \ref{gen_bound}. 
We state this result in Theorem \ref{bound_pseudo}.

\begin{theorem} \label{bound_pseudo}
For the guard set $S$ computed by Algorithm \ref{vg_pcode_pseudo}, $|S| \leq 6\cdot|Z| \leq 12\cdot|G_{opt}|$.
\end{theorem}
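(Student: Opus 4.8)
The plan is to reproduce the top-level argument of Theorem \ref{gen_bound} essentially verbatim, and then verify that every ingredient it relies on survives the introduction of pseudo-vertices. Concretely, the whole statement reduces to two claims: (i) Algorithm \ref{vg_pcode_pseudo} places at most six guards per primary vertex, so that $|S| \le 6\cdot|Z|$; and (ii) the lower bounds $|Z^L| \le 2\cdot|G^L_{opt}|$ and $|Z^U| \le 2\cdot|G^U_{opt}|$ of Theorems \ref{Z_ub} and \ref{Z^U_bound} continue to hold when $\mathcal{V}(Q_U)$ is enlarged to the vertex set of $Q'$. Granting these, the chain from Theorem \ref{gen_bound} applies unchanged: combining (i) with $|Z| \le |Z^U|+|Z^L|$ from inequality \eqref{Z_split} gives $|S| \le 6\cdot|Z| \le 6\cdot(|Z^U|+|Z^L|) \le 12\cdot(|G^U_{opt}|+|G^L_{opt}|) = 12\cdot|G_{opt}|$.

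Claim (i) is immediate from the pseudocode, since Algorithm \ref{vg_pcode_pseudo} selects exactly the same guards $s_1^k,\dots,s_6^k$ per primary vertex $z_k$ as Algorithm \ref{vg_pcode_gen}, whether or not $z_k$ is a pseudo-vertex, and pseudo-vertices are explicitly excluded from $\mathcal{VVP}^{+}(z)$ so they are never placed as guards. For claim (ii), the key structural observation I would emphasize is that Lemmas \ref{disjoint->nestedA} through \ref{Gopt_lb}, together with Lemma \ref{disjoint} and Corollary \ref{atmost2}, are stated purely in terms of the cyclic ordering of the points $f(\cdot), l(\cdot), t(\cdot), t'(\cdot)$ along $bd_{cc}(u,v)$ and $bd_c(u,v)$, and in terms of the left/right pocket structure of the visibility polygons $\mathcal{VP}(\cdot)$; none of them references the distinction between an original vertex and a pseudo-vertex. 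Hence the geometric nesting arguments (Lemmas \ref{disjoint->nestedA}, \ref{disjoint->nestedC}), the pairwise-intersection and maximality arguments (Lemmas \ref{L^k_i_pairwise}--\ref{R^k_maximum}), and the \textquotedblleft at most one vertex per maximum disjoint subset\textquotedblright\ arguments (Lemmas \ref{L^k_upper}--\ref{opt_both}) carry over by the same proofs, now read over the vertex set of $Q'$.

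The step I expect to be the main obstacle is the accounting that ties an \emph{extra} outside guard forced by a pseudo-vertex to an \emph{extra} optimal guard, so that the ratio of two is preserved. I would handle it by the pocket analysis sketched before the theorem: a pseudo-vertex in $B^k$ is marked by $s_3^k$ and costs nothing; a pseudo-vertex in $A^k$ is relevant only if it enters the maximum disjoint subset $L^k$, and if it does not it creates no new disjoint right pocket and is covered by the guards already placed for the genuine vertices of $L^k$. If it does enter $L^k$, it opens a genuinely new disjoint right pocket, and by Lemma \ref{L^k_opt_lb} any outward optimal guard sees at most one vertex of $L^k$, so $G^L_{opt}$ must itself spend an additional guard on that pocket; the symmetric statement for $C^k$ and $R^k$ follows from Lemma \ref{R^k_opt_lb}. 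Invoking the corresponding charging argument of Bhattacharya et al.\ \cite{AGWVP} for the inside guards covering the partially invisible edges, the per-pocket charge against $G_{opt}$ is maintained, so Lemma \ref{Gopt_lb} (hence Theorem \ref{Z_ub}) and Theorem \ref{Z^U_bound} remain valid for $Q'$. Substituting these into the chain of claim (i) yields $|S| \le 6\cdot|Z| \le 12\cdot|G_{opt}|$, completing the proof.
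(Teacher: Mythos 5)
Your proposal is correct and follows essentially the same route as the paper: the paper also reduces the statement to Theorem \ref{gen_bound}'s chain and justifies it by the identical pocket accounting — pseudo-vertices in $B^k$ are absorbed by $s_3^k$ at no cost, a pseudo-vertex in $A^k$ (resp.\ $C^k$) forces an extra outside guard only when it enters $L^k$ (resp.\ $R^k$) by creating a new disjoint right (resp.\ left) pocket, in which case $G^L_{opt}$ is charged an extra guard, and the inside-guard side is handled by the charging argument of \cite{AGWVP}. No substantive difference.
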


\begin{theorem} \label{runtime_pseudo}
The running time of Algorithm \ref{vg_pcode_pseudo} is $\mathcal{O}(n^5)$.
\end{theorem}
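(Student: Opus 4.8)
The plan is to treat Algorithm \ref{vg_pcode_pseudo} as Algorithm \ref{vg_pcode_gen} executed on the augmented polygon $Q'$, and to charge its cost in two parts: the one-time preprocessing that builds $Q'$ (minimal visible intervals, pseudo-vertices, $SPT(u)$, $SPT(v)$, and the $\mathcal{VVP}^{+}/\mathcal{VVP}^{-}$ sets), and the main while-loop over $Q'$. The governing quantity is the number of pseudo-vertices, so I would bound that first and then argue that neither the preprocessing nor the per-iteration work exceeds the target.

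First I would bound the number of pseudo-vertices. Each minimal visible interval is delimited by a point at which the segment joining some mutually visible pair of vertices of $Q$, when extended, first meets $bd(Q_U)$; since there are $\mathcal{O}(n^2)$ such visible pairs and each contributes $\mathcal{O}(1)$ boundary points, there are $\mathcal{O}(n^2)$ minimal visible intervals and hence $\mathcal{O}(n^2)$ pseudo-vertices. Thus $Q'$ has $\mathcal{O}(n^2)$ vertices and pseudo-vertices in total, even though the underlying polygon $Q$ still has only $\mathcal{O}(n)$ edges. The preprocessing is then comfortably below the target bound: the visibility graph is computed once in $\mathcal{O}(n^2)$ time \cite{GhoshMount_1991}, each extension to $bd(Q_U)$ costs $\mathcal{O}(n)$ by ray shooting, $SPT(u)$ and $SPT(v)$ cost $\mathcal{O}(n)$, and precomputing $\mathcal{VVP}^{+}(z),\mathcal{VVP}^{-}(z)$ for each of the $\mathcal{O}(n^2)$ points $z$ amounts to one $\mathcal{O}(n)$-complexity visibility polygon per $z$ plus locating the candidate points inside it, i.e. at most $\mathcal{O}(n^4)$ overall, which is dominated by $\mathcal{O}(n^5)$.

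Next I would analyze the while-loop exactly as in the proof of Theorem \ref{runtime}. Each iteration selects one primary vertex and marks it, so distinct iterations yield distinct primary vertices; since every vertex or pseudo-vertex of $Q'$ can be chosen at most once, the number of iterations is at most $|Q'|=\mathcal{O}(n^2)$ (I would use this direct bound, mirroring how Theorem \ref{runtime} bounds the iteration count by the vertex count, rather than invoking the tighter approximation bound $|Z|\le 2\cdot|G_{opt}|$). It then remains to show that the per-iteration work stays $\mathcal{O}(n^3)$. The key observation is that, because $Q$ has only $\mathcal{O}(n)$ edges, every $\mathcal{VVP}^{+}(\cdot)$ and $\mathcal{VVP}^{-}(\cdot)$ contains only $\mathcal{O}(n)$ true polygonal vertices: pseudo-vertices are, by construction, never placed in $\mathcal{VVP}^{+}$, and they all lie on $bd_c(u,v)$ so they never appear in $\mathcal{VVP}^{-}$ either. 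Hence, although $\mathcal{OVV}^{+}(z_k)$ and $\mathcal{OVV}^{-}(z_k)$ may now contain $\mathcal{O}(n^2)$ elements, forming the common intersections $\mathcal{CI}(\cdot)$ intersects $\mathcal{O}(n^2)$ sets each of size $\mathcal{O}(n)$, which is done incrementally in $\mathcal{O}(n^3)$ time; choosing $z_k$, computing and partitioning $\mathcal{OVV}^{-}(z_k)$ into $A^k,B^k,C^k$, and the linear scans selecting $s_1^k,s_2^k,s_6^k$ are all subsumed within $\mathcal{O}(n^3)$. Multiplying the $\mathcal{O}(n^2)$ iterations by the $\mathcal{O}(n^3)$ per-iteration cost gives the claimed $\mathcal{O}(n^5)$ bound.

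I expect the main obstacle to be precisely this per-iteration accounting. One must verify that inserting $\mathcal{O}(n^2)$ pseudo-vertices inflates only the iteration count and the cardinalities of the $\mathcal{OVV}$ sets, but \emph{not} the combinatorial complexity of the individual visibility polygons or of each set-intersection step, which remain tied to the $\mathcal{O}(n)$ real vertices. Keeping these two roles cleanly separated---pseudo-vertices acting only as primary/target points while real vertices remain the only admissible guards and the only members of the $\mathcal{VVP}^{\pm}$ sets---is exactly what prevents the bound from degrading to $\mathcal{O}(n^6)$, and is the step I would write out most carefully.
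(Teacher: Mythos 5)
Your proposal is correct and reaches the stated $\mathcal{O}(n^5)$ bound, but it splits the product ``number of iterations $\times$ cost per iteration'' differently from the paper, and the two arguments rest on different key observations. The paper bounds the number of primary vertices by $\mathcal{O}(n)$, using the structural claim that at most one pseudo-vertex per polygonal edge can ever be chosen as a primary vertex (so there are at most $2n$ of them), and then tolerates a generous $\mathcal{O}(n^4)$ per iteration, dominated by computing $\mathcal{CI}(\mathcal{OVV}^{+}(z_k))$ as $\mathcal{O}(n^2)$ pairwise intersections each charged $\mathcal{O}(n^2)$. You instead take the trivial $\mathcal{O}(n^2)$ bound on iterations (one per vertex or pseudo-vertex of $Q'$) and compensate by tightening the per-iteration cost to $\mathcal{O}(n^3)$, via the observation that every $\mathcal{VVP}^{+}(\cdot)$ and $\mathcal{VVP}^{-}(\cdot)$ contains only the $\mathcal{O}(n)$ true polygonal vertices --- pseudo-vertices are excluded from $\mathcal{VVP}^{+}$ by construction and lie on $bd_c(u,v)$, hence never in $\mathcal{VVP}^{-}$ --- so the incremental intersection of $\mathcal{O}(n^2)$ such sets costs $\mathcal{O}(n^3)$. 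Your route buys independence from the ``one pseudo-vertex per edge'' claim (which the paper asserts from its earlier discussion of invisible cells rather than proving in the running-time argument itself), at the price of a more careful implementation of the set intersections over the $\mathcal{O}(n)$ real vertices; the paper's route buys a simpler per-iteration accounting at the price of needing that structural lemma. Both are valid, and your closing remark correctly identifies the separation of roles between pseudo-vertices (targets only) and real vertices (the only members of the $\mathcal{VVP}^{\pm}$ sets) as the point that must be made explicit to keep the bound from degrading to $\mathcal{O}(n^6)$.
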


\begin{proof}
While executing Algorithm \ref{vg_pcode_pseudo} as stated, the precomputation of $SPT(u)$ and $SPT(v)$
requires $\mathcal{O}(n^2)$ time as the number of pseudo-vertices is $\mathcal{O}(n^2)$.
Also, for every vertex (or pseudo-vertex) $z$ belonging to $Q'$, the precomputation of $\mathcal{VVP}^{+}(z)$ and $\mathcal{VVP}^{-}(z)$ can be done by constructing the visibility graph using the output-sensitive algorithm of Ghosh and Mount \cite{GhoshMount_1991}.
Since the total number of vertices (including pseudo-vertices) in $Q'$ is $\mathcal{O}(n^2)$, and the visibility edges are not computed between pseudo-vertices, the size of the visibility graph for $Q'$ is $\mathcal{O}(n^3)$, and thus $\mathcal{O}(n^3)$ time is required to precompute $\mathcal{VVP}^{+}(z)$ and $\mathcal{VVP}^{-}(z)$.
Note that, while each of the original vertices of $Q_U$ may be chosen as primary vertices, only at most one of the pseudo-vertices belonging to a single edge may be chosen as a primary vertex, which means that there can be at most $2n = \mathcal{O}(n)$ primary vertices chosen by Algorithm \ref{vg_pcode_pseudo}. 
Therefore, in order to get the overall running time for Algorithm \ref{vg_pcode_gen}, let us consider the running times for all the operations performed by Algorithm \ref{vg_pcode_gen} in the outer while-loop (see lines ...) corresponding to each primary vertex $z_k \in Z$. \\

\vspace*{-0.5em}
Since $SPT(u)$ and $SPT(v)$ are precomputed, it takes only $\mathcal{O}(1)$ time to choose the guards $s_4^k = p(u,z_k)$ and $s_5^k = p(v,z_k)$. 
For choosing the guard $s_6^k$, it is necessary to compute  $\mathcal{OVV}^{+}(z_k)$ and then $\mathcal{CI}(\mathcal{OVV}^{+}(z_k))$. 
The former operation takes $\mathcal{O}(n^2)$ time, since $\mathcal{VVP}^{+}(z)$ is precomputed.
For the latter operation, $|\mathcal{OVV}^{+}(z_k)| = \mathcal{O}(n^2)$ intersections are computed; since each intersection takes $\mathcal{O}(n^2)$ time, the total time required for the operation is $\mathcal{O}(n^4)$.
If $\mathcal{CI}(\mathcal{OVV}^{+}(z_k))$ is non-empty, then the choice of $s_6^k$ requires only $\mathcal{O}(1)$ additional time.
Otherwise, the choice of $s_6^k$ requires a linear scan along $bd_c(u,v)$, which takes $\mathcal{O}(n^2)$ time. 
Since $\mathcal{VVP}^{-}(z)$ is precomputed, it takes only $\mathcal{O}(1)$ time to choose the guard $s_3^k = l(z_k)$. 
However, for choosing the guards $s_1^k$ and $s_2^k$, it is required to compute $\mathcal{OVV}^{-}(z_k)$, which takes $\mathcal{O}(n)$ time, and then the partition of $\mathcal{OVV}^{-}(z_k)$ into the sets $A^k$, $B^k$ and $C^k$ requires a further $\mathcal{O}(n^3)$ amount of time. 
Finally, the the choice of $s_1^k$ (and similarly $s_2^k$) requires a linear scan along $bd_{cc}(u,v)$, which takes $\mathcal{O}(n^2)$ time. \\

\vspace*{-0.5em}
From the discussion above, it is clear that all the operations corresponding to a single primary vertex $z_k \in Z$ are completed by Algorithm \ref{vg_pcode_gen} in $\mathcal{O}(n^4)$ time in the worst case, 
and since at most $\mathcal{O}(n)$ primary vertices are chosen, the overall worst case running time of Algorithm \ref{vg_pcode_gen} is $\mathcal{O}(n^5)$.
\end{proof}

\vspace*{-1.1em}
\begin{figure}[H]
\begin{minipage}{0.33\textwidth}
  \centerline{\includegraphics[width=\textwidth]{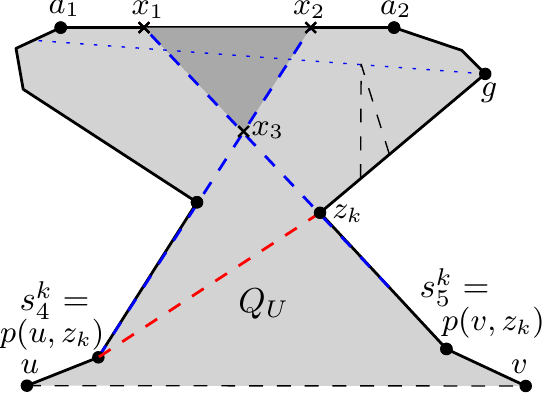}}
  \caption{All vertices are visible from $p(u,z_k)$ or $p(v,z_k)$, but the triangle $x_1 x_2 x_3$ is invisible.}
  \label{pocket_edge}
\end{minipage}
\hspace*{0.01\textwidth}
\begin{minipage}{0.66\textwidth}
  \centerline{\includegraphics[width=\textwidth]{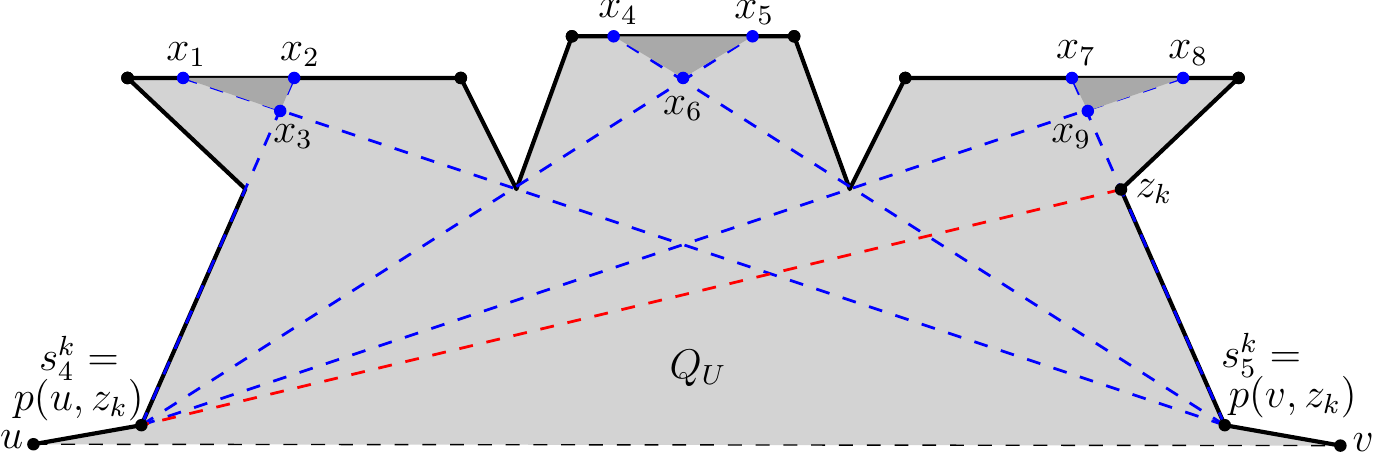}}
  \caption{Multiple invisible cells exist within the polygon that are not visible from the guards placed at $p(u,z_k)$ and $p(v,z_k)$.}
  \label{pockets}
\end{minipage}
\end{figure}

\vspace*{-0.7em}
Algorithm \ref{vg_pcode_pseudo} chooses a guard set $S$ that ensures no partially invisible edge in $Q_U$. 
However, there is no guarantee that $S$ sees the entire interior of $Q_U$, as there may remain residual invisible cells in the interior of $Q_U$ (see Figure \ref{icell}).
Consider a residual invisible cell that is a part of an invisible cell $x_1x_2x_3$, where $x_1x_2$ is contained in a partially invisible edge.
For such a residual invisible cell, there exists a pseudo-vertex on $x_1x_2$ whose parents can see the entire cell $x_1x_2x_3$, as discussed earlier in the context of placing inside guards for guarding entire visibility cell. 
So, placing a guard at an appropriate parent, such as $z_k$ in Figure \ref{icell}, 
guarantees that the residual invisible cell is totally visible.
Since such an additional inside guard on $Q_U$ corresponds to an unique outward guard in $Q_L$, the additional number of inside guards can be at most the number of outside guards. 
This amounts to placing at most (3+3)=6 inside guards and 3 outside guards corresponding to each primary vertex, while the number of primary vertices chosen remains at most $2\cdot|G_{opt}|$. 
We summarize the result in the following theorem. 

\begin{theorem}
Let $Q$ be a polygon of $n$ vertices that is weakly visible from an internal chord $uv$. 
Then, a vertex guard set $S$ can be computed in $\mathcal{O}(n^5)$ time, 
and $|S| \leq 18 \times|G_{opt}|$, 
where $G_{opt}$ is a an optimal vertex guard cover for the entire boundary of $P$.  
\end{theorem}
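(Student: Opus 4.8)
The plan is to treat this final result as the capstone that assembles the machinery built up through Algorithm~\ref{vg_pcode_pseudo}, upgrading a boundary-guarding guarantee into a full interior-guarding one while tracking the inflation of the constant from $12$ to $18$. The starting point is Theorem~\ref{bound_pseudo}, which already ensures that the set $S$ returned by Algorithm~\ref{vg_pcode_pseudo} sees the entire boundary $bd_c(u,v)$ of $Q_U$ with $|S| \le 6\cdot|Z| \le 12\cdot|G_{opt}|$, placing at most six guards per primary vertex (three inside $s_4^k,s_5^k,s_6^k$ and three outside $s_1^k,s_2^k,s_3^k$), together with Theorem~\ref{runtime_pseudo}, which bounds its running time by $\mathcal{O}(n^5)$. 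What remains is to account for the interior points of $Q_U$ that these guards may still miss.

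First I would isolate precisely what the insertion of pseudo-vertices has achieved. By construction no partially invisible edge survives: wherever a pocket edge was only partly seen, a pseudo-vertex on it is either marked by a later guard or is itself selected as a primary vertex, in which case its parents $s_4^k=p(u,z_k)$ and $s_5^k=p(v,z_k)$ see the whole triangular cell abutting that edge. The residual defect is therefore confined to \emph{residual invisible cells} lying in the interior of $Q_U$, each a subregion of a triangle $x_1x_2x_3$ whose side $x_1x_2$ lies on a now fully guarded partially invisible edge. For every such cell I would invoke the minimal-visible-interval structure: by Lemma~\ref{mvi} and Corollary~\ref{triangle}, the pseudo-vertex on $x_1x_2$ admits a parent from which the entire triangle $x_1x_2x_3$ is visible and whose viewing triangle lies wholly inside $Q$, so a single additional inside guard placed at that parent extinguishes the cell.

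The counting step is where the constant is pinned down. Each additional inside guard introduced above can be charged to a \emph{unique} outward guard in $Q_L$ associated with the same primary vertex, so the number of such extra inside guards never exceeds the number of outside guards already placed, namely at most three per primary vertex. Hence the per-primary-vertex budget grows from six guards to nine (six inside, three outside). Combining this with the primary-vertex bound $|Z| \le |Z^U| + |Z^L| \le 2\cdot|G^U_{opt}| + 2\cdot|G^L_{opt}| = 2\cdot|G_{opt}|$, which follows from Theorems~\ref{Z_ub} and~\ref{Z^U_bound} exactly as in the proof of Theorem~\ref{gen_bound}, yields $|S| \le 9\cdot|Z| \le 18\cdot|G_{opt}|$. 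For the time bound I would note that locating the residual cells, their defining pseudo-vertices, and the guarding parents for a single primary vertex is absorbed into the $\mathcal{O}(n^4)$ per-iteration cost already established in Theorem~\ref{runtime_pseudo}, so over the $\mathcal{O}(n)$ primary vertices the total remains $\mathcal{O}(n^5)$.

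The main obstacle I anticipate is the charging argument underlying the counting step: one must verify that the extra inside guards really stand in injective correspondence with distinct outside (hence distinct optimal) guards, rather than silently accumulating. This hinges on showing that each residual cell is pinned to a distinct disjoint pocket, paralleling the disjointness and geometric-nesting analysis of Lemmas~\ref{disjoint->nestedA} to~\ref{Gopt_lb} that underpins the maximum disjoint subsets $L^k$ and $R^k$, so that no single optimal guard acting through $Q_L$ can be forced to cover arbitrarily many residual cells. Making this correspondence airtight, and confirming that Lemmas~\ref{disjoint->nestedA} to~\ref{Gopt_lb}, Lemma~\ref{disjoint}, Corollary~\ref{atmost2}, and Theorems~\ref{Z_ub} and~\ref{Z^U_bound} all remain valid after the introduction of pseudo-vertices, is the delicate part; once it is secured, the arithmetic $9 \cdot 2 = 18$ is immediate.
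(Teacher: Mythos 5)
Your proposal follows essentially the same route as the paper: start from the boundary-guarding guarantee of Algorithm~\ref{vg_pcode_pseudo} with its $12\cdot|G_{opt}|$ bound, handle the residual interior invisible cells by placing one extra inside guard at a parent of a pseudo-vertex on the associated partially invisible edge, charge each such extra guard to a distinct outside guard so the per-primary-vertex budget rises from six to nine, and conclude $|S|\le 9\cdot|Z|\le 18\cdot|G_{opt}|$ with the extra work absorbed into the existing $\mathcal{O}(n^5)$ running time. The charging step you flag as delicate is indeed the point the paper itself treats only by assertion, so your account matches both the argument and its level of rigor.
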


\vspace*{-0.44em}
\section{Final Results}
\label{final}

\vspace*{-0.22em}
In Section \ref{vertex_algo}, we have presented an approximation algorithm 
for guarding a polygon $Q$ weakly visible from a chord $uv$.
This algorithm chooses primary vertices $z_k$ according to the ordering of their last visible point $l'(z_k)$.
So the algorithm does not depend on a chord of the weak visibility polygon.
Therefore, if this algorithm is executed on the union of overlapping weak visibility polygons $Q$, then it chooses primary vertices in the same way irrespective of chords in $Q$, producing a guard set that sees the entire union $Q$. 
So, if this algorithm is used for every overlapping weak visibility polygon in $P$, then the entire polygon $P$ can be guarded by the union of the guard sets produced for guarding these overlapping weak visibility polygons.
Note that there is no increase in running time for this overall algorithm, since each vertex can appear in at most 2 weak visibility polygons from the hierarchy $W$. \\

\vspace*{-0.5em}
Let $g \in G_{opt}$ be an optimal guard in $V_{i,j}$. 
It can be seen that $g$ is either a guard in $G^U_{opt}$ in the weak visibility polygon $Q$ whose chord $uv$ is the constructed edge between $V_{i,j}$ and its parent (say $V_{i-1,j'}$), 
or a guard in $G^L_{opt}$ for the overlapping weak visibility polygon $Q'$ whose chords are constructed edges that separate $V_{i,j}$ from its children.
Let $g \in Q \cap Q'$. So $g$ is a guard in $G^L_{opt}$ in $Q'$ or a guard in $G^U_{opt}$ in $Q$. 
Consider the case where $g$ belongs to $G^L_{opt}$ in $Q'$. 
Observe that all vertices of $Q_U'$ that are visible from any such $g \in G^L_{opt}$ in $Q'$ are guarded by $s_1^k$, $s_2^k$, $s_3^k$ for all primary vertices $z_k \in Q_U'$. 
Therefore, the approximation bound for Algorithm \ref{vg_pcode_pseudo} run on $Q'$ does not change in this case. 

\vspace*{-0.8em}
\begin{figure}[H]
\begin{minipage}{0.33\textwidth}
\centerline{\includegraphics[width=\textwidth]{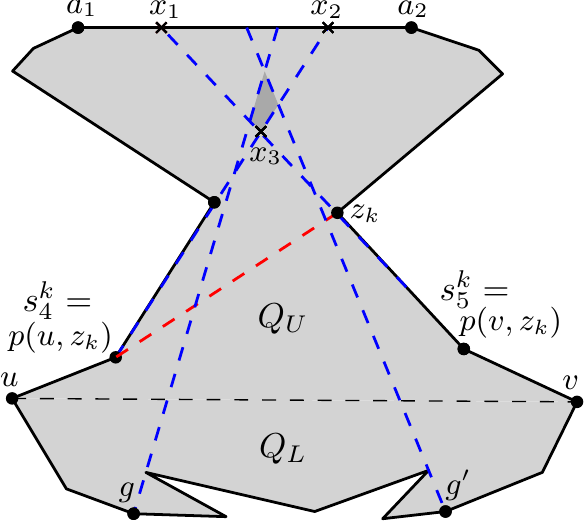}}
\caption{Two outside guards $g$ and $g'$ can create a residual invisible cell that is a part of the triangle $x_1x_2x_3$.}
\label{icell}
\end{minipage}
\hspace*{0.01\textwidth}
\begin{minipage}{0.66\textwidth}
\centerline{\includegraphics[width=\textwidth]{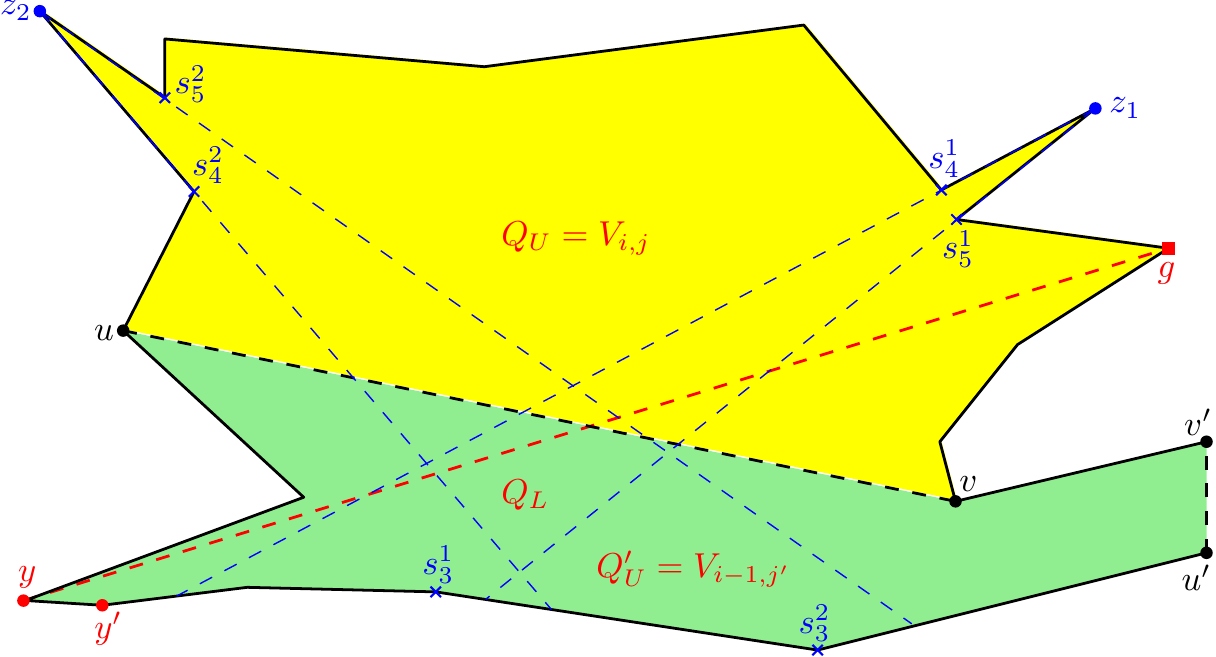}}
\caption{The vertex $y \in Q_L \cap Q_U'$ is visible from $g \in G^U_{opt}$ in $Q_U$, but it is not visible from any of the guards $s_3^1$, $s_4^1$, $s_5^1$, $s_3^2$, $s_4^2$ or $s_5^2$ placed in $Q$. }
\label{upper_opt}
\end{minipage}
\end{figure}

\vspace*{-0.5em}
Consider the other case where $g \in G^U_{opt}$ in $Q$.
Observe that all vertices of $Q_U$ that are visible from any such $g \in G^U_{opt}$ in $Q$ 
are guarded by $s_4^k$, $s_5^k$, $s_6^k$ for all primary vertices $z_k \in Q_U$. 
However, all vertices of $Q_L$ that are visible from any such $g \in G^U_{opt}$ in $Q$ may not necessarily be guarded by $s_1^k$, $s_2^k$, $s_3^k$,$s_4^k$, $s_5^k$, $s_6^k$ for all primary vertices $z_k \in Q_U$ (see Figure \ref{upper_opt}), since the guards chosen by Algorithm \ref{vg_pcode_pseudo} are not meant for guarding vertices in $Q_L$. \\

\vspace*{-0.5em}
Let $y \in Q_L \cap Q_U'$ be a vertex that is visible from $g \in G^U_{opt}$ on $Q$, but not visible from any of the guards placed in $Q$ by Algorithm \ref{vg_pcode_pseudo}.
Since $y$ remains unmarked, $y$ can be chosen as a primary vertex during the execution of \ref{vg_pcode_pseudo} on $Q'$.
Then, the guards placed on the parents see not only $y$, but also see all other such vertices $y'$ visible from $g$ (see Figure \ref{upper_opt}) due to cross-visibility across two adjacent weak visibility polygons in the partition hierarchy $W$. 
So, this amounts to choosing an extra primary vertex in $Q$, and thus the number of primary vertices visible from $g \in Q^U_{opt}$ also increases by at most 1. 
Since there could be at most one extra primary corresponding to every $g \in G^U_{opt}$, for counting purposes, we can attribute these to $Q_U$ as an extra primary vertex. So, $|Z| \leq 2\cdot|G_{opt}|$ is replaced by $|Z| \leq 3\cdot|G_{opt}|$ in our bound for all such $Q_U = V_{i,j}$.
We have the following results.

\begin{theorem}
Let $P$ be a simple polygon of $n$ vertices. 
Then, a vertex guard set $S$ for guarding all vertices of $P$ can be computed in $\mathcal{O}(n^4)$ time, 
and $|S| \leq 18 \times|G_{opt}|$, where $G_{opt}$ is a an optimal vertex guard cover for all vertices of $P$. 
\end{theorem}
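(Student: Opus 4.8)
The plan is to lift the local guarantee of Algorithm~\ref{vg_pcode_gen} to the whole polygon through the hierarchical framework of Algorithm~\ref{overall_pcode}. Since only the vertices of $P$ (and not the interior) must be guarded, I would use Algorithm~\ref{vg_pcode_gen} as the per-node subroutine, without the pseudo-vertex refinement. For each node $V_{i,j}$ of the partition tree $T$, I would take $Q$ to be the union of the overlapping weak visibility polygons defined by the constructed edges separating $V_{i,j}$ from its children, run Algorithm~\ref{vg_pcode_gen} on $Q$ to guard the still-unmarked child vertices, and collect the output into $S_{i+1}$. The first thing to record is that Algorithm~\ref{vg_pcode_gen} selects its primary vertices purely by the counter-clockwise ordering of the last visible points $l'(z_k)$, with no reference to any particular chord; consequently it behaves identically on the union $Q$ as on a single weak visibility polygon, so Theorem~\ref{gen_bound} continues to hold for each such $Q$ and the final guard set is $S=\bigcup_i S_i$.

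For the running time I would invoke Theorem~\ref{partition_time} for the $\mathcal{O}(n)$ partitioning step, and then observe that every vertex of $P$ lies in at most two weak visibility polygons of the hierarchy $W$. Writing $m_{i,j}$ for the number of vertices processed at node $V_{i,j}$, this gives $\sum_{i,j} m_{i,j}=\mathcal{O}(n)$. Since Theorem~\ref{runtime} bounds the cost of Algorithm~\ref{vg_pcode_gen} on an $m$-vertex instance by $\mathcal{O}(m^4)$, and since $\sum_{i,j} m_{i,j}^4 \leq \big(\sum_{i,j} m_{i,j}\big)^4$ for nonnegative terms, the total cost is $\mathcal{O}(n^4)$.

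The approximation bound I would obtain by charging our guards to $G_{opt}$ through the primary vertices. By construction Algorithm~\ref{vg_pcode_gen} places at most six guards per primary vertex, so $|S|\leq 6\,|Z|$, where $Z$ collects the primary vertices over all nodes; it therefore suffices to prove $|Z|\leq 3\,|G_{opt}|$. Here I would use Lemma~\ref{S_i}: an optimal guard $g\in G_{opt}$ lying in $V_{i,j}$ can see vertices only in the adjacent levels, so it acts as an outside ($G^L_{opt}$) guard for the subproblem $Q'$ in which $V_{i,j}$ is the parent and as an inside ($G^U_{opt}$) guard for the subproblem $Q$ in which $V_{i,j}$ is a child. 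In its outside role the vertices such a $g$ can see are exactly those covered by the outside guards $s_1^k,s_2^k,s_3^k$, so the bound $|Z^L|\leq 2\,|G^L_{opt}|$ of Theorem~\ref{Z_ub} transfers without change; in its inside role Corollary~\ref{atmost2} (via Theorem~\ref{Z^U_bound}) shows $g$ sees at most two primary vertices. This yields the baseline $|Z|\leq 2\,|G_{opt}|$.

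The main obstacle, and the step I would develop most carefully, is the cross-visibility correction to this baseline. A guard $g\in G^U_{opt}$ in $Q$ may see a vertex $y$ lying in the parent region $Q_L$ that the inside guards $s_4^k,s_5^k,s_6^k$ fail to cover, because those guards are responsible only for $Q_U$. Such a $y$ belongs to the parent's own overlapping union $Q'$, where it remains unmarked and may subsequently be chosen as a primary vertex. The crux is to argue, using the cross-visibility between two adjacent weak visibility polygons together with the shortest-path-tree structure that makes the parents $p(u,\cdot)$ and $p(v,\cdot)$ of all the vertices seen by a common $g$ coincide, that the inside guards placed at the parents of this single new primary vertex simultaneously mark every other vertex that $g$ sees. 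Hence each $g\in G^U_{opt}$ induces at most one extra primary vertex, so the baseline $|Z|\leq 2\,|G_{opt}|$ relaxes to $|Z|\leq 3\,|G_{opt}|$. Combining with $|S|\leq 6\,|Z|$ gives $|S|\leq 18\,|G_{opt}|$, completing the proof.
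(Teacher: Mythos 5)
Your proposal follows essentially the same route as the paper: run Algorithm~\ref{vg_pcode_gen} on the union of overlapping weak visibility polygons at each node of the partition tree (exploiting the chord-independence of the primary-vertex selection), bound the running time via the fact that each vertex lies in at most two weak visibility polygons of $W$, and obtain the ratio from $|S|\leq 6|Z|$ together with the baseline $|Z|\leq 2|G_{opt}|$ relaxed to $|Z|\leq 3|G_{opt}|$ by charging at most one extra primary vertex to each $g\in G^U_{opt}$ for the cross-visibility into the parent's region. The step you flag as needing the most care --- that the parents of the one extra primary vertex see all other vertices visible from the same optimal guard --- is exactly the step the paper itself treats most tersely, so your account matches the paper's proof in both structure and level of detail.
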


\begin{theorem}
Let $P$ be a simple polygon of $n$ vertices. 
Then, a vertex guard set $S$ for guarding the entire boundary of $P$ can be computed in $\mathcal{O}(n^5)$ time, 
and $|S| \leq 18 \times|G_{opt}|$, where $G_{opt}$ is a an optimal vertex guard cover for the entire boundary of $P$.  
\end{theorem}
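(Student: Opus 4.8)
The plan is to run Algorithm~\ref{vg_pcode_pseudo} on the hierarchy $W$ of weak visibility polygons computed in Section~\ref{partitioning_algo}, sweeping the partition tree $T$ level by level from its deepest level toward the root $V_{1,1}$, exactly as organised in Section~\ref{traverse_hierarchy}. At each level I would apply the algorithm not to one child at a time but to the union $Q$ of the overlapping weak visibility polygons cut off by the constructed edges separating a node $V_{i,j}$ from its children; because the algorithm chooses its primary vertices solely by the counter-clockwise ordering of their last visible points $l'(z_k)$, its choices are independent of which chord is used, and so it produces one coherent guard set covering the whole union. Since pseudo-vertices are planted on every minimal visible interval, Lemma~\ref{mvi} ensures that marking the representative pseudo-vertex of an interval certifies the entire interval; hence once every vertex and pseudo-vertex of $Q_U$ is marked, no partially invisible edge survives and the whole boundary stretch $bd_c(u,v)$ is guarded.

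With the boundary coverage in hand, the remaining task is the global charging of the primary set $Z$ against $G_{opt}$. For a single weak visibility polygon, Theorem~\ref{bound_pseudo} already supplies $|S|\le 6\cdot|Z|\le 12\cdot|G_{opt}|$ through the local estimate $|Z|\le 2\cdot|G_{opt}|$, so the core of the proof is to show that gluing the local pieces together costs at most a further factor of $3/2$. The starting point is the dichotomy from Section~\ref{final}: every $g\in G_{opt}$ lying in $V_{i,j}$ either acts as an outside guard ($g\in G^L_{opt}$) for the union $Q'$ whose chords separate $V_{i,j}$ from its children, or as an inside guard ($g\in G^U_{opt}$) for the polygon $Q$ whose chord is the constructed edge joining $V_{i,j}$ to its parent. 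In the first role nothing is lost, since every vertex of $Q'_U$ that such a $g$ can see has already been marked by the outside guards $s^k_1,s^k_2,s^k_3$, leaving the local bound of Theorem~\ref{bound_pseudo} for $Q'$ intact.

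I expect the main obstacle to be the second role, where $g\in G^U_{opt}$ in $Q$ sees a vertex $y$ of $Q_L$ that belongs to $Q'_U$ of the next level. Such a $y$ can escape all of $s^k_1,\dots,s^k_6$ placed while processing $Q$, and may therefore be picked up only later as a primary vertex of $Q'$ (Figure~\ref{upper_opt}). The crux will be to prove that this costs at most one additional primary vertex per optimal guard $g\in G^U_{opt}$: once $y$ becomes a primary vertex of $Q'$, the guards installed on its parents $p(u,z)$ and $p(v,z)$ see, by cross-visibility across the shared constructed edge of the two adjacent levels, not merely $y$ but every other vertex $y'$ that $g$ sees in this region. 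Together with Corollary~\ref{atmost2}, which already caps at two the number of primary vertices a single inside optimal guard can be charged with locally, this one extra charge promotes the local inequality $|Z|\le 2\cdot|G_{opt}|$ to the global inequality $|Z|\le 3\cdot|G_{opt}|$ for each block $Q_U=V_{i,j}$.

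Combining the pieces yields $|S|\le 6\cdot|Z|\le 6\cdot 3\cdot|G_{opt}|=18\cdot|G_{opt}|$. For the running time I would invoke Theorem~\ref{runtime_pseudo}, bounding the work of Algorithm~\ref{vg_pcode_pseudo} on one weak visibility polygon by $\mathcal{O}(n^5)$; since the hierarchical partition is built in $\mathcal{O}(n)$ time (Theorem~\ref{partition_time}) and each vertex of $P$ lies in at most two weak visibility polygons of $W$, the total cost summed over all levels stays $\mathcal{O}(n^5)$, which finishes the argument.
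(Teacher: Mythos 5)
Your proposal follows essentially the same route as the paper: the same level-by-level traversal of the partition tree applied to unions of overlapping weak visibility polygons, the same dichotomy splitting each optimal guard in $V_{i,j}$ into its role as an outside guard for the children's union versus an inside guard for the parent's polygon, and the same charging of at most one extra primary vertex per guard in $G^U_{opt}$ via cross-visibility, which lifts $|Z|\leq 2\cdot|G_{opt}|$ to $|Z|\leq 3\cdot|G_{opt}|$ and yields $|S|\leq 6\cdot|Z|\leq 18\cdot|G_{opt}|$. The running-time argument via Theorem \ref{runtime_pseudo} and the fact that each vertex lies in at most two polygons of $W$ is also the paper's own.
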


\begin{theorem}
Let $P$ be a simple polygon of $n$ vertices. 
Then, a vertex guard set $S$ for guarding interior and boundary points of $P$ can be computed in $\mathcal{O}(n^5)$ time, 
and $|S| \leq 27 \times|G_{opt}|$, where $G_{opt}$ is a an optimal vertex guard cover for the entire interior and boundary of $P$.  
\end{theorem}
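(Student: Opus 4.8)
The plan is to assemble the global bound from the single-chord interior-guarding procedure (Algorithm \ref{vg_pcode_pseudo}) by running it over the hierarchical partition of $P$ built in Section \ref{partitioning_algo}. First I would invoke Theorem \ref{partition_time} to partition $P$ into the hierarchy $W = \bigcup_i W_i$ of weak visibility polygons in $\mathcal{O}(n)$ time, and then traverse $W$ level by level using the framework of Algorithm \ref{overall_pcode}. At each node $V_{i,j}$ I would apply Algorithm \ref{vg_pcode_pseudo} not to a single weak visibility polygon but to the union $Q'$ of the overlapping weak visibility polygons defined by the constructed edges separating $V_{i,j}$ from its children. The enabling observation, already noted at the start of Section \ref{final}, is that the selection of primary vertices depends only on the ordering of the last visible points $l'(z_k)$ and not on any particular chord; hence the procedure behaves on $Q'$ exactly as on a single chord-visible polygon, and the union of the returned guard sets sees the entire interior and boundary of $P$.

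For the running time I would argue that, since each vertex of $P$ lies in at most two weak visibility polygons of $W$, summing the per-node cost yields no asymptotic blow-up over the single-polygon bound of Theorem \ref{runtime_pseudo}: the $\mathcal{O}(n^2)$ pseudo-vertex count and the $\mathcal{O}(n^4)$ per-primary-vertex work dominate, so the total remains $\mathcal{O}(n^5)$. The heart of the argument is the approximation bound, where I would reuse the per-level analysis but with the adjustment forced by cross-visibility between adjacent levels. Fix a node $V_{i,j}$ and classify each optimal guard $g \in G_{opt}$ lying in $V_{i,j}$: either $g \in G^U_{opt}$ with respect to the chord $uv$ between $V_{i,j}$ and its parent, or $g \in G^L_{opt}$ with respect to the chords separating $V_{i,j}$ from its children. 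When $g \in G^L_{opt}$ for $Q'$, the single-chord analysis of Section \ref{interior} applies verbatim and the local bound is unchanged. When $g \in G^U_{opt}$ for the parent polygon $Q$, every vertex of $Q_U$ seen by $g$ is already marked by the inside guards $s_4^k, s_5^k, s_6^k$, but a vertex $y \in Q_L \cap Q'_U$ seen by $g$ may remain unmarked (Figure \ref{upper_opt}), and such a $y$ may later be chosen as a primary vertex in $Q'$.

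The key step is to show that this extra marking costs at most one additional primary vertex per guard of $G^U_{opt}$: once $y$ is chosen as a primary vertex, the guards placed on its parents see, by cross-visibility across the shared chord, every other vertex $y'$ of $Q_L$ visible from $g$. Charging this single extra primary vertex to $g$, the single-polygon bound $|Z| \leq 2\cdot|G_{opt}|$ is relaxed to $|Z| \leq 3\cdot|G_{opt}|$ in the global setting. Finally, to guard the \emph{entire} interior rather than merely the vertices, I would retain the residual-invisible-cell handling of Section \ref{interior}, which places at most $3+3 = 6$ inside guards together with the $3$ outside guards per primary vertex, i.e. at most $9$ guards per primary vertex. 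Combining these gives $|S| \leq 9\cdot|Z| \leq 9\cdot 3\cdot|G_{opt}| = 27\cdot|G_{opt}|$. I expect the main obstacle to be rigorously justifying that the cross-visibility contributes exactly one extra primary vertex per optimal guard and that these charges do not double-count across adjacent levels; this is precisely where the restriction of visibility to adjacent levels of the partition hierarchy (Lemma \ref{S_i}) must be invoked to keep the per-level charges globally consistent.
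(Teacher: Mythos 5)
Your proposal follows essentially the same route as the paper: the hierarchical partition traversed level by level, Algorithm \ref{vg_pcode_pseudo} applied to unions of overlapping weak visibility polygons (justified by the chord-independence of the $l'$-ordering), the split of each optimal guard into the $G^U_{opt}$/$G^L_{opt}$ roles for adjacent levels with one extra primary vertex charged per guard of $G^U_{opt}$ (giving $|Z|\leq 3\cdot|G_{opt}|$), and the $(3+3)+3=9$ guards per primary vertex from the residual-invisible-cell handling, yielding $9\cdot 3 = 27$. The concern you flag at the end --- that the cross-visibility charge and its consistency across levels is the delicate point --- is indeed the part the paper itself treats only informally, so your reconstruction matches both the argument and its weakest link.
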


\section{Algorithms for Polygon Guarding using Edge Guards}
\label{edge_peri}

Let us consider a slightly different version of the art gallery problem, 
where edge guards rather than vertex guards are used for guarding a simple $n$-sided polygon $P$.
The hierarchical partitioning method used is almost exactly the same as 
Algorithm \ref{partition_windows} (presented in Section \ref{partitioning_algo}), 
where we guard the polygon using vertex guards. 
While using edge guards instead, the only modification to Algorithm \ref{partition_windows} is that now we initially choose an arbitrary edge $v_1v_2$ 
instead of a vertex, and compute $V_{1,1} = \mathcal{VP}(v_1v_2)$ 
(compare with Lines \ref{partition_windows:1}-\ref{partition_windows:2} of Algorithm \ref{partition_windows}).
As before, each weak visibility polygon in the hierarchy is used to generate 
one weak visibility polygon in the next level corresponding to each of its constructed edges. \\

\vspace*{-0.55em}
As discussed in Section \ref{vertex_algo}, 
let $Q$ denote a simple polygon that is weakly visible from an internal chord $uv$, i.e. we have $\mathcal{VP}(uv) = Q$, 
and observe that the chord $uv$ similarly splits $Q$ into two sub-polygons $Q_U$ and $Q_L$.
Suppose we wish to guard an arbitrary vertex $z$ of $Q_U$ using an edge guard. 
Then, a guard must be placed at an edge of $Q$ that belongs fully or even partially to $\mathcal{VP}(z)$. 
Henceforth, let $\mathcal{EVP}(z)$ denote the set of all polygonal edges that contain at least one point belonging to $\mathcal{VP}(z)$. 
Further, let us define the \emph{inward visible edges} and the \emph{outward visible edges} of $z$, denoted by $\mathcal{EVP}^{+}(z)$ and $\mathcal{EVP}^{-}(z)$ respectively, as follows. 

$\mathcal{EVP}^{+}(z) = \{ e \in \mathcal{EVP}(z)$: {\it for any point $x \in e$, the segment $zx$ does not intersect $uv$}\} 

$\mathcal{EVP}^{-}(z) = \{ e \in \mathcal{EVP}(z)$ : {\it for any point $x \in e$, the segment $zx$ intersects $uv$}\}

We shall henceforth refer to the vertex guards belonging to $\mathcal{EVP}^{+}(z)$ and $\mathcal{EVP}^{-}(z)$ as \emph{inside guards} and \emph{outside guards} for $z$ respectively. \\

\vspace*{-0.55em}
Observe that, for any vertex $z_k \in Z$, both $\mathcal{EVP}^{+}(z_k)$ and $\mathcal{EVP}^{-}(z_k)$ may be considered to be ordered sets by taking into account the natural ordering of the visible edges of $Q$ in clockwise order along $bd_{c}(u,v)$ and in counter-clockwise order along $bd_{cc}(u,v)$ respectively. 
Let us denote the \emph{first visible edge} and the \emph{last visible edge} belonging to the ordered set $\mathcal{EVP}^{-}(z_k)$ 
by $f(z_k)$ and $l(z_k)$ respectively (see Figure \ref{need_inside}).
Also, we denote by $l'(z_k)$ the \emph{last visible point} from $z_k$, which is obtained by extending the ray $\overrightarrow{z_k p(v,z_k)}$ 
till it touches $bd_{cc}(u,v)$. \\

\vspace*{-0.55em}
As in Section \ref{vertex_algo}, 
our algorithm selects a subset $Z$ of vertices of $Q_U$, and places a fixed number of both inside and outside edge guards corresponding to each of them, 
so that these edge guards together see the entire $Q_U$. As before, we refer to this subset of special vertices as \emph{primary vertices}, and denote it by $Z$. 
Moreover, the choice of primary vertices is also made in a manner identical to that discussed in Section \ref{vertex_algo}. \\

\vspace*{-0.55em}
Once again, a natural idea is to place outside edge guards in a greedy manner so that
they lie in the common intersection of outward visible edges of as many vertices of $Q_U$ as possible. 
For any primary vertex $z_k$, let us denote by $\mathcal{OEV}^{-}(z_k)$ the set of unmarked vertices of $Q_U$ whose outward visible edges overlap with those of $z_k$. 
In other words, 
\vspace{-0.55em}
$$ \mathcal{OEV}^{-}(z_k) = \{ x \in \mathcal{V}(Q_U) : \mbox{ $x$ is unmarked, and } \mathcal{EVP}^{-}(z_k) \cap \mathcal{EVP}^{-}(x) \neq \emptyset \} $$
So, each vertex of $Q_U$ belonging to $\mathcal{OEV}^{-}(z_k)$ is visible from at least one edge of $\mathcal{EVP}^{-}(z_k)$.
Further, $\mathcal{OEV}^{-}(z_k)$ can be considered to be an ordered set, where for any pair of elements $x_1,x_2 \in \mathcal{OEV}^{-}(z_k)$, we define $x_1 \prec x_2$ if and only if $l'(x_1)$ precedes $l'(x_2)$ in counter-clockwise order on $bd_{cc}(u,v)$. 
For the current primary vertex $z_k$, let us assume without loss of generality that $\mathcal{OEV}^{-}(z_k) = \{ x^k_1, x^k_2, x^k_3, \dots\, x^k_{m(k)}\}$ such that $l'(x^k_1) \prec l'(x^k_2) \prec \dots \prec l'(x^k_{m(k)})$ in counter-clockwise order on $bd_{cc}(u,v)$. \\

\vspace*{-0.55em}
Just as we partitioned $\mathcal{OVV}^{-}(z_k)$ while guarding the polygon using vertex guards, 
let us partition the vertices belonging to $\mathcal{OEV}^{-}(z_k)$ into 3 sets, viz. $A^k$, $B^k$ and $C^k$, in the following manner. 
Consider any vertex $x^k_i \in \mathcal{OEV}^{-}(z_k)$,
such that $\mathcal{EVP}^{-}(x^k_{i})$ creates a constructed edge $t(x^k_i) t'(x^k_i)$, where $t(x^k_i) \in \mathcal{V}(Q_L)$ is a polygonal vertex and $t'(x^k_i)$ is the point where $\overrightarrow{x^k_i t(x^k_i)}$ first intersects $bd_{cc}(u,v)$. 
Every vertex of $\mathcal{OEV}^{-}(z_k)$ visible from $l(z_k)$ is included in $B^k$.
Observe that, by definition $z^k \in B^k$.
Obviously, for each vertex $x^k_i \in \mathcal{OEV}^{-}(z_k) \setminus B^k$, 
$x^k_i$ is not visible from $l(z_k)$ due to the presence of some constructed edge.
The vertices of $\mathcal{OEV}^{-}(z_k) \setminus B^k$ are categorized into $A^k$ and $C^k$ based on whether this constructed edge creates a right pocket or a left pocket.
Suppose $x^k_i \in \mathcal{OEV}^{-}(z_k) \setminus B^k$ is a vertex such that $\mathcal{VP}(x^k_{i})$ creates a constructed edge $t(x^k_i) t'(x^k_i)$, 
where $t(x^k_i) \in \mathcal{V}(Q_L)$ is a polygonal vertex and $t'(x^k_i)$ is the point where $\overrightarrow{x^k_i t(x^k_i)}$ first intersects $bd_{cc}(u,v)$. 
If $t(x^k_i)$ lies on $bd_{cc}(f'(z_k),l'(z_k))$ and $t'(x^k_i)$ lies on $bd_{cc}(l(z_k),v)$,
i.e. if $f(z_k) \prec t(x^k_i) \prec l'(z_k)$ and $l(z_k) \prec t'(x^k_i) \prec v$,
then $x^k_i$ is included in $A^k$. 
For instance, in Figure \ref{fig_case1}, $x_1^2 \in A^2$ due to the constructed edge $t(x_1^2)t'(x_1^2)$.
On the other hand, if $t(x^k_i)$ lies on $bd_{cc}(l'(z_k),v)$ and $t'(x^k_i)$ lies on $bd_{cc}(f(z_k),l'(z_k))$,
i.e. if $l'(z_k) \prec t(x^k_i) \prec v$ and $f(z_k) \prec t'(x^k_i) \prec l'(z_k)$,
then $x^k_i$ is included in $C^k$.
For instance, in Figure \ref{fig_case1}, $x_3^2 \in C^2$ due to the constructed edge $t(x_3^2)t'(x_3^2)$.
Observe that, all vertices of $A^k$ must lie on $bd_c(u,z_k)$, 
whereas all vertices of $C^k$ must lie on $bd_c(z_k,v)$.
We have the following lemma, whose proof is similar to the proof of Lemma \ref{B_property}. 

\begin{lemma} \label{B_property_edge}
The edge $l(z_k)$ sees all vertices belonging to $B^k$.
\end{lemma}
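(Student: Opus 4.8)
The plan is to adapt the argument of Lemma \ref{B_property} verbatim to the edge-guard setting, replacing outward-visible vertices with outward-visible edges throughout. Recall that $B^k$ was defined as exactly those vertices of $\mathcal{OEV}^{-}(z_k)$ that are visible from $l(z_k)$, while the vertices of $\mathcal{OEV}^{-}(z_k) \setminus B^k$ were placed into $A^k$ or $C^k$ according to whether the occluding constructed edge $t(x)t'(x)$ forms a right pocket or a left pocket. So the real content of the lemma is to confirm that this three-way classification is \emph{consistent and exhaustive}: any vertex that escapes both the $A^k$ and the $C^k$ pocket conditions must genuinely be seen by $l(z_k)$.

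First I would invoke the defining property of the primary vertex $z_k$. Since $z_k$ is the unmarked vertex minimizing $l'(\cdot)$, we have $l'(z_k) \prec l'(x)$ for every unmarked $x$, and in particular for every $x \in \mathcal{OEV}^{-}(z_k)$ the edge $l(z_k)$ lies strictly within the boundary range $bd_{cc}(f'(x),l'(x))$ of the portion of $bd_{cc}(u,v)$ reachable from $x$. Thus $l(z_k)$ is a legitimate target for $x$, and the only way $x$ can fail to see it is through occlusion. I would also record the trivial base case $z_k \in B^k$, since $l(z_k)$ is by definition an outward-visible edge of $z_k$.

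Next I would argue the contrapositive exactly as in Lemma \ref{B_property}. Suppose $x \in \mathcal{OEV}^{-}(z_k)$ is not visible from $l(z_k)$. Because $l(z_k)$ falls inside $x$'s reachable range on $bd_{cc}(u,v)$, this failure must be caused by a constructed edge $t(x)t'(x)$ of $\mathcal{EVP}^{-}(z_k)$ whose pocket contains $l(z_k)$. I then split into the two pocket cases: if $t(x)$ lies on $bd_{cc}(f'(z_k),l'(z_k))$ and $t'(x)$ on $bd_{cc}(l(z_k),v)$, the blocker forms a right pocket and $x$ is forced into $A^k$; if instead $t(x)$ lies on $bd_{cc}(l'(z_k),v)$ and $t'(x)$ on $bd_{cc}(f(z_k),l'(z_k))$, it forms a left pocket and $x$ is forced into $C^k$. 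In either case $x \notin B^k$, so every $x \in B^k$ must be visible from $l(z_k)$, which is the assertion.

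The one genuinely new subtlety, and where I would be most careful, is that $l(z_k)$ is now an entire edge rather than a single vertex, so ``visible from $l(z_k)$'' means \emph{weak} visibility from the whole segment. I would need to verify that the occluding constructed edge $t(x)t'(x)$ blocks $x$ from \emph{every} point of $l(z_k)$ at once, rather than from just one representative point, so that the pocket classification remains unambiguous. This direction should in fact be no harder than the vertex case: weak visibility from an edge is strictly weaker than visibility from any fixed point of it, so a vertex seeing even a single point of $l(z_k)$ already lands in $B^k$, and only vertices cut off from the entire edge are pushed into $A^k \cup C^k$. The pocket geometry forcing them there is identical to that of Lemma \ref{B_property}, so the proof transfers with only the notational substitution of $\mathcal{EVP}^{-}$ and $\mathcal{OEV}^{-}$ for $\mathcal{VVP}^{-}$ and $\mathcal{OVV}^{-}$.
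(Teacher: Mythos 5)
Your proposal is correct and matches the paper's intent exactly: the paper omits the proof of this lemma, stating only that it ``is similar to the proof of Lemma \ref{B_property},'' and your argument is precisely that adaptation, with the contrapositive pocket classification carried over and the (correct) observation that weak visibility from the whole edge $l(z_k)$ only makes membership in $B^k$ easier to attain.
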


Depending on the vertices in $A^k$, $B^k$ and $C^k$, 
we have the following cases, just as in Section \ref{only_lower}.

\begin{description}

\item[Case 1 -] $\mathcal{CI}(A^k \cup B^k \cup C^k) \neq \emptyset$ (see Figure \ref{fig_case1})

\item[Case 2 -] $\mathcal{CI}(A^k \cup B^k \cup C^k) = \emptyset$ and $\mathcal{CI}(B^k) \neq \emptyset$ 

\item[\hspace{11mm} Case 2a -] $\mathcal{CI}(A^k) \neq \emptyset$ and $\mathcal{CI}(C^k) \neq \emptyset$ (see Figure \ref{fig_case2a})

\item[\hspace{11mm} Case 2b -] $\mathcal{CI}(A^k) = \emptyset$ and $\mathcal{CI}(C^k) \neq \emptyset$ (see Figure \ref{fig_case2b})

\item[\hspace{11mm} Case 2c -] $\mathcal{CI}(A^k) \neq \emptyset$ and $\mathcal{CI}(C^k) = \emptyset$ (see Figure \ref{fig_case2c})

\item[\hspace{11mm} Case 2d -] $\mathcal{CI}(A^k) = \emptyset$ and $\mathcal{CI}(C^k) = \emptyset$ (see Figure \ref{fig_case2d})

\end{description}

In each of the above cases, we also proceed to choose the three outside edge guards $s^k_1$, $s^k_2$ and $s^k_3$ in a similar manner as was done in Section \ref{only_lower}, with the only difference being that here we use the new definitions of $\mathcal{EVP}(z_k)$ and $\mathcal{OEV}(z_k)$ instead of $\mathcal{VVP}(z_k)$ and $\mathcal{OVV}(z_k)$ respectively.
To elaborate further, we choose the edge guard $s_3^k = l(z_k)$; if the guards $s^k_1$ and $s^k_2$ are chosen from a common intersection region, then we use  $\mathcal{EVP}^{-}(z_k)$ rather than $\mathcal{VVP}^{-}(z_k)$; if the guards $s^k_1$ and $s^k_2$ are chosen greedily, then we use the last edge along the traversal after which some vertex belonging to $A^k$ and $C^k$ respectively is no longer visible. 
We can establish lemmas similar to Lemmas \ref{case1} to \ref{Z_ub} (in Section \ref{only_lower}) from these choices of edge guards. \\

\vspace*{-0.5em}
As far as the choice of inside edge guards is concerned, we again choose them following the same case analysis (see Figures \ref{inner_shared}, \ref{inner_disjoint} and \ref{inner_nested}) as described in Section \ref{upper_n_lower}. 
To elaborate further, we choose the edge guards $s^k_4$ and $s^k_5$ to be the edges adjacent to $p(u,z_k)$ and $p(v,z_k)$ respectively that lie on $bd_c(p(u,z_k),p(v,z_k))$, 
whereas for the greedily chosen guard $s^k_6$ (if required at all) we use the last edge along the traversal after which some vertex belonging to $\mathcal{OEV}^{+}(z_k)$ is no longer visible. 
Again, we can establish lemmas similar to Lemmas \ref{disjoint} to \ref{gen_bound} (in Section \ref{upper_n_lower}) from these choices of edge guards. 
As a consequence, we obtain the following theorems. 

\begin{theorem} \label{Z_bound_edge}  
Let $Z$ be the set of primary vertices chosen by our modified algorithm,
and let $S$ be the set of all edge guards placed by it.
Then, $|S| \leq 6\cdot|Z| \leq 12\cdot|G_{opt}|$.
\end{theorem}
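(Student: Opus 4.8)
The plan is to mirror, step for step, the proof of Theorem~\ref{gen_bound} from the vertex-guard setting, replacing the point-visibility structures $\mathcal{VVP}^{+}$, $\mathcal{VVP}^{-}$ and $\mathcal{OVV}^{-}$ by their edge-visibility counterparts $\mathcal{EVP}^{+}$, $\mathcal{EVP}^{-}$ and $\mathcal{OEV}^{-}$. The bound $|S|\le 6\cdot|Z|$ is immediate: the modified algorithm places exactly the six edge guards $s_1^k,\dots,s_6^k$ per primary vertex $z_k\in Z$ (three outside guards chosen as in Section~\ref{only_lower} and three inside guards chosen as in Section~\ref{upper_n_lower}, using edges in place of vertices). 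It therefore remains to prove separately the two bounds $|Z^L|\le 2\cdot|G^L_{opt}|$ and $|Z^U|\le 2\cdot|G^U_{opt}|$, after which chaining them exactly as in Theorem~\ref{gen_bound} yields $|S|\le 6\cdot|Z|\le 6\cdot(|Z^U|+|Z^L|)\le 12\cdot(|G^U_{opt}|+|G^L_{opt}|)\le 12\cdot|G_{opt}|$.

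For the outside bound $|Z^L|\le 2\cdot|G^L_{opt}|$ I would first invoke the edge version of the partition of $\mathcal{OEV}^{-}(z_k)$ into $A^k,B^k,C^k$ together with Lemma~\ref{B_property_edge}, and then reprove the edge analogs of Lemmas~\ref{disjoint->nestedA}--\ref{Gopt_lb} with the same structure. The observation that makes this transfer work is that weak visibility from an edge is the union of visibilities from its points, so the ordering relations on $bd_{cc}(u,v)$ recorded by $f,l,t,t'$ and the emptiness conditions $\mathcal{EVP}^{-}(a_i)\cap\mathcal{EVP}^{-}(a_j)=\emptyset$ obey the same geometric nesting dichotomy as in the vertex case. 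This produces the maximum disjoint subsets $L^k\subseteq A^k$ and $R^k\subseteq C^k$, the one-guard-per-disjoint-element statements (edge analogs of Lemmas~\ref{L^k_upper}--\ref{opt_both}), the per-set cardinality bounds $|Z_i(L^k)|,|Z_i(R^k)|\le 2$, and finally the lower bound $|G^L_{opt}|\ge |Z|/2$, which is the edge analog of Lemma~\ref{Gopt_lb} and gives $|Z^L|\le 2\cdot|G^L_{opt}|$ as in Theorem~\ref{Z_ub}.

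For the inside bound $|Z^U|\le 2\cdot|G^U_{opt}|$ I would reprove the edge versions of Lemma~\ref{disjoint} and Corollary~\ref{atmost2}: an optimal inside edge guard $g\in G^U_{opt}$ can be weakly visible from a primary vertex $z_k$ only if $g$ meets $bd_c(p(u,z_k),p(v,z_k))$, and the same three-case analysis (the chains $bd_c(p(u,z_j),p(v,z_j))$ and $bd_c(p(u,z_m),p(v,z_m))$ being disjoint, sharing a single vertex, or nested, as in Figures~\ref{inner_disjoint}, \ref{inner_shared} and \ref{inner_nested}) shows that $g$ cannot see three primary vertices. Hence every optimal inside edge guard accounts for at most two primary vertices of $Z^U$, giving the claimed bound.

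The main obstacle I anticipate lies precisely in these \textbf{at most one} / \textbf{at most two} counting arguments, because an edge guard has positive length and is therefore weakly visible from a strictly larger set than any single point on it. I would need to verify carefully that this extra extent does not allow a single edge guard in $G^L_{opt}$ (respectively $G^U_{opt}$) to see two elements of a maximum disjoint subset $L^k$ or $R^k$ (respectively three primary vertices), which would otherwise break the lower bounds. The resolution is that the disjointness conditions defining $L^k,R^k$ and the separation of the parent-chains $bd_c(p(u,\cdot),p(v,\cdot))$ are stated entirely in terms of the edge-visibility sets themselves, so the length of a candidate guard edge is already absorbed into the hypotheses; once the nesting lemmas are re-derived with $\mathcal{EVP}^{+}$ and $\mathcal{EVP}^{-}$, the counting goes through unchanged. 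The remaining routine work is to confirm that the pocket structure (left and right constructed edges) behaves identically for edge guards, which it does since a constructed edge of $\mathcal{EVP}^{-}(x)$ is generated by a polygonal vertex exactly as for $\mathcal{VVP}^{-}(x)$.
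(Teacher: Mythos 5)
Your proposal follows essentially the same route as the paper, which likewise establishes Theorem~\ref{Z_bound_edge} by asserting that edge-guard analogs of Lemmas~\ref{case1}--\ref{Z_ub} and \ref{disjoint}--\ref{gen_bound} can be derived once $\mathcal{VVP}$ and $\mathcal{OVV}$ are replaced by $\mathcal{EVP}$ and $\mathcal{OEV}$, and then chains the inequalities exactly as in Theorem~\ref{gen_bound}. In fact you go slightly further than the paper by explicitly identifying and resolving the one point it leaves implicit, namely that the positive extent of an edge guard does not defeat the at-most-one and at-most-two counting arguments.
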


\begin{theorem} \label{runtime_edge}
It is possible to compute in $\mathcal{O}(n^4)$ time a set of inside and outside edge guards for guarding all vertices of a weak visibility polygon $Q_U$ such that the number of edge guards chosen is at most $12\cdot|G_{opt}|$, where $G_{opt}$ is an optimal edge guard cover for all vertices of $Q_U$.
\end{theorem}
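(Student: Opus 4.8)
The plan is to combine the approximation guarantee already established in Theorem \ref{Z_bound_edge} with a running-time analysis that closely mirrors the one for the vertex-guard algorithm in Theorem \ref{runtime}. Since Theorem \ref{Z_bound_edge} already gives $|S| \leq 6 \cdot |Z| \leq 12 \cdot |G_{opt}|$ for the modified edge-guard algorithm, the approximation factor requires no further work; I would only restate it and note that here $G_{opt}$ denotes an optimal edge-guard cover of the vertices of $Q_U$ (the vertex-guarding, not interior-guarding, variant, which is why the target is $\mathcal{O}(n^4)$ rather than $\mathcal{O}(n^5)$). The remaining task is therefore entirely the time complexity.

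For the preprocessing, as in the vertex-guard case I would precompute $SPT(u)$ and $SPT(v)$, so that the inside edge guards $s_4^k$ and $s_5^k$ adjacent to $p(u,z_k)$ and $p(v,z_k)$ can be selected in $\mathcal{O}(1)$ time. Then, for each of the $\mathcal{O}(n)$ vertices $z$ of $Q_U$, I would precompute the ordered sets $\mathcal{EVP}^{+}(z)$ and $\mathcal{EVP}^{-}(z)$; since the number of polygonal edges is $\mathcal{O}(n)$, this is done by constructing the visibility graph of $z$ via the output-sensitive algorithm of Ghosh and Mount \cite{GhoshMount_1991} in $\mathcal{O}(n^2)$ time per vertex.

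Next I would bound the per-iteration cost of the outer while loop. Because at most $\mathcal{O}(n)$ primary vertices are chosen, it suffices to show each iteration costs $\mathcal{O}(n^3)$. For a primary vertex $z_k$, computing $\mathcal{OEV}^{-}(z_k)$ and $\mathcal{OEV}^{+}(z_k)$ from the precomputed $\mathcal{EVP}$ sets and partitioning $\mathcal{OEV}^{-}(z_k)$ into $A^k$, $B^k$ and $C^k$ takes $\mathcal{O}(n^2)$ time; forming the common intersections $\mathcal{CI}(\cdot)$, which requires $\mathcal{O}(n)$ intersections each costing $\mathcal{O}(n^2)$, dominates at $\mathcal{O}(n^3)$. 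The greedy counter-clockwise scans along $bd_{cc}(u,v)$ that select $s_1^k$ and $s_2^k$, and the scan along $bd_c(u,v)$ that selects the inside guard $s_6^k$ when required, each cost only $\mathcal{O}(n)$. Multiplying the $\mathcal{O}(n^3)$ per-iteration bound by the $\mathcal{O}(n)$ iterations yields the claimed $\mathcal{O}(n^4)$ total.

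The main obstacle I anticipate is not the arithmetic but confirming that replacing \emph{visible vertex} by \emph{visible edge} throughout leaves the supporting structural machinery intact. In particular, I would need the edge analogues of the ordering and nesting properties for $f(z_k)$, $l(z_k)$ and the last visible point $l'(z_k)$ (the edge counterparts of Lemmas \ref{disjoint->nestedA}--\ref{Gopt_lb}) to hold, since both the correctness of the six-guard placement and the per-iteration cost bounds rest on them. As the discussion preceding Theorem \ref{Z_bound_edge} in Section \ref{edge_peri} asserts that these analogues carry over verbatim, I would invoke them directly and concentrate the written proof on the complexity bookkeeping outlined above.
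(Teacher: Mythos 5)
Your proposal is correct and matches the paper's (implicit) argument: the paper gives no separate proof for this theorem, intending it to follow directly from Theorem \ref{Z_bound_edge} for the approximation factor together with a running-time analysis identical in structure to that of Theorem \ref{runtime}, with $\mathcal{EVP}$/$\mathcal{OEV}$ substituted for $\mathcal{VVP}$/$\mathcal{OVV}$ at the same asymptotic cost. Your precomputation, per-iteration bound of $\mathcal{O}(n^3)$, and the $\mathcal{O}(n)$ bound on the number of primary vertices are exactly the ingredients the paper uses.
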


In Theorem \ref{runtime_edge}, we established the existence of an approximation algorithm for guarding a polygon $Q$ weakly visible from a chord $uv$.
If this same algorithm is executed on the union of overlapping weak visibility polygons $Q$, then it chooses primary vertices in the same way irrespective of chords in $Q$, thereby producing a set of edge guards that sees the entire union $Q$. 
So, if this algorithm is used for every overlapping weak visibility polygon in $P$, then the entire polygon $P$ can be guarded by the union of the guard sets produced for guarding these overlapping weak visibility polygons.
Note that there is no increase in running time for this overall algorithm.
Moreover, since each vertex can appear in at most two weak visibility polygons from the hierarchy $W$, 
we obtain the following theorem. 

\begin{theorem}
Let $P$ be a simple polygon having $n$ vertices. 
Then, an edge guard set $S$ for guarding all vertices of $P$ can be computed in $\mathcal{O}(n^4)$ time,
such that $|S| \leq 18 \times|G_{opt}|$, where $G_{opt}$ is an optimal edge guard cover for all vertices of $P$. 
\end{theorem}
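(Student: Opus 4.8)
The plan is to follow exactly the hierarchical scheme used for vertex guards in Section \ref{final}, substituting the edge-guard procedure of Theorem \ref{runtime_edge} for the per-polygon vertex-guard routine. First I would run the edge-guard version of the partitioning (the modification of Algorithm \ref{partition_windows} that computes $V_{1,1} = \mathcal{VP}(v_1v_2)$ from a starting edge rather than a vertex), obtaining the hierarchy $W$ of weak visibility polygons ordered by link distance; by Theorem \ref{partition_time} this costs only $\mathcal{O}(n)$. I would then traverse $W$ backward from the deepest level as in Algorithm \ref{overall_pcode}, and at each level apply the algorithm underlying Theorem \ref{runtime_edge} to the union $Q$ of the overlapping weak visibility polygons defined by the constructed edges of each node. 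Since that algorithm selects primary vertices purely by the order of their last visible points $l'(\cdot)$ and is independent of which chord generated $Q$, the union of all edge guards it produces sees every vertex of $P$.

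For the approximation bound I would reproduce the accounting of Section \ref{final} in the edge-guard setting. Fix an optimal edge guard $g \in G_{opt}$ and let $V_{i,j}$ be the node containing it. Then $g$ plays the role either of a guard in $G^L_{opt}$ for the overlapping polygon $Q'$ whose chords separate $V_{i,j}$ from its children, or of a guard in $G^U_{opt}$ for the polygon $Q$ whose chord is the constructed edge joining $V_{i,j}$ to its parent. In the first case the per-polygon bound $|S| \le 6\cdot|Z| \le 12\cdot|G_{opt}|$ of Theorem \ref{Z_bound_edge} already accounts for $g$, since all vertices of $Q_U'$ visible from such a $g$ are guarded by the outside edge guards $s^k_1, s^k_2, s^k_3$. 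The delicate case is the second: a vertex $y \in Q_L \cap Q_U'$ may be visible from $g \in G^U_{opt}$ yet escape all six edge guards placed while processing $Q$, because those guards are chosen only to cover $Q_U$. Such a $y$ remains unmarked and is therefore eligible to become a primary vertex when $Q'$ is processed; by the cross-visibility of two adjacent weak visibility polygons in $W$, the guards placed on its parents then also cover every other vertex visible from the same $g$. Hence each $g \in G^U_{opt}$ can be charged at most one extra primary vertex, so the single-polygon bound $|Z| \le 2\cdot|G_{opt}|$ loosens to $|Z| \le 3\cdot|G_{opt}|$, exactly as in the vertex case.

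Combining these pieces gives the result: at most six edge guards ($s^k_1,\dots,s^k_6$) are placed per primary vertex, so $|S| \le 6\cdot|Z| \le 18\cdot|G_{opt}|$. For the running time I would argue that each vertex of $P$ lies in at most two weak visibility polygons of $W$, so the work summed over the whole hierarchy is only a constant factor larger than that for a single weak visibility polygon; since Theorem \ref{runtime_edge} bounds the latter by $\mathcal{O}(n^4)$, the overall algorithm also runs in $\mathcal{O}(n^4)$ time.

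I expect the main obstacle to be the cross-visibility charging argument in the $G^U_{opt}$ case---verifying that each optimal inward edge guard induces at most one additional primary vertex and that the parent guards of that primary vertex genuinely cover all the previously unmarked vertices visible from $g$. This is where the edge-guard definitions $\mathcal{EVP}^{+}(\cdot)$ and $\mathcal{OEV}^{-}(\cdot)$ must be shown to behave like their vertex counterparts $\mathcal{VVP}^{+}(\cdot)$ and $\mathcal{OVV}^{-}(\cdot)$, so that Lemma \ref{disjoint} and Corollary \ref{atmost2} carry over verbatim; once that is secured, the remainder is a direct transcription of the vertex-guard analysis of Section \ref{final}.
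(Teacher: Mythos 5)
Your proposal is correct and follows essentially the same route as the paper: build the hierarchy of weak visibility polygons, run the per-polygon edge-guard procedure of Theorem \ref{runtime_edge} on the unions of overlapping weak visibility polygons, and then transfer the Section \ref{final} charging argument (each $g \in G^U_{opt}$ inducing at most one extra primary vertex, so $|Z| \leq 3\cdot|G_{opt}|$ and $|S| \leq 6\cdot|Z| \leq 18\cdot|G_{opt}|$). In fact the paper's own justification is terser than yours --- it simply cites Theorem \ref{runtime_edge} and the fact that each vertex lies in at most two weak visibility polygons, leaving the $12\to 18$ step implicit by analogy with the vertex-guard analysis --- so your explicit treatment of the cross-visibility charging for $\mathcal{EVP}^{+}(\cdot)$ and $\mathcal{OEV}^{-}(\cdot)$ is, if anything, more complete than what the paper records.
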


However, it may not always be true that the guards in $S$ see all interior points of $Q_U$. 
Consider the polygon shown in Figure \ref{ icell_edge_1 }. 
Assume that our previous algorithms places guards at $s_4^k$ and $s_5^k$, and all vertices of $bd_c(p(u,z_k),p(v,z_k))$ become visible from $s_4^k$ or $s_5^k$. 
However, the triangular region $Q_U \setminus (VP(p(u,z_k))) \cup VP(p(v,z_k))$, bounded by the segments $x_1 x_2$, $x_2 x_3$ and $x_3 x_1$, 
is not visible from $s_4^k$ or $s_5^k$. Also, one of the sides $x_1 x_2$ of the triangle $x_1 x_2 x_3$ is a part of the polygonal edge $a_1 a_2$. 
In fact, for any such region invisible from edge guards $s_4^k, s_5^k \in S^k$ corresponding to some $z_k \in Z$, 
henceforth referred to as an \emph{invisible cell}, one of the sides must always be a part of a polygonal edge.
The polygonal edge which contributes as a side to the invisible cell is referred to as its corresponding \emph{partially invisible edge}. \\

Observe that $s_4^k$ and $s_5^k$ can in fact create several invisible cells,
in a manner very similar to that shown in Figure \ref{pockets}.
Each invisible cell must be wholly contained within the intersection region (which is a triangle) of a left pocket and a right pocket. 
For example, in Figure \ref{ icell_edge_1 }, 
the invisible cell $x_1 x_2 x_3$ is actually the entire intersection region of the left pocket of $VP(s_4^k)$ and the right pocket of $VP(s_5^k)$. 
In general, where $VP(s_4^k)$ has several left pockets and $VP(s_5^k)$ has several right pockets which intersect pairwise to create multiple invisible cells (as shown in Figure \ref{pockets}), 
every such cell can be seen by placing guards on the common vertices between adjacent pairs of cells. 
Further, if $G_{opt}$ is also constrained to guard these invisible cells using only inward guards from $Q_U$, then the number of such additional guards required can be at most twice of $G_{opt}$,
as shown by Bhattacharya et al. \cite{AGWVP}. 
However, in the absence of any constraint on placing guards, $G_{opt}$ may place an outside guard in $Q_L$ that sees several such invisible cells. 
So, it is natural to explore the possibility of being able to guard all such invisible cells by using additional edge guards from $Q_L$, 
in combination with guards from $Q_U$. \\

Just as in Section \ref{interior}, we present a modified algorithm that ensures that all partially invisible edges are guarded completely, and therefore the entire $bd_c(u,v)$ is guarded.
Let us compute the weak visibility polygons corresponding to every edge of $P$. 
Then, the non-vertex endpoints of all the constructed edges belonging to these weak visibility polygons 
partition the boundary of $Q_U$ into distinct intervals called \emph{minimal visible intervals}. 
We have the following lemmas.

\begin{lemma} \label{mvi_edge}
Every minimal visible interval on the boundary of $Q_U$ is either entirely visible from an edge or totally not visible from that edge.
\end{lemma}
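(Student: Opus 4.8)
The plan is to mirror the proof of the vertex-guard version, Lemma \ref{mvi}, replacing the single vertex $g$ by an edge $e$ and reinterpreting ``visible from $e$'' as weak visibility, i.e.\ membership in $\mathcal{VP}(e)$. Fix an arbitrary edge $e$ of $Q$ and a minimal visible interval $ab$ on $bd(Q_U)$. Since a point $p \in bd(Q_U)$ is visible from $e$ exactly when $p \in \mathcal{VP}(e)$, the lemma reduces to showing that $ab$ lies entirely inside $\mathcal{VP}(e)$ or entirely outside it.

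First I would argue by contradiction, assuming that $ab$ is only partially visible from $e$: there exist points $p, q \in ab$ with $p \in \mathcal{VP}(e)$ and $q \notin \mathcal{VP}(e)$. Traversing $bd(Q_U)$ along $ab$ from $p$ to $q$ must then cross the boundary of $\mathcal{VP}(e)$ at some point $w$ lying in the relative interior of $ab$. Since $bd(\mathcal{VP}(e))$ consists solely of polygonal edges and constructed edges, and $w$ is a point of $bd(Q_U)$ separating a visible portion from an invisible one, $w$ must be the non-vertex endpoint of a constructed edge $v_i u_i$ of $\mathcal{VP}(e)$; that is, $w = u_i$, the point where this constructed edge meets $bd_c(u,v)$.

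Next I would invoke the defining construction of minimal visible intervals in the edge setting: they are obtained by partitioning $bd(Q_U)$ using precisely the non-vertex endpoints of all constructed edges arising from the weak visibility polygons of every edge of $P$, and in particular those of $\mathcal{VP}(e)$. Consequently $w = u_i$ is one of the partition points and cannot lie strictly inside the single minimal visible interval $ab$, a contradiction. This forces $ab \subseteq \mathcal{VP}(e)$ or $ab \cap \mathcal{VP}(e) = \emptyset$, which is exactly the assertion of the lemma.

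The main obstacle will be justifying rigorously that the transition between the visible and invisible portions of $bd(Q_U)$ with respect to $e$ can only occur at a non-vertex endpoint of a constructed edge, and never at a polygonal vertex or in the relative interior of a polygonal edge of $\mathcal{VP}(e)$. This relies on the structural fact used throughout the paper that along $bd(P)$ the region $\mathcal{VP}(e)$ is delimited exactly by its constructed edges: a maximal invisible sub-chain of $bd(Q_U)$ is a pocket $P(v_i u_i)$, whose boundary meets the visible part of $bd(Q_U)$ precisely at $u_i$. Once this delimitation is established, the contradiction with the minimality of $ab$ is immediate and no further case analysis is needed; I would also remark that the analogue of Corollary \ref{triangle} then follows in the same way, guaranteeing that an interval seen by $e$ is spanned by a triangle lying inside $Q$.
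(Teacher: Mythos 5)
Your overall strategy is the right one and is the natural adaptation of the paper's proof of Lemma \ref{mvi} to the edge setting (the paper itself states Lemma \ref{mvi_edge} without an explicit proof): because the subdivision points defining the minimal visible intervals are harvested from the constructed edges of $\mathcal{VP}(e)$ for \emph{every} edge $e$ of $P$, a visible/invisible transition with respect to a fixed $e$ cannot occur in the relative interior of an interval.

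However, the claim you lean on in your final paragraph --- that the transition between the visible and invisible portions of $bd(Q_U)$ ``can only occur at a non-vertex endpoint of a constructed edge, and never at a polygonal vertex'' --- is false, and as written your contradiction only covers half of the transition points. A pocket $P(v_iu_i)$ meets the weakly visible part of $bd(P)$ at \emph{both} of its endpoints: at $u_i$, but also at the reflex vertex $v_i$ from which the constructed edge emanates; walking along $bd(Q_U)$ past $v_i$ one drops immediately into the pocket, so $v_i$ is equally a point where visibility from $e$ switches off. Since the paper's definition of minimal visible intervals in the edge case lists only the non-vertex endpoints $u_i$ as subdivision points, your argument does not by itself exclude a transition at such a $v_i$ lying strictly inside $ab$. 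The repair is to note that the original polygonal vertices of $Q$ are also subdivision points of the partition (the construction of $Q'$ introduces the pseudo-vertices ``alongside the original polygonal vertices,'' so every minimal visible interval lies within a single polygonal edge); with that, a transition at $v_i$ is likewise barred from the interior of $ab$ and the contradiction goes through. A small additional caveat: your closing remark that the analogue of Corollary \ref{triangle} ``follows in the same way'' does not carry over verbatim, since entire weak visibility of $ab$ from the edge $e$ only guarantees that each point of $ab$ is seen from \emph{some} point of $e$, not that a single triangle with apex on $e$ and base $ab$ lies inside $Q$.
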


The modified algorithm first computes all minimal visible intervals and 
chooses one internal representative point from each minimal visible interval on the boundary of $Q_U$. 
These representative points are referred to as \emph{pseudo-vertices}. 
Alongside the original polygonal vertices of $Q$, all pseudo-vertices are introduced on the boundary of $Q_U$, and the modified polygon is denoted by $Q'$. Note that the endpoints of minimal visible intervals are not introduced in $Q'$.
In the modified algorithm, the pseudo-vertices of $Q'$ are treated in the same manner as the original vertices. 
We compute $\mathcal{EVP}^{+}(z)$ and $\mathcal{EVP}^{-}(z)$ for all vertices of $Q'$, 
irrespective of whether it is a pseudo-vertex, 
and some of the psuedo-vertices may even be chosen as primary vertices. 

\begin{theorem} \label{bound_pseudo_edge}   
For the edge guard set $S$ computed by our modified algorithm with pseudo-vertices, 
$|S| \leq 6\cdot|Z| \leq 12\cdot|G_{opt}|$,
where $G_{opt}$ is an optimal edge guard cover for the entire boundary of $Q_U$.  
\end{theorem}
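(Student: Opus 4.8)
The plan is to mirror the vertex-guard argument of Section~\ref{interior}, adapting it to edge guards by replacing $\mathcal{VVP}$ and $\mathcal{OVV}$ throughout with $\mathcal{EVP}$ and $\mathcal{OEV}$. The claimed chain splits into two independent parts. The first inequality $|S| \leq 6\cdot|Z|$ is immediate from the construction of the modified algorithm: for each primary vertex $z_k \in Z$ it places at most three outside edge guards $s_1^k, s_2^k, s_3^k$ and at most three inside edge guards $s_4^k, s_5^k, s_6^k$, exactly as in the vertex case. All of the real work therefore concentrates on the second inequality $|Z| \leq 2\cdot|G_{opt}|$, where I must show that admitting pseudo-vertices as candidate primary vertices does not inflate $|Z|$ beyond twice the optimum for boundary coverage.

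First I would argue that pseudo-vertices are interchangeable with genuine polygonal vertices for all of the order-theoretic bookkeeping underlying the edge-guard counting of Theorem~\ref{Z_bound_edge}. By Lemma~\ref{mvi_edge}, every minimal visible interval is either entirely visible or entirely invisible from any fixed edge; consequently the single representative pseudo-vertex chosen from each interval carries exactly the same visibility relations---first and last visible edge, the ordering $\prec$ induced by $l'(\cdot)$, and membership in $A^k$, $B^k$, or $C^k$---as any other point of that interval. Since the disjointness and geometric-nesting properties of $\mathcal{EVP}^{-}$ (the edge analogs of Lemmas~\ref{disjoint->nestedA}--\ref{Gopt_lb}) depend only on the constructed edges $t(\cdot)t'(\cdot)$ induced on $bd_{cc}(u,v)$, which are unchanged by inserting pseudo-vertices along $bd_c(u,v)$, these analogs continue to hold verbatim for the enlarged vertex set of $Q'$. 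The combined bound of Theorem~\ref{Z_bound_edge} therefore persists, yielding $|Z^L| \leq 2\cdot|G^L_{opt}|$ and, via the edge analogs of Lemma~\ref{disjoint} and Corollary~\ref{atmost2}, $|Z^U| \leq 2\cdot|G^U_{opt}|$; summing gives $|Z| \leq 2\cdot|G_{opt}|$.

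Next I would verify that these pseudo-vertices actually accomplish their purpose, so that $S$ sees the entire boundary of $Q_U$ rather than merely its original vertices. Whenever the parent edges $s_4^k, s_5^k$ of a primary vertex leave an invisible cell, that cell is bounded by a partially invisible edge carrying at least one pseudo-vertex; this pseudo-vertex is either marked by the guards of a later primary vertex or is itself eventually promoted to a primary vertex, in which case its own parent edges $s_4^k, s_5^k$ cover the whole triangular cell together with any adjacent cell sharing a parent. Counting these promoted pseudo-vertices against the optimum splits along the same two cases as in the vertex argument: each guard of $G^U_{opt}$ accounts for at most two such cells (by the argument of \cite{AGWVP}), while a pseudo-vertex that creates a genuinely new disjoint right or left pocket in $A^k$/$L^k$ (respectively $C^k$/$R^k$) forces a distinct optimal guard in $G^L_{opt}$. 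Thus the promotions are already absorbed into $|Z| \leq 2\cdot|G_{opt}|$, and combining with $|S| \leq 6\cdot|Z|$ completes the proof.

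The main obstacle I anticipate is the careful justification that the larger visibility of edge guards does not introduce slack that weakens the lower-bound counting. Because an edge guard sees a superset of what either of its endpoint vertices sees, one must confirm that the nesting and disjointness relations driving Lemmas~\ref{disjoint->nestedA}--\ref{Gopt_lb} are stated purely in terms of the single-sightline constructed edges $t(\cdot)t'(\cdot)$, and are hence unaffected by this enlarged visibility---otherwise two formerly disjoint pockets could appear to merge and destroy the per-guard lower bounds. Equally delicate is establishing Lemma~\ref{mvi_edge} itself and the edge analog of Corollary~\ref{triangle}, namely that a pseudo-vertex promoted to a primary vertex has parent edges $s_4^k, s_5^k$ whose union covers the entire triangular invisible cell, so that no residual invisible subcell on $bd_c(u,v)$ survives the boundary-guarding step.
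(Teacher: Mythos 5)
Your proposal takes essentially the same route as the paper, which in fact offers no separate formal proof of this theorem: it justifies the bound exactly as you do, by observing that $|S|\leq 6\cdot|Z|$ follows from placing at most six edge guards per primary vertex, and that the pseudo-vertices inherit the visibility relations of their minimal visible intervals (via Lemma \ref{mvi_edge}) so that the edge-guard analogs of Lemmas \ref{disjoint->nestedA}--\ref{Gopt_lb}, Lemma \ref{disjoint} and Corollary \ref{atmost2} carry over to $Q'$ and yield $|Z|\leq|Z^U|+|Z^L|\leq 2\cdot|G_{opt}|$. Your write-up is, if anything, more explicit than the paper about the delicate points (the invariance of the nesting and disjointness relations under insertion of pseudo-vertices, and the accounting of promoted pseudo-vertices against $G^U_{opt}$ and $G^L_{opt}$), and it is consistent with the paper's argument.
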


\begin{theorem}
Let $P$ be a simple polygon having $n$ vertices. 
Then, an edge guard set $S$ for guarding the entire boundary of $P$ can be computed in $\mathcal{O}(n^5)$ time,
such that $|S| \leq 18 \times|G_{opt}|$, where $G_{opt}$ is an optimal edge guard cover for the entire boundary of $P$. 
\end{theorem}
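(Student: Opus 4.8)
The plan is to lift the single–weak–visibility–polygon bound of Theorem \ref{bound_pseudo_edge} to the whole of $P$, in exactly the way the vertex–guard boundary result was obtained in Section \ref{final}. First I would partition $P$ into the hierarchy $W=\bigcup_{i=1}^{d} W_i$ of weak visibility polygons using the edge–guard variant of Algorithm \ref{partition_windows} (starting from an arbitrary edge $v_1v_2$ rather than a vertex), and then traverse $W$ from the deepest level upward as in Algorithm \ref{overall_pcode}. At each level I would invoke the pseudo–vertex edge–guard procedure underlying Theorem \ref{bound_pseudo_edge} on the union $Q'$ of the overlapping weak visibility polygons sharing a common parent $V_{i,j}$. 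Since that procedure selects primary vertices solely according to the ordering of their last visible points $l'(z_k)$, it is independent of which chord of $Q'$ is used; hence running it on the union $Q'$ produces an edge guard set seeing the entire boundary of $Q'_U$, and the union of these sets over all levels sees the entire boundary of $P$.

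The central step is to bound the total number $|Z|$ of primary vertices chosen across all levels in terms of $|G_{opt}|$. Fix an optimal edge guard $g\in G_{opt}$ lying in some $V_{i,j}$. As in Section \ref{final}, $g$ is either an outside guard ($g\in G^L_{opt}$) for the union $Q'$ whose chords separate $V_{i,j}$ from its children, or an inside guard ($g\in G^U_{opt}$) for the weak visibility polygon $Q$ whose chord is the constructed edge between $V_{i,j}$ and its parent. In the first case Theorem \ref{bound_pseudo_edge} already charges the primary vertices of $Q'_U$ seen by $g$ to $g$ without loss, via the outside edge guards $s^k_1,s^k_2,s^k_3$. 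In the second case the inside edge guards $s^k_4,s^k_5,s^k_6$ guard every vertex of $Q_U$ seen by $g$, but a vertex $y\in Q_L\cap Q'_U$ seen by $g$ may remain unmarked; such a $y$ can be chosen as a primary vertex one level up, and by the cross–visibility between the two adjacent weak visibility polygons the edge guards placed on the parents of $y$ then see \emph{all} vertices seen by $g$ at that level. Thus each $g\in G^U_{opt}$ forces at most one additional primary vertex, so the per–polygon bound $|Z|\le 2|G_{opt}|$ (implicit in $6|Z|\le 12|G_{opt}|$) is replaced globally by $|Z|\le 3|G_{opt}|$. Combining this with the per–polygon guard bound $|S|\le 6|Z|$ from Theorem \ref{bound_pseudo_edge} yields $|S|\le 6|Z|\le 18|G_{opt}|$.

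For the running time I would argue as in Theorem \ref{runtime_pseudo}: the pseudo–vertices arising from the $\mathcal{O}(n^2)$ minimal visible intervals dominate, giving $\mathcal{O}(n^5)$ work per weak visibility polygon; since each vertex of $P$ lies in at most two weak visibility polygons of $W$, summing over the hierarchy does not raise the asymptotic cost, and the total running time stays $\mathcal{O}(n^5)$.

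The main obstacle is the cross–visibility accounting in the inside case above: one must verify that the edge guards on the parents $p(u,z_k)$ and $p(v,z_k)$, together with the greedily placed $s^k_6$, genuinely capture every vertex a single upper optimal edge guard $g$ can see across the shared chord, so that at most one extra primary vertex is ever forced per $g$. This rests on the edge–guard analogues of Lemma \ref{disjoint} and Corollary \ref{atmost2}, whose arguments go through once $\mathcal{VVP}^{\pm}$ is replaced by $\mathcal{EVP}^{\pm}$; I would check that the pocket–nesting structure they exploit is unaffected by the switch from vertex to edge visibility.
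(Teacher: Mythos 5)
Your proposal is correct and follows essentially the same route as the paper: the paper's own proof is a one-line deferral to Theorem~\ref{bound_pseudo_edge} together with the runtime argument of Theorem~\ref{runtime_pseudo}, with the lift from $12\cdot|G_{opt}|$ per weak visibility polygon to $18\cdot|G_{opt}|$ overall resting on the same cross-visibility accounting (one extra primary vertex per upper optimal guard, so $|Z|\leq 3\cdot|G_{opt}|$) that Section~\ref{final} develops for vertex guards. Your writeup simply makes explicit the hierarchical traversal and the charging argument that the paper leaves implicit.
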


\begin{proof}
 This result follows directly from Theorem \ref{bound_pseudo_edge}.
 The running time bound of $\mathcal{O}(n^5)$ can be established through arguments very similar to those in the proof of Theorem \ref{runtime_pseudo}. 
 The only difference lies in the computation of $\mathcal{EVP}(z_k)$ and $\mathcal{OEV}(z_k)$, in place of $\mathcal{VVP}(z_k)$ and $\mathcal{OVV}(z_k)$ respectively, for every primary vertex $z_k \in Z$, but the time complexity of these computations are also same. 
\end{proof}

\begin{figure}[H]
\begin{minipage}{0.49\textwidth}
    \centering{\includegraphics[width=\textwidth]{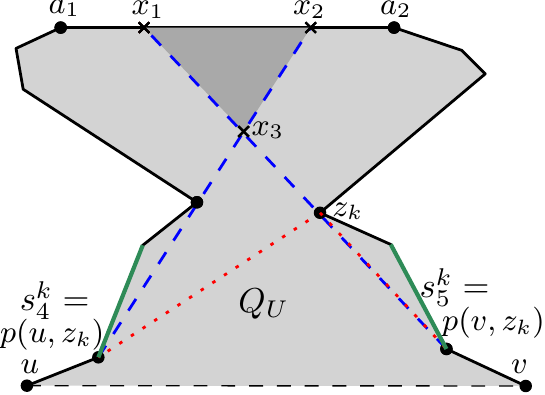}}
    \caption{ All vertices are visible from edge guards $s^k_4$ or $s^k_5$, but the triangle $x_1 x_2 x_3$ is invisible. }
    \label{ icell_edge_1 }
\end{minipage}
\hspace*{0.01\textwidth}
\begin{minipage}{0.49\textwidth}
    \centering{\includegraphics[width=\textwidth]{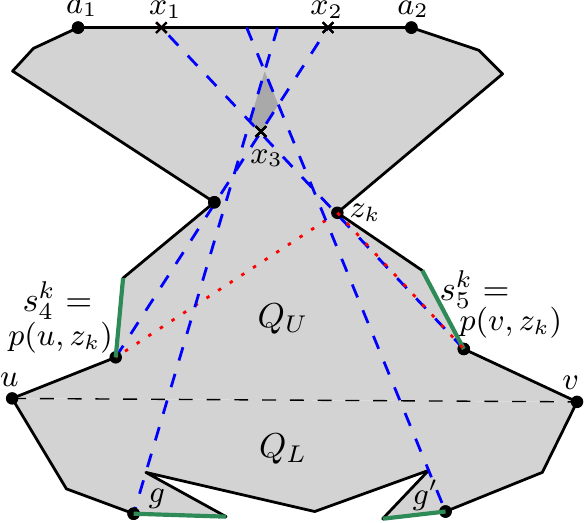}}
    \caption{ Two outside edge guards $g$ and $g'$ can create a residual invisible cell that is a part of $x_1x_2x_3$. }
    \label{ icell_edge_2 }
\end{minipage}
\end{figure}

\vspace*{-0.7em}
However, there is no guarantee that $S$ sees the entire interior of $Q_U$, as there may remain residual invisible cells in the interior of $Q_U$ (see Figure \ref{icell}).
Consider a residual invisible cell that is a part of an invisible cell $x_1x_2x_3$, where $x_1x_2$ is contained in a partially invisible edge.
For such a residual invisible cell, there exists a pseudo-vertex on $x_1x_2$ whose parents can see the entire cell $x_1x_2x_3$, as discussed earlier in the context of placing inside guards for guarding entire visibility cell. 
So, placing a guard at an appropriate parent, such as the parent $z_k$ for the pseudo-vertex that will be placed on the minimal visible interval $x_1x_2$ in Figure \ref{icell}, 
guarantees that the residual invisible cell is totally visible.
Since such an additional inside guard on $Q_U$ corresponds to an unique outward guard in $Q_L$, the additional number of inside guards can be at most the number of outside guards. 
This amounts to placing at most (3+3)=6 inside guards and 3 outside guards corresponding to each primary vertex, while the number of primary vertices chosen remains at most $2\cdot|G_{opt}|$. 
We summarize the result in the following theorem. 

\begin{theorem}
Let $P$ be a simple polygon having $n$ vertices. 
Then, an edge guard set $S$ for guarding the entire interior and boundary points of $P$ can be computed in $\mathcal{O}(n^5)$ time,
such that $|S| \leq 27 \times|G_{opt}|$, 
where $G_{opt}$ is a an optimal edge guard cover for the entire interior and boundary of $P$. 
\end{theorem}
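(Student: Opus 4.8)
The plan is to assemble this result entirely from the edge-guard machinery of Section~\ref{edge_peri}, mirroring how the corresponding vertex-guard interior bound was obtained in Sections~\ref{vertex_algo}--\ref{final}. First I would partition $P$ into the hierarchy $W$ of weak visibility polygons using the edge-guard variant of Algorithm~\ref{partition_windows} seeded with an arbitrary edge $v_1v_2$ in place of $v_1$, and then traverse this hierarchy backward, level by level, invoking the pseudo-vertex edge-guard placement procedure on each union of overlapping weak visibility polygons, exactly as Algorithm~\ref{overall_pcode} does for vertex guards. Because each vertex lies in at most two weak visibility polygons of $W$, this global traversal does not increase the asymptotic running time over that of guarding a single weak visibility polygon.

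The heart of the argument is the per-primary-vertex accounting. For each primary vertex $z_k$, the three outside edge guards $s_1^k, s_2^k, s_3^k$ are placed through the $\mathcal{EVP}^{-}$/$\mathcal{OEV}^{-}$ case analysis (Cases~1 and 2a--2d of Section~\ref{edge_peri}), and the three inside edge guards $s_4^k, s_5^k, s_6^k$ through the $\mathcal{EVP}^{+}$/$\mathcal{OEV}^{+}$ analysis, which is precisely the content of Theorem~\ref{bound_pseudo_edge}. To additionally cover the interior I would invoke the residual-invisible-cell argument: once the pseudo-vertices (one internal representative per minimal visible interval) are introduced, every partially invisible edge is fully guarded, and the only regions that can survive are residual invisible cells of the form $x_1x_2x_3$ with side $x_1x_2$ lying on a polygonal edge (see Figure~\ref{icell}). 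Each such cell is covered by placing an edge guard on an appropriate parent of the pseudo-vertex lying on $x_1x_2$, and the decisive charging step is that every such additional inside guard corresponds to a unique outward guard in $Q_L$. Hence the number of extra inside guards is at most the number of outside guards, namely three, giving a total of six inside and three outside --- at most nine edge guards per primary vertex.

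For the approximation ratio I would combine this count with the bound on the number of primary vertices. Within a single weak visibility polygon, Theorem~\ref{bound_pseudo_edge} gives $|Z| \le 2\cdot|G_{opt}|$; but when the procedure is run over the full hierarchy, the cross-visibility between two adjacent weak visibility polygons forces at most one extra primary vertex per guard in $G^U_{opt}$ --- the edge-guard analogue of the $|Z| \le 3\cdot|G_{opt}|$ argument of Section~\ref{final}, illustrated there by Figure~\ref{upper_opt}. This raises the bound to $|Z| \le 3\cdot|G_{opt}|$, and therefore $|S| \le 9\cdot|Z| \le 27\cdot|G_{opt}|$. The running-time claim is inherited from the proof of Theorem~\ref{runtime_pseudo}: the $\mathcal{O}(n^2)$ pseudo-vertices produce a visibility graph of size $\mathcal{O}(n^3)$, the work per primary vertex is dominated by the $\mathcal{O}(n^2)$ common intersections needed for $\mathcal{OEV}^{+}(z_k)$ (each costing $\mathcal{O}(n^2)$) and is thus $\mathcal{O}(n^4)$, and there are $\mathcal{O}(n)$ primary vertices, so the total is $\mathcal{O}(n^5)$; substituting $\mathcal{EVP}$/$\mathcal{OEV}$ for $\mathcal{VVP}$/$\mathcal{OVV}$ leaves every cost unchanged.

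The step I expect to be the main obstacle is the interior-guarding residual-cell argument together with its charging. The delicate point is first to verify the edge-guard analogues of Lemma~\ref{mvi_edge} and Corollary~\ref{triangle}, namely that any single edge sees a minimal visible interval either entirely or not at all, so that the pseudo-vertices discretize the boundary correctly; and then to confirm that each residual cell remaining after the edge-guard version of Algorithm~\ref{vg_pcode_pseudo} is chargeable to a \emph{distinct} outward guard of $G_{opt}$ in $Q_L$. This bijection is the one place where the extended geometry of edge guards, rather than point guards, could conceivably produce a cell that is only partially seen, so establishing it carefully is essential; once it holds, the final arithmetic $9 \times 3 = 27$ is immediate.
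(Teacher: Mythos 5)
Your proposal follows essentially the same route as the paper: the edge-seeded hierarchical partition, the pseudo-vertex discretization via minimal visible intervals, the residual-invisible-cell charging of extra inside guards to outside guards (giving $6+3=9$ edge guards per primary vertex), and the cross-visibility argument raising $|Z|$ to $3\cdot|G_{opt}|$ over the hierarchy, yielding $9\times 3=27$. You are in fact slightly more explicit than the paper about where the factor $3$ on $|Z|$ enters, and you correctly flag the same delicate points (the edge-guard analogue of the minimal-visible-interval lemma and the uniqueness of the charging) that the paper itself treats only by analogy.
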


\section{Concluding Remarks}
\label{conclude}

We have presented three approximation algorithms for guarding simple polygons using vertex guards. 
We have also shown how these algorithms can be modified to obtain similar approximation bounds while using edge guards. 
Though the approximation ratios for our algorithms are slightly on the higher side, they do successfully settle 
the long-standing conjecture by Ghosh by providing constant-factor approximation algorithms for this problem.
We feel that, in practice, our algorithms will provide guard sets that are much closer in size to an optimal solution. 
This can be further ensured by introducing a redundancy check after the placement of each new guard, 
which removes each (inside or outside) guard placed previously by the algorithm that does not see 
at least one vertex of $Q_U$ not seen by any other guard placed so far.
By incorporating such a redundancy check, we conjecture that our analysis of the approximation bound can be tightened further, 
and the existence of smaller approximation ratios can be proven for these three variations of the polygon guarding problem. 
Our algorithms exploit several deep visibility structures of simple polygons which are interesting in their own right.

\bibliographystyle{plain}
\bibliography{VGP_ref}

\end{document}